\newcommand{\Nat}{I\!\!N}
\newcommand{\Rs}{\overrightarrow{r}}
\newcommand{\Rp}{\overleftarrow{r}}
\newcommand{\B}{B}
\newcommand{\Q}{Q}
\newcommand{\A}{A}
\newcommand{\C}{C}
\newcommand{\CP}{\widetilde{P}}
\def\til#1{\widetilde{#1}}
\definecolor{lightgray}{gray}{0.9}
\def\idtt#1{\ensuremath{\mathtt{#1}}}
\def\op#1{\idtt{#1}}
\tikzset{paint/.style={ draw=#1!50!black, fill=#1!50 },
	decorate with/.style=
	{decorate,decoration={shape backgrounds,shape=#1,shape size=2mm}}}
\newcommandx{\unsure}[2][1=]{\todo[linecolor=red,backgroundcolor=red!25,bordercolor=red,#1]{#2}}
\newcommandx{\change}[2][1=]{\todo[linecolor=blue,backgroundcolor=blue!25,bordercolor=blue,#1]{#2}}
\newcommandx{\info}[2][1=]{\todo[linecolor=OliveGreen,backgroundcolor=OliveGreen!25,bordercolor=OliveGreen,#1]{#2}}
\newcommandx{\improvement}[2][1=]{\todo[linecolor=Plum,backgroundcolor=Plum!25,bordercolor=Plum,#1]{#2}}
\newcommandx{\thiswillnotshow}[2][1=]{\todo[disable,#1]{#2}}
\title{Space-Efficient Vertex Separators for Treewidth}
\titlerunning{Space-Efficient Vertex Separators for Treewidth}
\author{Frank Kammer}{THM, University of Applied Sciences Mittelhessen,
Giessen, 
Germany}{frank.kammer@mni.thm.de}{https://orcid.org/0000-0002-2662-3471}{}
\author{Johannes Meintrup}{THM, University of Applied Sciences Mittelhessen, 
	Giessen, 
	Germany}{johannes.meintrup@mni.thm.de}{https://orcid.org/0000-0003-4001-1153}{}
\author{Andrej Sajenko}
{THM, University of Applied Sciences Mittelhessen, Giessen, Germany}
{andrej.sajenko@mni.thm.de}{https://orcid.org/0000-0001-5946-8087}
{Funded by the Deutsche Forschungsgemeinschaft (DFG, German Research Foundation) -- 379157101.}
\authorrunning{F. Kammer, J. Meintrup and A. Sajenko}
\keywords{FPT, tree decomposition, network flow, subgraph stack}
\begin{document}

\maketitle

\begin{abstract}
For $n$-vertex graphs with treewidth $k = O(n^{1/2-\epsilon})$ and 
an arbitrary $\epsilon>0$, we present a word-RAM algorithm to compute 
vertex separators using only $O(n)$ bits of working memory.
As an application of our algorithm, 
we give an $O(1)$-approximation algorithm for tree decomposition.
Our algorithm computes a tree decomposition
in $c^k n (\log \log n) \log^* n$  time using
$O(n)$ bits for some constant $c > 0$.

We finally use the tree decomposition 
obtained by our algorithm
to solve \textsc{Vertex Cover}, 
\textsc{Independent Set},
\textsc{Dominating Set},
\textsc{MaxCut} and $q$-\textsc{Coloring}
by using $O(n)$ bits as long as the treewidth of the graph is smaller 
than $c' \log n$ for some problem dependent constant $0 < c' < 1$.
\end{abstract}

\section{Introduction}
For solving problems in the context of the ever-growing field of big data
we require algorithms and data structures that do not only focus on runtime 
efficiency,
but consider space as an expensive and  valuable resource. Some
reasons for saving memory are that slower memory in the memory 
hierarchy 
has to be used, less 
cache faults
arise and the available memory allows us to run more parallel tasks on a
given problem.

As a solution researchers began to provide {\em space-efficient} 
algorithms and data structures to solve basic problems like graph-connectivity 
problems
~\cite{AsaIKKOOSTU14, BanerjeeC0S18, ChoGS18, ElmHK15, Hag18}%
, other graph problems~\cite{HagKL19,KamS20}, memory 
initialization~\cite{HagK17, 
KatG17}, dictionaries with constant-time operations~\cite{BroM99, Clark98,
Hag18cd} or graph interfaces~\cite{BarAHM12,HagKL19} 
space efficiently, i.e., they 
designed practical algorithms and data-structures that run (almost) as fast 
as 
standard solutions for the problem under consideration while using 
asymptotically 
less space.
Several space-efficient algorithms and data structures mentioned above are implemented in an open source GitHub project~\cite{KamS18g}.

Our model of computation is the word RAM where we assume to have the
standard operations to read, write as well as arithmetic operations
(addition, subtraction, multiplication and bit shift) take constant time on
a word of size $\Theta(\log n)$ bits where $n\in \Nat$ is the size of the
input.
To measure the total amount of memory that an algorithm requires we
distinguish between the {\em input memory}, i.e., the read-only memory that
stores the input, and the {\em working memory}, i.e., the read-write memory
an algorithm additionally occupies during the computation.
In contrast, the \emph{streaming model} \cite{MunP80} allows us to
access the input only in a purely
sequential fashion, and the main goal is
to minimize the number of passes over the input.
There are several algorithms
for NP-hard problems on a streaming model, e.g., 
a kernelization algorithm due to Fafianie and Kratsch~\cite{FafianieK14}.
Elberfeld, 
Jakoby, and Tantau~\cite{ElberfeldJT10} 
proved that the problem of 
obtaining a so-called tree decomposition is contained in LSPACE.

We continue the research 
of Banerjee, Chakraborty, Raman, Roy and Saurabh~\cite{BanCRRS15} on space-efficient algorithm 
on the word RAM for
 NP-hard problems, but they assumed that a tree decomposition is already given
with the input.

One approach to solve NP-hard graph problems
is to  
decompose the given graph into a tree decomposition consisting of a tree where 
each node of the tree has a bag containing
a small fraction of the original vertices of the graph. %
The quality of a tree decomposition is measured by its {\em width}, i.e.,
the number of vertices of the largest bag minus~1.
The {\em treewidth} of a graph is the smallest width over all tree decompositions
for the graph.
Having a tree decomposition $(T, B)$ of a graph, the problem  
is solved 
by first determining the solution size of the problem in a bottom-up
traversal on $T$ and second in a top-down
process computing a solution for the whole graph.
Treewidth is also a topic in current interdisciplinary research,
such as smart contracts using cryptocurrency~\cite{chatterjee2019treewidth}
or computational quantum physics~\cite{DBLP:journals/corr/abs-1807-04599},
which are fields that often work with big data sets. So it is
important to have~space efficient algorithms.
 
Several algorithms are known for computing a tree 
decomposition.
For the following, we assume that the given graphs have $n$ vertices and 
treewidth~$k$.
Based on ideas in \cite{RobS86}, Reed~\cite{Reed92} showed an algorithm for computing a tree decomposition of width $O(k)$ in time $O(c^kn \log n)$ for some constant~$c$.
His algorithm repeatedly uses a so-called
	balanced separator that splits the input graph
into roughly equally sized subgraphs, each used as an input
for a recursive call of the algorithm.
Further tree-decomposition algorithms can be found in~\cite{Amir10, Bod96, 
Bodl95, Feige05, Lagr96}. 
Recently, Izumi and Otachi~\cite{IzuO20} presented an algorithm that runs with $o(n)$
bits, but $\Omega(n^2)$ time on graphs with treewidth $k=O(n^{1/2-\epsilon})$
for an arbitrary $\epsilon>0$.
The basic
strategy of repeated separator searches is the foundation of all
treewidth approximation algorithms, as mentioned 
by Bodlaender et al.~\cite{BodDDFLP16}.
Using the same strategy, Bodlaender et al.~also %
presented an algorithm that runs in $2^{O(k)}n$ time and finds a tree decomposition having width~$5k + 4$.

To obtain a space-efficient approximation algorithm for treewidth we modify Reed's algorithm.
We finally use a hybrid approach, which combines our new algorithm and 
Bodlaender
et al.'s 
algorithm~\cite{BodDDFLP16} to find a tree decomposition 
in $b^k n(\log^*n)\log \log n$ time for some
constant~$b$.
{The general idea for the hybrid approach 
is to
use our space-efficient algorithm for treewidth only for constructing the nodes
of height at most $b^k \log \log n$}. For the subgraph induces by the bags of the vertices
 below a node of height $b^k \log \log n$, we use 
Bodlaender
et al.'s algorithm.
The most computationally difficult task of this paper is the computation of the 
separators with $O(n)$ bits.

Finding separators requires finding vertex-disjoint paths for which  
a DFS as a subroutine is usually used. 
All recent 
space-efficient DFS require $\Omega(n)$
bits~\cite{AsaIKKOOSTU14, ChoGS18, ElmHK15, Hag18}
or $\Omega(n^2)$ time due to $\Omega(n)$ separator searches~\cite{IzuO20}. 
Thus, 
our challenge is to compute a separator and subsequently a 
tree decomposition with $O(n)$ bits with a running time almost linear
in~$n$.

To compute a separator of size at most $k$ with $O(n \log k)$ bits, 
the 
idea
is to store up to $k$ vertex disjoint paths by assigning a color $c\in
\{0,\ldots, k\}$ to each
vertex $v$ such that we know to which path $v$ belongs. We also
number the vertices along a path with $1,2,3,1,2,3,$ etc.\ so that we 
know
the direction of the path. 
Since we want to find separators with only $O(n)$ bits, we further show
that it suffices to store the color information only at every 
$\Theta(\log 
k)$th vertex.
We so manage to 
 find 
separators of size at most $k$ with $O(n + k^2(\log k) \log n)$ bits. 
If $k=O(n^{1/2-\epsilon})$
for an arbitrary $\epsilon>0$, we thus use $O(n)$ bits.

Our solution to find a separator is in particular interesting because 
previous space-efficient graph-traversal algorithms either reduce the space 
from 
$O(n \log n)$ 
bits to $O(n)$, e.g., 
 depth first search (DFS) or 
breath first search (BFS)~\cite{ChoGS18, ElmHK15, Hag18}, or reduce the 
space from $O(m \log n)$ to $O(m)$, e.g., Euler partition~\cite{HagKL19} and
cut-vertices~\cite{KamKL19}. In contrast, we reduce the space for the 
separator search
from $O((n+m)\log n)$ bits to $O(n)$ bits for small treewidth~$k$.

Besides the separator search, many algorithms for treewidth store large subgraphs of the $n$-vertex, $m$-edge input graph during recursive calls, i.e., they require $\Omega((n + m)\log n)$ bits.
We modify and extend the algorithm presented by Reed
with space-efficiency techniques (e.g., store recursive graph instances with 
the so-called subgraph stack~\cite{HagKL19}) to present an iterator
that allows us to
output
the bags of a tree decomposition of width $O(k)$ in an Euler-traversal order 
using $O(kn)$ bits in $c^{k}n \log n \log^*n$ time for some constant 
$c$.
To lower the space bound further, we
use the subgraph stack only to 
store the vertices of the recursive graph instances.
For the edges we present a new problem specific solution.


In Section~\ref{sec:prelim}, we summarize known 
data structures and algorithms that we use afterwards.
Our main result, the computation of $k$-vertex disjoint paths is shown in Section~\ref{sec:nBitVDP}.
We sketch Reed's algorithm in Section~\ref{sec:reedsalg}, where we also show
a space-efficient computation of a balanced vertex separator using $O(n)$ bits. %
In Section~\ref{sec:stream} we present an iterator that
outputs the bags of a tree decomposition using $O(kn)$ bits.
In the following section we lower the space bound to 
$O(n)$ bits for small treewidth, and show our hybrid approach. %
We conclude the paper by showing 
in Section~\ref{sect:appl} %
that our tree decomposition iterator
can be used to solve NP-hard problems like \textsc{Vertex Cover},
\textsc{Independent Set},
\textsc{Dominating Set}, 
\textsc{MaxCut} and $q$-\textsc{Coloring}
with $O(n)$ bits on graphs with small treewidth.
The table below summarizes the space bound of the algorithms 
described in this paper. %

\begin{table}[h!]
	\centering
	\rowcolors{1}{lightgray}{}
	\begin{tabular}{r|c|c|c}
		& standard  & intermediate goal  & final goal \\
		\hline
		$k$ vertex-disjoint paths & $\Omega(kn \log n)$& 
		\multicolumn{2}{c}{$\Theta(n)$ 
		(Sect.~\ref{sec:nBitVDP})} \\
		balanced vertex separator & $\Omega(kn \log n)$ & 
		\multicolumn{2}{c}{$\Theta(n)$ (Sect.~\ref{sec:reedsalg})} \\
		subgraph stack & $\Omega(kn \log n)$ & $\Theta(kn)$ 
		(Sect.~\ref{sec:stream}) & 
		$\Theta(n)$ (Sect.~\ref{subsect:recStack})  \\
		\hline
		\hline
		iterator for a t.d. & $\Omega(kn \log n)$ &
		$\Theta(kn)$ (Sect.~\ref{sec:stream}) & $ \Theta(n)$ 
		(Sect.~\ref{sec:improvementsinspace}) \\
		\hline
		\hline
		NP-hard problems above  & $\Omega(kn \log n)$& 
		\multicolumn{2}{c}{$\Theta(n)$ 
		(Sect.~\ref{sect:appl})} 
	\end{tabular}
	\caption{The space requirements in bits to run the different parts of the algorithm to compute a 
	tree decomposition (t.d.) with applications for an $n$-vertex graph with sufficiently small treewidth $k$.}
	\label{tab:myfirsttable}
\end{table}

\section{Preliminaries}
\label{sec:prelim}
 Let $G=(V,E)$ be an undirected $n$-vertex $m$-edge graph.
If it is helpful, we consider an edge $\{u,v\}$ as two directed edges (called {\em 
arcs}) $(u,v)$ and $(v,u)$.
 As 
 usual for graph algorithms we define $V=\{1, \ldots, n\}$. 
 For every vertex $v \in V$, we access the in- and out-degree of~$v$ through
 functions 
 $\op{deg}^{\rm{in}}_{G},\op{deg}^{\rm{out}}_{G}: V \rightarrow \Nat$ that returns the number of 
incoming and outgoing
 edges of vertex~$v$, respectively. 
 Moreover, $\op{deg}_G=\op{deg}^{\rm{in}}_{G}+\op{deg}^{\rm{out}}_{G}$.
 We denote by $N(v) \subset V$ the neighborhood of~$v$.
 For $A = \{(v, k) \in V \times \Nat\ |\ 1 \le k \le \op{deg}_G(v)\}$,
 let $\op{head}_G: A \rightarrow V$ be a function such that $\op{head}_G(v, k)$ 
 returns the $k$th neighbor of~$v$. Intuitively speaking, each vertex has 
 an adjacency array.
 If space is not a concern, it is customary to store each graph that 
 results from a transformation separately. To save space, we always use the 
 given graph $G$ and store only some 
auxiliary information that helps us to implement the following graph interface 
for 
a transformed graph.
 
 \begin{definition}{(graph interface)}\label{def:graphinterface}
 	A data structure for a graph  $G = (V, E)$ implements the {\em graph 
 	interface (with adjacency arrays)} 
 	exactly if it provides the functions $\op{deg}_G: V \rightarrow \Nat$, 
 	$\op{head}_G: A \rightarrow V$, where $A = \{(v, k) \in (V \times \Nat) \ |\ 
 	1 \le 
 	k \le \op{deg}_G(v)\}$, and gives access to the number $n$ of vertices and 
 	the 
   	number $m$ of edges. 
 \end{definition}
        Let $G = (V, E)$ be a graph and let $V' \subseteq V$.
During our computation, some 
of our graph interfaces can support 
$\op{head}_G(v, k)$ and $\op{deg}_G(v)$ only for vertices  $v \notin V'$.
For vertices in $V'$, we can access their neighbors via adjacency
lists, i.e., we can use the functions
$\op{adjfirst} : V' \rightarrow P \cup \{\op{null}\}$, $\op{adjhead} : P
\rightarrow V$ and $\op{adjnext} : P  \rightarrow P \cup \{\op{null}\}$ for a 
set of pointers
$P$ to output the neighbors of a vertex $v$ as follows:
$p := \op{adjfirst}(v)$; $\op{while}$ $(p \neq \op{null})$ \{
$\op{print}\ \op{adjhead}(p)$;
$p := \op{adjnext}(p)$;
\}.
We then say that we have a {\em graph interface with $|V'|$ access-restricted 
vertices.} 

In an undirected graph it is common to store an edge at both endpoints,
 hence, every undirected edge $\{u, v\}$ is stored as an {\em arc} (directed 
 edge) $(u, v)$ at the endpoint~$u$ and as an  arc $(v, u)$ at the endpoint $v$.
%
We also use the two space-efficient data structures below. 

\begin{definition}{(subgraph stack~\cite[Theorem~4.7]{HagKL19})}
	The {\em subgraph stack} is a data structure initialized for an $n$-vertex
	$m$-edge undirected graph $G_0=(V,E)$ that manages a finite list 
	$(G_0,\ldots,G_\ell)$, called the {\em client list}, of ordered graphs such 
	that 
	$G_i$ is a proper subgraph of $G_{i-1}$ for $0 < i \leq \ell$. 
\end{definition}
	Each graph in the client list implements the graph-access interface
supporting the access operations from above in
	$O(\log^*n)$ time. Additionally, it provides the following operations:
	\begin{itemize}
		\item $\op{push}(B_V, B_A)$: Appends a new graph $G_{\ell+1}=(V_{\ell+1}, E_{\ell+1})$ 
		to the client list. The graph is represented by two bit arrays $B_V$ and $B_A$, with $B_V[i]=1$, $B_A[i]=1$
		exactly if the vertex $i$ (or arc) of $G_\ell$ is still contained in $G_{\ell+1}$. 
		It runs in $O((|B_V| + |B_A|)\log^*n)$.
		\item $\op{pop}$: Removes $G_\ell$ from the client list in constant time.
		\item $\op{toptune}$: Subsequent accesses to the interface of $G_{\ell}$
		can run 
		in $O(1)$ time. 
		Runs in $O((|V_{\ell}| + |E_{\ell}|)\log^*n)$ time and uses $O(n+m)$ bits additional space.
	\end{itemize}
	For the special case where we have a constant $\epsilon$ with $0 < \epsilon < 1$
	such that
        $|V_{i+1}| < \epsilon|V_{i}|$ and $|E_{i+1}| < \epsilon|E_{i}|$ for all $0 \leq i <
	\ell$,
         the entire subgraph stack uses 
	$\sum_{i=0}^{\infty}(\epsilon_{1}^{i}n+\epsilon_{2}^{i}m)=O(n+m)$ bits of space.
	We will use the subgraph stack only in this special case with $\epsilon = 2/3$.

Using fast rank-select data structures~\cite{BauH19} one can easily support the
following data structure. 
\begin{definition}{(static space allocation)}
For $n \in \Nat $, let $V = \{1, \ldots, n\}$ be the universe and $K\subset V$ a fixed 
set of keys with $|K| = O(n / \log b)$ for some $b \le n$. Assume that the
keys are stored within an array of $n$ bits so that a word RAM can read
$K$ in $O(n/\log n)$ time.
Then there is a data structure using $O(n)$ bits that allows us to
read and write $O(\log b)$ bits in constant time 
for each $k\in K$ and the initialization of the data structure can be done in
$O(n/\log n)$ time.
\end{definition}

We now formally define a tree decomposition.

\begin{definition}(tree decomposition~\cite{ROBERTSON199565}, bag)
A {\em tree decomposition} of a graph $G = (V, E)$ is a pair $(T, B)$ where
$T = (W, F)$ is a tree and $B$ is a mapping $W \rightarrow \{V'\ |\ V' 
\subseteq 
V\}$ such that the following properties hold: {\textbf(TD1)} $\bigcup_{w\in W} 
G[B(w)] = G$, and
{\textbf (TD2)} for each vertex $v \in V$, the nodes $w$ with $v \in B(w)$ 
induce 
	a 
	subtree in $T$.
For each node $w \in W$, $B(w)$ is called the {\em bag} of $w$.%
\end{definition}

We recall from the introduction that the {\em width}  of a tree decomposition is defined as the 
number of vertices 
in a largest 
bag minus 1 and the {\em treewidth} of a graph $G$ as the minimum width among all 
possible tree decompositions for $G$.
We subsequently also use the well-known fact that an $n$-vertex
graph $G$ with treewidth $k$ has $O(kn)$ edges~\cite{ROBERTSON199565}.

Our algorithms use space-efficient BFS and DFS.

\begin{theorem}{(BFS~\cite{ElmHK15})}\label{th:se-DFS}
Given an $n$-vertex	$m$-edge graph $G$, there is a BFS on $G$
that runs in $O(n + m)$ time and uses $O(n)$ bits.
\end{theorem}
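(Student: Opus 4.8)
The plan is to realise breadth-first search in the layer-by-layer fashion, but to maintain only a constant number of $O(n)$-bit structures instead of an explicit queue of $\Theta(\log n)$-bit vertex identifiers. First I would keep a single bit array $D[1\mathinner{.\,.}n]$, where $D[v]=1$ marks that $v$ has already been discovered; this replaces the usual colour array. Second, and this is the crucial ingredient, I would represent the current BFS layer $C$ and the next layer $N$ each as a subset of $\{1,\dots,n\}$ stored in a data structure that supports insertion of a single element in $O(1)$ time and enumeration of all currently stored elements in time $O(|S|)$ — that is, proportional to the size of the set and not to $n$ — using $O(n)$ bits (a choice dictionary, or the space-efficient set representation underlying the result of Elmasry, Hagerup and Kammer). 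Since at every moment $|C|+|N|\le n$, the two sets together occupy $O(n)$ bits.

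The search then proceeds as follows. Starting from a source vertex $s$, we set $D[s]:=1$ and put $s$ into $C$. While $C$ is non-empty we enumerate its elements; for each such vertex $v$ we visit $v$ (emitting it in BFS order, and, if a BFS forest is wanted, outputting the tree edge to its parent on the fly so that we need not store parent pointers), scan the adjacency array of $v$ via $\op{head}_G$, and for every neighbour $w$ with $D[w]=0$ we set $D[w]:=1$ and insert $w$ into $N$. Once $C$ has been fully enumerated we swap the roles of $C$ and $N$, re-initialise the now-old $C$ to the empty set in constant time, and continue with the next layer. When $C$ becomes empty while vertices are still undiscovered (further connected components), we advance a single pointer that sweeps $\{1,\dots,n\}$ exactly once over the whole execution to find the next vertex with $D[\cdot]=0$ and restart from it.

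For the analysis, every vertex is inserted into a layer exactly once and enumerated from a layer exactly once, and every arc $(v,w)$ is examined exactly once, namely when $v$ is visited; together with the $O(n)$ total cost of the component-finding sweep this gives running time $O(n+m)$. The working memory consists of the bit array $D$, the two set structures for $C$ and $N$, and a constant number of indices, which is $O(n)$ bits in total; in particular no layer is ever stored as a list of vertex identifiers.

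The step I expect to be the main obstacle is exactly the representation of the layers: a plain bit array would force a full $\Theta(n)$-time scan to recover a layer and hence $\Theta(n\cdot(\text{number of layers}))$ total time, which is $\Theta(n^2)$ in the worst case, whereas storing a layer explicitly as vertex identifiers costs $\Theta(n\log n)$ bits. Reconciling the two — enumeration in time linear in the set size with only $O(n)$ bits of space — is the technical heart of the argument and is where the specialised set data structure of Elmasry, Hagerup and Kammer is needed; the remainder is the standard BFS bookkeeping.
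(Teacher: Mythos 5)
Your proposal is correct and takes essentially the same route as the cited source: the paper states Theorem~\ref{th:se-DFS} as an import from Elmasry, Hagerup and Kammer without giving a proof, and their BFS is exactly the layer-by-layer search you describe, with the discovered-bit array plus the current and next layers held in $O(n)$-bit set structures supporting constant-time insertion and enumeration in time proportional to the set size (a choice dictionary). You correctly isolate that set structure as the only non-trivial ingredient; the rest is standard BFS bookkeeping.
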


By taking the randomized DFS of Choudhari et al.~\cite{ChoGS18} and
replacing a randomized dictionary~\cite[Lemma~3]{ChoGS18} by Hagerup's
dictionary~\cite[Corollary~5.3]{Hag18}, we obtain the following.

\begin{theorem}\label{th:dfs}
Given an $n$-vertex	$m$-edge graph $G$, there is a DFS on $G$
that runs in $O(m + n \log^* n)$ time and uses $O(n)$ bits.
\end{theorem}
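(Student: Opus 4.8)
The plan is to obtain the DFS of Theorem~\ref{th:dfs} as a black-box substitution into the randomized DFS of Choudhari et al.~\cite{ChoGS18}, so the proof is really a bookkeeping argument rather than a new algorithm. First I would recall the structure of the Choudhari et al.\ DFS: it simulates a standard stack-based DFS, but instead of storing the full $O(n)$-word stack it stores the DFS tree implicitly and, for each vertex currently on the stack, it needs to recover ``which child to explore next'', which it does by re-scanning adjacency arrays and testing membership in the set of already-visited vertices. That membership structure is where a dictionary enters: the original paper uses a randomized dictionary (their Lemma~3) holding up to $O(n/\log n)$ entries of $O(\log n)$ bits while answering queries in $O(1)$ expected time, and the total running time bound $O(m+n)$ in their paper is stated in expectation because of this.

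Next I would argue that the dictionary is used only as an abstract associative array: insert, delete and lookup of $O(\log n)$-bit values on a key universe of size $n$, with at most $O(n/\log n)$ keys alive simultaneously. Hagerup's dictionary~\cite[Corollary~5.3]{Hag18} supplies exactly this functionality with $O(1)$ \emph{worst-case} time per operation and $O(n)$ bits of space (it is a deterministic, space-efficient dictionary for precisely this regime), so I can textually replace every call to the randomized dictionary by a call to Hagerup's dictionary without touching any other part of the algorithm. The space bound is unaffected — it was $O(n)$ bits before and remains $O(n)$ bits — and the randomized time bound becomes a deterministic one. Here the $n\log^*n$ term appears: initializing or operating Hagerup's dictionary over the full universe incurs the extra $\log^*n$ factor on the $n$-sized portion of the work (the adjacency re-scans contributing the $O(m)$ term are unaffected), giving the stated $O(m+n\log^*n)$ running time.

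The main obstacle — really the only place that requires care rather than citation — is verifying that the randomized dictionary is genuinely used in a black-box fashion: i.e., that the Choudhari et al.\ analysis nowhere relies on special structural properties of their dictionary (such as the concrete hash values, or amortization across a batch of operations) beyond the interface ``$O(n/\log n)$ keys, $O(\log n)$-bit payloads, constant-time operations, $O(n)$ bits''. I would check their algorithm's use sites one by one and confirm each is a single insert/delete/lookup whose correctness depends only on the returned value, and that the claimed total time is just the sum over $O(m+n)$ such operations. Once that is established, the theorem follows: correctness is inherited unchanged, the space stays $O(n)$ bits, and the time is $O(m + n\log^*n)$ with all randomness removed.
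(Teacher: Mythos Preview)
Your proposal is correct and follows exactly the approach the paper takes: the paper's entire argument is the one-sentence remark preceding the theorem, namely to take the randomized DFS of Choudhari et al.\ and replace their randomized dictionary (their Lemma~3) by Hagerup's deterministic dictionary (Corollary~5.3 in~\cite{Hag18}). Your write-up is in fact more careful than the paper's, since you spell out the black-box verification step that the paper leaves implicit.
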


The DFS of Theorem~\ref{th:dfs} assumes that we provide
a graph interface with adjacency arrays.
Later we run a DFS on an $n$-vertex graph $G$ with treewidth $k$, but can only provide a graph interface with $O(k)$ access-restricted vertices, i.e., $O(k)$ vertices are provided by a list interface.

\begin{lemma}\label{lem:spaceDFS}
Assume that a graph interface for an $n$-vertex, $m$-edge graph $G$ with $O(k)$
access-restricted vertices for some $k \in \Nat$ is given.
Assume further that we can iterate over the adjacency list of an
access-restricted vertex~$v$ in $O(f(n, m))$ time for some function~$f$,
whereas we can access an entry in the adjacency array of another vertex
still in $O(1)$ time.
Then, there is a DFS that runs in $O(k \cdot f(n, m) + m + n\log^* n)$ time
and uses $O(n + k \log n)$ bits.
\end{lemma}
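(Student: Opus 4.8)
The plan is to open up the space-efficient DFS behind Theorem~\ref{th:dfs} and graft onto it a special treatment of the $O(k)$ access-restricted vertices. Recall that a space-efficient DFS of the type used there keeps only the colours (white/grey/black) of the vertices in $O(n)$ bits together with a few auxiliary dictionaries of total size $O(n)$ bits, and represents the recursion stack implicitly: when control returns from a child $w$ to its parent $v$, the position of $w$ in the adjacency array of $v$ is recovered either from an explicit cursor (kept only for the few vertices of large degree, of which there are $O(m/\!\sqrt{m})$ say) or by re-scanning the — then provably short — adjacency array of $v$. All of this relies on being able to read $\op{head}_G(v,j)$ and $\op{deg}_G(v)$ in $O(1)$ time, which by hypothesis holds for every non-access-restricted vertex.

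First I would fix the set $S\subseteq V$ of access-restricted vertices (so $|S|=O(k)$), mark it with $O(n)$ extra bits, and reserve $O(|S|\log n)=O(k\log n)$ bits for a partial explicit stack that stores, for every grey vertex of $S$, a cursor $p_v\in P$ into its adjacency list. Whenever a vertex $v$ becomes grey I set $p_v:=\op{adjfirst}(v)$ if $v\in S$; the DFS then enumerates the neighbours of $v$ by repeatedly reading $\op{adjhead}(p_v)$ and advancing $p_v:=\op{adjnext}(p_v)$ until $p_v=\op{null}$, at which point $v$ is coloured black and its cursor discarded. For $v\notin S$ nothing changes: we use the original $O(1)$-time adjacency-array access and the original implicit-stack recomputation. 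In effect each $S$-vertex is forced into the ``explicit-cursor'' regime of the base algorithm, so its adjacency list is traversed strictly monotonically and is never re-scanned on backtracking.

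For correctness it suffices to note that the modification only changes how the next unexplored neighbour of the current vertex is produced, and that it enumerates the neighbours of each vertex in the order given by the adjacency-list interface, which — as for any DFS — is an admissible enumeration order; the colour bookkeeping, the handling of tree and back edges, and termination are untouched. For the space bound we add $O(n)$ bits for marking $S$ and $O(k\log n)$ bits for the cursors (each pointer in $P$ fits in $O(\log(n+m))=O(\log n)$ bits) to the $O(n)$ bits of the base DFS, giving $O(n+k\log n)$. For the running time, the cost splits into the cost spent at non-access-restricted vertices, which equals the cost of the base DFS restricted to those vertices and is $O(m+n\log^*n)$ by Theorem~\ref{th:dfs} (the $O(k)\le O(n)$ access-restricted vertices contribute only lower-order terms there), plus the cost of traversing the adjacency lists of $S$-vertices. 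Since each such list is scanned exactly once over the whole execution and one scan costs $O(f(n,m))$ by assumption, the latter sums to $O(|S|\,f(n,m))=O(k\,f(n,m))$. Altogether the DFS runs in $O(k\,f(n,m)+m+n\log^*n)$ time and uses $O(n+k\log n)$ bits.

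The step I expect to be the real obstacle is making the grafting rigorous: the implicit-stack trick of the base DFS interleaves recomputation with adjacency-array reads in a delicate way, and one has to check that forcing every $S$-vertex into the explicit-cursor regime is compatible with how the algorithm recognises that it has returned to a particular stack frame — i.e., that consulting $p_v$ in place of re-deriving the position of the just-finished child never disturbs the stack reconstruction for the remaining, non-$S$ vertices. I would deal with this by observing that $S$ can simply be declared, a~priori, to be part of the set of ``heavy'' vertices for which the base DFS already maintains an explicit cursor; this is exactly an interface the base algorithm supports, so no new invariants are required, only a change of the threshold that classifies a vertex as heavy.
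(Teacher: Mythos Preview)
Your proposal is correct and follows essentially the same approach as the paper: keep the space-efficient DFS unchanged for ordinary vertices and maintain an explicit $\Theta(\log n)$-bit adjacency-list cursor for each of the $O(k)$ access-restricted vertices, so that their lists are traversed exactly once. Your final paragraph about folding $S$ into the ``heavy'' (explicit-cursor) regime is precisely the mechanism the paper relies on, and your correctness, time, and space arguments match the paper's.
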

\begin{proof}
To get an idea consider first a standard DFS. Basically it starts
from some vertex $v \in V$, visits each vertex of the connected component of $G$,
and iterates over the incident edges of each visited vertex once.
If the DFS visits a vertex $w$, then the current $v$-$w$ path 
is managed in a stack that
stores the status of the current iteration over the incident edges of each vertex on the path,
i.e., for each vertex on the path, it stores the index of the edges of the path
such that the DFS can return
to a vertex and proceed with the next edge to explore another path.
The space-efficient DFS (Theorem~\ref{th:dfs}) 
stores only a small part of the entire
DFS stack and uses further stacks with approximations of the
index pointers 
to
reconstruct removed information from the stack. 
The reason for storing only
small parts of the stack as well as 
approximate
indices is 
to bound the space-usage by $O(n)$ bits.
Let $V'$ be the set of access-restricted vertices with $|V'| = k$.
For the access restricted
vertices $V'$ we are not able to use this strategy,
since we only provide a list interface
i.e, we are unable to use indices. 
Thus, for the vertices of $V'$ we simply ignore the space restrictions imposed by the storage scheme of the DFS and instead
store a pointer with $\Theta(\log n)$ bits into the adjacency list for each vertex $v' \in V'$ additionally to the
storage scheme already in place for the DFS, just as any regular DFS working with adjacency lists would need.
This of course uses extra space and thus results in the space-efficient DFS using $O(n+k \log n)$
bits instead of $O(n)$ bits.
Since there is one iteration over the incident edges of each 
of the~$O(k)$ access-restricted vertices,
the running time of the DFS increases to $O(k \cdot f(n, m) + m + n \log^* n)$ time.
%
%
\end{proof}

\section{Finding $k$ Vertex-Disjoint Paths using $O(n + k^2 (\log k) \log n)$ 
	Bits}\label{sec:nBitVDP}

To compute $k$  vertex-disjoint $s$-$t$-paths we modify 
a standard network-flow technique 
where a so-called residual network~\cite{AhuMO93} and a standard DFS is used.

\begin{lemma}\label{lem:kpathsStd}
        (Network-Flow Technique~\cite{AhuMO93})
	Given an integer $k$ as well as an $n$-vertex $m$-edge graph $G = (V, E)$
	with $k$ pairwise internal vertex-disjoint $s$-$t$-paths for some
	vertices $s, t \in V$, there is an algorithm that can compute
	$\ell \le k$ internal pairwise vertex-disjoint $s$-$t$ paths using 
	$O((n + m) \log n)$ bits by executing~$\ell$ times a depth-first search, i.e., in 
	$O(\ell(n + m))$ time.
\end{lemma}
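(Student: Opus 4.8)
The plan is to reduce the problem to a unit-capacity maximum-flow instance and apply the textbook augmenting-path method, paying attention that every auxiliary structure stays within $O((n+m)\log n)$ bits.

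First, I would convert internal vertex-disjointness into arc-disjointness by the familiar node-splitting construction: conceptually replace every vertex $v \in V \setminus \{s,t\}$ by an in-copy $v_{\mathrm{in}}$ and an out-copy $v_{\mathrm{out}}$ joined by a single arc $(v_{\mathrm{in}}, v_{\mathrm{out}})$ of capacity $1$, route every arc $(u,v)$ of $G$ to $(u_{\mathrm{out}}, v_{\mathrm{in}})$, and leave $s$ and $t$ as single uncapacitated vertices. The resulting digraph $G'$ has $O(n)$ vertices and $O(n+m)$ arcs, and $\ell$ arc-disjoint $s$-$t$ paths in $G'$ correspond bijectively to $\ell$ internally vertex-disjoint $s$-$t$ paths in $G$: the capacity-$1$ internal arcs forbid reusing an internal vertex, whereas $s$ and $t$ may be shared. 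I would not build $G'$ explicitly but simulate its adjacency queries in $O(1)$ time on top of the adjacency arrays of $G$, so that $G'$, a current $0/1$ flow (one bit per arc), and the residual arcs (each the implicit reverse of an original arc) together occupy $O((n+m)\log n)$ bits, dominated by the pointers/indices of the adjacency representation.

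Then I would iterate the augmenting step: run a DFS from $s$ in the current residual network; if it reaches $t$ it has produced a simple residual $s$-$t$ path, along which I push one unit of flow by toggling the flow bit of each arc (an $O(n+m)$ operation folded into the DFS), increment the path counter, and repeat; if the DFS does not reach $t$, or once the counter reaches $k$, I stop. Since all capacities are $1$, each successful augmentation raises the flow value by exactly $1$, so after the loop the flow value equals the number $\ell \le k$ of augmentations. A flow decomposition of the $0/1$ flow yields $\ell$ arc-disjoint $s$-$t$ paths plus vertex-disjoint cycles; discarding the cycles and reading off the paths by following flow-carrying arcs out of $s$ is one further $O(n+m)$ traversal (and the $\ell$ output paths, being internally disjoint, have total length $O(n)$, so even storing them explicitly costs only $O(n\log n)$ bits). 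Undoing the node-split turns these into $\ell$ internally vertex-disjoint $s$-$t$ paths of $G$. We run a DFS $\ell$ times, each in $O(n+m)$ time (the DFS of Theorem~\ref{th:dfs} could be used, but an ordinary linear-time DFS already suffices here), for total time $O(\ell(n+m))$, while the only structures kept are $G'$, the flow bits and the DFS stack, i.e.\ $O((n+m)\log n)$ bits.

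Correctness of the output is the standard max-flow / Menger argument: a unit $s$-$t$ flow of value $\ell$ decomposes into $\ell$ arc-disjoint paths, and the capacity-$1$ internal arcs make these internally vertex-disjoint; the hypothesis that $k$ such paths exist is used only to know that up to $k$ augmentations are \emph{possible}, which is all the statement claims. I expect the only delicate point to be the space bookkeeping across the two conceptual layers (node-split and residual): one must address residual arcs implicitly through the twin of each original arc and drive the DFS entirely through the simulated interface of $G'$, never materialising $G'$; the rest is routine.
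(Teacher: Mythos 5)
Your proposal is correct and coincides with what the paper intends: the lemma is cited as a known result from the network-flow literature, and the paper's own sketch immediately after the statement describes exactly your two ingredients — the node-splitting reduction from vertex-disjoint to arc-disjoint paths and the residual-network augmenting-path loop with one DFS per augmentation (the paper even implements the same splitting later in Lemma~\ref{lem:transform}). Your space and time accounting ($O(m)$ flow bits plus an $O(n\log n)$-bit DFS stack, $\ell$ DFS runs of $O(n+m)$ time each) matches the stated bounds.
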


The well-known network-flow technique 
increases the size of an initially empty set of internal vertex-disjoint $s$-$t$-paths one by one in rounds.
It makes usage of 
(1) a so-called residual network for extending a set of edge-disjoint paths by another edge disjoint path
and (2) a simple reduction that allows us to find vertex-disjoint paths by constructing edge-disjoint paths in a modified graph.
We can neither afford storing any copy of a the original graph, a graph transformated 
nor the exact routing of several paths because it would take $\Theta(n \log n)$ bits.
Instead, we use graph interfaces that allow access to a modified graph on the fly
and we partition the paths in parts belonging to small so-called regions that allow us to store 
exact path membership on the boundaries and recompute
the paths in between the boundaries of the regions on the fly if required.
For a sketch of the idea, see Fig.\ref{fig:setting}.
Storing the exact path membership only for the boundaries has the drawback that
we cannot recompute the original paths inside a region,
 but using a deterministic 
algorithm we manage to compute always the same one.
It is important for us to remark that the network-flow technique
above also works even if we change the exact routing
of the $s$-$t$-paths (but not their number)
before or after each round.

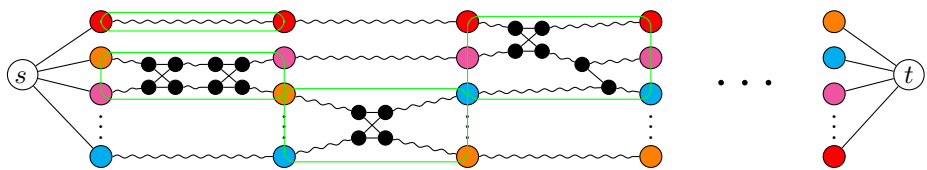
\begin{figure}[h]
	\centering
	\begin{tikzpicture}[
start chain = going right,
node distance = 10pt,
vertex/.style={minimum width=0.30em, minimum height=0.30em, circle, draw, on 
chain},
svertex/.style={minimum width=0.15em, minimum height=0.15em, circle, draw, 
on chain, inner sep=1pt, text width=3pt, align=center},
fix/.style={inner sep=1pt, text width=6pt, align=center},
graph/.style={minimum width=3em, minimum height=2em, ellipse, draw, on 
chain, decorate, decoration={snake,segment length=1mm,amplitude=0.2mm}},
decoration={
	markings,
	mark=at position 0.5 with {\arrow{>}}},
path/.style=
	{-,
		decorate,
		decoration={snake,amplitude=.2mm,segment length=2mm,}}
],

\node[vertex, fix] (s) {$s\phantom{t}$};

\node[vertex, fix, fill=red] at ($(s) + (15pt, 20pt)$) (a1) {};
\node[vertex, fix, below = 5pt of a1, fill=orange] (a2) {};
\node[vertex, fix, below = 5pt of a2, fill=Rhodamine] (a3) {};
\node[vertex, fix, below = 15pt of a3, fill=cyan] (a4) {};

\node[vertex, fix, right = 60pt of a1, fill=red] (b1) {};
\node[vertex, fix, right = 60pt of a2, fill=Rhodamine] (b2) {};
\node[vertex, fix, right = 60pt of a3, fill=orange] (b3) {};
\node[vertex, fix, right = 60pt of a4, fill=cyan] (b4) {};

\node[vertex, fix, right = 60pt of b1, fill=red] (c1) {};
\node[vertex, fix, right = 60pt of b2, fill=Rhodamine] (c2) {};
\node[vertex, fix, right = 60pt of b3, fill=cyan] (c3) {};
\node[vertex, fix, right = 60pt of b4, fill=orange] (c4) {};

\node[vertex, fix, right = 60pt of c1, fill=red] (d1) {};
\node[vertex, fix, right = 60pt of c2, fill=Rhodamine] (d2) {};
\node[vertex, fix, right = 60pt of c3, fill=cyan] (d3) {};
\node[vertex, fix, right = 60pt of c4, fill=orange] (d4) {};

\node[vertex, fix, right = 60pt of d1, fill=orange] (e1) {};
\node[vertex, fix, right = 60pt of d2, fill=cyan] (e2) {};
\node[vertex, fix, right = 60pt of d3, fill=Rhodamine] (e3) {};
\node[vertex, fix, right = 60pt of d4, fill=red] (e4) {};

\node[vertex, fix, right = 320pt of s] (t) {$t$};

\node[left = 37pt of t, yshift=-3pt] (d) {\scalebox{2}{$\dots$}};

\node[svertex, fill = black] at ($(a2) + (5pt, -2.5pt)$) (ca2a) {};
\node[svertex, fill = black] at ($(a2) + (15pt, -2.5pt)$) (ca2b) {};

\node[svertex, fill = black] at ($(a3) + (5pt, 2.5pt)$) (ca3a) {};
\node[svertex, fill = black] at ($(a3) + (15pt, 2.5pt)$) (ca3b) {};

\node[svertex, fill = black] at ($(a2) + (30pt, -2.5pt)$) (ca2a2) {};
\node[svertex, fill = black] at ($(a2) + (40pt, -2.5pt)$) (ca2b2) {};

\node[svertex, fill = black] at ($(a3) + (30pt, 2.5pt)$) (ca3a2) {};
\node[svertex, fill = black] at ($(a3) + (40pt, 2.5pt)$) (ca3b2) {};

\node[svertex, fill = black] at ($(b3) + (15pt, -7pt)$) (cb3a) {};
\node[svertex, fill = black] at ($(b3) + (25pt, -7pt)$) (cb3b) {};

\node[svertex, fill = black] at ($(b4) + (15pt, 7pt)$) (cb4a) {};
\node[svertex, fill = black] at ($(b4) + (25pt, 7pt)$) (cb4b) {};

\node[svertex, fill = black] at ($(c1) + (5pt, -2.5pt)$) (cc1a) {};
\node[svertex, fill = black] at ($(c1) + (15pt, -2.5pt)$) (cc1b) {};

\node[svertex, fill = black] at ($(c2) + (5pt, 2.5pt)$) (cc2a) {};
\node[svertex, fill = black] at ($(c2) + (15pt, 2.5pt)$) (cc2b) {};

\node[svertex, fill = black] at ($(c2) + (30pt, -2.5pt)$) (cc2a2) {};

\node[svertex, fill = black] at ($(c3) + (40pt, 2.5pt)$) (cc3b) {};



\node[below = -6pt of a3](ad) {$\vdots$};
\node[below = -6pt of b3](ad) {$\vdots$};
\node[below = -6pt of c3](ad) {$\vdots$};
\node[below = -6pt of d3](ad) {$\vdots$};
\node[below = -6pt of e3](ad) {$\vdots$};

\path [draw, path] (a1) -- (b1);
\path [draw, path] (a2) -- (ca2a);
\path [draw, path] (a3) -- (ca3a);
\path [draw, path] (a4) -- (b4);
\path [draw, path] (ca2b2) -- (b2);
\path [draw, path] (ca3b2) -- (b3);
\path [draw, -] (ca2a) -- (ca2b);
\path [draw, -] (ca3a) -- (ca3b);
\path [draw, path] (ca2b) -- (ca2a2);
\path [draw, path] (ca3b) -- (ca3a2);
\path [draw, -] (ca2a2) -- (ca2b2);
\path [draw, -] (ca3a2) -- (ca3b2);
\path [draw, -] (ca2a) -- (ca3b);
\path [draw, -] (ca3a) -- (ca2b);
\path [draw, -] (ca2a2) -- (ca3b2);
\path [draw, -] (ca3a2) -- (ca2b2);

\path [draw, path] (b1) -- (c1);
\path [draw, path] (b2) -- (c2);
\path [draw, path] (b3) -- (cb3a);
\path [draw, path] (b4) -- (cb4a);
\path [draw, path] (cb3b) -- (c3);
\path [draw, path] (cb4b) -- (c4);

\path [draw, -] (cb3a) -- (cb3b);
\path [draw, -] (cb3b) -- (cb4a);
\path [draw, -] (cb3a) -- (cb4b);
\path [draw, -] (cb4a) -- (cb4b);

\path [draw, -] (cc1a) -- (cc1b);
\path [draw, -] (cc1b) -- (cc2a);
\path [draw, -] (cc1a) -- (cc2b);
\path [draw, -] (cc2a) -- (cc2b);

\path [draw, path] (c1) -- (cc1a);
\path [draw, path] (c2) -- (cc2a);
\path [draw, path] (c3) -- (cc3b);
\path [draw, path] (c4) -- (d4);
\path [draw, path] (cc1b) -- (d1);
\path [draw, path] (cc2b) -- (cc2a2);
\path [draw, path] (cc2a2) -- (d2);
\path [draw, path] (cc3b) -- (d3);
\path [draw, -] (cc2a2) -- (cc3b);

\path[draw, -] (s) -- (a1);
\path[draw, -] (s) -- (a2);
\path[draw, -] (s) -- (a3);
\path[draw, -] (s) -- (a4);

\path[draw, -] (t) -- (e1);
\path[draw, -] (t) -- (e2);
\path[draw, -] (t) -- (e3);
\path[draw, -] (t) -- (e4);

\draw[rounded corners, color=green] ($(a2) + (0, 2pt)$) rectangle ($(b3) + (0, -2pt)$) {};
\draw[rounded corners, color=green] ($(c1) + (0, 2pt)$) rectangle ($(d3) + (0, -2pt)$) {};
\draw[rounded corners, color=green] ($(a1) + (0, 3.5pt)$) rectangle ($(b1) + (0, -3.5pt)$) {};
\draw[rounded corners, color=green] ($(b3) + (0, 2pt)$) rectangle ($(c4) + (0, -2pt)$) {};


\end{tikzpicture}
        \vspace{-2mm}
	\caption{Vertex-disjoint $s$-$t$-paths with colored boundaries. 
        Some regions are exemplary sketched by green rectangles.}\label{fig:setting}
\end{figure}

For the rest of this section, let $G = (V, E)$ be an $n$-vertex $m$-edge graph with treewidth~$k \in \Nat$
such that~$G$ has~$k$ internal pairwise vertex-disjoint $s$-$t$-paths for some vertices $s, t \in V$ and let $\mathcal P$
be a set of $\ell \le k$ pairwise internally vertex-disjoint $s$-$t$-paths.
Let $V'$ be the set that consists of the union of all vertices over all paths in~$\mathcal P$.

Before we can present our algorithm, we present two lemmas that consider properties of internal pairwise 
vertex-disjoint paths as given in~$\mathcal P$. 
To describe the properties we first define a special
path structure (with respect to~$G$). 
A {\em deadlock cycle} in $G$ with respect to $\mathcal P$ consists
of a sequence $P_1, \ldots, P_{\ell'}$ of paths in $\mathcal P$ for some $2 \le \ell' \le \ell$
such that there is 
a subpath $x_i, \ldots, y_i$ on every path $P_i$
($1 \le i \le \ell'$) and there is an edge $\{x_i, y_{(i \mod
\ell') + 1}\} \in E$.
We call a deadlock cycle {\em simple} if
every subpath consists of exactly two vertices, and otherwise {\em extended} (see Fig.~\ref{fig:deadlockexample}).
Moreover, we call an $s$-$t$-path $P$ {\em chordless}
if only subsequent vertices of the path are connected by an edge called chord.
In some sense, a chord of a path is a deadlock cycle over
only one path.

\begin{figure}[h]
			\centering
			\begin{subfigure}{.5\textwidth}
				\centering
				\begin{tikzpicture}[
node distance = 10pt,
vertex/.style={minimum width=0.30em, minimum height=0.30em, circle, draw, on 
chain},
svertex/.style={minimum width=0.15em, minimum height=0.15em, circle, draw, 
on chain, inner sep=1pt, text width=3pt, align=center},
fix/.style={inner sep=1pt, text width=6pt, align=center},
graph/.style={minimum width=3em, minimum height=2em, ellipse, draw, on 
chain, decorate, decoration={snake,segment length=1mm,amplitude=0.2mm}},
decoration={
	markings,
	mark=at position 0.5 with {\arrow{>}}},
path/.style=
	{-,
		decorate,
		decoration={snake,amplitude=.2mm,segment length=2mm,}}
],

\node[] (p) {$P_{\ell}$};
\node[below= 0pt of p] (pl) {$P_{\ell'}$};
\node[below= 0pt of pl] (p2) {$P_2$};
\node[below= 0pt of p2] (p1) {$P_1$};

\node[svertex, fill=black, right= 55pt of p1] (x1) {};
\node[svertex, fill=black, right= 70pt of p1] (y1) {};

\node[svertex, fill=black, right= 40pt of p2] (x2) {};
\node[svertex, fill=black, right= 55pt of p2] (y2) {};

\node[svertex, fill=black, right= 55pt of pl] (xl) {};
\node[svertex, fill=black, right= 70pt of pl] (yl) {};

\path[draw, path]
($(p.east) + (6pt,0)$) -- ($(p) + (170pt, 0pt)$)
($(pl.east) + (5pt,0)$) -- (xl.center)
(yl.center) -- ($(pl) + (170pt, 0)$)
($(p2.east) + (5pt,0)$) -- (x2.center)
(y2.center) -- ($(p2) + (170pt, 0)$)
($(p1.east) + (5pt,0)$) -- (x1.center)
(y1.center) -- ($(p1) + (170pt, 0)$)
;

\path[draw]
(x1) -- (y1)
(x2) -- (y2)
(xl) -- (yl)
(x1) -- (y2)
(x2) -- (yl)
(xl) -- (y1)
;



\node[] at ($(p2) + (15pt, 11pt)$) {$\vdots$};
\node[] at ($(p2) + (170pt, 11pt)$) {$\vdots$};

\node[] at ($(pl) + (15pt, 11pt)$) {$\vdots$};
\node[] at ($(pl) + (170pt, 11pt)$) {$\vdots$};

\end{tikzpicture}
 			\end{subfigure}%
			\begin{subfigure}{.5\textwidth}
				\centering
				\begin{tikzpicture}[
node distance = 10pt,
vertex/.style={minimum width=0.30em, minimum height=0.30em, circle, draw, on 
chain},
svertex/.style={minimum width=0.15em, minimum height=0.15em, circle, draw, 
on chain, inner sep=1pt, text width=3pt, align=center},
fix/.style={inner sep=1pt, text width=6pt, align=center},
graph/.style={minimum width=3em, minimum height=2em, ellipse, draw, on 
chain, decorate, decoration={snake,segment length=1mm,amplitude=0.2mm}},
decoration={
	markings,
	mark=at position 0.5 with {\arrow{>}}},
path/.style=
	{-,
		decorate,
		decoration={snake,amplitude=.2mm,segment length=2mm,}}
],

\node[] (p) {$P_{\ell}$};
\node[below= 0pt of p] (pl) {$P_{\ell'}$};
\node[below= 0pt of pl] (p2) {$P_2$};
\node[below= 0pt of p2] (p1) {$P_1$};

\node[svertex, fill=black, right= 55pt of p1] (x1) {};
\node[svertex, fill=black, right= 70pt of p1] (y1) {};

\node[svertex, fill=black, right= 25pt of p2] (x2) {};
\node[svertex, fill=red, right= 40pt of p2] (d) {};
\node[svertex, fill=black, right= 55pt of p2] (y2) {};

\node[svertex, fill=black, right= 55pt of pl] (xl) {};
\node[svertex, fill=black, right= 70pt of pl] (yl) {};

\path[draw, path]
($(p.east) + (6pt,0)$) -- ($(p) + (170pt, 0pt)$)
($(pl.east) + (5pt,0)$) -- (xl.center)
(yl.center) -- ($(pl) + (170pt, 0)$)
($(p2.east) + (5pt,0)$) -- (x2.center)
(y2.center) -- ($(p2) + (170pt, 0)$)
($(p1.east) + (5pt,0)$) -- (x1.center)
(y1.center) -- ($(p1) + (170pt, 0)$)
;

\path[draw]
(x1) -- (y1)
(xl) -- (yl)
(x1) -- (y2)
(x2) -- (yl)
(xl) -- (y1)
;

\path[draw, path, color=red]
(x2) -- (d)
(d) -- (y2)
;



\node[] at ($(p2) + (15pt, 11pt)$) {$\vdots$};
\node[] at ($(p2) + (170pt, 11pt)$) {$\vdots$};

\node[] at ($(pl) + (15pt, 11pt)$) {$\vdots$};
\node[] at ($(pl) + (170pt, 11pt)$) {$\vdots$};

\end{tikzpicture}
			\end{subfigure}%
			\caption{Illustration of a simple and extended deadlock cycle (left and right, respectively).}\label{fig:deadlockexample}
\end{figure}
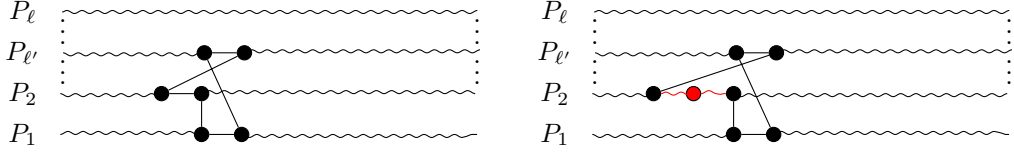

\begin{lemma}\label{lem:sameVertexSet}
Assume that the paths in $\mathcal P$ are chordless 
and that $G$ has no extended deadlock cycle with respect to $\mathcal P$.
Any other set $\mathcal Q$ of $\ell$
internal
pairwise vertex-disjoint $s$-$t$-paths in $G[V']$ uses all vertices of $V'$.
\end{lemma}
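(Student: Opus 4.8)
The plan is to argue by contradiction: assume that some vertex $z \in V'$ is not used by $\mathcal Q$, and deduce that $G$ has a chord on a path of $\mathcal P$ or an extended deadlock cycle with respect to $\mathcal P$, contradicting the hypotheses. Two preliminary normalizations of $\mathcal Q$ cost nothing. If a path of $\mathcal Q$ has a chord $\{a,b\}$, replace its $a$-$b$ subpath by the single edge $\{a,b\}$; the result is still a family of $\ell$ internally vertex-disjoint $s$-$t$-paths in $G[V']$ and uses no more vertices, hence is still a counterexample. So I may assume every path of $\mathcal Q$ is chordless, and among all chordless counterexamples I fix one of smallest total length $\sum_{Q\in\mathcal Q}|E(Q)|$.

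The main construction is the superposition of $\mathcal P$ and $\mathcal Q$. Orient every path of $\mathcal P$ and every path of $\mathcal Q$ from $s$ to $t$, and let $D$ be the multidigraph on $V'$ whose arc multiset is that of $\mathcal P$ together with the reversals of the arcs of $\mathcal Q$. A vertex used by both families has in- and out-degree $2$ in $D$, the vertices $s$ and $t$ have in- and out-degree $\ell$, and a vertex used only by $\mathcal P$---in particular $z$---has in- and out-degree $1$; so $D$ is balanced and decomposes into arc-disjoint directed cycles, where trivial $2$-cycles on arcs common to $\mathcal P$ and $\mathcal Q$ may be discarded (they cannot touch $z$). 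Let $C$ be the cycle through $z$. Since the paths of $\mathcal P$ are internally vertex-disjoint and acyclic, and likewise those of $\mathcal Q$, following $C$ we alternate between maximal runs that traverse one path of $\mathcal P$ forward and maximal runs that traverse one path of $\mathcal Q$ backward. Write the $\mathcal P$-runs in cyclic order as subpaths $x_1{\cdots}y_1,\dots,x_r{\cdots}y_r$ of paths $P_{i_1},\dots,P_{i_r}$ of $\mathcal P$; their endpoints are used by $\mathcal Q$ (that is where $C$ switches to a $\mathcal Q$-arc), while $z$ is not, so $z$ lies strictly inside one $\mathcal P$-run, which therefore has at least three vertices.

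It remains to prove that each connecting $\mathcal Q$-run of $C$ is a single edge $\{y_j,x_{j+1}\}$. Granting this, the cyclic list of subpaths $x_j{\cdots}y_j$ with the edges $\{x_j,y_{j+1}\}$ is, up to reversal, a deadlock cycle of $G$ with respect to $\mathcal P$ (or, if all the $P_{i_j}$ coincide---in particular if $r=1$---a short case analysis turns it into a chord of that path), and it is \emph{extended} because the $\mathcal P$-subpath through $z$ has three or more vertices; in both cases we contradict a hypothesis, which proves the lemma. I expect the reduction of the connecting $\mathcal Q$-runs to single edges to be the main obstacle: a connector is an induced subpath of a path of $\mathcal Q$ whose interior vertices, while not lying on any $\mathcal P$-run of this particular cycle $C$, still lie on paths of $\mathcal P$ and may be used by $\mathcal Q$ along \emph{other} cycles of the decomposition, so a naive reroute of a single path of $\mathcal Q$ need not remain a counterexample. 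The intended remedy is to exploit the minimality of $\sum_{Q}|E(Q)|$ together with swaps along cycles of $D$ that avoid $z$: a connector of length at least two should allow such a swap that produces a chordless counterexample of strictly smaller total length, the desired contradiction.
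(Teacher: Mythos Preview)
Your Eulerian-decomposition approach is genuinely different from the paper's, which instead uses a direct chasing argument: since each $P_i$ is chordless, $s$ has exactly $\ell$ neighbours in $V'$, so one may index $\mathcal Q$ so that $Q_i$ and $P_i$ share their second vertex; then follow $Q_1$ along $P_1$ until it leaves, observe it must land on some $P_2$ (forcing $Q_2$ off, etc.), and argue that the resulting cascade eventually produces a jump past the avoided vertex, yielding an extended deadlock cycle. No superposition digraph, no minimality.

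Your route can be made to work, but the step you flag as the obstacle has a much cleaner resolution than the minimality-plus-swap sketch, which as written is not a proof. The key observation you are missing is that every vertex of $\mathcal Q$ lies in $V'$, hence on some path of~$\mathcal P$; there are \emph{no} $\mathcal Q$-only vertices. Therefore you may simply \emph{choose} the Eulerian decomposition: at every vertex used by both families, pair the incoming $\mathcal P$-arc with the outgoing reversed-$\mathcal Q$-arc and vice versa. With this pairing, every reversed-$\mathcal Q$-arc is immediately followed by a $\mathcal P$-arc, so $\mathcal Q$-runs have length exactly one --- no minimality needed. Two secondary issues then remain that your ``short case analysis'' does not cover: the paper's definition of a deadlock cycle takes a sequence of \emph{distinct} paths of~$\mathcal P$, whereas your cycle $C$ may revisit the same $P_i$ on non-adjacent $\mathcal P$-runs (adjacent repeats give a chord, but non-adjacent ones need a shortcutting argument); and $C$ may pass through $s$ or $t$, which sit on every path and must be handled separately. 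Both are tractable, but they are not the one-liners your proposal suggests.
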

\begin{proof}
Note that each path in $\mathcal P$ uses another neighbor of $s$ and since
the paths are chordless, $s$ can not have another neighbor in $V'$. In
other words, $s$ has exactly $\ell$ neighbors in $V'$ and each path in $\mathcal
Q$ must use exactly one of these neighbors. This allows us to name the paths
in $\mathcal P$ and in $\mathcal Q$ with $P_1,\ldots,P_\ell$ and $Q_1,\ldots,Q_\ell$, respectively, such that the
second vertex of $P_i$ and $Q_i$ are the same for all $i\in \{1,\ldots,\ell\}$.

The following
description assumes suitable indices of the paths in $\mathcal P$.
Assume for a contradiction that the lemma is not true.
This means that 
$P_1$ uses a vertex $v$ whereas $Q_1$ leaves the vertices of $P_1$ at some
vertex $u$ to avoid using a vertex $v\in V'$.
Since the paths are
chordless, $Q_1$ must jump on a vertex of a path $P_2$. Since the paths are
chordless, this means that $Q_2$
must leave to a path $P_3$. At the end, there must be a path $Q_i$ ($i \ne 1$) that
jumps to a vertex $w$ on $P_1$. If $w$ is before $v$
(while walking on
$P_1$ from $s$ to $t$), then we rename the indices of the paths in $Q$ such that the paths
with the same index
in $\mathcal P$ and in $\mathcal Q$ use the same vertex strictly after the
jumps in $\mathcal Q$. Afterwards, we can reapply the paragraph. 
If $w$ is behind $v$, we have an extended deadlock
cycle in $G$ with respect to $\mathcal P$; a contradiction.
\end{proof}

\begin{lemma}\label{lem:probMaintain}
Assume that the paths in $\mathcal P$ are chordless 
and that $G$ has no extended deadlock cycle with respect to $\mathcal P$.
Any other set $\mathcal Q$ of $\ell$ internal
pairwise vertex-disjoint $s$-$t$-paths in $G[V']$ is chordless and  
$G$ has no extended deadlock cycle with respect to $\mathcal
Q$.
\end{lemma}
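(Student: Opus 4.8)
The plan is to reduce everything to Lemma~\ref{lem:sameVertexSet}. Applied with $\mathcal P$ in the role of the reference system, that lemma tells us not only that $\mathcal Q$ uses all of $V'$, but — more usefully — that \emph{no} family of $\ell$ internal pairwise vertex-disjoint $s$-$t$-paths inside $G[V']$ can avoid a vertex of $V'$. So it suffices to show: if $\mathcal Q$ were not chordless, or had an extended deadlock cycle, one could modify $\mathcal Q$ into a family of $\ell$ internal pairwise vertex-disjoint $s$-$t$-paths in $G[V']$ that misses at least one vertex. Two facts are used throughout: $\mathcal Q$ uses all of $V'$ (Lemma~\ref{lem:sameVertexSet}); and, because $\mathcal P$ is chordless (the argument is the first lines of the proof of Lemma~\ref{lem:sameVertexSet}), $s$ has exactly $\ell$ neighbours in $V'$, so the second vertex of each path of $\mathcal Q$ is such a neighbour, the edge joining $s$ to it is a path-edge of $\mathcal Q$, and symmetrically for $t$.

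\emph{Chord case.} Suppose an edge $\{a,b\}$ of $G[V']$ has its endpoints on a common path $Q_i$ without being consecutive on it, and let $\rho$ be the subpath of $Q_i$ from $a$ to $b$; then $\rho$ has a nonempty interior. Note $a,b\notin\{s,t\}$: if, say, $a=s$, then $b$ is a neighbour of $s$ in $V'$, hence the second vertex of a path of $\mathcal Q$; since $b$ also lies on $Q_i$ and the $Q_j$ are internally disjoint, that path is $Q_i$, so $\{s,b\}$ is its first edge — making $a=s$ and $b$ consecutive on $Q_i$, a contradiction; the case $t$ is symmetric. Replacing $Q_i$ by the path $Q_i'$ obtained from $Q_i$ by substituting the single edge $\{a,b\}$ for $\rho$ yields, together with the remaining $Q_j$, a family of $\ell$ internal pairwise vertex-disjoint $s$-$t$-paths in $G[V']$ whose vertex set is $V'$ minus the (nonempty) interior of $\rho$ — contradicting Lemma~\ref{lem:sameVertexSet}.

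\emph{Extended deadlock cycle case.} Say the cycle is given by distinct paths $Q_{i_1},\dots,Q_{i_{\ell'}}$ ($\ell'\ge 2$), subpaths $\xi_j$ of $Q_{i_j}$ from $x_j$ to $y_j$, and edges $e_j=\{x_j,y_{(j\bmod\ell')+1}\}$ of $G[V']$, where some $\xi_{j_0}$ has a nonempty interior; together they form a cycle $C$ in $G[V']$. The plan is to \emph{reroute} $\mathcal Q$ along $C$: delete the edges of all $\xi_j$ from $Q_{i_1},\dots,Q_{i_{\ell'}}$ and insert the edges $e_1,\dots,e_{\ell'}$ instead; call the result $\mathcal Q''$. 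Assuming no $\xi_j$ meets $s$ or $t$ and each $\xi_j$ has at least two vertices (which can be arranged; otherwise only the bookkeeping changes), each $e_j$ is a non-edge of $\mathcal Q$, each interior vertex of each $\xi_j$ becomes isolated in $\mathcal Q''$, each endpoint $x_j,y_j$ keeps degree $2$ (it loses the incident $\xi_j$-edge but gains an incident $e$-edge), and all other vertices — in particular $s$ and $t$ — keep their degrees. Hence $\mathcal Q''$ is a disjoint union of some internal pairwise vertex-disjoint $s$-$t$-paths and some vertex-disjoint cycles; if it contains $\ell$ $s$-$t$-paths, discarding the cycles leaves $\ell$ internal pairwise vertex-disjoint $s$-$t$-paths in $G[V']$ avoiding the nonempty interior of $\xi_{j_0}$, again contradicting Lemma~\ref{lem:sameVertexSet}.

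The main obstacle is precisely the last point: ruling out that the reroute produces \emph{fewer} than $\ell$ $s$-$t$-paths, which would happen if the start-portions of two of the $Q_{i_j}$ (the pieces from $s$ to a $\xi_j$-endpoint) got re-connected to each other through the $e_j$, forming a cycle through $s$. I expect this to be settled by a dichotomy on the cycle $C$: either $C$ has three or more consecutive vertices lying on a single path of $\mathcal P$, in which case $C$ itself already displays an extended deadlock cycle (or a chord) of $\mathcal P$, contradicting that $\mathcal P$ is chordless and extended-deadlock-cycle-free; or every maximal run of $C$ along a path of $\mathcal P$ is a single edge, and then the relative orientations of the $\xi_j$ inside the $Q_{i_j}$ are constrained enough that the reroute is non-degenerate. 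Carrying out this case analysis, together with the corner cases where some $\xi_j$ is a single vertex or ends at $s$ or $t$, is where the real work lies.
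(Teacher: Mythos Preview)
Your chord case is correct and is essentially the paper's argument; the extra care that $a,b\notin\{s,t\}$ is harmless but not needed, since even in that case the shortcut produces an $s$-$t$ path on a strictly smaller vertex set, and Lemma~\ref{lem:sameVertexSet} applies.

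The gap is in your deadlock-cycle case. Your rerouting is exactly the paper's, but the ``main obstacle'' you flag does not exist. In the paper's definition the phrase ``subpath $x_i,\ldots,y_i$ on $P_i$'' carries the convention---visible in every figure and in the later algorithmic use of deadlock cycles (Lemma~\ref{lem:rerouting})---that $x_i$ precedes $y_i$ on $P_i$ in the $s\!\to\!t$ direction. Under this convention each cross edge $e_j=\{x_j,y_{(j\bmod\ell')+1}\}$ joins the $s$-side fragment of $Q_{i_j}$ (ending at $x_j$) to the $t$-side fragment of $Q_{i_{j+1}}$ (starting at $y_{j+1}$). The reroute therefore yields exactly the $\ell'$ paths
\[
s\;\to\;\cdots\;\to\;x_j\;\to\;y_{j+1}\;\to\;\cdots\;\to\;t
\qquad(j=1,\ldots,\ell',\ \text{indices cyclic}),
\]
together with the untouched $Q_j$; these are $\ell$ internally vertex-disjoint $s$-$t$-paths in $G[V']$ and they miss every interior vertex of every $\xi_j$. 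A cycle through $s$ would require some $e_j$ to connect two $s$-side fragments, i.e.\ some $y_{j+1}$ to precede $x_{j+1}$ on $Q_{i_{j+1}}$, which the convention excludes. Hence the appeal to Lemma~\ref{lem:sameVertexSet} is immediate, and the dichotomy on how $C$ interacts with $\mathcal P$---which you yourself leave as ``where the real work lies''---is not needed. The paper's proof of this half is accordingly a single paragraph.
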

\begin{proof}
For a contradiction assume that $\mathcal Q$ has a path $Q$ with a chord $\{u, w\}$
such that some vertex $v$ is in between $u$ and $w$ on the path $Q$. 
This allows us to construct a path $Q'$ from $Q$ that uses 
the edge $\{u, w\}$ instead of the subpath of $Q$ between $u$ and $w$, leaving $v$ unused.
We so get $\ell$ paths $\{Q'\} \cup (\mathcal Q \setminus \{Q\})$ that does not use $v \in V'$, which is a contradiction to Lemma~\ref{lem:sameVertexSet}.

Assume now that there is an extended deadlock
cycle $Z$ in $G$ with respect to $\mathcal Q$ and some vertex $v$ on $Z$ such that $v$ is in the middle of
one of the subpaths of $Z$. 
By removing the common edges 
of $Z$ and the paths in $\mathcal Q$ from the paths in $\mathcal Q$ 
and adding the remaining edges of $Z$ to the paths in $\mathcal Q$, 
we get a set of pairwise vertex-disjoint $s$-$t$-paths $\mathcal Q'$ 
in $G[V' \setminus \{v\}]$.
An example is sketched in Fig.~\ref{fig:constQprime}.
The existence of $\mathcal Q'$
is again a contraction to Lemma~\ref{lem:sameVertexSet}.
\end{proof}
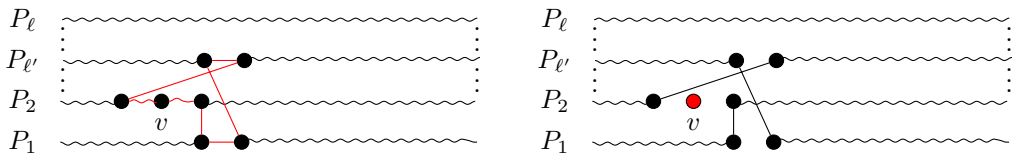
\begin{figure}[h]
			\centering
			\begin{subfigure}{.5\textwidth}
				\centering
				\begin{tikzpicture}[
node distance = 10pt,
vertex/.style={minimum width=0.30em, minimum height=0.30em, circle, draw, on 
chain},
svertex/.style={minimum width=0.15em, minimum height=0.15em, circle, draw, 
on chain, inner sep=1pt, text width=3pt, align=center},
fix/.style={inner sep=1pt, text width=6pt, align=center},
graph/.style={minimum width=3em, minimum height=2em, ellipse, draw, on 
chain, decorate, decoration={snake,segment length=1mm,amplitude=0.2mm}},
decoration={
	markings,
	mark=at position 0.5 with {\arrow{>}}},
path/.style=
	{-,
		decorate,
		decoration={snake,amplitude=.2mm,segment length=2mm,}}
],

\node[] (p) {$P_{\ell}$};
\node[below= 0pt of p] (pl) {$P_{\ell'}$};
\node[below= 0pt of pl] (p2) {$P_2$};
\node[below= 0pt of p2] (p1) {$P_1$};

\node[svertex, fill=black, right= 55pt of p1] (x1) {};
\node[svertex, fill=black, right= 70pt of p1] (y1) {};

\node[svertex, fill=black, right= 25pt of p2] (x2) {};
\node[svertex, fill=black, right= 40pt of p2,label=below:$v$] (d) {};
\node[svertex, fill=black, right= 55pt of p2] (y2) {};

\node[svertex, fill=black, right= 55pt of pl] (xl) {};
\node[svertex, fill=black, right= 70pt of pl] (yl) {};

\path[draw, path]
($(p.east) + (6pt,0)$) -- ($(p) + (170pt, 0pt)$)
($(pl.east) + (5pt,0)$) -- (xl.center)
(yl.center) -- ($(pl) + (170pt, 0)$)
($(p2.east) + (5pt,0)$) -- (x2.center)
(y2.center) -- ($(p2) + (170pt, 0)$)
($(p1.east) + (5pt,0)$) -- (x1.center)
(y1.center) -- ($(p1) + (170pt, 0)$)
;

\path[draw, color=red]
(x1) -- (y1)
(xl) -- (yl)
(x1) -- (y2)
(x2) -- (yl)
(xl) -- (y1)
;

\path[draw, path, color=red]
(x2) -- (d)
(d) -- (y2)
;



\node[] at ($(p2) + (15pt, 11pt)$) {$\vdots$};
\node[] at ($(p2) + (170pt, 11pt)$) {$\vdots$};

\node[] at ($(pl) + (15pt, 11pt)$) {$\vdots$};
\node[] at ($(pl) + (170pt, 11pt)$) {$\vdots$};

\end{tikzpicture}
 			\end{subfigure}%
			\begin{subfigure}{.5\textwidth}
				\centering
				\begin{tikzpicture}[
node distance = 10pt,
vertex/.style={minimum width=0.30em, minimum height=0.30em, circle, draw, on 
chain},
svertex/.style={minimum width=0.15em, minimum height=0.15em, circle, draw, 
on chain, inner sep=1pt, text width=3pt, align=center},
fix/.style={inner sep=1pt, text width=6pt, align=center},
graph/.style={minimum width=3em, minimum height=2em, ellipse, draw, on 
chain, decorate, decoration={snake,segment length=1mm,amplitude=0.2mm}},
decoration={
	markings,
	mark=at position 0.5 with {\arrow{>}}},
path/.style=
	{-,
		decorate,
		decoration={snake,amplitude=.2mm,segment length=2mm,}}
],

\node[] (p) {$P_{\ell}$};
\node[below= 0pt of p] (pl) {$P_{\ell'}$};
\node[below= 0pt of pl] (p2) {$P_2$};
\node[below= 0pt of p2] (p1) {$P_1$};

\node[svertex, fill=black, right= 55pt of p1] (x1) {};
\node[svertex, fill=black, right= 70pt of p1] (y1) {};

\node[svertex, fill=black, right= 25pt of p2] (x2) {};
\node[svertex, fill=red, right= 40pt of p2,label=below:$v$,] (d) {};
\node[svertex, fill=black, right= 55pt of p2] (y2) {};

\node[svertex, fill=black, right= 55pt of pl] (xl) {};
\node[svertex, fill=black, right= 70pt of pl] (yl) {};

\path[draw, path]
($(p.east) + (6pt,0)$) -- ($(p) + (170pt, 0pt)$)
($(pl.east) + (5pt,0)$) -- (xl.center)
(yl.center) -- ($(pl) + (170pt, 0)$)
($(p2.east) + (5pt,0)$) -- (x2.center)
(y2.center) -- ($(p2) + (170pt, 0)$)
($(p1.east) + (5pt,0)$) -- (x1.center)
(y1.center) -- ($(p1) + (170pt, 0)$)
;

\path[draw]
%
(x1) -- (y2)
(x2) -- (yl)
(xl) -- (y1)
;




\node[] at ($(p2) + (15pt, 11pt)$) {$\vdots$};
\node[] at ($(p2) + (170pt, 11pt)$) {$\vdots$};

\node[] at ($(pl) + (15pt, 11pt)$) {$\vdots$};
\node[] at ($(pl) + (170pt, 11pt)$) {$\vdots$};

\end{tikzpicture}
			\end{subfigure}%
			\vspace{-4mm}
			\caption{On the left an extended deadlock cycle $Z$ is shown in red 
			with respect to $\ell'$ paths in $Q$.
			On the right a rerouting is sketched such that the new path $Q'$ do not use vertex $v$ anymore.}\label{fig:constQprime}
\end{figure}


%
We begin with a high-level description of our approach to compute
 $\mathcal{P}$ space efficiently.
Afterwards, we present the missing details.
A chordless
$s$-$t$-path $P$ 
can be computed by a slightly extended DFS and 
to identify the path it suffices to store 
its vertices in an $O(n)$-bit array~$\A$ (Lemma~\ref{lem:chordlessSTPath}). 
However, if we want to store $\ell > 1$ chordless pairwise internally vertex-disjoint $s$-$t$-paths in $\A$,
we cannot distinguish them without further information.
For an easier description assume that 
each path with its vertices is colored with a different color.
Using an array of $n$ fields of $O(\log \ell)$ bits each we can easily store the coloring
of each vertex and thus $\ell$ chordless paths with $O(n \log \ell)$ bits in total. 
However, our goal is to store $\ell$ paths with $O(n + f(\ell) \log n)$ bits for some polynomial function~$f$.
We therefore define a so-called {\em path data scheme} that 
stores partial information about the $\ell$ paths using $O(n)$ bits (Def.~\ref{def:pathDatascheme}).
%
The idea of our scheme is to select and color a set $\B \subseteq V'$ of vertices along the paths with the property that $|\B| = O(n  / \log k)$ and $G[V' \setminus \B]$ consists only of small
connected components of $O(k \log k)$ vertices. We later call these
components {\em regions}.

Such a set $\B$ exists if the paths have so-called {\em good} properties (Def.~\ref{def:good-paths}).
Once a path data scheme is created we lose the information about the exact routing of the original paths,
but are able to realize operations to construct the same number of paths.
These operations are summarized
in a so-called {\em path data structure} (Def.~\ref{def:pathDataStructure}). 
In particular, it allows us to
determine the color of a vertex $v$ on a path $P$ or to
run along the paths
by finding
the successor or predecessor of $v$
in $O(\op{deg}(v) + f(k))$ time for some polynomial $f$ (Lemma~\ref{lem:recompute}).
The idea here is to explore the region containing $v$
and use Lemma~\ref{lem:kpathsStd} to get a fixed set of 
$\ell$ paths connecting the equally colored vertices. By
Lemma~\ref{lem:probMaintain}, the new paths still have our good properties.
However, assuming that we have a set $\mathcal P$ of good paths and 
some new chordless path~$P^* \notin \mathcal P$, whose computation is shown in Lemma~\ref{lem:newPath}, $\mathcal P \cup \{P^*\}$
is not necessarily good.
Our approach to make the paths good is to find a rerouting $R(\mathcal P \cup \{P^*\})$ that outputs
a set $\mathcal P'$ of good paths where $|\mathcal P'| = |\mathcal P| + 1$ (Lemma~\ref{lem:rerouting}).
Let $V^*$ be the vertices of $\mathcal P$ and $P^*$.
We realize $R$ by a mapping $\Rs: V^* \rightarrow \{0, \ldots, k\}$ that defines
the successor of a vertex $v$ as a vertex $u \in N(v)$ of another path with color
$\Rs(v)$.
Intuitively speaking, we cut the paths in $\mathcal P \cup \{P^*\}$ in pieces and reconnect them as defined in $\Rs$.

The rerouting may consist of $O(n)$ successor information, each of $O(\log k)$ bits.
Therefore, we 
compute a rerouting by considering only the $s$-$p$-subpath
 $P'$ 
of $P^*$ for some vertex $p$. 
We compute
the rerouting in $O(\log k)$ batches by moving $p$ closer to $t$ with each
batch. Each batch consists of $\Theta(n / \log k)$ successor information (or less for the last batch).
Using the rerouting $\Rs$ and
 path data structures storing $\mathcal P$ and $P^*$, we realize a so-called {\em weak path data structure}
for the paths $R(P \cup \{P'\}).$ 
We call it a weak path data structure because it can not be used to determine the color of a vertex.
After that we use it to compute a new valid path data scheme 
for the paths $R(P \cup \{P'\})$ 
such that we need neither the rerouting~$\Rs$ nor~$V^*$ anymore (Lemma~\ref{lem:storageScheme}).
However, since $P'$ is only a subpath of $P^*$ we repeat the whole process described in this paragraph with another
subpath $P''$ of $P^*$, i.e., the subpath of $P^*$ whose first vertex is adjacent to the last vertex of $P'$.
Finally, we so get a valid path data structure storing 
$|\mathcal P| + 1$ good $s$-$t$ paths (Corollary~\ref{cor:bathwise-inclusion}).

We now describe our approach in detail.
To store a single $s$-$t$-path we number the internal vertices along the path with $1, 2, 3, 1, 2, 3,$ etc. and store the numbers of all vertices inside an array $\A$ of $2n$ bits.
For a vertex $v$ outside the paths, we set $\A[v] = 0$. 
We refer to this technique as {\em path numbering}.
By the numbering, we know the direction from $s$ to $t$ even if we hit a vertex in the middle of the path.
Note that $\A$ completely defines $V'$.

To follow a path stored in $\A$, begin at a vertex $v$ and look for a neighbor $w$ of $v$ with $\A[w] = (\A[v] \mod 3) + 1$.
Note that this approach only works if~$v$ has only one such neighbor.
To avoid ambiguities when following a path we require that the path is chordless.

\begin{lemma}\label{lem:chordlessSTPath}
	Assume that we are given access to a DFS that uses a stack to store a path from $s$ to the current vertex and that runs in $f_{\rm{t}}(n, m)$ time using $f_{\rm{b}}(n, m)$ bits. 
	Then, there is an algorithm to compute a 
	path numbering for a
	chordless $s$-$t$-path~$P$ in time $O((n + m) + f_{\rm{t }}(n,m))$ by using $O(n + f_{\rm{b}}(n,m))$ bits.
\end{lemma}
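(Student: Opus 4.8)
The plan is to let the given DFS hand us \emph{some} $s$-$t$-path essentially for free (it sits on the stack the moment $t$ becomes current), and then to turn that path into a chordless one by a second, BFS-based sweep that never stores the path nor any exact distances.

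First I would run the DFS from $s$, keeping on the side, in the array $\A$, a path numbering of the current root-to-top path of the DFS stack: when the DFS descends a tree edge to a fresh vertex $w$ coming from a vertex of number $c$, set $\A[w]:=(c\bmod 3)+1$, and when it retreats from $w$, reset $\A[w]:=0$. This costs $O(1)$ per step of the DFS and $O(n)$ extra bits. Since an $s$-$t$-path exists, the DFS eventually makes $t$ the current vertex; at that moment $\A$ encodes an $s$-$t$-path $Q$ with $V(Q)=\{v:\A[v]\neq 0\}$, but $Q$ may have chords. We stop the DFS here, having spent $f_{\rm t}(n,m)+O(n)$ time and $f_{\rm b}(n,m)+O(n)$ bits.

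Next I would replace $Q$ by a \emph{shortest} $s$-$t$-path of $G[V(Q)]$: a shortest path is automatically chordless (a chord gives a strictly shorter walk), and a shortest path of $G[V(Q)]$ is already chordless in $G$ because every edge of $G$ joining two vertices of $V(Q)$ is an edge of $G[V(Q)]$. To find one with only $O(n)$ extra bits I run the space-efficient BFS of Theorem~\ref{th:se-DFS} twice on $G[V(Q)]$ -- whose interface I obtain from that of $G$ by ignoring vertices and incident arcs with $\A[\cdot]=0$ -- once from $s$, storing $d_s(v)\bmod 3$ for every $v\in V(Q)$ in a $2n$-bit array, once from $t$, storing $d_t(v)\bmod 3$ likewise, and recording the single number $D:=d_s(t)$ (here $d_s,d_t$ are distances in $G[V(Q)]$); this is $O(n+m)$ time and $O(n)$ bits. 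Then I reconstruct a shortest path $P=u_0,\dots,u_D$ greedily: $u_0:=s$, and given $u_i$ with $d_s(u_i)=i$, pick for $u_{i+1}$ any $u\in N(u_i)\cap V(Q)$ with $d_s(u)\equiv i+1$ and $d_t(u)\equiv D-i-1\pmod 3$. Such a $u$ exists (the successor of $u_i$ on a shortest $s$-$t$-path of $G[V(Q)]$ through $u_i$), and the two congruences actually force $d_s(u)=i+1$ and $d_t(u)=D-i-1$, because a neighbor of $u_i$ has its $d_s$-value among the three consecutive integers $\{i-1,i,i+1\}$ and its $d_t$-value among $\{D-i-1,D-i,D-i+1\}$, each triple having pairwise distinct residues modulo $3$. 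Hence every $u_i$ lies on a shortest $s$-$t$-path, the $u_i$ are pairwise distinct, and $u_D=t$. Finally I reset $\A$ and write the path numbering of $P$ into it; reconstruction and rewriting take $O(\sum_i\op{deg}(u_i)+n)=O(n+m)$ time, so the whole algorithm runs in $O(n+m)+f_{\rm t}(n,m)$ time using $O(n)+f_{\rm b}(n,m)$ bits.

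The delicate point is precisely the $O(n)$-bit budget of the cleanup: we must neither store $Q$ (that would be $\Theta(n\log n)$ bits) nor exact BFS distances, and, correspondingly, shortcutting the stack path $Q$ directly looks awkward because deciding which neighbor of a vertex lies farthest along $Q$ seems to need $\Omega(\log n)$ bits per vertex for the rank comparisons. Storing distances only modulo $3$ sidesteps this, and it is sound exactly because along a BFS the distances of adjacent vertices differ by at most one, so three consecutive residues suffice to recover the relevant distance locally during reconstruction. Everything else -- the incremental maintenance of $\A$, the filtered interface of $G[V(Q)]$, and the final rewrite of $\A$ -- is routine.
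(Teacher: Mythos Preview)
Your proof is correct but takes a genuinely different route from the paper's. The paper never leaves the DFS stack: after reaching $t$, it pops the stack vertex by vertex, and for the current top $v_i$ it scans $N(v_i)$ for other stack vertices (maintained in a bit array $M$); any such neighbor is a chord endpoint, and the algorithm keeps popping until the last chord endpoint is consumed, which becomes the new $v_i$. A final left-to-right sweep along the survivors in $M$ writes the path numbering. So the paper's cleanup is a single stack-unwinding pass plus one forward pass, using only the given DFS and two $n$-bit arrays.

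Your approach instead discards the stack once membership in $Q$ is recorded and replaces the chord-removal by the observation that a \emph{shortest} $s$-$t$-path in $G[V(Q)]$ is automatically chordless in $G$; the mod-$3$ trick with two space-efficient BFS runs then lets you recover such a path within the $O(n)$-bit budget. This is arguably cleaner conceptually and relies on the DFS only as a black box (you never need to pop its stack), at the cost of invoking Theorem~\ref{th:se-DFS} twice. The paper's version is more self-contained (no BFS subroutine) and touches each edge essentially once rather than a constant number of times, but both meet the stated bounds.
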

\begin{proof}
	We call the internal subpath $P' \subseteq P$ from
	a vertex $v_i$ to a vertex $v_j$
	{\em skippable} exactly if $P'$ consists of at least~$3$ vertices and there exists a {\em chord}, i.e., 
	an edge from $v_i$ to $v_j$. 
	The idea is to construct first an $s$-$t$-path $P$ in $G$ 
	with the DFS and, directly after finding $t$,
	we stop the DFS and then pop the vertices from 
	the DFS stack and remove skippable subpaths until we arrive at $s$.
	
	Let $P = (s = v_1, v_2, \ldots, v_x = t)$ 
	be the vertices on the DFS stack after reaching $t$
	and let $M$ be an $n$-bit array marking all vertices with a one 
	that are currently on the stack.
	(We say that a vertex $v$ is {\em marked in $M$} exactly if $M[v] = 1$.)
	The algorithm moves backwards from~$t$ to~$s$, i.e., by 
	first popping $t$, setting $v_{i + 1} = t$ and then
	popping the remaining vertices from the DFS stack as follows.
	
	Pop the next vertex $v_i$ from the stack, unmark $v_i$ in $A$,
	and let $v_{i-1}$ be the vertex that is now on top of the DFS stack.
	We check $v_i$'s neighborhood $N(v_i)$ in $G$ for all vertices $u \notin \{v_{i-1}, v_{i+1}\}$ that are 
	marked in $M$, mark them in an $n$-bit array $M'$ as chords 
	and remember their number in $x$.
    The internal subpaths between $v_i$ and all vertices marked in $M'$ are skippable.
    We remove the skippable paths as follows:
    As long as $x \ne 0$, we pop a vertex $u$ from the DFS stack.
    If $u$ is marked in $M'$, we unmark $u$ in $M'$ and reduce $x$ by one.
    If $x \ge 1$, we unmark $u$ in $M$.
    When $x = 0$, the edge $\{u, v_i\}$ is the last chord, 
    we set $v_i := u$ and $v_{i + 1} := v_i$
    and continue our algorithm as described in this paragraph until $v_i = s$.
	
    After removing skippable paths from $P$, we can output the path as follows.
    Allocate an array $\A$ with $2$ bits for each vertex and initialize all entries with~$0$.
	Let $i = 1$.
	Starting from $u = s$ go to its only neighbor $w \ne u$ that is marked
	in $M$.
	Set $i: = (i \mod 3) + 1$ and $\A[w] := i$. 
	Unmark $u$ in $M$, set $u := w$, and continue assigning the path numbering until $w = t$.
	Finally return $\A$.
		
		Observe that the DFS runs once to find a path $P$ from $s$ to $t$
		and then we only pop the vertices from the stack. 
		During the execution of the DFS we manage
		membership of vertices using only arrays of $O(n)$ bits.
		The construction of $\A$ uses a single run over the path
		by exploring the neighborhood of each marked vertex.
		In total our algorithm runs in $O(n + m)$
		time and 
		uses $O(n)$ bits in addition to the time and space needed by the given DFS.
\end{proof}

Recall that the path numbering 
does not uniquely define $\ell > 1$ chordless vertex-disjoint 
paths 
since there can be edges, called {\em cross edges},
between vertices of two different paths in~$\mathcal P$.
Due to the cross edges, the next vertex of a path can be ambiguous (see again Fig.~\ref{fig:setting}).
Coloring these vertices would solve the problem, but we may have to many to store their coloring.

Our idea is to select a set $\B \subseteq V'$ of vertices that we call {\em boundary
vertices}
such that $|\B| = O(n / \log k)$ holds and by removing $\B$ from $G[V']$
we get a graph $G[V' \setminus \B]$ that consists of connected components
of size $O(k \log k)$.
In particular, $\B$ contains all vertices with {\em large degree}, i.e., vertices with a degree
strictly greater than $k \log k$.
Note that a graph with treewidth $k$ has at most $kn$ edges,
and so we have only $O(n / \log k)$ vertices of large degree.
%
%
%
We store the color of all boundary vertices $\B$. 
This allows us to answer their color directly.
A {\em region} is a connected component in $G[V' \setminus \B]$.
Note that a boundary vertex can connect several regions. 
To avoid exploring several regions when searching 
the predecessor/successor on the same path
of a vertex $v\in \B$, 
we additionally
store the color of the predecessor/successor $w\notin\{s,t\}$
of each $v \in \B$. 
%
%
%
%
According to our described setting 
above, we formally define our scheme.

\begin{definition}{(Path Data Scheme)}\label{def:pathDatascheme}
	A {\em path data scheme} for $\mathcal P$ in $G$ is a triple $(\A, \B, \C)$ where 
	\begin{itemize}
		\item $\A$ is an array storing the path numbering of all paths in $\mathcal P$,
		\item $\B$ is a set of all boundary vertices with
		$|\B| = O(n / \log k)$ and $\B$ defines regions of size $O(k \log k)$, and
		\item $\C$ stores the color of every vertex in $\B$ and of each of their
		predecessors and successors. 
	\end{itemize}
\end{definition}
We realize $\A$ as well as $\B$ as arrays of $O(n)$ bits,
and $\C$ as a $O(n)$-bit data structure using static space allocation.
Altogether our path data scheme uses $O(n)$ bits.

A further crucial part of our approach is 
to (re-)compute paths fast
(in particular, we do not want to use a $k$-disjoint-path algorithm),
but we also want to guarantee that vertices of the same color get connected with a
path constructed by a deterministic network-flow algorithm.
In other words, 
our path data scheme must store the same color 
for each pair of colored vertices
that are connected by our fixed algorithm of Lemma~\ref{lem:kpathsStd}.
We call a path data scheme that has this property {\em valid}
(with respect to our fixed algorithm).

To motivate the stored information of our path data scheme,
we first show how it can be used to realize a path data structure
that allows us to answer queries on all vertices $V'$ and not
only a fraction of $V'$.
Afterwards, we show the computation of a path data scheme.

\begin{definition}{(Path Data Structure)}\label{def:pathDataStructure}
A path data structure supports the following operations where $v \in V'$. 
	\begin{itemize}
		\item $\op{prev}/\op{next}(v)$:
Returns the predecessor and the successor, respectively, of $v$.
		\item $\op{color}(v)$: Returns the color $i$ of a path $P_i \in \mathcal P$ to
which $v$ belongs to.
	\end{itemize}
\end{definition}

To implement the path data structure, 
our idea is first to explore the region in $G[V' \setminus \B]$ 
containing the given vertex $v$ by
using a BFS and second to construct a graph
of the vertices and edges visited by the BFS.
We partition the visited colored vertices inside the region into two sets
$S'$ (successors of $\B$) and $T'$ (predecessors of $\B$), as well as extend the graph by two vertices $s'$ and $t'$
that are connected to $S'$ and $T'$, respectively.
Then we sort the vertices and its adjacency arrays by vertex {\textsc{id}}s
and run the deterministic network-flow algorithm of Lemma~\ref{lem:kpathsStd}
to construct always the same fixed set of paths (but not necessarily the original paths).
The construction of the paths in a region consisting of a set $U$ of vertices 
is summarized in the next lemma, which we use subsequently to 
support the operations of our path data structure.
We use the next lemma also 
to make a path data scheme valid. We so guarantee that our 
network-flow algorithm 
connects equally colored vertices 
and no $k$-disjoint path algorithm is necessary.
%
This is the reason why the following lemma 
is stated in a generalized manner and does not simply assume that a 
path data scheme is given.

\begin{lemma}\label{lem:recomputeRegion}
Assume that we are given 
the vertices $U$ of a region in $G$ as well as
two sets $S',T'\subseteq U$ of all
successors and predecessors vertices
of vertices on the boundary, respectively. 
Take $n'=|U|$.
Then $O(n' k^2 \log^2 k)$ time and 
$O(k^2 (\log k) \log n)$ bits suffice to
compute paths connecting each vertex in $S'$ with another vertex in $T'$.
\end{lemma}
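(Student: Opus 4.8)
The goal is to compute vertex-disjoint paths connecting $S'$ to $T'$ inside the region $U$, within the stated time and space bounds. The plan is to build an explicit small graph on the region and then invoke the network-flow technique of Lemma~\ref{lem:kpathsStd} on it.

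First I would bound the relevant quantities. A region has $n' = |U| = O(k \log k)$ vertices by the definition of a path data scheme (Def.~\ref{def:pathDatascheme}); since $G$ has treewidth $k$ and therefore $O(kn')$ edges locally, and more to the point the subgraph $G[U]$ has at most $O(k \cdot n') = O(k^2 \log k)$ edges, the whole region graph is small enough to store explicitly with $O(|U| \log n + |E(G[U])| \log n) = O(k^2 (\log k) \log n)$ bits. So the first step is: explore $G[U]$ (the caller already gives us $U$; we use the graph interface to read incident edges of each vertex of $U$, discarding edges leaving $U$), and write down an explicit adjacency-array representation of $G[U]$, with vertices relabeled $1, 2, \ldots, n'$ in increasing order of their original \textsc{id}s, and each adjacency array likewise sorted by \textsc{id}. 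This canonical sorting is what makes the subsequently computed path set deterministic and reproducible. Then add two fresh vertices $s'$ and $t'$, connecting $s'$ to every vertex of $S'$ and $t'$ to every vertex of $T'$.

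Second, I would apply the reduction from vertex-disjoint to edge-disjoint paths (split each internal vertex into an in-copy and an out-copy joined by a single edge) and run the network-flow procedure of Lemma~\ref{lem:kpathsStd} on this $O(n')$-vertex, $O(k^2 \log k)$-edge graph to find a maximum set of internally vertex-disjoint $s'$-$t'$ paths; there are at most $|S'| = O(k)$ of them, so the procedure performs $O(k)$ rounds, each a DFS costing $O(n' + |E|) = O(k^2 \log k)$ time, for a total of $O(k^3 \log k)$ time — comfortably within the claimed $O(n' k^2 \log^2 k)$ since $n' = \Theta(k \log k)$ gives $O(n' k^2 \log^2 k) = O(k^3 \log^3 k)$. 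Space is dominated by the explicit region graph and the residual network, both $O(k^2 (\log k) \log n)$ bits. Stripping $s'$ and $t'$ off the computed paths yields paths each starting at a vertex of $S'$ and ending at a vertex of $T'$, pairwise internally vertex-disjoint; since the region originally carried such a connection (coming from $\mathcal P$ restricted to $U$), the maximum has size exactly $|S'| = |T'|$, so every vertex of $S'$ is matched to a vertex of $T'$ as required.

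The main obstacle, and the place requiring care rather than routine bookkeeping, is confirming that the time bound is really met and that the determinism is genuine: determinism hinges on fixing, once and for all, the tie-breaking rule (smallest \textsc{id} first) in both the vertex/edge relabeling and in every DFS of the network-flow procedure, so that re-running the computation on the same $U, S', T'$ — which is exactly what $\op{prev}$, $\op{next}$, $\op{color}$ will do repeatedly — returns the identical path set. I would also need to double-check that $|E(G[U])| = O(k^2 \log k)$ rather than something larger: this follows because every vertex of $U$ either has degree $\le k \log k$ (large-degree vertices are boundary vertices, hence excluded from $U$) or — more simply — because $G[U] \subseteq G$ has treewidth $\le k$ and hence at most $k |U| = O(k^2 \log k)$ edges. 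With these two points pinned down, the remaining verification of the space and time budgets is a direct substitution of $n' = O(k \log k)$.
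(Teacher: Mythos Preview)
Your approach matches the paper's: build an explicit canonically sorted copy of $G[U]$ with a translation table between $U$ and $\{1,\ldots,n'\}$, attach fresh vertices $s',t'$ to $S',T'$, and run the deterministic network-flow procedure of Lemma~\ref{lem:kpathsStd}. One correction: your claim $|S'|=O(k)$ is not right, because a single path of $\mathcal P$ can be cut by several large-degree boundary vertices and hence enter and leave the same region multiple times---the paper explicitly allows up to $O(k\log k)$ $s'$-$t'$ paths (i.e., every vertex of the region could lie in $S'\cup T'$); fortunately the stated time bound $O(n'k^2\log^2 k)=O(k^3\log^3 k)$ still absorbs the extra $\log k$ factor, so the argument survives.
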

\begin{proof}
	We construct a graph $G' = G[U]$, 
	add two new vertices
	$s'$ and $t'$ and connect them with the vertices of $S'$ and $T'$, respectively.
	To structurally get the same graph independent of
	the permutation of the vertices
	in the representation of the set $U$
	 we sort the vertices and the adjacency arrays of the graph representation for $G'$. 
	The details to do that are described in the three subsequent paragraphs.
	Finally we run the algorithm of Lemma~\ref{lem:kpathsStd}
	to compute all $s'$-$t'$-paths 
	in the constructed graph $G'$. Since $S'$ and $T'$ are
    the endpoints of disjoints subpaths of paths in
	$\mathcal P$, we can indeed connect each vertex 
	in~$S'$ with another vertex in~$T'$.

	We now show the construction of $G'$.
	We choose an arbitrary $v \in U$ and run
	a BFS in graph $G[U]$ starting
	at~$v$ three times.
	(We use a standard BFS with the restriction
	that it ignores vertices of
	$G$ that are not in~$U$.)

	In the first run we count the number $n' = O(k \log k)$ 
	of explored vertices.
	Knowing the exact number of vertices and knowing 
	that all explored vertices can have a degree 
	at most $k \log k$ in $G[U]$,
	 we allocate an array $D$ of $n' + 2$ fields
	and, for each $D[i]$ ($i = 1, \ldots, n' + 2$), an
	array of $\lceil k \log k \rceil$ fields,
	each of $\lceil \log (k \log k) \rceil$
	bits.
	We will use $D$ to store the adjacency arrays 
	for $G'$ isomorphic to $G[U]$.
	For reasons of performance, we want to use indirect addressing 
	when operating on $G'$.
	Since the space requirement to realize indirect addressing depends on the largest value in $U$, we 
	use a bidirectional mapping from $U$ to $\{1, \ldots, n'\}$ and use vertex names out of $\{1, \ldots, n'\}$ for $G'$.
	More exactly, to translate the vertex names of $G'$ to the vertex names of $G$, 
	we use a translation table $M: \{1, \ldots, n'\} \rightarrow U$, and we realize the 
	reverse direction $M^{-1}: U \rightarrow \{1, \ldots, n'\}$ by using binary search in $M$, 
        which can be done in $O(\log n')$ time per access.
	For the table we allocate an array $M$ of $n'$ fields, each of $\lceil \log n \rceil$ bits.

	In a second run of the BFS we fill $M$ with the vertices explored by the BFS and sort $M$ by the vertex names.
	In a third run, for each vertex $v$ explored by the BFS, we 
	determine the neighbors $u_1, \ldots, u_x$ of $v$ sorted by
	their $\textsc{id}$s 
	and store $M^{-1}(u_1), \ldots, M^{-1}(u_x)$ in $D[M^{-1}(v)]$.
	During the third BFS run we want to store also two sets $S'$ and $T'$ (using a standard balanced tree representation).
	Now, $D$ allows constant time accesses to a graph $G'$ isomorphic to $G[U]$.
	Afterwards, we are able to compute the paths in $G[U]$ 
	and using the mapping~$M$ translate 
	the vertex $\textsc{id}$s in $G'$ back to
	vertex $\textsc{id}$s in $G$.
	
	{\bfseries Efficiency:}
	We now analyze the space consumption and the runtime of our algorithm.
	Since $G$ has treewidth $k$ 
	and $G'$ consists of vertices of a region
	of $G$, we can follow
	that $G'$ has $n' = O(k \log k)$
	vertices and $O(n' k)$ edges.
	$D$ uses $O(n' \log n)$ bits to store $P(n')$ pointers,
	each of $O(\log n)$ bits, that point at adjacency arrays.
	The adjacency arrays use $O(n'k \log k)$ bits to store
	$O(n'k)$ vertex $\textsc{id}$s out of $\{1, \ldots, k\}$.
	The translation table $M$ and the sets $S'$ and $T'$ 
	use $O(n' \log n)$ bits. 
	The BFS can use $O(n'k \log k)$ bits.
	The space requirement of an in-place sorting algorithm is negligible.
	Lemma~\ref{lem:kpathsStd} runs with 
	$O((n' + kn') \log n') = O(n'k \log k)$ bits.
	In total our algorithm uses 
	$O(n' k \log k) + O(n' \log n) = O(k^2 (\log k) \log n)$ bits.
	
	To compute the graph $G'$ and fill $M$ we run a BFS.
	Running the BFS comes with an extra time 
	factor of $O(\log n')$ to translate a vertex
	$\textsc{id}$s  of $G$ to a vertex 
	$\textsc{id}$s  of $G'$ 
	(i.e., to access values in $M^{-1}$) 
	and costs us $O(n'k \log n')$ time in total. 
    Since $G'$ has $O(n'k)$ edges $D$ has 
        $O(n' k)$ non-zero entries and a sorting of~$D$ can be done in $O(n' k)$ time.
    Sorting $M$ can be done in the same time.
	Finally, Lemma~\ref{lem:kpathsStd} has to be executed once for up to 
        $O(k \log k)$ paths (every vertex
        of the region can be part of $S' \cup T'$), 
        which can be done in 
        $O((n'k) (k \log k) \log k)$ time since
        $O(n'k)$ bounds the edges of the region
        and since we get an extra factor
        of $O(\log k)$ by using the translation table~$M$. 
	In total our algorithms runs in time 
	$O(n' k^2 \log^2 k)$.
\end{proof}

Since the number of vertices in a region is bound by $O(k \log k)$,
we can use Lemma~\ref{lem:recomputeRegion} and 
the coloring $\C$
to support the operations of the path data structure
in the bounds mentioned below.

\begin{lemma}\label{lem:recompute}
	Given a valid path data scheme $(\A, \B, \C)$
	we can realize $\op{prev}$ and $\op{next}$ 
	in time $O(\op{deg}(v) + k^3 \log^3 k)$ as well as $\op{color}(v)$ in
	$O(k^3 \log^3 k)$ time.
	All operations use $O(k^2 (\log k) \log n)$ bits.
\end{lemma}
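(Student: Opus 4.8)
The plan is to reduce each of the three operations to a call of Lemma~\ref{lem:recomputeRegion} on the single region containing the query vertex~$v$, after first locating that region by a BFS. First I would handle the case $v \in \B$: here $\op{color}(v)$ is read directly from~$\C$, and $\op{prev}(v)$ (resp.\ $\op{next}(v)$) is found by scanning the $O(\op{deg}(v))$ neighbours of~$v$ and using~$\A$ together with the stored color of the predecessor/successor of~$v$ in~$\C$ to identify the unique neighbour on the same path; the stored color is exactly what disambiguates the cross edges. This costs $O(\op{deg}(v))$ time and no call to Lemma~\ref{lem:recomputeRegion} is needed.

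For $v \in V' \setminus \B$, I would first run a BFS in $G[V' \setminus \B]$ from~$v$ (using $\A$ to recognise $V'$ and $\B$ to stop at boundary vertices) to collect the vertex set~$U$ of the region containing~$v$, which has $|U| = O(k\log k)$ by Definition~\ref{def:pathDatascheme}; while doing so, record for each boundary vertex adjacent to the region whether it is entered as a successor or predecessor, using~$\C$, thereby assembling the sets $S'$ and $T'$ of successor/predecessor endpoints inside the region. Then invoke Lemma~\ref{lem:recomputeRegion} on $(U, S', T')$ to obtain a fixed set of $\le k$ disjoint paths linking $S'$ to $T'$. Because the scheme is \emph{valid}, these recomputed subpaths connect equally coloured endpoints, so $\op{color}(v)$ is the common color of the endpoints of the recomputed subpath through~$v$ (read off from~$\C$), and $\op{prev}(v)/\op{next}(v)$ is obtained by walking one step along the recomputed path from~$v$ — unless this step leaves the region, in which case the answer is a boundary vertex found among $v$'s $O(\op{deg}(v))$ neighbours and confirmed via~$\A$ and~$\C$.

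For efficiency, the BFS over the region visits $O(k\log k)$ vertices and $O(k^2\log k)$ edges but incurs an extra $O(\op{deg}(\cdot))$ cost at boundary vertices when testing membership in~$\B$; charging the one additional $\op{deg}(v)$ factor for the start vertex explains the $O(\op{deg}(v))$ term, while all interior work is dominated by the Lemma~\ref{lem:recomputeRegion} call, which runs in $O(n'k^2\log^2 k) = O((k\log k)k^2\log^2 k) = O(k^3\log^3 k)$ time and $O(k^2(\log k)\log n)$ bits on a region of $n' = O(k\log k)$ vertices. Adding the $O(n)$-bit arrays $\A,\B,\C$ that are already part of the scheme, the total space stays $O(n + k^2(\log k)\log n)$; under the paper's standing assumption this is $O(k^2(\log k)\log n)$ as stated.

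The main obstacle I expect is the validity argument — making sure the recomputed region paths connect the \emph{same} colours as stored in~$\C$. This is not automatic: Lemma~\ref{lem:recomputeRegion} only guarantees \emph{some} fixed set of disjoint $S'$-$T'$ paths. The resolution is precisely the definition of a \emph{valid} path data scheme: $\C$ is required to store, for every pair of coloured vertices, the color that the deterministic algorithm of Lemma~\ref{lem:kpathsStd} assigns when it reconnects them, and Lemma~\ref{lem:recomputeRegion} runs exactly that deterministic algorithm on a canonically-sorted copy of $G[U]$, so the pairing is reproduced deterministically. The secondary subtlety — that a single $\op{prev}/\op{next}$ step might cross into~$\B$ rather than staying in the region — is handled by the direct neighbour scan, and it is also why the bound carries the additive $O(\op{deg}(v))$ term.
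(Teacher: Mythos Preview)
Your approach is essentially the paper's: handle $v\in\B$ by a direct neighbour scan using $\A$ and $\C$, and handle $v\notin\B$ by a BFS to collect the region $U$, assemble $S',T'$ from $\C$, and invoke Lemma~\ref{lem:recomputeRegion} to recompute the subpaths deterministically; validity then guarantees the colours line up. The core argument and the bounds are correct.

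One expository point is muddled, though: your attribution of the additive $O(\op{deg}(v))$ term is wrong. In the $v\notin\B$ case every vertex of the region---including $v$ itself and any neighbour you might scan when the step crosses into $\B$---has degree at most $k\log k$ (large-degree vertices are all in $\B$), so that entire case is $O(k^3\log^3 k)$ with no $\op{deg}(v)$ contribution. The $O(\op{deg}(v))$ term comes \emph{solely} from the $v\in\B$ branch, where $v$ may have large degree and you must scan all its neighbours to find the two equally-coloured ones. You already handle this correctly in your first paragraph; the remarks in your third and fourth paragraphs about ``charging the start vertex'' and the step ``crossing into~$\B$'' should be dropped. Relatedly, the $O(n)$ bits of the scheme $(\A,\B,\C)$ are input, not working space for the query, so the stated $O(k^2(\log k)\log n)$-bit bound is for the per-operation auxiliary space only---no ``standing assumption'' is needed to reconcile it with $O(n)$.
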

\begin{proof}
	In the case that a vertex $v$ is in $\B$, 
        we find its color
	and the color of its predecessor and successor in $\C$.
	(Vertices neighbored to $s$ or $t$ have $s$ as the predecessor and $t$ as the successor, respectively.)
	Thus, we can answer $\op{prev}$ and $\op{next}$ by iterating over $v$'s neighborhood 
	and determining the two neighbors $u, w \in V'$ that are colored the same as $v$.
	By using the numbering $\A$ we know the incoming and 
	outgoing edge of the path through $v$ and so know 
	which of the vertices $u$ and $w$ is the predecessor or successor of $v$.
	
	For a vertex $v \notin \B$, we explore $v$'s
	region in $G[V' \setminus \B]$ by running 
	a BFS in $G[V']$ with the restriction that we do not visit vertices in $\B$.
	We use a balanced heap
	to store the set~$U$ of visited vertices.
	Moreover, we partition all colored vertices of $U$ 
    into the set $S'$ (of successors of $\B$) 
    and the set $T'$ (of predecessors of $\B$)
    with respect to the information in $C$.
    In detail, if a colored vertex $u \in U$
    has an equally colored neighbor $w \in \B$,
    then $u$ is the successor of $w$
    if $\A[v] = (\A[w] \mod 3) + 1$ holds
    and the predecessor of $v$
    if $\A[w] = (\A[v] \mod 3) + 1$ holds.
    
    Having $U$, $S'$ and $T'$ we call 
    Lemma~\ref{lem:recomputeRegion} to
    get the paths in the region.
	Note that a path within a region can be disconnected by vertices
	with large degree so that 
	we can have more than two vertices of each color in $S'$ and $T'$.
	E.g., a path may visit $s_1, t_1, s_2, t_2, s_3, t_3, \ldots \in U$
	with $s_1, s_2, s_3 \in S'$ and $t_1, t_2, t_3 \in T'$.
	Since our path data scheme is
	valid, we can conclude the following:
	assume that we connect the computed subpaths using their common equally
	colored boundary vertices.
	By Lemma~\ref{lem:sameVertexSet} each solution has to use all
	vertices and the network-flow algorithm
	indeed connects $s_1$ with $t_1$,
	$s_2$ with $t_2$, etc.
	By Lemma~\ref{lem:probMaintain} the paths and thus our
	computed subpaths are chordless and have no extended deadlock cycles in $G$
	with respect to~$\mathcal P'$.
    Using the paths we can move along 
    the path of $v$ to a colored vertex
    and so answer $v$'s color
    and both of $v$'s neighbors on the path.
    
    {\bfseries Efficiency:}
	We now analyze the runtime of our algorithm
	and its space consumption.
	Realizing the operations $\op{prev}$
	and $\op{next}$ for
	a vertex of $\B$ can be done in 
	$O(\op{deg}(v))$ time
	 since both colored neighbors have to be found,
	 and their color can be accessed
	in constant time using $\C$.
	The operation $\op{color}$ runs in constant time.
	For the remaining vertices we have
	to explore the region in $G[V' \setminus \B]$.
	Since every region has $n' = O(k \log k)$ vertices
	and also treewith~$k$, the exploration can be done
	in linear time per region, i.e.,
	in $O(n'k)$ time using a standard BFS.
	Filling $U$, $S'$ and $T'$ requires 
	us to add $O(n')$ vertices into
	balanced heaps, which uses 
	$O(n' \log n')$ time.
	The execution of Lemma~$\ref{lem:recomputeRegion}$
	can be done in $O(n' k^2 \log^2 k)$ time
	and running along the paths 
	to find the actual color runs in $O(n')$
	time.
	Thus, in total $\op{prev}$ and $\op{next}$ run
	in time $O(\op{deg}(v) + k^3 \log^3 k)$ and $\op{color}(v)$ in
	time $O(k^3 \log^3 k)$.
	
	A BFS to explore a region in
	$G[V' \setminus \B]$ uses $O(n'k \log n) = O(k^2 (\log k) \log n)$ bits,
	while $U$, $S'$ and $T'$ use $O(n' \log n)$ 
	bits in total.
	The execution of Lemma~\ref{lem:recomputeRegion}
    uses $O(k^2 (\log k) \log n)$ bits, which is also an upper bound 
	for all remaining operations.
\end{proof}

To be able to compute a storage scheme for $\ell > 1$ paths,
$\mathcal P$ must satisfy
certain so-called good properties that we summarize in Definition~\ref{def:good-paths}.

\begin{definition}{(good paths)}\label{def:good-paths}
We call a set $\mathcal P'$ of $s$-$t$-paths {\em good} if and only if
\begin{enumerate}
	\item all paths in $\mathcal P'$ are pairwise internal vertex disjoint,
	\item each path in $\mathcal P'$ is chordless, and
	\item there is no extended deadlock cycle in $G$ with respect to~$\mathcal P'$.
\end{enumerate}
\end{definition}

The next lemma shows the computation of a valid path data scheme for
one path $P$. The computation is straightforward. Since there is only one path,
this path is uniquely defined by the path numbering in $\A$ and we can run
along it and select every $k \lceil \log k \rceil$ vertex as a boundary vertex
and compute and store the remaining information required for a valid path data scheme.
After the lemma we always assume 
that a chordless path $P$ is given together with a valid path data scheme and thus
a path data structure so that we can easily access the predecessor and
successor of a vertex on the path.

\begin{lemma}\label{lem:pathdatastructurefromsinglepath}
Given a path numbering $\A$ for one chordless $s$-$t$ path $P$, we can computes 
a valid path data scheme $(\A, \B, \C)$ for $P$ in $O(n)$ time using $O(n)$ bits.
\end{lemma}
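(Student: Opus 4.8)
Since the numbering $\A$ is already given, the plan is to produce the two remaining components $\B$ and $\C$ and then argue validity. First I would scan all of $V$ once, using $\op{deg}_G$, and collect in a bit array the \emph{large-degree} vertices of $V'$, i.e.\ those $v$ with $\op{deg}_G(v) > k\lceil\log k\rceil$; because $G$ has treewidth $k$ it has at most $kn$ edges, so there are at most $2n/\lceil\log k\rceil = O(n/\log k)$ of them. Next I would walk along $P$ from $s$ to $t$: starting at $s$, repeatedly move from the current vertex $v$ with $\A[v]=c$ to its neighbour $w$ with $\A[w]=(c\bmod 3)+1$, which is unique because $P$ is chordless and every vertex outside $P$ has $\A$-value $0$. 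While walking I put every $k\lceil\log k\rceil$-th internal vertex into $\B$ (in addition to the large-degree vertices already collected), and for each vertex I put into $\B$ I also record its predecessor and successor on $P$, both available for free during the walk.

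\textbf{Why $\B$ and $\C$ are correct.} The number of extra ``cut'' vertices inserted during the walk is at most $n/(k\lceil\log k\rceil)$, so in total $|\B|=O(n/\log k)$. Because $P$ is chordless, $V'$ is exactly the vertex set of $P$ and $G[V']$ is the path $P$ itself; hence every connected component of $G[V'\setminus\B]$ — i.e.\ every region — is a subpath lying strictly between two consecutive $\B$-vertices (or between a terminal vertex and a $\B$-vertex), and since a cut vertex appears at least every $k\lceil\log k\rceil$ internal steps, every region has fewer than $k\lceil\log k\rceil=O(k\log k)$ vertices, as Definition~\ref{def:pathDatascheme} demands. (Keeping the large-degree vertices in $\B$ is harmless here and is what makes the region-restricted BFS used later in Lemma~\ref{lem:recompute} cheap.) For $\C$, the key set is $\B$ together with the recorded predecessors and successors, of total size $O(n/\log k)$; I would build its characteristic $n$-bit array from the walk data and instantiate the static-space-allocation structure on it with block size $\Theta(\log k)$ (parameter $b=k$), using $O(n)$ bits, and store the value $1$ for every key (there is one path, coloured $1$; a predecessor or successor equal to $s$ or $t$ is not stored, so the only value ever written is $1$).

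\textbf{Validity, time and space.} Validity is immediate: for any region $U$ the graph handed to the algorithm of Lemma~\ref{lem:kpathsStd} is a single subpath with at most one source-side endpoint in $S'$ and at most one sink-side endpoint in $T'$, so its unique $s'$-$t'$ path, if any, traverses that whole subpath; the algorithm therefore connects exactly the two boundary-adjacent endpoints of the region, and both are stored with colour $1$, so the stored colouring agrees with what the fixed algorithm returns. Summing up, the degree scan, the walk, writing $\B$, and the $O(n/\log k)$ writes into $\C$ run in $O(n)$ time and $O(n)$ bits.

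\textbf{Where the real work is.} There is no deep obstacle: the construction is easy precisely because $P$ is chordless, so $G[V']$ is literally the path and a single colour suffices, which collapses both the region-size and the validity arguments. The only steps that need care are the two counting bounds — that the large-degree vertices and the cut vertices each number $O(n/\log k)$, resting on the $O(kn)$-edge bound for treewidth-$k$ graphs — and the resulting observation that the key set of $\C$ falls into the regime of the static-space-allocation data structure with $\Theta(\log k)$-bit blocks and therefore fits into $O(n)$ bits.
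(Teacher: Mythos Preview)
Your proposal is correct and follows essentially the same approach as the paper: walk along the chordless path using the numbering $\A$, insert every $\Theta(k\log k)$-th vertex together with the large-degree vertices into $\B$, record predecessor/successor colours in $\C$ via static space allocation, and observe that validity is trivial because with a single path there is nothing for the fixed network-flow algorithm to confuse. Your write-up is in fact more detailed than the paper's (you spell out why $G[V']$ is literally the path, why regions are short, and why the single-path case makes validity vacuous), but the underlying construction is the same.
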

\begin{proof}
Since $\A$ stores a single chordless path $P$, we can run along it from $s$ to $t$ 
while adding every $\lceil k \log k \rceil$th vertex and every vertex of large degree into
an initially empty set $\B$. 
We then can easily determine the predecessor and successor.
Thus, we can store $\C$ using static space allocation.
A path data scheme storing a single path is valid by default since there
is no second path to that an algorithm could switch.
We so get a path storage scheme storing $\ell = 1$ good $s$-$t$-path in $O(n)$ time
using $O(n)$ bits.
\end{proof}

Assume that, for some $\ell \le k$, we have already computed $\ell$ good $s$-$t$-paths, 
which are stored in a valid path data scheme. 
Our approach to compute an $(\ell + 1)$th 
$s$-$t$-path is based on the well-known network-flow technique~\cite{AhuMO93} described in the beginning of the section,
which we now modify to make it space efficient.

By the standard network-flow technique, we do not
search for vertex-disjoint paths in $G$. Instead, we modify $G$ into a directed graph $G'$
and search for edge-disjoint paths in $G'$.
To make the approach space efficient, we do not
store $G'$ separately, but we provide a graph interface that realizes
$G'$. 

\begin{lemma}\label{lem:transform}
Given an $n$-vertex $m$-edge graph $G = (V, E)$ there is a graph interface representing an
directed $n'$-vertex $m'$-edge graph
$G' = (V', E')$ where $V' = \{ v', v''\ |\ v \in V \}$ and $E' = \{ (u'', v'), (v'', u') | \{u, v\} \in E \} \cup \{(v', v'') | v \in
V\}$ 
and thus $n' = 2n$ and $m' = 2m + n$.
The graph interface allows us to access
outgoing edges and incoming edges of $G'$, respectively, by
supporting the operations 
$\op{head}^{\rm{out}}_{G'}$, $\op{deg}^{\rm{out}}_{G'}$
and $\op{head}^{\rm{in}}_{G'}$, $\op{deg}^{\rm{in}}_{G'}$.
The graph interface can be initialized in constant time and the operations have an
overhead of 
constant time and $O(\log n)$ bits.
\end{lemma}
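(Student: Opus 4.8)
The plan is to realize the standard "vertex-splitting" construction of the network-flow reduction \emph{virtually}, i.e.\ to simulate the vertex set $V' = \{v', v'' \mid v\in V\}$ and the arc set $E'$ purely by arithmetic on the names of the vertices of $G$, without ever materializing any adjacency array of $G'$. First I would fix the encoding: identify $v'$ with $2v-1$ and $v''$ with $2v$ (so that $v', v''$ are recovered from a number in $\{1,\dots,2n\}$ by a single division by~$2$ and a parity test). This makes the translations between names of $G$ and names of $G'$ a constant-time, $O(\log n)$-bit operation, which is all the "overhead" the statement promises.

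Next I would describe how each of the four interface functions is answered by one call into the underlying interface of $G$ (which by Definition~\ref{def:graphinterface} supports $\op{deg}_G$ and $\op{head}_G$ in $O(1)$ time). For an \emph{out}-vertex $v''$ ($=2v$): its out-neighbours in $G'$ are exactly the vertices $u'$ for $u\in N(v)$, so $\op{deg}^{\rm out}_{G'}(v'')=\op{deg}_G(v)$ and $\op{head}^{\rm out}_{G'}(v'',k)=2\,\op{head}_G(v,k)-1$; it has no incoming arcs other than $(v',v'')$, so $\op{deg}^{\rm in}_{G'}(v'')=1$ and $\op{head}^{\rm in}_{G'}(v'',1)=v'=2v-1$. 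Symmetrically, for an \emph{in}-vertex $v'$ ($=2v-1$): $\op{deg}^{\rm out}_{G'}(v')=1$ with $\op{head}^{\rm out}_{G'}(v',1)=v''=2v$, while its incoming arcs are the $(u'',v')$ for $u\in N(v)$, giving $\op{deg}^{\rm in}_{G'}(v')=\op{deg}_G(v)$ and $\op{head}^{\rm in}_{G'}(v',k)=2\,\op{head}_G(v,k)$. One then checks against the definition of $E'$ that these enumerations are correct and complete, and reads off $n'=2n$, $m'=2m+n$. Since every answer is one underlying $O(1)$-time call plus constant-time index arithmetic, and the only state stored is a pointer to $G$'s interface together with $n$, the interface is initialized in constant time and each operation costs $O(1)$ time and $O(\log n)$ extra bits.

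I do not expect a genuine obstacle here; the lemma is essentially a bookkeeping statement, and the "hard part" is only making sure the directedness is tracked correctly — in particular that the arc set $E'$ contains both $(u'',v')$ and $(v'',u')$ for each undirected edge $\{u,v\}$ (because an undirected edge is, as stated in the preliminaries, two arcs), so that when we later push flow along $G'$ the residual behaviour matches the usual vertex-disjoint-paths reduction. I would therefore be careful in the write-up to separate the four cases ($v'$ vs.\ $v''$, incoming vs.\ outgoing) explicitly and to note that the single arc $(v',v'')$ is the "capacity-one vertex copy" whose saturation forbids reusing vertex~$v$. The space accounting is immediate: no array proportional to $n$ or $m$ is allocated, only a constant number of machine words.
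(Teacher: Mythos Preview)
Your proposal is correct and follows essentially the same approach as the paper: realize $G'$ virtually by a constant-time arithmetic encoding of $v',v''$ and answer each of the four interface functions by a single call into $\op{deg}_G$ or $\op{head}_G$ plus index arithmetic. The only cosmetic difference is the encoding: the paper uses $v'=v$, $v''=v+n$ (branching on whether the vertex name is $\le n$) instead of your parity-based $2v-1,\,2v$; both yield the same bounds.
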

\begin{proof}
	We now show how to compute the operations for $G'$ from $G$ on the fly.
	For each vertex $v$ in $V$, we define two vertices $v' = v$ and $v'' = v + n$ for $V'$.
	By the transformation we can see that every vertex $v'$ (with $v' \le n$) has exactly 
	one outgoing edge to $v'' = v' + n$
	and $\op{deg}_G(v')$ incoming edges from some vertices $u'' = \op{head}_G(v', j) + n$ ($j \le \op{deg}_G(v')$).
	Moreover, every vertex $v''$ (with $v'' > n$) has $\op{deg}_G(v'' - n)$ outgoing edges 
	to some vertices $u' = \op{head}_G(v'' - n, j)$ ($j \le \op{deg}_G(v'' - n)$) and one incoming edge from $v' = v'' - n$.
	With these information we can provide our stated operations for $G'$.
\noindent\begin{minipage}{.5\linewidth}
	\begin{align*}
	\op{deg}^{\rm{out}}_{G'}(v) =
	\begin{cases}
		1 & v \le n\\
		\op{deg}_G(v) & v > n
	\end{cases}
	\end{align*}
\end{minipage}%
\begin{minipage}{.5\linewidth}
	\begin{align*}
	\op{head}^{\rm{out}}_{G'}(v, j) =
	\begin{cases}
		v + n & v \le n\\
		\op{head}_G(v - n, j) & v > n
	\end{cases}
	\end{align*}
\end{minipage}

\noindent\begin{minipage}{.5\linewidth}
	\begin{align*}
	\op{deg}^{\rm{in}}_{G'}(v) =
	\begin{cases}
		\op{deg}_G(v) & v \le n\\
		1 & v > n
	\end{cases}
	\end{align*}
\end{minipage}%
\begin{minipage}{.5\linewidth}
	\begin{align*}
	\op{head}^{\rm{in}}_{G'}(v, j) =
	\begin{cases}
		\op{head}_G(v, j) + n & v \le n\\
		 v - n & v > n
	\end{cases}
	\end{align*}
\end{minipage}
\vspace{10pt}

\noindent The operations $\op{deg}^{\rm{out}}_{G'}$, $\op{deg}^{\rm{in}}_{G'}$ and $\op{head}^{\rm{out}}_{G'}$,
$\op{head}^{\rm{in}}_{G'}$ have the same asymptotic bounds as $\op{deg}_G$ and $\op{head}_G$, respectively.
\end{proof}

We next show that we can compute an $(\ell+1) $st paths.

\begin{lemma}\label{lem:newPath}
Given a valid path data scheme for a set $\mathcal P$ of $\ell\!<\!k$ good $s$-$t$-paths in $G$,
	$O(n (k + \log^* n) k^3 \log^3 k)$ time
	and $O(n + k^2 (\log k) \log n)$ bits
	suffice to compute an array of $2n$ bits storing a path numbering of an $(\ell+1)$th chordless $s$-$t$-path $P^*$,
	which can be a dirty path with respect to  $\mathcal P$.
\end{lemma}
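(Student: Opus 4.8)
\medskip
\noindent\textbf{Proof plan.}\
The plan is to run the network-flow technique of Lemma~\ref{lem:kpathsStd}, but with every object built on the fly. The set $\mathcal P$ of $\ell$ internally vertex-disjoint $s$-$t$-paths is an integral flow $f$ of value $\ell$ in the split graph $G'$ of Lemma~\ref{lem:transform} (source $s''$, sink $t'$, all internal arcs $v'\!\to\! v''$ with $v\notin\{s,t\}$ having capacity~$1$). We store neither $G'$ nor $f$: Lemma~\ref{lem:transform} already provides a graph interface for $G'$ with $O(\log n)$ bits of overhead, and whether an arc of $G'$ is saturated by $f$ is read off the path data structure of $\mathcal P$ (Lemma~\ref{lem:recompute}) --- the internal arc $v'\!\to\! v''$ is saturated exactly if $\A[v]\neq 0$, and an arc $u''\!\to\! v'$ is saturated exactly if $v\in V'$ and $u=\op{prev}(v)$. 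Putting these together yields a graph interface for the residual network $G'_f$ in which answering a residual-adjacency query at a vertex costs one $\op{prev}/\op{next}$ call, i.e.\ $O(k^3\log^3 k)$ extra time and $O(k^2(\log k)\log n)$ bits of workspace by Lemma~\ref{lem:recompute} (we keep the queried value while the adjacency of that vertex is being inspected, so each individual arc inspection is $O(1)$).

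Next I would settle existence of an augmenting path. Since $G$ has $k$ internally vertex-disjoint $s$-$t$-paths and $\ell<k$, the value of $f$ is below the maximum, so $G'_f$ has an $s''$-$t'$-path. As the paths of $\mathcal P$ are chordless, $s$ has exactly $\ell$ neighbours in $V'$ and all of them are blocked (same argument as in the proof of Lemma~\ref{lem:sameVertexSet}), so every residual $s''$-$t'$-path must leave $s$ towards a vertex outside $V'$, hence it witnesses a genuinely new vertex. To obtain one such path I would run the space-efficient DFS of Theorem~\ref{th:dfs} --- whose stack stores the $s''$-to-current-vertex path, just as exploited in Lemma~\ref{lem:chordlessSTPath} --- on $G'_f$ from $s''$, stop as soon as $t'$ is reached, and read the augmenting path $Q$ off the DFS stack. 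Charging $O(k^3\log^3 k)$ to each of the $O(m+n\log^* n)$ basic steps of the DFS and using $m=O(kn)$ for treewidth-$k$ graphs gives the claimed running time $O(n(k+\log^* n)k^3\log^3 k)$; the space is $O(n)$ for the DFS, plus $O(k^2(\log k)\log n)$ for the residual-adjacency queries, plus $O(n)$ for $\A,\B,\C$ and the $2n$-bit output array, i.e.\ $O(n+k^2(\log k)\log n)$ overall.

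Finally I would turn $Q$ into a path numbering. Un-splitting the vertices on $Q$ gives an $s$-$t$-walk $W$ in $G$ whose forward edges are ``new'' and whose backward edges reverse edges of $\mathcal P$; I would then delete repeated vertices and, exactly as in the cleanup step of Lemma~\ref{lem:chordlessSTPath}, delete chords, obtaining a chordless simple $s$-$t$-path $P^*$, and write $1,2,3,1,2,3,\dots$ along its internal vertices into a fresh array of $2n$ bits. Since $P^*$ still begins with the edge from $s$ to the new neighbour above, $P^*\notin\mathcal P$; $P^*$ may cross the paths of $\mathcal P$ and create extended deadlock cycles, i.e.\ $\mathcal P\cup\{P^*\}$ need not be good --- which is exactly what ``dirty'' means and all the statement asks for.

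\textbf{Main obstacle.}\ I expect two points to carry the work. First, making the residual network of a \emph{vertex}-capacitated flow accessible on the fly: every residual lookup has to be resolved through $\op{prev}/\op{next}$ of the path data structure, and getting the amortisation right (one call per vertex per DFS scan, reused for all incident arcs) is what makes the time and space come out as stated. Second --- and this is the delicate point on which the later rerouting of Lemma~\ref{lem:rerouting} will rely --- I must make sure that the cleanup turning the augmenting walk $W$ into the \emph{simple} chordless path $P^*$ does not discard the vertices that $Q$ freshly introduced; otherwise $\mathcal P\cup\{P^*\}$ would no longer realise $\ell+1$ internally vertex-disjoint $s$-$t$-paths. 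The clean way to secure this is to forbid the DFS from ever visiting \emph{both} copies of a vertex of $G'$, so that the projection of $Q$ is already a simple path whose edge set is exactly $E(f)\oplus E(f')$ for the augmented flow $f'$; verifying that such a restricted augmenting path still exists is the real content of the argument.
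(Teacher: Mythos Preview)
Your overall plan---build the split graph $G'$ via Lemma~\ref{lem:transform}, realise the residual network through the path data structure, run a space-efficient DFS, then make the result chordless and project back to $G$---is exactly the paper's approach. But there is a genuine gap in your amortisation argument, and the paper devotes most of its proof to precisely this point.

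You propose to decide whether an arc $u''\!\to\! v'$ is saturated by checking $u=\op{prev}(v)$, cache the answer ``while the adjacency of that vertex is being inspected'', and thereby pay one $\op{prev}$ call per vertex. This does not survive the space-efficient DFS of Theorem~\ref{th:dfs}. That DFS achieves $O(n)$ bits by discarding most of its stack and \emph{restoring} segments on demand; a vertex can therefore have its adjacency re-scanned several times, and you cannot keep a cached $\op{prev}(v)$ across restorations without spending $\Theta(\log n)$ bits per stacked vertex, which you do not have. Recomputing $\op{prev}(v)$ each time costs $O(\deg(v)+k^3\log^3 k)$ by Lemma~\ref{lem:recompute}, and the $\deg(v)$ term summed over repeated restorations of large-degree vertices is not controlled by the DFS's $O(m+n\log^* n)$ bound. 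The paper names this obstacle explicitly and sidesteps it by never calling $\op{prev}$ at all: it builds a graph $G''$ in which every vertex has \emph{all} of its $G'$-out-arcs followed by \emph{all} of its $G'$-in-arcs, redirects the non-residual ones to a black dead-end vertex $d$, and decides ``residual or not'' for the single arc $(v,z)$ by testing $\op{color}(v)=\op{color}(z)$ together with the $1{,}2{,}3$-numbering in~$\A$. Since $\op{color}$ runs in $O(k^3\log^3 k)$ time \emph{independently of the degree} (Lemma~\ref{lem:recompute}), each $\op{head}_{G''}(v,j)$ query costs a fixed $O(k^3\log^3 k)$, and the DFS---restorations included---stays within $O((m+n\log^* n)\,k^3\log^3 k)=O(n(k+\log^* n)k^3\log^3 k)$.

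Your ``main obstacle'' is misdiagnosed. The paper does not attempt to keep the projected walk simple by forbidding the DFS from visiting both copies of a vertex; it simply runs the DFS in $G''$, makes the resulting path chordless there via Lemma~\ref{lem:chordlessSTPath}, and merges $v',v''$ afterwards. The structure of $G'$ already forces any augmenting path that touches a vertex $v\in V'$ to use at least one reversed arc of the corresponding path, so the common part with each $P_i$ is a genuine subpath of length $\ge 2$---this is why the output is ``dirty'' in the sense of the definition, and no extra care is needed. The real work is the dead-end trick above.
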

\begin{proof}
To solve the lemma, we adapt the standard network-flow approach as follows. 
We first use the graph interface of the Lemma~\ref{lem:transform} 
to obtain a graph $G''$ in that we can search for edge-disjoint paths instead of vertex-disjoint
paths in $G$.
We provide the operations $\op{prev}$, $\op{next}$, $\op{color}$ and a virtual array $\A$
that gives access to the paths $\mathcal P$ adjusted to $G'$ 
by using a simple translation to the corresponding data structure of $G$.

According to the construction of $G'$ a vertex $v$ in $G$ is equivalent to two vertices,
an in-vertex $v' = v$ and an out-vertex $v'' = v + n$ in $G'$.
Moreover, every in-vertex $u'$ in $G$ has a single outgoing edge and this edge
points to its out-vertex $u'' = u' + n$, which has only outgoing edges that all
point to some in-vertices in $G'$.
Taken our stored $\ell$ good paths into account a path $(v_1, v_2, v_3, \ldots) \in \mathcal P$
in $G$ translates into a path $(v_1', v_1'', v_2', v_2'', v_3', v_3'', \ldots)$ in $G'$
and the edges between them must be reversed in $G'$.

Next, we want to build the residual network of
$G'$ and $\mathcal P$. This means that we have to reverse the edges on the
paths of $\mathcal P$. To identify the edges incident to a vertex $v$,
 it seems to be natural to use $\op{prev}(v)$ and $\op{next}(v)$.
Unfortunately, we cannot effort to query these two operations many times to
find out which of the incident edges of $v$ is reversed since
$v$ can be a large vertex and the runtime
of both operations depends on $\op{deg}_{G'}(v)$.
This issue becomes especially important when using a space-efficient DFS that
may query $\op{prev}(v)$
more than a constant number of
times (e.g., by running several restorations of a stack segment including~$v$).

To avoid querying $\op{prev}(v)$ for any vertex $v$ of $G'$, 
we present the DFS a graph $G''$ where $v$ has as outgoing edges 
first all outgoing edges of $v$ in $G'$ and 
then all incoming edges of $v$ in $G'$.
In detail, we present the DFS 
a graph $G''$ with all vertices of $G'$ and one further 
vertex $d$ with no outgoing edges. We also make sure that the DFS has 
colored $d$ black (e.g., start the DFS on $d$ before doing anything else).
A sketch of the graph $G'$ with a blue path, graph $G''$ and the reversal of the edges in $G''$
can be seen in Fig.~\ref{fig:transgraph}.
As we see in the next paragraph, we will use $d$ as
a kind of {\em dead end}.

\begin{figure}[h]
	\centering
	\begin{subfigure}{.5\textwidth}
		\centering
		\begin{tikzpicture}[
scale=1.1,
node distance = 10pt,
vertex/.style={minimum width=0.30em, minimum height=0.30em, circle, draw, on 
chain},
svertex/.style={minimum width=0.15em, minimum height=0.15em, circle, draw, 
on chain, inner sep=1pt, text width=5pt, align=center, color=red},
graph/.style={minimum width=3em, minimum height=2em, ellipse, draw, on 
chain, decorate, decoration={snake,segment length=1mm,amplitude=0.2mm}},
decoration={
	markings,
	mark=at position 0.5 with {\arrow{>}}},
],

\node[svertex] (a) at (0, 0) {};
\node[svertex, below=20pt of a] (b) {};
\node[svertex, right=35pt of b] (c) {};
\node[svertex, below=20pt of b] (d) {};

\draw[-]
(a) -- (b)
(b) edge[color=blue] (c)
(b) edge[color=blue] (d)
;

\end{tikzpicture}
 	\end{subfigure}%
	\begin{subfigure}{.5\textwidth}
		\centering
		\begin{tikzpicture}[
scale=1.1,
node distance = 10pt,
vertex/.style={minimum width=0.30em, minimum height=0.30em, circle, draw, on 
chain},
svertex/.style={minimum width=0.15em, minimum height=0.15em, circle, draw, 
on chain, inner sep=1pt, text width=5pt, align=center},
graph/.style={minimum width=3em, minimum height=2em, ellipse, draw, on 
chain, decorate, decoration={snake,segment length=1mm,amplitude=0.2mm}},
decoration={
	markings,
	mark=at position 0.5 with {\arrow{>}}},
],

\node[svertex] (a) at (0, 0) {};
\node[svertex, right=15pt of a] (a2) {};

\draw[color=red] ($(a) + (-4pt, 4pt)$) rectangle ($(a2) + (4pt, -4pt)$) {};

\node[svertex, below=20pt of a] (b) {};
\node[svertex, right=15pt of b] (b2) {};

\draw[color=red] ($(b) + (-4pt, 4pt)$) rectangle ($(b2) + (4pt, -4pt)$) {};

\node[svertex, right=20pt of b2] (c) {};
\node[svertex, right=15pt of c] (c2) {};

\draw[color=red] ($(c) + (-4pt, 4pt)$) rectangle ($(c2) + (4pt, -4pt)$) {};

\node[svertex, below=20pt of b] (d) {};
\node[svertex, right=15pt of d] (d2) {};

\draw[color=red] ($(d) + (-4pt, 4pt)$) rectangle ($(d2) + (4pt, -4pt)$) {};

\path[->, >=stealth]
(a) edge (a2)
(b) edge[color=blue] (b2)
(c) edge[color=blue] (c2)
(d) edge[color=blue] (d2)
(a2) edge (b)
(b2) edge[color=blue] (c)
(b2) edge (d)
(d2) edge[color=blue] (b)
(b2) edge (a)

($(b) + (-10pt, 10pt)$) edge [->, -latex, dashed, bend right=15] (b)
(c2) edge [->, -latex, dashed, bend right=15] ($(c2) + (10pt, 10pt)$)
;

\end{tikzpicture}
 	\end{subfigure}\vspace{2mm}
 	
	\begin{subfigure}{.5\textwidth}
		\centering
		\begin{tikzpicture}[
scale=1.1,
node distance = 10pt,
vertex/.style={minimum width=0.30em, minimum height=0.30em, circle, draw, on 
chain},
svertex/.style={minimum width=0.15em, minimum height=0.15em, circle, draw, 
on chain, inner sep=1pt, text width=5pt, align=center},
bvertex/.style={minimum width=0.15em, minimum height=0.15em, circle, draw, 
on chain, inner sep=1pt, text width=.5pt, align=center, fill=black},
graph/.style={minimum width=3em, minimum height=2em, ellipse, draw, on 
chain, decorate, decoration={snake,segment length=1mm,amplitude=0.2mm}},
decoration={
	markings,
	mark=at position 0.5 with {\arrow{>}}},
],

\node[svertex] (a) at (0, 0) {};
\node[svertex, right=15pt of a] (a2) {};

\node[svertex, below=20pt of a] (b) {};
\node[svertex, right=15pt of b] (b2) {};

\node[svertex, right=20pt of b2] (c) {};
\node[svertex, right=15pt of c] (c2) {};

\node[svertex, below=20pt of b] (d) {};
\node[svertex, right=15pt of d] (d2) {};

\node[bvertex, left=6pt of a2] (1) {};
\node[bvertex, left=6pt of d2] (2) {};
\node[bvertex, left=6pt of b2] (3) {};
\node[bvertex, left=6pt of c2] (4) {};
\node[bvertex, left=9pt of c] (5) {};

\node[bvertex, below=6pt of a, xshift=9pt] (6) {};
\node[bvertex, above=6pt of b, xshift=9pt] (7) {};

\node[bvertex, below=6pt of b, xshift=9pt] (8) {};
\node[bvertex, above=6pt of d, xshift=9pt] (9) {};

\node[bvertex, left=5pt of b, yshift=3pt] (10) {};

\path[->, >=stealth]
(a.45) edge[bend left=10] (a2.135)
(b.38) edge[color=blue, bend left=10] (b2.135)
(c.45) edge[color=blue, bend left=10] (c2.135)
(d.38) edge[color=blue, bend left=10] (d2.135)
(a2.south) edge[bend left=10] (b.50)
(b2.45) edge[color=blue, bend left=10] (c.135)
(b2.south) edge[bend left=10] (d.50)
(d2.north) edge[color=blue, bend right=10] (b.315)
(b2.north) edge[bend right=10] (a.315)

($(b) + (-10pt, 10pt)$) edge [->, -latex, dashed, bend right=15] (b)
(c2) edge [->, -latex, dashed, bend right=15] ($(c2) + (10pt, 10pt)$)

(b.225) edge[bend left=10] (10)

(a2.225) edge[bend left=10] (1.315)
(b2.225) edge[bend left=10] (3.315)
(c2.225) edge[bend left=10] (4.315)
(d2.225) edge[bend left=10] (2.315)
(c.225) edge[bend left=10] (5.315)

(a.south) edge[bend right=10] (6)
(b.north) edge[bend left=10] (7)

(b.south) edge[bend right=10] (8)
(d.north) edge[bend left=10] (9)
;

\end{tikzpicture}
 	\end{subfigure}%
	\begin{subfigure}{.5\textwidth}
		\centering
		\begin{tikzpicture}[
scale=1.1,
node distance = 10pt,
vertex/.style={minimum width=0.30em, minimum height=0.30em, circle, draw, on 
chain},
svertex/.style={minimum width=0.15em, minimum height=0.15em, circle, draw, 
on chain, inner sep=1pt, text width=5pt, align=center},
bvertex/.style={minimum width=0.15em, minimum height=0.15em, circle, draw, 
on chain, inner sep=1pt, text width=.3pt, align=center, fill=black},
graph/.style={minimum width=3em, minimum height=2em, ellipse, draw, on 
chain, decorate, decoration={snake,segment length=1mm,amplitude=0.2mm}},
decoration={
	markings,
	mark=at position 0.5 with {\arrow{>}}},
],

\node[svertex] (a) at (0, 0) {};
\node[svertex, right=15pt of a] (a2) {};

\node[svertex, below=20pt of a] (b) {};
\node[svertex, right=15pt of b] (b2) {};

\node[svertex, right=20pt of b2] (c) {};
\node[svertex, right=15pt of c] (c2) {};

\node[svertex, below=20pt of b] (d) {};
\node[svertex, right=15pt of d] (d2) {};

\node[bvertex, left=6pt of a2] (1) {};
\node[bvertex, left=6pt of d2] (2) {};
\node[bvertex, left=6pt of b2] (3) {};
\node[bvertex, left=6pt of c2] (4) {};
\node[bvertex, left=9pt of c] (5) {};

\node[bvertex, below=6pt of a, xshift=9pt] (6) {};
\node[bvertex, above=6pt of b, xshift=9pt] (7) {};

\node[bvertex, above=5pt of d2, xshift=-7pt] (8) {};
\node[bvertex, above=5pt of d, xshift=7pt] (9) {};

\node[bvertex, left=5pt of b, yshift=3pt] (10) {};

\path[->, >=stealth]
(a.45) edge[bend left=10] (a2.135)
(b.38) edge[color=blue, bend left=10] (3)
(c.45) edge[color=blue, bend left=10] (4)
(d.38) edge[color=blue, bend left=10] (2)
(a2.south) edge[bend left=10] (b.50)
(b2.45) edge[color=blue, bend left=10] (5)
(b2.south) edge[bend left=10] (d.50)
(d2.north) edge[color=blue, bend right=10] (8)
(b2.north) edge[bend right=10] (a.315)

($(b) + (-10pt, 10pt)$) edge [->, -latex, dashed, bend right=15] (b)
(c2) edge [->, -latex, dashed, bend right=15] ($(c2) + (10pt, 10pt)$)

(b.225) edge[bend left=10] (10)

(a2.225) edge[bend left=10] (1.315)
(b2.225) edge[bend left=10] (b.315)
(c2.225) edge[bend left=10] (c.315)
(d2.225) edge[bend left=10] (d.315)
(c.225) edge[bend left=10] (b2.315)

(a.south) edge[bend right=10] (6)
(b.north) edge[bend left=10] (7)

(b.south) edge[bend right=10] (d2.135)
(d.north) edge[bend left=10] (9)
;

\end{tikzpicture}
 	\end{subfigure}%
	\caption{From top left to buttom right, a sketch of $G$, $G'$, $G''$ and $G''$ with reversed edges.
	Each black vertex is the same vertex $b$. The blue edges are edges of a path.
	The dashed lines in the graphs sketches 
	an edge connecting the middle rightmost with the middle leftmost vertex.}\label{fig:transgraph}
\end{figure}
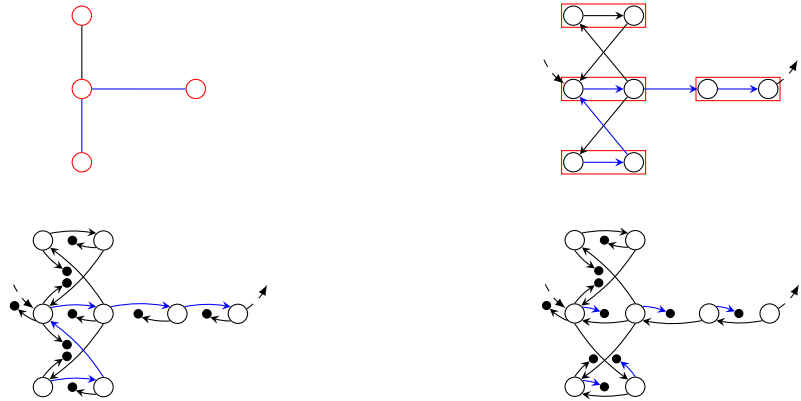

Every vertex $v \neq d$ of $G''$ has
$\op{deg}^{\rm{out}}_{G'}(v) + \op{deg}^{\rm{in}}_{G'}(v)$
many outgoing edges defined as follows.
The heads of the first $\op{deg}^{\rm{out}}_{G'}(v)$ outgoing edges of $v$ in $G''$ are the same as in $G'$
with the exception that we present edge $(v, d)$ for possibly one outgoing edge $(v, z)$ that is reversed (i.e.,
$\op{color}(v) = \op{color}(z) \land \A[z ] = (\A[v ] - 1) \mod 3)$). 
The heads of the next $\op{deg}^{\rm{in}}_{G'}(v)$ incoming edges of $v$ in $G''$ are 
presented as an edge $(v, d)$ with the exception that we present possibly
one incoming edge $(z, v)$
that is reversed
 as an outgoing edge $(v, z)$.

Intuitively speaking, this ensures that a DFS backtracks immediately after using
edges that do not exist in the residual graph. 
Note that we so can decide the
head of an edge in $G''$ by only calling $\op{color}$ twice.
The graph $G''$ has asymptotically
the same amount of vertices and edges as $G$.
To guarantee our space bound we compute a new $s$-$t$-path $P$ in $G''$ using a space-efficient DFS on $G''$
and consecutively make $P$ chordless by Lemma~\ref{lem:chordlessSTPath}.
Finally, we transform $P$ into a 
path $P^*$ with respect to $\mathcal P$ by merging in and out vertices together.

{\bfseries Efficiency:}
Concerning the running time we use
a space-efficient DFS (Lemma~\ref{th:dfs})
to find a new $s$-$t$-path in $G''$.
To operate in $G''$ we need to query the color of vertices.
We pay for that with an extra factor of $O(k^3 \log^3 k)$ 
(Lemma~\ref{lem:recompute}) in the running time.
Since our graph has~$O(n)$ vertices and~$O(kn)$ edges
we find a new $s$-$t$ path in~$G''$ in time $O(n (k + \log^* n) k^3 \log^3 k)$.
The time to transform the found path into~$P^*$
and store it can be done in the same bound.
The total space used for the algorithm is 
$O(n + k^2 (\log k) \log n)$ bits---$O(n)$ bits for the space-efficient DFS,
$O(k^2 (\log k) \log n)$ bits for Lemma~\ref{lem:recompute},
and $O(n)$ bits for storing $P^*$ and its transformation.
\end{proof}

Let us call an $s$-$t$-path $P^*$ a {\em clean path} with respect to $\mathcal P$
if $\mathcal P \cup \{P^*\}$ is a set of good paths, and a {\em dirty path} with respect to $\mathcal P$
if for all $P \in \mathcal P$, 
the common vertices and edges of $P^*$ and $P$ build subpaths each consisting of at least two
vertices and used by the two paths in opposite direction when running from $s$
to $t$ over both paths.
Intuitively, if we construct an $(\ell + 1)$st path $P^*$
in the residual network of $G$ with $\ell$ paths
in~$\mathcal P$,
then $P^*$ can run backwards on paths in $\mathcal P$,
and we call $P^*$ a dirty path (see Fig.~\ref{fig:cleanAndDirtyPaths}).
After a rerouting, we obtain $\ell + 1$ internal vertex disjoint $s$-$t$ paths.
To store the paths $\mathcal P \cup \{P^*\}$ with our path data structure, 
the paths must be good, which we can guarantee through another rerouting.
The details are described below.

\begin{figure}[h]
	\centering
	\begin{tikzpicture}[
node distance = 10pt,
vertex/.style={minimum width=0.30em, minimum height=0.30em, circle, draw, on 
chain},
svertex/.style={minimum width=0.15em, minimum height=0.15em, circle, draw, 
on chain, inner sep=1pt, text width=3pt, align=center},
fix/.style={inner sep=1pt, text width=6pt, align=center},
graph/.style={minimum width=3em, minimum height=2em, ellipse, draw, on 
chain, decorate, decoration={snake,segment length=1mm,amplitude=0.2mm}},
decoration={
	markings,
	mark=at position 0.5 with {\arrow{>}}},
path/.style=
	{-,
		decorate,
		decoration={snake,amplitude=.2mm,segment length=2mm,}}
],

\node[] (p) {$P^*$};
\node[below= 2pt of p] (p1) {$P_\ell$};
\node[below= 2pt of p1] (p2) {$P_3$};
\node[below= 2pt of p2] (p3) {$P_2$};
\node[below= 2pt of p3] (p4) {$P_1$};

\node[svertex, fill=black, right = 30pt of p1] (v1) {};
\node[svertex, fill=black, right = 100pt of p1] (v1b) {};
\node[svertex, fill=black, right = 30pt of p2] (v2) {};
\node[svertex, fill=black, right = 100pt of p2] (v2b) {};
\node[svertex, fill=black, right = 100pt of p3] (v3) {};
\node[svertex, fill=black, right = 81pt of p1] (v1q) {};

\node[svertex, fill=black, right = 60pt of p2] (v2v) {};
\node[svertex, fill=ForestGreen, right = 120pt of p3] (v3u) {};

\node[svertex, fill=ForestGreen, above = 5pt of v2v] (v) {};
\node[svertex, fill=ForestGreen, right = 5pt of v3u] (u) {};

\node[svertex, fill=black, right = 90pt of p4] (b1) {};
\node[svertex, fill=black, right = 70pt of p4] (b2) {};

\path[draw, path] ($(p1.east) + (5pt,0)$) -- (v1);
\path[draw, path] ($(p2.east) + (5pt,0)$) -- (v2);
\path[draw, path] ($(p3.east) + (5pt,0)$) -- (v3);
\path[draw, path] ($(p4.east) + (5pt,0)$) -- (b2);
\path[draw, path] (b1) -- ($(p4) + (170pt, 0)$);

\path[draw, path] (v1) -- ($(p1) + (170pt, 0)$);
\path[draw, path] (v2b) -- ($(p2) + (170pt, 0)$);
\path[draw, path] (u) -- ($(p3) + (170pt, 0)$);

\path[draw, path] (v2) -- (v2v);
\path[draw, path] (v2v) -- (v2b);
\path[draw, path, color=red] (u) -- (v3u);

\path[draw, -] (v1) -- (v2);
\path[draw, -] (v2b) -- (v3);
\path[draw, -] (v2b) -- (v1b);

\path[draw, -] (v) -- (v2v);
\path[draw, path] (v3) -- (v3u);

\node[] at ($(u) + (8pt, 5pt)$) (lu) {$u$};

\node[] at ($(p4) + (15pt, 11pt)$) {$\vdots$};
\node[] at ($(p4) + (170pt, 11pt)$) {$\vdots$};

\node[svertex, fill=ForestGreen] at ($(p1) + (80pt, 10pt)$) (y) {};
\path[draw, -] (y) -- (v1q);

\path[draw, path, color=ForestGreen]
($(s) + (15pt,0)$) -- ($(s) + (90pt, 0)$)
($(s) + (90pt, 0)$) edge [path, bend left=30] (u)
;
\path[draw, path, color=red]
(v3u) edge [dashed, bend left=15] (b1) 
(b2) edge [dashed, bend left=50] (v)
(b2) -- (b1)
(v) edge [path, bend left=15] (y)
(y) edge [->, bend left=0] ($(y) + (10pt, 0pt)$)
;

\end{tikzpicture}
	\caption{The green subpath of $P^*$ is clean until it hits a common vertex $u$ with $P_3$,
	then it is a dirty path.}\label{fig:cleanAndDirtyPaths}
\end{figure}

Let $P^*$ be an $s$-$t$-path returned by Lemma~\ref{lem:newPath}.
We first consider the case where $P^*$ is a dirty path with respect to $\mathcal P$.
To make the paths good we cut the paths into subpaths 
and by using the subpaths we construct a set of new $\ell + 1$ good $s$-$t$-paths.
To achieve that we have to get rid of 
common vertices and extended deadlock cycles.
We handle common vertices and extended deadlock cycles in a single process, 
but we first briefly sketch the standard network-flow technique
to remove 
common vertices of $P^*$ with a path $P^c \in \mathcal P$.
By the construction of the paths, the common vertices of an induced subpath are ordered on
$P^c$ as $v_{\sigma_1}, v_{\sigma_2}, \ldots, v_{\sigma_x}$ (for some function $\sigma$ and $x \ge 2$) and on $P^*$ as $v_{\sigma_x}, v_{\sigma_{x - 1}}, \ldots, v_{\sigma_1}$.
$P^c$ and $P^*$ can be split into vertex disjoint paths by 
(1) removing the vertices $v_{\sigma_2}, \ldots, v_{\sigma_{x - 1}}$ from $P^*$ as well as from $P^c$, 
(2) by rerouting path $P^c$ at $v_{\sigma_1}$ to follow $P^*$,
and (3) by rerouting $P^*$ at $v_{\sigma_x}$ to follow~$P^c$.

We denote by $R(\mathcal P \cup \{P^*\})$ a {\em rerouting}
function that returns a new set of $|\mathcal P \cup \{P^*\}|$ good paths.
We so change the successor and predecessor information of some vertices of the paths $\mathcal P \cup \{P^*\}$.
Let~$V^*$ be a set of all vertices in $R(\mathcal P \cup \{P^*\})$.
To define successor and  predecessor information for the vertices we use
the mappings $\Rs, \Rp: V^* \rightarrow \{0, \ldots, k\}$
where for each vertex $u \in V^*$
with $\Rs(u) \ne 0$, $u$'s new successor is a vertex $v \in N(v)
\cap V^*$ with $\Rs(u) = \op{color}(v)$.
Similarly we use $\Rp$ 
to define a new predecessor of some $u \in V^*$.
The triple $(V^*, \Rs, \Rp)$ realizes a rerouting $R$.

To avoid using too much space by storing rerouting information for too many vertices
our approach is to 
compute a rerouting $R(\mathcal P \cup \{P'\})$ only 
for an $s$-$p$-subpath $P'$ of $P^*$ 
where $p$ is a vertex of $P^*$.
This means that $R(\mathcal P \cup \{P'\})$ consists of $\ell$ $s$-$t$-paths
and one $s$-$p$-path.
Moreover, we make the
rerouting in such a way that, for all paths $P_i \in R(\mathcal P \cup \{P'\})$,
there is a vertex $v_i$ with the following property: replacing the $v_i$-$t$-subpath of $P_i$ by an
edge $\{v_i,t\}$ for all paths $P_i$ we get $\ell+1$ good paths. 
Let $\mathcal P^{\rm{c}}$ consist of the $s$-$v_i$-subpaths for each $P_i$. 
We then call the set of vertices part of the paths in $\mathcal P^{\rm{c}}$ 
a {\em clean area} $\Q$ for $R(\mathcal P \cup \{P'\})$.

The idea is to repeatedly compute a rerouting in $O(\log k)$ batches and thereby extend $\Q$ with each
batch by $\Theta(n/\log k)$ vertices---the last batch may be smaller---such 
that we store $O(n / \log k)$ entries in $\Rs$ and $\Rp$ with each batch.
After each batch we free space for the next one
by computing a valid path storage scheme storing good $s$-$t$ paths
$R(\mathcal P \cup \{P'\}) \setminus P''$,
where $P''$ is the only path in $R(\mathcal P \cup \{P'\})$ not ending in $t$,
 and a valid path storage scheme for the $s$-$t$ path
 $\mathcal P^{**}$ obtained from $P^*$ by 
 replacing~$P'$ trough~$P''$.
(For an example consider
the $p$-$w$-subpath of~$P^*$ in Fig.~\ref{fig:monotone}a, which is replaced
by a beginning of $P_2$ in Fig.~\ref{fig:monotone}d.)

%

We now sketch our ideas of the rerouting that removes extended deadlock cycles in $\mathcal P \cup \{P^*\}$.
Recall that every extended deadlock cycle
must contain parts of $P^*$ since all remaining 
paths are good. 
Hence by moving along $P^*$ we look for an vertex $u$ of $P^*$ 
that is an endpoint of a cross edge $\{u, v\}$.
Since a deadlock cycle is a cycle, there must be
some vertex~$w$ after $u$ on $P^*$ that is connected
by subpaths of $\mathcal P$ and cross edges connecting
the paths in $\mathcal P \cup \{P^*\}$.

Since common vertices and deadlock cycles may
intersect, vertices $u$ and $w$ can have 
a cross edge or be common vertex on a path in~$\mathcal P$
(Fig.~\ref{fig:monotone}a).
To find $u$ and $w$,
our idea is to move over $P^*$ starting from $s$ and stop at the first
such vertex $u$ of $P^*$.
Then use a modified DFS at $u$ that runs over
paths in $\mathcal P$ only in reverse direction and over 
cross edges---but never over two subsequent cross edges---and so 
explores subpaths of $\mathcal P$ (marked orange in
Fig.~\ref{fig:monotone}b). 
Whenever a vertex $v$ of the clean area is reached,
the DFS backtracks, i.e., the DFS assumes that $v$ has no outgoing edges.
We so guarantee that the clean area is not explored again and again.
Afterwards we determine the latest vertex $w$ on $P^*$ that 
is a common vertex or has a cross edge with one of the explored subpaths.
If such a vertex $w$ is found, 
we either have a common subpath from $u$ to $w$ between a path $P^c \in \mathcal P$ and $P^*$
(which is removed as described above) or a 
deadlock cycle. If it is 
an extended deadlock cycle, we have 
a subpath on the cycle consisting of at least three vertices.
As the proof of the following lemma shows
the extended deadlock cycle can be destroyed by
removing the inner vertices of the subpath
and rerouting the paths via cross edges part of the extended deadlock cycle.
We find an extended deadlock cycle by an additional run of the modified DFS from $u$
to $w$ (Fig.~\ref{fig:monotone}c) and reroute 
it along the cross edges. Fig.~\ref{fig:monotone}d shows a rerouting where the path in $R(\mathcal P \cup \{P'\})$
starting with the vertices of the old path $P_2$ becomes the ``new'' path~$P^*$.

\begin{figure}[h]
			\centering
			\begin{subfigure}{.5\textwidth}
				\centering
				\begin{tikzpicture}[
node distance = 10pt,
vertex/.style={minimum width=0.30em, minimum height=0.30em, circle, draw, on 
chain},
svertex/.style={minimum width=0.15em, minimum height=0.15em, circle, draw, 
on chain, inner sep=1pt, text width=3pt, align=center},
fix/.style={inner sep=1pt, text width=6pt, align=center},
graph/.style={minimum width=3em, minimum height=2em, ellipse, draw, on 
chain, decorate, decoration={snake,segment length=1mm,amplitude=0.2mm}},
decoration={
	markings,
	mark=at position 0.5 with {\arrow{>}}},
path/.style=
	{-,
		decorate,
		decoration={snake,amplitude=.2mm,segment length=2mm,}}
],

\node[] (p) {$P^*$};
\node[below= 2pt of p] (p1) {$P_1$};
\node[below= 2pt of p1] (p2) {$P_2$};
\node[below= 2pt of p2] (p3) {$P_3$};
\node[below= 2pt of p3] (p4) {$P_\ell$};

\node[svertex, fill=ForestGreen, right = 0pt of p] (s) {};
\node[svertex, fill=black, right = 30pt of p1] (v1) {};
\node[svertex, fill=black, right = 100pt of p1] (v1b) {};
\node[svertex, fill=black, right = 30pt of p2] (v2) {};
\node[svertex, fill=black, right = 100pt of p2] (v2b) {};
\node[svertex, fill=black, right = 100pt of p3] (v3) {};
\node[svertex, fill=black, right = 81pt of p1] (v1q) {};

\node[svertex, fill=black, right = 60pt of p2] (v2v) {};
\node[svertex, fill=ForestGreen, right = 120pt of p3] (v3u) {};

\node[svertex, fill=ForestGreen, above = 5pt of v2v] (v) {};
\node[svertex, fill=ForestGreen, right = 5pt of v3u] (u) {};

\node[svertex, fill=black, right = 90pt of p4] (b1) {};
\node[svertex, fill=black, right = 70pt of p4] (b2) {};

\path[draw, path] ($(p1.east) + (5pt,0)$) -- (v1);
\path[draw, path] ($(p2.east) + (5pt,0)$) -- (v2);
\path[draw, path] ($(p3.east) + (5pt,0)$) -- (v3);
\path[draw, path] ($(p4.east) + (5pt,0)$) -- (b2);
\path[draw, path] (b1) -- ($(p4) + (170pt, 0)$);

\path[draw, path] (v1) -- ($(p1) + (170pt, 0)$);
\path[draw, path] (v2b) -- ($(p2) + (170pt, 0)$);
\path[draw, path] (u) -- ($(p3) + (170pt, 0)$);

\path[draw, path] (v2) -- (v2v);
\path[draw, path] (v2v) -- (v2b);
\path[draw, path, color=ForestGreen] (u) -- (v3u);

\path[draw, -] (v1) -- (v2);
\path[draw, -] (v2b) -- (v3);
\path[draw, -] (v2b) -- (v1b);

\path[draw, -] (v) -- (v2v);
\path[draw, path] (v3) -- (v3u);

\node[] at ($(p) + (11pt, -10pt)$) (ppp) {$p$}; 

\node[] at ($(u) + (8pt, 5pt)$) (lu) {$u$};

\node[] at ($(p4) + (15pt, 11pt)$) {$\vdots$};
\node[] at ($(p4) + (170pt, 11pt)$) {$\vdots$};

\node[svertex, fill=ForestGreen] at ($(p1) + (80pt, 10pt)$) (y) {};
\path[draw, -] (y) -- (v1q);

\path[draw, path, color=ForestGreen]
(s) -- ($(s) + (90pt, 0)$)
($(s) + (90pt, 0)$) edge [path, bend left=30] (u)
(v3u) edge [dashed, bend left=15] (b1) 
(b2) edge [dashed, bend left=50] (v)
(b2) -- (b1)
(v) edge [path, bend left=15] (y)
(y) edge [->, bend left=0] ($(y) + (10pt, 0pt)$)
;

\long\def\cg#1{%
	{{\color{ForestGreen}{#1}}}%
}

\node[below = 5pt of p4] (dsp) {$P$};
\node[draw, right = 5pt of dsp] (sdpe) 
{$0\cg{13}\cg{2}0\ldots0\cg{1}0\ldots0\cg{3}0\cg{1}0\cg{2}\ldots$};
\node[right= 5pt of sdpe] (sp) {$2n$ bits};
\end{tikzpicture}
				\caption{Look for common vertices or cross edges.}
 			\end{subfigure}%
			\begin{subfigure}{.5\textwidth}
				\centering
				\begin{tikzpicture}[
node distance = 10pt,
vertex/.style={minimum width=0.30em, minimum height=0.30em, circle, draw, 
on 
	chain},
svertex/.style={minimum width=0.15em, minimum height=0.15em, circle, 
draw, 
	on chain, inner sep=1pt, text width=3pt, align=center},
fix/.style={inner sep=1pt, text width=6pt, align=center},
graph/.style={minimum width=3em, minimum height=2em, ellipse, draw, on 
	chain, decorate, decoration={snake,segment 
	length=1mm,amplitude=0.2mm}},
decoration={
	markings,
	mark=at position 0.5 with {\arrow{>}}},
path/.style=
{-,
	decorate,
	decoration={snake,amplitude=.2mm,segment length=2mm,}}
],

\node[] (p) {$P^*$};
\node[below= 2pt of p] (p1) {$P_1$};
\node[below= 2pt of p1] (p2) {$P_2$};
\node[below= 2pt of p2] (p3) {$P_3$};
\node[below= 2pt of p3] (p4) {$P_\ell$};

\node[svertex, fill=ForestGreen, right = 0pt of p] (s) {};
\node[svertex, fill=black, right = 30pt of p1] (v1) {};
\node[svertex, fill=black, right = 30pt of p2] (v2) {};
\node[svertex, fill=black, right = 100pt of p2] (v2b) {};
\node[svertex, fill=black, right = 100pt of p3] (v3) {};
\node[svertex, fill=black, right = 100pt of p1] (v1b) {};
\node[svertex, fill=black, right = 81pt of p1] (v1q) {};

\node[svertex, fill=black, right = 60pt of p2] (v2v) {};
\node[svertex, fill=ForestGreen, right = 120pt of p3] (v3u) {};

\node[svertex, fill=ForestGreen, above = 5pt of v2v] (v) {};
\node[svertex, fill=ForestGreen, right = 5pt of v3u] (u) {};

\node[svertex, fill=black, right = 90pt of p4] (b1) {};
\node[svertex, fill=black, right = 70pt of p4] (b2) {};

\path[draw, path, color= orange] ($(p1.east) + (5pt,0)$) -- (v1);
\path[draw, path, color= orange] ($(p2.east) + (5pt,0)$) -- (v2);
\path[draw, path, color= orange] ($(p3.east) + (5pt,0)$) -- (v3);
\path[draw, path] ($(p4.east) + (5pt,0)$) -- (b2);
\path[draw, path] (b1) -- ($(p4) + (170pt, 0)$);

\path[draw, path] (v1) -- ($(p1) + (170pt, 0)$);
\path[draw, path] (v2b) -- ($(p2) + (170pt, 0)$);
\path[draw, path] (u) -- ($(p3) + (170pt, 0)$);

\path[draw, path, color= orange] (v2) -- (v2v);
\path[draw, path, color= orange] (v2v) -- (v2b);
\path[draw, path, color= orange] (v3) -- (v3u);

\path[draw, -] (v1) -- (v2);
\path[draw, -] (v2b) -- (v3);
\path[draw, -] (v2b) -- (v1b);

\path[draw, -] (v) -- (v2v);
\path[draw, path, color=ForestGreen] (u) -- (v3u);

\node[] at ($(u) + (8pt, 5pt)$) (lu) {$u$};
\node[] at ($(v) + (8pt, -1pt)$) (lv) {$w$};

\node[] at ($(p4) + (15pt, 11pt)$) {$\vdots$};
\node[] at ($(p4) + (170pt, 11pt)$) {$\vdots$};

\node[] at ($(p) + (11pt, -10pt)$) (ppp) {$p$}; 

\node[svertex, fill=ForestGreen] at ($(p1) + (80pt, 10pt)$) (y) {};
\path[draw, -] (y) -- (v1q);

\path[draw, path, color=ForestGreen]
(s) -- ($(s) + (90pt, 0)$)
($(s) + (90pt, 0)$) edge [path, bend left=30] (u)
(v3u) edge [dashed, bend left=15] (b1) 
(b2) edge [dashed, bend left=50] (v)
(b2) -- (b1)
(v) edge [path, bend left=15] (y)
(y) edge [->, bend left=0] ($(y) + (10pt, 0pt)$)
;

\long\def\co#1{%
	{{\color{orange}{#1}}}%
}

\node[below = 5pt of p4] (dsp) {$A'$};
\node[draw, right = 5pt of dsp] (sdpe) 
{$\co{1}0\ldots0\co{1}0\co{1}\ldots0\co{1}0\ldots0\co{1}0$};
\node[right= 5pt of sdpe] (sp) {$n$ bits};

\end{tikzpicture}
				\caption{Explore subpaths with extended deadlock c.}
			\end{subfigure}\vspace{3mm}%
			
			\begin{subfigure}{.5\textwidth}
				\centering
				\begin{tikzpicture}[
node distance = 10pt,
vertex/.style={minimum width=0.30em, minimum height=0.30em, circle, draw, 
	on 
	chain},
svertex/.style={minimum width=0.15em, minimum height=0.15em, circle, 
	draw, 
	on chain, inner sep=1pt, text width=3pt, align=center},
fix/.style={inner sep=1pt, text width=6pt, align=center},
graph/.style={minimum width=3em, minimum height=2em, ellipse, draw, on 
	chain, decorate, decoration={snake,segment 
		length=1mm,amplitude=0.2mm}},
decoration={
	markings,
	mark=at position 0.5 with {\arrow{>}}},
path/.style=
{-,
	decorate,
	decoration={snake,amplitude=.2mm,segment length=2mm,}}
],

\node[] (p) {$P^*$};
\node[below= 2pt of p] (p1) {$P_1$};
\node[below= 2pt of p1] (p2) {$P_2$};
\node[below= 2pt of p2] (p3) {$P_3$};
\node[below= 2pt of p3] (p4) {$P_\ell$};

\node[svertex, fill=ForestGreen, right = 0pt of p] (s) {};
\node[svertex, fill=black, right = 30pt of p1] (v1) {};
\node[svertex, fill=black, right = 30pt of p2] (v2) {};
\node[svertex, fill=black, right = 100pt of p2] (v2b) {};
\node[svertex, fill=black, right = 100pt of p3] (v3) {};
\node[svertex, fill=black, right = 100pt of p1] (v1b) {};
\node[svertex, fill=black, right = 81pt of p1] (v1q) {};

\node[svertex, fill=black, right = 60pt of p2] (v2v) {};
\node[svertex, fill=ForestGreen, right = 120pt of p3] (v3u) {};

\node[svertex, fill=ForestGreen, above = 5pt of v2v] (v) {};
\node[svertex, fill=ForestGreen, right = 5pt of v3u] (u) {};

\node[svertex, fill=black, right = 90pt of p4] (b1) {};
\node[svertex, fill=black, right = 70pt of p4] (b2) {};

\path[draw, path, color= orange] ($(p1.east) + (5pt,0)$) -- (v1);
\path[draw, path, color= orange] ($(p2.east) + (5pt,0)$) -- (v2);
\path[draw, path, color= orange] ($(p3.east) + (5pt,0)$) -- (v3);
\path[draw, path] ($(p4.east) + (5pt,0)$) -- (b2);
\path[draw, path] (b1) -- ($(p4) + (170pt, 0)$);

\path[draw, path] (v1) -- ($(p1) + (170pt, 0)$);
\path[draw, path] (v2b) -- ($(p2) + (170pt, 0)$);
\path[draw, path] (u) -- ($(p3) + (170pt, 0)$);

\path[draw, path, color= orange] (v2) -- (v2v);
\path[draw, path, color= red, line width = 1pt] (v2v) -- (v2b);
\path[draw, path, color= red, line width = 1pt] (v3) -- (v3u);

\path[draw, -] (v1) -- (v2);
\path[draw, -] (v2b) -- (v3);
\path[draw, -] (v2b) -- (v1b);

\path[draw, -, color=red, line width = 1pt] (v) -- (v2v);
\path[draw, path, color=red, line width = 1pt] (v3u) -- (u);

\node[] at ($(u) + (8pt, 5pt)$) (lu) {$u$};
\node[] at ($(v) + (8pt, -1pt)$) (lv) {$w$};

\node[] at ($(p4) + (15pt, 11pt)$) {$\vdots$};
\node[] at ($(p4) + (170pt, 11pt)$) {$\vdots$};

\node[] at ($(p) + (11pt, -10pt)$) (ppp) {$p$}; 

\node[svertex, fill=ForestGreen] at ($(p1) + (80pt, 10pt)$) (y) {};
\path[draw, -] (y) -- (v1q);

\path[draw, path, color=ForestGreen]
(s) -- ($(s) + (90pt, 0)$)
($(s) + (90pt, 0)$) edge [path, bend left=30] (u)
(v3u) edge [dashed, bend left=15] (b1) 
(b2) edge [dashed, bend left=50] (v)
(b2) -- (b1)
(v) edge [path, bend left=15] (y)
(y) edge [->, bend left=0] ($(y) + (10pt, 0pt)$)
;
\long\def\cr#1{%
	{{\color{red}{#1}}}%
}

\end{tikzpicture}
				\caption{Run a DFS to find a path from $u$ to $w$.}
			\end{subfigure}%
			\begin{subfigure}{.5\textwidth}
				\centering
				\begin{tikzpicture}[
node distance = 10pt,
vertex/.style={minimum width=0.30em, minimum height=0.30em, circle, draw, 
	on 
	chain},
svertex/.style={minimum width=0.15em, minimum height=0.15em, circle, 
	draw, 
	on chain, inner sep=1pt, text width=3pt, align=center},
fix/.style={inner sep=1pt, text width=6pt, align=center},
graph/.style={minimum width=3em, minimum height=2em, ellipse, draw, on 
	chain, decorate, decoration={snake,segment 
		length=1mm,amplitude=0.2mm}},
decoration={
	markings,
	mark=at position 0.5 with {\arrow{>}}},
path/.style=
{-,
	decorate,
	decoration={snake,amplitude=.2mm,segment length=2mm,}}
],

\node[] (p) {$P_1$};
\node[below= 2pt of p] (p1) {$P_2$};
\node[below= 2pt of p1] (p2) {$P^*$};
\node[below= 2pt of p2] (p3) {$P_3$};
\node[below= 2pt of p3] (p4) {$P_\ell$};

\node[svertex, fill=ForestGreen, right = 0pt of p] (s) {};
\node[svertex, fill=black, right = 30pt of p1] (v1) {};
\node[svertex, fill=black, right = 30pt of p2] (v2) {};
\node[svertex, fill=Plum, right = 99.25pt of p2] (v2b) {};
\node[svertex, fill=blue, right = 100pt of p3] (v3) {};
\node[svertex, fill=black, right = 100pt of p1] (v1b) {};
\node[svertex, fill=black, right = 81pt of p1] (v1q) {};

\node[svertex, fill=cyan, right = 60pt of p2] (v2v) {};

\node[svertex, fill=ForestGreen, above = 5pt of v2v] (v) {};
\node[svertex, fill=ForestGreen, right = 130pt of p3] (u) {};

\path[draw, path, color= orange] ($(p1.east) + (5pt,0)$) -- (v1);
\path[draw, path, color= orange] ($(p2.east) + (5pt,0)$) -- (v2);
\path[draw, path, color= orange] ($(p3.east) + (5pt,0)$) -- (v3);
\path[draw, path] ($(p4.east) + (5pt,0)$) -- ($(p4) + (170pt, 0)$);

\path[draw, path] (v1) -- ($(p1) + (170pt, 0)$);
\path[draw, path] (v2b) -- ($(p2) + (170pt, 0)$);
\path[draw, path] ($(u) + (5pt, 0)$) -- ($(p3) + (170pt, 0)$);

\path[draw, path, color= orange] (v2) -- (v2v);

\node[] at ($(p) + (11pt, -10pt)$) (ppp) {$p$}; 

\path[draw, -] (v1) -- (v2);
\path[draw, -] (v2b) -- (v1b);

\path[draw, <-, line width=0.7pt] (v2b) -- (v3);

\path[draw, <-, line width= 0.7pt] (v) -- (v2v);
\path[draw, ->, line width= 0.7pt] (u) -- ($(u.east) + (5pt, 0)$);

\node[] at ($(u) + (8pt, 5pt)$) (lu) {$u$};
\node[] at ($(v) + (8pt, -1pt)$) (lv) {$w$};

\node[] at ($(p4) + (15pt, 11pt)$) {$\vdots$};
\node[] at ($(p4) + (170pt, 11pt)$) {$\vdots$};

\node[svertex, fill=ForestGreen] at ($(p1) + (80pt, 10pt)$) (y) {};
\path[draw, -] (y) -- (v1q);

\path[draw, path, color=ForestGreen]
(s) -- ($(s) + (90pt, 0)$)
(v) edge [path, bend left=15] (y)
($(s) + (90pt, 0)$) edge [path, bend left=30] (u)
(y) edge [->, bend left=0] ($(y) + (10pt, 0pt)$)
;
\long\def\cr#1{%
	{{\color{red}{#1}}}%
}
\end{tikzpicture}
				\caption{A rerouting over cross edges.}
			\end{subfigure}%
			\caption{Steps and data structures of the algorithm
to create good paths.}\label{fig:monotone}
\end{figure}

An $s$-$p$-path $\CP$ is called a {\em clean subpath} with respect
to a set of good paths $\mathcal P'$ if, for the extension $P^{\mathrm{ext}}$ of $\CP$ by an edge
$\{p,t\}$ and a vertex $t$, $\mathcal P' \cup \{P^{\mathrm{ext}}\}$ is a set of good paths.
The green path in Fig.~\ref{fig:cleanAndDirtyPaths} is a clean subpath.
The next lemma summarizes the computation of our rerouting $R$.
We initially call the lemma with a clean subarea $\Q = \{s\}$ and a clean path that consists
only of vertex $p = s$.

\begin{lemma}\label{lem:rerouting}
	There is an algorithm that accepts as input
	a valid path data scheme for $\ell$ good $s$-$t$-paths $\mathcal P$,
	a valid path data scheme for a possibly dirty $s$-$t$ path $P^*$,
	as well as a clean area $\Q$ for $\mathcal P$ including a clean $s$-$p$ subpath 
	$\CP$ of $P^*$
	and outputs a rerouting $R$,
	a clean area $\Q'$ including a clean $s$-$p'$ subpath $\CP'$
	with the properties that
	$\CP$ is a subpath of $\CP'$, and
	$|\Q'| = |\Q| + \Omega(n/\log k) \lor p' = t$.
	The algorithm runs in $O(n k^3 \log^3 k)$
	time and uses $O(n + k^2 (\log k) \log n)$~bits.
\end{lemma}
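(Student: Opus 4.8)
The plan is to process~$P^*$ in a single left-to-right sweep starting at the current endpoint~$p$ of~$\CP$, maintaining the invariant that~$\mathcal P$ is a set of~$\ell$ good $s$-$t$-paths, that~$\Q$ is a clean area for it, and that~$\CP$ is the maximal prefix of the current~$P^*$ contained in~$\Q$. As long as the next vertex~$x$ on~$P^*$ after~$p$ carries no cross edge to another path of $\mathcal P\cup\{P^*\}$ and is not a common vertex traversed by~$P^*$ in reverse direction with respect to some $P^c\in\mathcal P$, this vertex is clean: we append it to~$\Q$, set $p:=x$, and go on. The first vertex~$u$ where this fails is the next \emph{conflict vertex}; there we invoke the resolution subroutine described next, which reroutes a bounded number of the paths in $\mathcal P\cup\{P^*\}$, records the changed successor/predecessor pointers in~$\Rs,\Rp$, turns the affected stretch of~$P^*$ clean, and advances~$p$ past it. We iterate until either $p=t$ or the number of entries stored in~$\Rs$ and~$\Rp$ reaches~$\Theta(n/\log k)$; in the latter case every such entry belongs to a vertex that has just entered the clean area, so $|\Q'|\ge|\Q|+\Omega(n/\log k)$, and in the former $p'=t$. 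In both cases $\CP$ stays a prefix — hence a subpath — of~$\CP'$ because the sweep never touches vertices before~$p$.

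The resolution subroutine at a conflict vertex~$u$ follows Fig.~\ref{fig:monotone}. First we run the modified DFS sketched there: starting from~$u$ it may traverse an edge of a path of~$\mathcal P$ only against its direction and a cross edge only if the previous edge was not a cross edge; whenever it reaches a vertex of~$\Q$ it backtracks immediately, treating it as a sink. This explores exactly the subpaths of~$\mathcal P$ that could form a common subpath with~$P^*$ at~$u$ or an extended deadlock cycle through~$u$; we mark the explored vertices in an $n$-bit array (the array~$A'$ of Fig.~\ref{fig:monotone}(b)). Then we walk forward along~$P^*$ and let~$w$ be the \emph{latest} vertex of~$P^*$ that is explored or is incident to a cross edge reaching an explored vertex. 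If~$u$ and~$w$ lie on one path $P^c\in\mathcal P$ with the common vertices of~$P^c$ and~$P^*$ listed as $v_{\sigma_1},\dots,v_{\sigma_x}$ on~$P^c$ and in reverse order on~$P^*$, we perform the standard split: remove $v_{\sigma_2},\dots,v_{\sigma_{x-1}}$ from~$V^*$, and set the $\Rs$ entry of~$v_{\sigma_1}$, the $\Rs$ entry of~$v_{\sigma_x}$ and the matching $\Rp$ entries so that~$P^c$ continues along~$P^*$ and~$P^*$ continues along~$P^c$. Otherwise~$u$, $w$, the explored subpaths and the cross edges between them form an extended deadlock cycle; a second run of the modified DFS, now searching for a $u$-$w$ path, extracts the cycle explicitly, and we reroute the $\le k$ participating paths along the cross edges of the cycle while deleting the interior vertices of its at-least-three-vertex subpath. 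In either case we write only~$O(k)$ new entries into~$\Rs,\Rp$, and the vertices of~$P^*$ from the old~$p$ up to the new position of~$w$ have become clean and are added to~$\Q$.

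Correctness of one resolution step reduces to two points. That no \emph{new} cross edge or extended deadlock cycle is created among the resulting $s$-$t$-paths follows from Lemma~\ref{lem:probMaintain}: the reroutings yield another set of~$\ell$ pairwise internally vertex-disjoint $s$-$t$-paths in $G[V']$, hence again chordless and deadlock-cycle-free; together with the ``no two consecutive cross edges'' restriction and the maximality of~$w$ this guarantees that the stretch we declare clean genuinely satisfies the clean-area definition, i.e.\ replacing every $v_i$-$t$-subpath by an edge $\{v_i,t\}$ gives good paths exactly as in Lemma~\ref{lem:sameVertexSet}. Termination and the promised growth of~$\Q$ are the bookkeeping argument of the first paragraph. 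All membership information for~$\mathcal P$, $P^*$, $\Q$ and the explored set is kept in $O(n)$-bit arrays (among them the path numbering~$\A$ and~$A'$), and~$\Rs,\Rp$ are stored by static space allocation in~$O(n)$ bits since each holds only~$O(n/\log k)$ entries of~$O(\log k)$ bits.

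For the resources: the forward walks and the at most two modified DFS runs per conflict are executed with the space-efficient DFS of Theorem~\ref{th:dfs}, on the residual-type auxiliary graph built exactly as in Lemma~\ref{lem:newPath}; the only non-constant-time primitive is a~$\op{color}$ query, costing $O(k^3\log^3 k)$ time and $O(k^2(\log k)\log n)$ bits by Lemma~\ref{lem:recompute}. Since the marking array~$A'$ forbids a vertex from being re-explored by a later DFS in the same batch and the sweep over~$P^*$ visits each vertex once, every vertex of~$G$ has its color queried only~$O(1)$ times per call, giving total time $O(n\,k^3\log^3 k)$ and working memory $O(n+k^2(\log k)\log n)$ bits. \textbf{Main obstacle:} the resolution subroutine — showing that rerouting along the cross edges of the extended deadlock cycle (respectively splitting the common subpath) at once (i) destroys the conflict at~$u$, (ii) makes the traversed stretch of~$P^*$ genuinely clean so that~$\Q$ can be extended, and (iii) introduces no new chord or extended deadlock cycle, all while the DFS is restricted to single cross-edge hops and backtracks at~$\Q$ so that the per-batch exploration stays linear; point (iii), argued through Lemma~\ref{lem:probMaintain} and the maximality of~$w$, is the crux.
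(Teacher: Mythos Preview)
Your outline follows the paper's strategy closely, and the time/space accounting is essentially right. However, there are three genuine gaps in the correctness part.

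\textbf{Simple deadlock cycles are not handled.} After you find $u$ and $w$ and they do not lie on one common path of~$\mathcal P$, you assert that ``$u$, $w$, the explored subpaths and the cross edges between them form an \emph{extended} deadlock cycle'' and then delete ``the interior vertices of its at-least-three-vertex subpath''. But the cycle you find can be \emph{simple}: every participating subpath may consist of exactly two vertices, in which case there is nothing to delete and your rerouting description is vacuous. The paper treats this case explicitly: a simple deadlock cycle is \emph{kept} (it is allowed by the definition of good paths), $\Q'$ is absorbed into $\Q$, and $p$ advances past $w$ without any rerouting. Omitting this case breaks both termination and the invariant.

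\textbf{The appeal to Lemma~\ref{lem:probMaintain} does not go through.} That lemma says: if $\mathcal P$ is good on vertex set $V'$, then any other set of $\ell$ disjoint $s$-$t$-paths \emph{in $G[V']$} is also good; combined with Lemma~\ref{lem:sameVertexSet} this forces such a set to use \emph{all} of $V'$. Your rerouted paths, however, (a) may use vertices of $P^*$ that were never in $V'$ and (b) deliberately discard vertices of $V'$ (the interiors you delete). So they are not $\ell$ paths in $G[V']$ in the sense of the lemma, and the implication is vacuous. The paper argues differently and more directly: since the total vertex set $V^*$ only shrinks, no new cross edges can appear; no common vertices remain by construction of the split at $u,w$; and the choice of $w$ as the \emph{latest} touching vertex on $P^*$ is what makes the newly declared clean stretch actually clean.

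\textbf{Chordlessness after rerouting is not addressed.} Even granting that no new cross edges appear, a rerouted path can acquire a chord (an edge between two of its own non-consecutive vertices, e.g.\ when two subpaths of different original colours are spliced). The paper closes this gap by explicitly running Lemma~\ref{lem:chordlessSTPath} on each rerouted path, restricted to the graph induced by the current path vertices, before returning. You should add this step; without it the output paths need not be good.

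Two smaller points: the $u$-$w$ path the second DFS extracts must use each path of~$\mathcal P$ at most once (the paper enforces this with a forbidden-colour set $F$ and a careful white-recolouring on backtrack), otherwise the $O(k)$ bound on new $\Rs,\Rp$ entries per conflict can fail; and to use static space allocation for $\Rs,\Rp$ you need the key set in advance, which the paper obtains by running the whole sweep twice (first pass collects keys, second pass stores values).
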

	\begin{proof}
		We begin to describe our algorithm to detect common vertices and extended deadlock cycles.
		Afterwards we compute the rerouting $R$ which is realized by $(V^*, \Rs, \Rp)$.
		Initially, $V^*$ is an array of bits consisting of all vertices of $\mathcal P \cup \{P^*\}$.
		During the algorithm we remove vertices from $V^*$ whenever we change our paths.
		We want to use static space allocation to store the mappings $\Rs$ and $\Rp$,
		which requires us to know the key set of the mapping in advance.
		We solve it by running the algorithm twice and compute the key set
		in the first run, reset $\Q$ and $p$ from a backup, and compute the values and store them in a second run.
		For simplicity, we omit these details below and assume that we 
		can simply store the values inside $\Rs$ and $\Rp$. 
		
		Starting from vertex $p$
		we run along path $P^*$.
		We stop at the first vertex $u$ if $u$ is a common vertex of $P^*$ and a path in $\mathcal P$
		(Fig.~\ref{fig:monotone}a)
		or it is an endpoint of a cross edge.
                For a simpler notion, we call such a vertex $u$ of $P^*$ a vertex
		{\em touching} the vertices in $\mathcal P$.
                Next we use a DFS without restorations (due to our
                details described below, the DFS stack consists of only $O(\ell)$
		vertices, which are incident
                to cross edges of the current DFS path). We so explore the vertices $\Q'$ from $u$ that are
		reachable from $u$ via edges on the paths $\mathcal P$ used in
		reverse direction and cross edges, but ignore
		a cross edge if it immediately follows after another cross edge and ignore
		the
		vertices in the clean area $\Q$.
		To allow an economical way of
		storing the stack, the DFS prefers reverse edges (edges on a path $P \in \mathcal P$) compared
		to cross edges when iterating over the outgoing edges of a
		vertex. 
		This guarantees that the vertices on the DFS-stack consist of
		at most $\ell$ subpaths of paths in $\mathcal P$.
        We store the vertices $\Q'$ processed by the DFS in a choice 
        dictionary~\cite{Hag18cd,HagK16,KamS18c}  since
        a choice dictionary allows us to compute $\Q := \Q \cup \Q'$
        in a time linear to the amount of elements in $\Q'$.
        		Moreover, we count the number $q$ of vertices of $P^*$ that touch $\Q'$.
		 Running over $P^*$ starting from $u$ we can
		determine the last such vertex $w$
		(Fig.~\ref{fig:monotone}b).

		Then we run a modified DFS to construct a $u$-$w$ path $\widetilde{P}$
                that (1) consists only of vertices in~$\Q'$, 
                that (2) uses no subsequent cross edges 
                and that (3)	 uses every path in $\mathcal P$ only once (Fig.~\ref{fig:monotone}c).
        Note that these restrictions (1) -- (3) still allow us to reach all vertices of~$\Q'$.
		To ensure restriction (3) we maintain a bit array $F$ of {\em forbidden}
		paths where $i \in F$ exactly if a subpath of $P_i\in \mathcal P$
		is part of the currently constructed path.
		To construct $\widetilde{P}$ the DFS starts at vertex $u$ and processes a vertex $v$ as follows:
		\begin{enumerate}
	           \item Stop the DFS if $w$ is reached.
                   \item {\color{gray}// To guarantee restriction (1):}\\
                    If $v \notin  \Q'$, then
                   backtrack the DFS 
                   and color $v$ white.
            \item Iterate over all reversed edges $(v, v')$: recursively process $v'$.
            \item {\color{gray}// To guarantee restriction (2) and (3):}\\ 
                 Iterate over all cross edges $\{v, v'\}$: \\
                 If $\op{color}(v') \notin F$ and $v'$ was not discovered by a cross edge, 
                 recursively process $v'$. 
            \item Backtrack the DFS to the predecessor $\widetilde{v}$ of $v$ on $\widetilde{P}$.
                  If $\{\widetilde{v}, v\}$ is a cross-edge, 
                   color $v$ white.\\
                  {\color{gray}// Coloring $v$ white guarantees that we can
explore outgoing cross-edges of $v$ if we reach $v$ again over a reverse edges.}
		\end{enumerate}

		Let $1,\ldots,\ell$ be the colors of the paths in $\mathcal P$ and $(\ell  + 1)$ be the color of $P^*$.
		In the case that $u$ and $w$ are 
        both on one path of 
 $\mathcal P$,
		we reroute the paths as follows. 
		Set $\Rs(u) = \Rp(w) = \op{color}(u)$,
        $\Rp(u) = \Rs(w) = \ell + 1$ and update $V^*$ accordingly, i.e.,
        remove all internal vertices of the $u$-$w$ subpath of $P^*$ from $V^*$ (Fig.~\ref{fig:rerouting}a).
        Afterwards, we set $\Q := \Q \cup \Q'$ 
        and $p = \op{next}(w)$ and repeat the whole algorithm above.

                Recall that the construction of the $u$-$w$ path uses only
                one subpath of every path in~$\mathcal P$.
		Without loss of generality, let $\widetilde{P}$ be
		$w, x_1, \ldots, y_1,$ $x_2, \ldots, y_2,$ $x_3, \ldots,y_{\ell'}, u$,
		where $(y_i, x_{i + 1})$ are cross edges and $x_i, \ldots, y_i$ are subpaths of a path $P_i \in \mathcal P$
		(for some order of $\mathcal P$ and some $\ell'$ with $0 < i < \ell' \le \ell$).
				
		We first assume that $(w, x_1)$ and $(y_{\ell'}, u)$ are both cross edges (Fig.~\ref{fig:rerouting}b).
		If the $x_i$-$y_i$-subpath of every $P_i$ as well as
		the $u$-$w$-subpath of $P^*$ consists of exactly two vertices, then we found a simple deadlock cycle, in particular, $u$ and $w$ are the only vertices of $P'$ that touch~$\Q'$.
                We keep the simple deadlock cycle by setting $\Q := \Q \cup \Q'$ and $p = \op{next}(w)$ and repeat the whole algorithm
		above.

		Otherwise we have an extended deadlock cycle and we compute a rerouting as follows.
		For every cross edge $(y_i, x_{i + 1})$ with $0 < i < \ell'$ on the DFS stack,
		we set $\Rs(x_{i + 1}) = i$, $\Rp(y_i) = i + 1$, remove all vertices of each path $P_i$ in between $x_i$ and $y_i$
		as well as all vertices of path $P^*$ in between $u$ and $w$ from $V^*$.
		We now describe the rerouting at $w$.
		Vertex $u$ is handled analogously. 
		If as assumed $(w, x_1)$ is a cross edge, 
		set $\Rs(x_1) = \ell + 1$ and $\Rp(w) = 1$ (Fig.~\ref{fig:rerouting}c).
		Otherwise,
		$w$ is on path $P_1$ before vertex $y_1$, or $w = x_1$. 
		Then remove all vertices in between $w$ and $y_1$ on path $P_1$ from $V^*$
		and set $\Rs(w) = \ell + 1$ and $\Rp(w) = 1$ (Fig.~\ref{fig:rerouting}d).
		For the case where neither $(w, x_1)$ nor $(y_{\ell'}, u)$ is a cross edge,
		as well as $u$ and $w$ are on different paths, see Fig.~\ref{fig:rerouting}e
		for the rerouting.
        After all cases described
        in this paragraph set $\Q := \Q \cup \Q'$, $p = \op{next}(w)$ and repeat the whole algorithm
        above.
		
		Note that in all cases above, we only add information for $O(k)$ predecessors and successors
		with each extension of $\Q$ to our rerouting.
		Thus, our rerouting information increases in small pieces.
		Whenever $\Rs$ is defined for $\Theta(n / \log k)$ vertices ($R$ uses $O(n)$ bits)
		or we reached $t$ while moving over $P^*$, we 
                break the algorithm above. 
        By applying Lemma~\ref{lem:chordlessSTPath} on each rerouted path
		(the graph induced by the vertices of the paths is taken as input graph), we can make all paths chordless
		without introducing further cross edges or extended deadlock cycles.
                Finally, return the rerouting $(V^*, \Rs, \Rp)$ as well as the
		new clean area $\Q$ and the end vertex $p$ of a new clean path in $R(\mathcal P \cup {P'})$. 

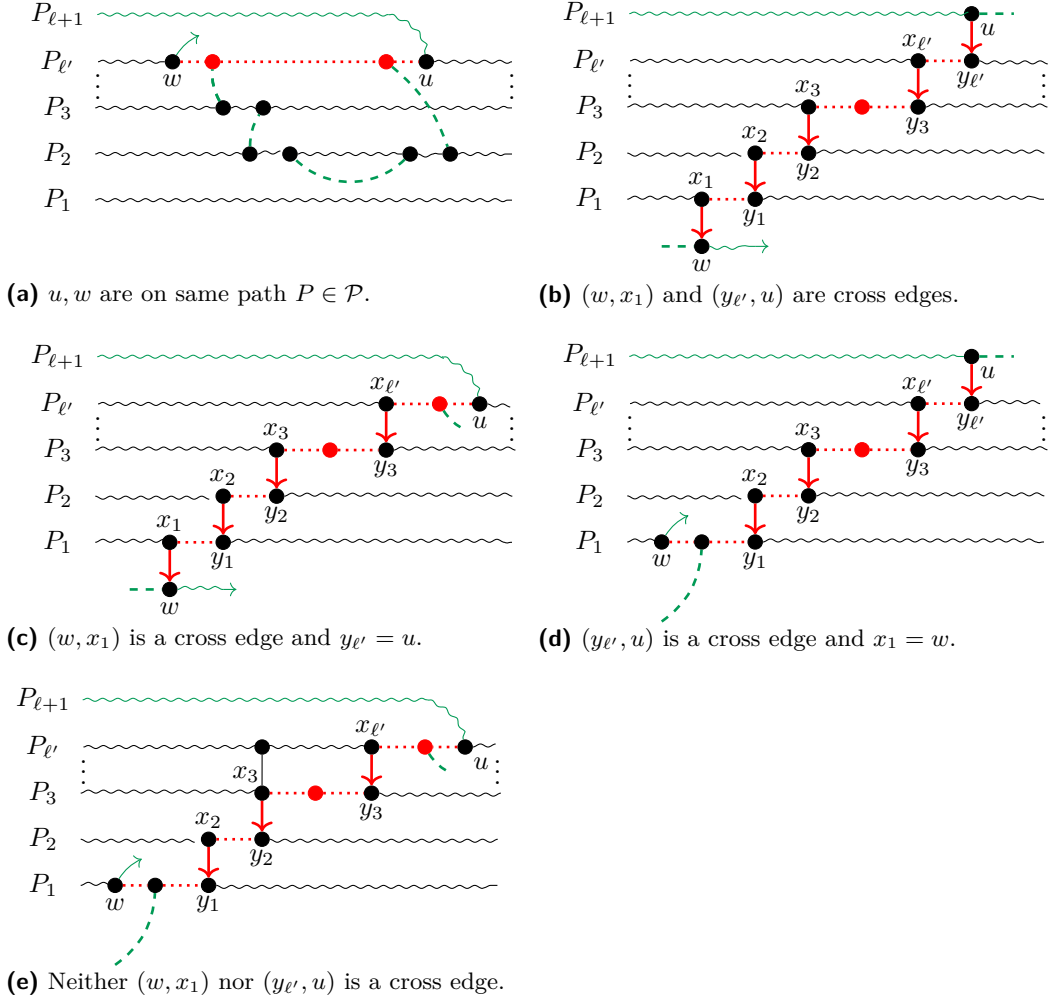
\begin{figure}[h]
			\centering
			\begin{subfigure}{0.5\textwidth}
				\centering
				\begin{tikzpicture}[
node distance = 10pt,
vertex/.style={minimum width=0.30em, minimum height=0.30em, circle, draw, on 
chain},
svertex/.style={minimum width=0.15em, minimum height=0.15em, circle, draw, 
on chain, inner sep=1pt, text width=3pt, align=center},
fix/.style={inner sep=1pt, text width=6pt, align=center},
graph/.style={minimum width=3em, minimum height=2em, ellipse, draw, on 
chain, decorate, decoration={snake,segment length=1mm,amplitude=0.2mm}},
decoration={
	markings,
	mark=at position 0.5 with {\arrow{>}}},
path/.style=
	{-,
		decorate,
		decoration={snake,amplitude=.2mm,segment length=2mm,}}
],

\node[] (p) {$P_{\ell + 1}$};
\node[below= 2pt of p] (pl) {$P_{\ell'}$};
\node[below= 2pt of pl] (p3) {$P_3$};
\node[below= 2pt of p3] (p2) {$P_2$};
\node[below= 2pt of p2] (p1) {$P_1$};

\node[svertex, fill=black, right = 30pt of pl] (pl1) {};
\node[svertex, fill=black, right = 45pt of pl, color=red] (pl2) {};

\node[svertex, fill=black, right = 110pt of pl, color=red] (pl3) {};
\node[svertex, fill=black, right = 125pt of pl] (pl4) {};

\node[svertex, fill=black, right = 50pt of p3] (p31) {};
\node[svertex, fill=black, right = 65pt of p3] (p32) {};

\node[svertex, fill=black, right = 60pt of p2] (p21) {};
\node[svertex, fill=black, right = 75pt of p2] (p22) {};

\node[svertex, fill=black, right = 120pt of p2] (p23) {};
\node[svertex, fill=black, right = 135pt of p2] (p24) {};

\node[svertex, fill=white, color=white, below=24pt of p21] (spacer) {};
\path[draw, path, color=white, ->, bend left=15]
(spacer) -- ($(spacer) + (0pt, -17pt)$) 
;

\path[draw, path] ($(pl.east) + (5pt,0)$) -- (pl1);
\path[draw, dotted, color=red, line width = 1pt] (pl2) -- (pl3);
\path[draw, path] (pl4) -- ($(pl) + (170pt, 0)$);

\path[draw, path] ($(p3.east) + (5pt,0)$) -- (p31);
\path[draw, path] (p32) -- ($(p3) + (170pt, 0)$);

\path[draw, path] ($(p2.east) + (5pt,0)$) -- (p21);
\path[draw, path] (p22) -- (p23);
\path[draw, path] (p24) -- ($(p2) + (170pt, 0)$);

\path[draw, path] ($(p1.east) + (5pt,0)$) -- ($(p1) + (170pt, 0)$);

\path[draw, path]
($(p.east) + (1pt,0)$) edge [path, color=ForestGreen] ($(p.east) + (110pt, 0)$)
($(p.east) + (110pt, 0)$) edge [path, bend left=30, color=ForestGreen] (pl4)
(pl3) edge [dotted, color=red, line width = 1pt,  line width = 1pt] (pl4)
(pl1) edge [dotted, color=red, line width = 1pt,  line width = 1pt] (pl2)
(p31) -- (p32)
(p21) -- (p22)
(p23) -- (p24)
(pl3) edge [dashed, bend left=15, color=ForestGreen,  line width = 1pt] (p24)
(p23) edge [dashed, bend left=45, color=ForestGreen,  line width = 1pt] (p22)
(p21) edge [dashed, bend left=15, color=ForestGreen,  line width = 1pt] (p32)
(p31) edge [dashed, bend left=15, color=ForestGreen,  line width = 1pt] (pl2)
(p31) edge [dashed, bend left=15, color=ForestGreen,  line width = 1pt] (pl2)
(pl1) edge [->, bend left=15, color=ForestGreen] ($(pl1) + (10pt, 10pt)$);

\node[] at ($(pl1) + (0pt, -7pt)$) (lw) {$w$};
\node[] at ($(pl4) + (0pt, -7pt)$) (lu) {$u$};

\node[] at ($(p3) + (15pt, 11pt)$) {$\vdots$};
\node[] at ($(p3) + (170pt, 11pt)$) {$\vdots$};

\end{tikzpicture}
				\vspace{-0.65cm}
				\caption{$u, w$ are on same path $P \in \mathcal P$.}
                                \vspace{0.25cm}
 			\end{subfigure}%
			\begin{subfigure}{.5\textwidth}
				\centering
				\begin{tikzpicture}[
node distance = 10pt,
vertex/.style={minimum width=0.30em, minimum height=0.30em, circle, draw, on 
chain},
svertex/.style={minimum width=0.15em, minimum height=0.15em, circle, draw, 
on chain, inner sep=1pt, text width=3pt, align=center},
fix/.style={inner sep=1pt, text width=6pt, align=center},
graph/.style={minimum width=3em, minimum height=2em, ellipse, draw, on 
chain, decorate, decoration={snake,segment length=1mm,amplitude=0.2mm}},
decoration={
	markings,
	mark=at position 0.5 with {\arrow{>}}},
path/.style=
	{-,
		decorate,
		decoration={snake,amplitude=.2mm,segment length=2mm,}}
],

\node[] (p) {$P_{\ell + 1}$};
\node[below= 2pt of p] (pl) {$P_{\ell'}$};
\node[below= 2pt of pl] (p3) {$P_3$};
\node[below= 2pt of p3] (p2) {$P_2$};
\node[below= 2pt of p2] (p1) {$P_1$};

\node[svertex, fill=black, right = 110pt of pl] (pl1) {};
\node[svertex, fill=black, right = 130pt of pl] (pl2) {};

\node[svertex, fill=black, right = 70pt of p3] (p31) {};
\node[svertex, fill=black, right = 90pt of p3, color=red] (p32) {};
\node[svertex, fill=black, right = 111pt of p3] (p33) {};

\node[svertex, fill=black, right = 50pt of p2] (p21) {};
\node[svertex, fill=black, right = 70pt of p2] (p22) {};

\node[svertex, fill=black, right = 30pt of p1] (p11) {};
\node[svertex, fill=black, right = 50pt of p1] (p12) {};

\node[svertex, fill=black, above= 12pt of pl2] (u) {};
\node[svertex, fill=black, below= 12pt of p11] (w) {};

\path[draw, path] 
($(pl.east) + (5pt,0)$) -- (pl1)
($(p3.east) + (5pt,0)$) -- (p31)
($(p2.east) + (5pt,0)$) -- (p21)
($(p1.east) + (5pt,0)$) -- (p11)
(pl2) -- ($(pl) + (170pt, 0)$)
(p33) -- ($(p3) + (171pt, 0)$)
(p22) -- ($(p2) + (170pt, 0)$)
(p12) -- ($(p1) + (170pt, 0)$)
;

\path[draw, dotted, color=red, line width = 1pt] 
(pl1) -- (pl2)
(p31) -- (p32)
(p32) -- (p33)
(p21) -- (p22)
(p11) -- (p12)
;

\path[draw, dashed, color=ForestGreen, line width = 1pt] 
(u) -- ($(p.east) + (145pt, 0)$)
($(w) - (15pt, 0)$) -- (w)
;

\path[draw, color=red, line width = 1pt] 
(pl1) edge [->] (p33)
(p31) edge [->] (p22)
(p21) edge [->] (p12)
(p11) edge [->] (w)
(u) edge [->] (pl2)
;

\path[draw, path, color=ForestGreen]
($(p.east) + (1pt,0)$) -- (u)
(w) -- ($(w) + (20pt, 0)$) edge [->, bend left=0] ($(w) + (25pt, 0)$)
;

\node[] at ($(u) + (6pt, -6pt)$) (lu) {$u$};
\node[] at ($(w) + (0pt, -7pt)$) (lw) {$w$};

\node[] at ($(p11) + (0pt, 7pt)$) (lx1) {$x_1$};
\node[] at ($(p12) + (0pt, -7pt)$) (ly1) {$y_1$};

\node[] at ($(p21) + (0pt, 7pt)$) (lx2) {$x_2$};
\node[] at ($(p22) + (0pt, -7pt)$) (ly2) {$y_2$};

\node[] at ($(p31) + (0pt, 7pt)$) (lx3) {$x_3$};
\node[] at ($(p33) + (0pt, -7pt)$) (ly3) {$y_3$};

\node[] at ($(pl1) + (0pt, 7pt)$) (lxl) {$x_{\ell'}$};
\node[] at ($(pl2) + (0pt, -7pt)$) (lyl) {$y_{\ell'}$};

\node[] at ($(p3) + (15pt, 11pt)$) {$\vdots$};
\node[] at ($(p3) + (170pt, 11pt)$) {$\vdots$};

\end{tikzpicture}
				\vspace{-0.65cm}
				\caption{$(w, x_1)$ and $(y_{\ell'}, u)$ are cross edges.}
                                \vspace{0.25cm}
			\end{subfigure}%
			
			\begin{subfigure}{.5\textwidth}
				\centering
				\begin{tikzpicture}[
node distance = 10pt,
vertex/.style={minimum width=0.30em, minimum height=0.30em, circle, draw, on 
chain},
svertex/.style={minimum width=0.15em, minimum height=0.15em, circle, draw, 
on chain, inner sep=1pt, text width=3pt, align=center},
fix/.style={inner sep=1pt, text width=6pt, align=center},
graph/.style={minimum width=3em, minimum height=2em, ellipse, draw, on 
chain, decorate, decoration={snake,segment length=1mm,amplitude=0.2mm}},
decoration={
	markings,
	mark=at position 0.5 with {\arrow{>}}},
path/.style=
	{-,
		decorate,
		decoration={snake,amplitude=.2mm,segment length=2mm,}}
],

\node[] (p) {$P_{\ell + 1}$};
\node[below= 2pt of p] (pl) {$P_{\ell'}$};
\node[below= 2pt of pl] (p3) {$P_3$};
\node[below= 2pt of p3] (p2) {$P_2$};
\node[below= 2pt of p2] (p1) {$P_1$};

\node[svertex, fill=black, right = 110pt of pl] (pl1) {};
\node[svertex, fill=black, right = 130pt of pl, color=red] (pl2) {};

\node[svertex, fill=black, right = 70pt of p3] (p31) {};
\node[svertex, fill=black, right = 90pt of p3, color=red] (p32) {};
\node[svertex, fill=black, right = 111pt of p3] (p33) {};

\node[svertex, fill=black, right = 50pt of p2] (p21) {};
\node[svertex, fill=black, right = 70pt of p2] (p22) {};

\node[svertex, fill=black, right = 30pt of p1] (p11) {};
\node[svertex, fill=black, right = 50pt of p1] (p12) {};

\node[svertex, fill=black, right= 145pt of pl] (u) {};
\node[svertex, fill=black, below= 12pt of p11] (w) {};

\path[draw, path] 
($(pl.east) + (5pt,0)$) -- (pl1)
($(p3.east) + (5pt,0)$) -- (p31)
($(p2.east) + (5pt,0)$) -- (p21)
($(p1.east) + (5pt,0)$) -- (p11)
(u) -- ($(pl) + (170pt, 0)$)
(p33) -- ($(p3) + (171pt, 0)$)
(p22) -- ($(p2) + (170pt, 0)$)
(p12) -- ($(p1) + (170pt, 0)$)
;

\path[draw, dotted, color=red, line width = 1pt] 
(pl1) -- (pl2)
(p31) -- (p32)
(p32) -- (p33)
(p21) -- (p22)
(p11) -- (p12)
(u) -- (pl2)
;

\path[draw, dashed, color=ForestGreen, line width = 1pt]
($(w) - (15pt, 0)$) -- (w)
(pl2) edge [dashed, bend right=15] ($(pl2) + (10pt, -10pt)$)
;

\path[draw, color=red, line width = 1pt] 
(pl1) edge [->] (p33)
(p31) edge [->] (p22)
(p21) edge [->] (p12)
(p11) edge [->] (w)
;

\path[draw, path, color=ForestGreen]
($(p.east) + (1pt,0)$) -- ($(p.east) + (130pt, 0pt)$) edge [path, bend left = 25] (u)
(w) -- ($(w) + (20pt, 0)$) edge [->, bend left=0] ($(w) + (25pt, 0)$)
;

\node[] at ($(u) + (0pt, -7pt)$) (lu) {$u$};
\node[] at ($(w) + (0pt, -7pt)$) (lw) {$w$};

\node[] at ($(p11) + (0pt, 7pt)$) (lx1) {$x_1$};
\node[] at ($(p12) + (0pt, -7pt)$) (ly1) {$y_1$};

\node[] at ($(p21) + (0pt, 7pt)$) (lx2) {$x_2$};
\node[] at ($(p22) + (0pt, -7pt)$) (ly2) {$y_2$};

\node[] at ($(p31) + (0pt, 7pt)$) (lx3) {$x_3$};
\node[] at ($(p33) + (0pt, -7pt)$) (ly3) {$y_3$};

\node[] at ($(pl1) + (0pt, 7pt)$) (lxl) {$x_{\ell'}$};

\node[] at ($(p3) + (15pt, 11pt)$) {$\vdots$};
\node[] at ($(p3) + (170pt, 11pt)$) {$\vdots$};

\end{tikzpicture}
				\vspace{-0.65cm}
				\caption{$(w, x_1)$ is a cross edge and $y_{\ell'} = u$.}
                                \vspace{0.25cm}
			\end{subfigure}%
			\begin{subfigure}{.5\textwidth}
				\centering
				\begin{tikzpicture}[
node distance = 10pt,
vertex/.style={minimum width=0.30em, minimum height=0.30em, circle, draw, on 
chain},
svertex/.style={minimum width=0.15em, minimum height=0.15em, circle, draw, 
on chain, inner sep=1pt, text width=3pt, align=center},
fix/.style={inner sep=1pt, text width=6pt, align=center},
graph/.style={minimum width=3em, minimum height=2em, ellipse, draw, on 
chain, decorate, decoration={snake,segment length=1mm,amplitude=0.2mm}},
decoration={
	markings,
	mark=at position 0.5 with {\arrow{>}}},
path/.style=
	{-,
		decorate,
		decoration={snake,amplitude=.2mm,segment length=2mm,}}
],

\node[] (p) {$P_{\ell + 1}$};
\node[below= 2pt of p] (pl) {$P_{\ell'}$};
\node[below= 2pt of pl] (p3) {$P_3$};
\node[below= 2pt of p3] (p2) {$P_2$};
\node[below= 2pt of p2] (p1) {$P_1$};

\node[svertex, fill=black, right = 110pt of pl] (pl1) {};
\node[svertex, fill=black, right = 130pt of pl] (pl2) {};

\node[svertex, fill=black, right = 70pt of p3] (p31) {};
\node[svertex, fill=black, right = 90pt of p3, color=red] (p32) {};
\node[svertex, fill=black, right = 111pt of p3] (p33) {};

\node[svertex, fill=black, right = 50pt of p2] (p21) {};
\node[svertex, fill=black, right = 70pt of p2] (p22) {};

\node[svertex, fill=black, right = 30pt of p1] (p11) {};
\node[svertex, fill=black, right = 50pt of p1] (p12) {};

\node[svertex, fill=black, above= 12pt of pl2] (u) {};
\node[svertex, fill=black, right= 15pt of p1] (w) {};

\path[draw, path] 
($(pl.east) + (5pt,0)$) -- (pl1)
($(p3.east) + (5pt,0)$) -- (p31)
($(p2.east) + (5pt,0)$) -- (p21)
($(p1.east) + (5pt,0)$) -- (w)
(pl2) -- ($(pl) + (170pt, 0)$)
(p33) -- ($(p3) + (171pt, 0)$)
(p22) -- ($(p2) + (170pt, 0)$)
(p12) -- ($(p1) + (170pt, 0)$)
;

\path[draw, dotted, color=red, line width = 1pt] 
(pl1) -- (pl2)
(p31) -- (p32)
(p32) -- (p33)
(p21) -- (p22)
(p11) -- (p12)
(w) -- (p11)
;

\path[draw, dashed, color=ForestGreen, line width = 1pt]
(u) -- ($(p.east) + (145pt, 0)$)
($(w) - (0pt, 30pt)$) edge [bend right = 25] (p11)
;

\path[draw, color=red, line width = 1pt] 
(pl1) edge [->] (p33)
(p31) edge [->] (p22)
(p21) edge [->] (p12)
(u) edge [->] (pl2)
;

\path[draw, path, color=ForestGreen]
($(p.east) + (1pt,0)$) -- (u)
(w) edge [->, bend left=15] ($(w) + (10pt, 10pt)$)
;

\node[] at ($(u) + (6pt, -6pt)$) (lu) {$u$};
\node[] at ($(w) + (0pt, -7pt)$) (lw) {$w$};

\node[] at ($(p12) + (0pt, -7pt)$) (ly1) {$y_1$};

\node[] at ($(p21) + (0pt, 7pt)$) (lx2) {$x_2$};
\node[] at ($(p22) + (0pt, -7pt)$) (ly2) {$y_2$};

\node[] at ($(p31) + (0pt, 7pt)$) (lx3) {$x_3$};
\node[] at ($(p33) + (0pt, -7pt)$) (ly3) {$y_3$};

\node[] at ($(pl1) + (0pt, 7pt)$) (lxl) {$x_{\ell'}$};
\node[] at ($(pl2) + (0pt, -7pt)$) (lyl) {$y_{\ell'}$};

\node[] at ($(p3) + (15pt, 11pt)$) {$\vdots$};
\node[] at ($(p3) + (170pt, 11pt)$) {$\vdots$};

\end{tikzpicture}
				\vspace{-0.65cm}
				\caption{$(y_{\ell'}, u)$ is a cross edge and $x_1 = w$.}
                                \vspace{0.25cm}
			\end{subfigure}%

			\begin{subfigure}{1\textwidth}
				\begin{tikzpicture}[
node distance = 10pt,
vertex/.style={minimum width=0.30em, minimum height=0.30em, circle, draw, on 
chain},
svertex/.style={minimum width=0.15em, minimum height=0.15em, circle, draw, 
on chain, inner sep=1pt, text width=3pt, align=center},
fix/.style={inner sep=1pt, text width=6pt, align=center},
graph/.style={minimum width=3em, minimum height=2em, ellipse, draw, on 
chain, decorate, decoration={snake,segment length=1mm,amplitude=0.2mm}},
decoration={
	markings,
	mark=at position 0.5 with {\arrow{>}}},
path/.style=
	{-,
		decorate,
		decoration={snake,amplitude=.2mm,segment length=2mm,}}
],

\node[] (p) {$P_{\ell + 1}$};
\node[below= 2pt of p] (pl) {$P_{\ell'}$};
\node[below= 2pt of pl] (p3) {$P_3$};
\node[below= 2pt of p3] (p2) {$P_2$};
\node[below= 2pt of p2] (p1) {$P_1$};

\node[svertex, fill=black, right = 110pt of pl] (pl1) {};
\node[svertex, fill=black, right = 130pt of pl, color=red] (pl2) {};
\node[svertex, fill=black, right = 69pt of pl] (pl0) {};

\node[svertex, fill=black, right = 70pt of p3] (p31) {};
\node[svertex, fill=black, right = 90pt of p3, color=red] (p32) {};
\node[svertex, fill=black, right = 111pt of p3] (p33) {};

\node[svertex, fill=black, right = 50pt of p2] (p21) {};
\node[svertex, fill=black, right = 70pt of p2] (p22) {};

\node[svertex, fill=black, right = 30pt of p1] (p11) {};
\node[svertex, fill=black, right = 50pt of p1] (p12) {};

\node[svertex, fill=black, right= 145pt of pl] (u) {};
\node[svertex, fill=black, right= 15pt of p1] (w) {};

\draw[-] (pl0) -- (p31);

\path[draw, path] 
($(pl.east) + (5pt,0)$) -- (pl1)
($(p3.east) + (5pt,0)$) -- (p31)
($(p2.east) + (5pt,0)$) -- (p21)
($(p1.east) + (5pt,0)$) -- (w)
(u) -- ($(pl) + (170pt, 0)$)
(p33) -- ($(p3) + (171pt, 0)$)
(p22) -- ($(p2) + (170pt, 0)$)
(p12) -- ($(p1) + (170pt, 0)$)
;

\path[draw, dotted, color=red, line width = 1pt] 
(pl1) -- (pl2)
(p31) -- (p32)
(p32) -- (p33)
(p21) -- (p22)
(p11) -- (p12)
(w) -- (p11)
(u) -- (pl2)
;

\path[draw, dashed, color=ForestGreen , line width = 1pt]
(pl2) edge [dashed, bend right=15] ($(pl2) + (10pt, -10pt)$)
($(w) - (0pt, 30pt)$) edge [bend right = 25] (p11)
;

\path[draw, color=red, line width = 1pt] 
(pl1) edge [->] (p33)
(p31) edge [->] (p22)
(p21) edge [->] (p12)
;

\path[draw, path, color=ForestGreen]
($(p.east) + (1pt,0)$) -- ($(p.east) + (130pt, 0pt)$) edge [path, bend left = 25] (u)
(w) edge [->, bend left=15] ($(w) + (10pt, 10pt)$)
;

\node[] at ($(u) + (6pt, -6pt)$) (lu) {$u$};
\node[] at ($(w) + (0pt, -7pt)$) (lw) {$w$};

\node[] at ($(p12) + (0pt, -7pt)$) (ly1) {$y_1$};

\node[] at ($(p21) + (0pt, 7pt)$) (lx2) {$x_2$};
\node[] at ($(p22) + (0pt, -7pt)$) (ly2) {$y_2$};

\node[] at ($(p31) + (-6pt, 7pt)$) (lx3) {$x_3$};
\node[] at ($(p33) + (0pt, -7pt)$) (ly3) {$y_3$};

\node[] at ($(pl1) + (0pt, 7pt)$) (lxl) {$x_{\ell'}$};

\node[] at ($(p3) + (15pt, 11pt)$) {$\vdots$};
\node[] at ($(p3) + (170pt, 11pt)$) {$\vdots$};

\end{tikzpicture}
				\vspace{-0.65cm}
				\caption{Neither $(w, x_1)$ nor $(y_{\ell'}, u)$ is a cross
				edge.}%
                                \vspace{0.25cm}
			\end{subfigure}%
			
			\caption{Sketch of rerouted paths to remove common vertices
and extended deadlock cycles.
The dotted and dashed subpaths are removed from the paths and the red lines with an arrow
show a switch to another path.
}\label{fig:rerouting}
\end{figure}

		{\bfseries Correctness.}
                One can see that our algorithm discovers every common
subpath of $P^*$ with a path $P^c \in \mathcal P$ as well as every extended
deadlock cycle since, in both cases, we have a vertex $u$ followed by $w$ on $P^*$
such that $w$ is discovered from $u$ while running backwards on~$\mathcal
P$. Since $V^*$ only shrinks (no new cross edges can occur), since
by construction 
the new paths have no common vertices accept $s$ and $t$ (here it
is important that the constructed $u$-$w$ path has no subsequent cross
edges) and 
since we choose $w$ as the latest vertex of $P^*$ touching $\Q'$,
$\Q$ maintains a clean area when adding $\Q'$ to it.
Note that each vertex with a rerouting information becomes part of $\Q$.
Thus, the size of $\Q$ increases by $\Omega(n / \log k)$ when our algorithm stops unless we reach $t$
while running on $P^*$.
		
		{\bfseries Efficiency.}
		Concerning the running time, 
        note that we
        travel along $P^*$ as well as construct
		a set $\Q'$ with a DFS
		on the paths in $\mathcal P$. 
		Note that we process all vertices of $\Q'$ at most once 
		by each of the two standard DFS runs
		(once to construct $\Q'$ as well as once to find a $u$-$w$ path 
		for computing the rerouting)
		before $\Q'$ is added into $\Q$. Within the processing of each vertex
		$v$, we call $\op{prev}(v)$ only once.
        Since each access to a vertex $v$ 
                can be realized in $O(\deg(v) + k^3 \log^3 k)$ time
                (Lemma~\ref{lem:recompute}),
		the total running time is $O(n k^3 \log^3 k)$.

               Concerning the space consumption observe that $\Rs$ and
$\Rp$, a backup of $\Q$ and $p$
as well as all arrays use $O(n)$ bits. To run the DFS,
$O(n + \ell \log n)$ bits suffice 
(storing the colors white, grey and black in $O(n)$ bits and $O(\ell \log n)$ bits
for the DFS stack since we only have to store the endpoints of $O(\ell)$ subpaths that are 
part of the stack).
Together with the space used by Lemma~\ref{lem:recompute}, our algorithm 
can run with $O(n + k^2 (\log k) \log n)$ bits.
\end{proof}
 
We can not store our rerouted paths directly in a valid path data scheme.
To follow the path with respect to the rerouting we next
provide a so-called weak path data structure for
all $s$-$t$ paths of $R(\mathcal P \cup \{P'\})$.

Recall that we have a path data scheme storing
the single path $P^*$ 
and a path data scheme storing the set~$\mathcal P$
of paths.
Let $\Q$, $p$ as well as 
$R$ (realized by $(V^*, \Rs, \Rp)$) be returned by the previous lemma.
Our valid path data scheme
can only store $s$-$t$-paths, but $R(\mathcal P \cup \{P'\})$ 
may contain an $s$-$p'$ path $P''$ with $p' \ne t$.
In the case that $p' \ne t$, we compute
a path numbering and a valid path data scheme
for the path $P^{**}$ obtained from
$P^*$ by replacing the subpath $P'$
by $P''$. (The path $P^{**}$ replaces the path $P^*$
in the computation of the next batch.)
Furthermore, we remove the path $P''$ from $R(\mathcal P \cup \{P'\})$
by removing its vertices from $V^*$.
It is easy to see that this 
modification of the paths can be done in the same bounds
as stated in Lemma~\ref{lem:rerouting}.

After the modification of the paths we show how to follow the paths
with respect to $R$.
Since we can find out in $O(1)$ time if a vertex $u$ belongs to $P^*$, a call of $\op{prev}$
and $\op{next}$ below is forwarded to the correct path data scheme.
We now
overload the $\op{prev}$ and $\op{next}$ operations
of our path data structure as follows, and so get a {\em weak path data structure}
for $R(\mathcal P \cup \{P'\}) \setminus \{P''\}$
that supports $\op{prev'}$ and $\op{next'}$ in time $O(\op{deg}(v) + k^3 \log^3 k)$
by using $O(n + k^2 (\log k) \log n)$ bits and that supports access to $V^*$ in constant time.
To obtain the runtime we realize
$\op{prev'}$ and $\op{next'}$
by computing the paths within at most 
one region,
 from which we know the color
of all neighbors of~$u$.
In detail, the weak path data structure supports the following operations where $u \in V^*$.

	\begin{itemize}
		\item $\op{prev'}(u)$:
		If $\Rp(u) = 0$, return $\op{prev}(u)$.
		Else, return $v \in N(u) \cap V^*$ with $\op{color}(v) = \Rp(u)$.
		\item $\op{next'}(u)$: 
		If $\Rs(u) = 0$, return $\op{next}(u)$.
		Else, return $v \in N(u) \cap V^*$ with $\op{color}(v) = \Rs(u)$.
		\item \op{inVStar}$(u)$: Returns true exactly if $u \in V^*$.
	\end{itemize}


To compute a valid path data scheme for $1 < \ell \le k$ good $s$-$t$-paths we have to partition $G[V^*]$ into regions. 
More exactly we have to find a set $\B' \subseteq V^*$
of boundary vertices such that $G[V^* \setminus \B']$ consists
of regions (i.e., connected components) of size $O(k \log k)$.
To find the boundary we start with the set~$S$ of neighbors of $s$ that belong to $V^*$
and move $S$ to the right ``towards'' $t$ such that $S$ 
always is a separator that
disconnects $s$ and $t$ in $G[V^*]$.
Whenever there are $O(k \log k)$ vertices ``behind'' the previous separator that was added to our boundary,
we add $S$ to our boundary---see also Fig.~\ref{fig:compute-storage-scheme}.
While moving $S$ towards $t$, we always make sure that the endpoints of a cross edge are not in different components in $G[V^* \setminus S]$.
This means, we do not move $S$ over an endpoint of a cross edge unless the other endpoint
is also in $S$.
We show in the next lemma that this is
possible if we have no extended deadlock cycles and if we use the following trick for
simple deadlock cycles: move $S$ one vertex further along the path for all vertices part of the simple deadlock cycle.

\begin{figure}[h]
			\centering
			\begin{tikzpicture}[
start chain = going right,
node distance = 8pt,
vertex/.style={minimum width=1.1em, minimum height=1.1em, circle, draw, 
on chain, inner sep=0pt, text width=8pt, align=center, fill=white},
fix/.style={inner sep=1pt, text width=6pt, align=center},
graph/.style={minimum width=3em, minimum height=2em, ellipse, draw, on 
chain, decorate, decoration={snake,segment length=1mm,amplitude=0.2mm}},
decoration={
	markings,
	mark=at position 0.5 with {\arrow{>}}},
path/.style=
	{-}
],

\node[vertex, fix] (s) {$s\phantom{t}$};

\node[vertex, fill=red] at ($(s) + (8pt, 19.5pt)$) (a1) {\scriptsize{1}}; 			
\node[vertex, fill=cyan, below = 8pt of a1] (b1) {\scriptsize2}; 
\node[vertex, fill=Rhodamine, below = 8pt of b1] (c1) {\scriptsize3}; 

\node[vertex, right = 8pt of a1] (a2) {\scriptsize4}; 
\node[vertex, below = 8pt of a2] (b2) {\scriptsize5};
\node[vertex, below = 8pt of b2] (c2) {\scriptsize6};

\node[vertex, right = 8pt of a2] (a3) {\scriptsize7}; 
\node[vertex, below = 8pt of a3] (b3) {\scriptsize8}; 
\node[vertex, below = 8pt of b3, fill=Rhodamine] (c3) {\scriptsize9}; 

\node[vertex, right = 8pt of a3] (a4) {\scriptsize10};

\node[vertex, right = 8pt of a4] (a5) {\scriptsize11};
\node[vertex, below = 8pt of a5] (x1) {\scriptsize12};

\path [-, draw, blue, rounded corners]
(b3.north) -- ++(0, 4pt) -- ($(a5.south) - (0, 4pt)$) -- (a5.south);

\node[vertex, right = 8pt of a5] (a6) {\scriptsize13};
\node[vertex, below = 8pt of a6] (a7) {\scriptsize14};

\node[vertex, right = 8pt of a6] (a8) {\scriptsize15};
\node[vertex, below = 8pt of a8, fill=cyan] (a9) {\scriptsize16};

\node[vertex, right = 8pt of a8] (a10) {\scriptsize17};
\node[vertex, below = 8pt of a10] (a11) {\scriptsize2}; 
\node[vertex, below = 8pt of a11] (a12) {\scriptsize3}; 

\path [-, draw, ForestGreen] (a9) -- (a12);

\path [-, draw, ForestGreen, rounded corners]
(c3.north) -- ++(0, 4pt) -- ($(a11.south) - (0, 4pt)$) -- (a11.south);

\node[vertex, right = 8pt of a10, fill=red] (a13) {\scriptsize18};
\node[vertex, below = 8pt of a13] (a14) {\scriptsize4};
\node[vertex, below = 8pt of a14] (a15) {\scriptsize5};

\node[vertex, right = 8pt of a13] (a16) {\scriptsize1}; 
\node[vertex, below = 8pt of a16] (a17) {\scriptsize6};
\node[vertex, below = 8pt of a17] (a18) {\scriptsize7};

\node[vertex, right = 8pt of a17] (a19) {\scriptsize8};
\node[vertex, below = 8pt of a19] (a20) {\scriptsize9};

\node[vertex, right = 8pt of a19] (a21) {\scriptsize10}; 
\node[vertex, below = 8pt of a21] (a22) {\scriptsize11};

\path [-, draw, blue, rounded corners]
(a16.south) -- ++(0, -4pt) -- ($(a21.north) + (0, 4pt)$) -- (a21.north);

\node[vertex, right = 8pt of a21] (a23) {\scriptsize13};
\node[vertex, above = 8pt of a23] (a24) {\scriptsize12};
\node[vertex, below = 8pt of a23] (a25) {\scriptsize14};

\node[vertex, right = 8pt of a24] (a26) {\scriptsize15};
\node[vertex, fill=cyan, below = 8pt of a26] (a27) {\scriptsize16}; 
\node[vertex, below = 8pt of a27] (a28) {\scriptsize17};

\node[vertex, fill=red, right = 8pt of a26] (a29) {\scriptsize18};
\node[vertex, fill=Rhodamine, right = 8pt of a28] (a30) {\scriptsize19};

\node[vertex, right = 8pt of a29] (a31) {\scriptsize1}; 
\node[vertex, below = 8pt of a31] (a32) {\scriptsize2};
\node[vertex, below = 8pt of a32] (a33) {\scriptsize3};

\path [-, draw, blue, rounded corners]
(a27.north) -- ++(0, 4pt) -- ($(a31.south) - (0, 4pt)$) -- (a31.south);

\node[vertex, fill=red, right = 20pt of a31] (a34) {};
\node[vertex, fill=cyan, below = 8pt of a34] (a35) {};
\node[vertex, fill=Rhodamine, below = 8pt of a35] (a36) {};

\begin{scope}[on background layer]
	\path [draw, path](a1) -- (a31);
	\path [draw, path](b1) -- (a32);
	\path [draw, path](c1) -- (a33);
\end{scope}


\node[vertex, right = 354pt of s] (t) {$t$};

\path [draw, path](s) -- (a1) (s) -- (b1) (s) -- (c1);
\path [draw, path](t) -- (a34) (t) -- (a35) (t) -- (a36);

\node[left = 20pt of t] (d) {\scalebox{1}{$\dots$}};

\end{tikzpicture}
			\caption{
			Our algorithm determines a boundary by moving from $s$ along all paths with respect to the cross edges (colored edges), i.e.,
			it cannot move from a vertex that is an endpoint of a cross edge if the other endpoint is not
			yet reached.
			In the example, we assume that $k \log k = 18$.
			The number in the vertices denotes the seen vertices since adding vertices to the boundary the last time.
			}\label{fig:compute-storage-scheme}
\end{figure}
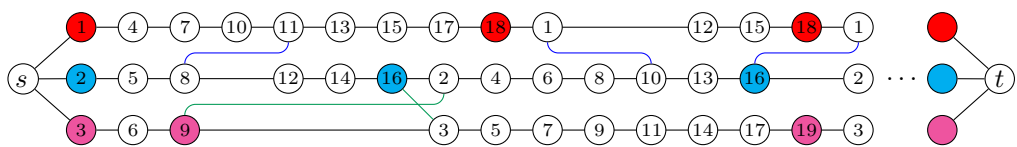

In the case that the previous lemma returned paths containing
an $s$-$p'$ path with $p' \ne t$, then our weak 
path data structure gives access to $\ell$
good $s$-$t$ paths. Otherwise, $p' = t$ and
we have access to $\ell := \ell + 1$ good $s$-$t$ paths.

\begin{lemma}\label{lem:storageScheme}
	Given a weak path data structure for accessing $\ell$ good
	$s$-$t$ paths $\mathcal P'$ 
	there is an algorithm that computes a valid path data scheme storing $\ell$ good $s$-$t$ paths.
	The algorithm runs
	in 
	$O(n k^3 \log^3 k)$ time
	and uses $O(n + k^2 (\log k) \log n)$ bits.
\end{lemma}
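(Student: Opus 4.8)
The plan is to assemble the three parts $\A$, $\B$, $\C$ of a valid path data scheme in a small constant number of sweeps over $G[V^*]$, using only $\op{prev'}$, $\op{next'}$ and $\op{inVStar}$ of the given weak path data structure. First I would build the path numbering $\A$: vertex $s$ has exactly $\ell$ neighbours in $V^*$ — the second vertices of the $\ell$ paths (chordlessness of $\mathcal P'$ rules out further ones) — and from each of them I repeatedly apply $\op{next'}$ until reaching $t$, writing $1,2,3,1,2,3,\dots$ into a $2n$-bit array $\A$ along the way and leaving $0$ everywhere outside $V^*$. A second walk of the same paths records, in two $n$-bit arrays, for each vertex whether its predecessor resp.\ successor will lie in $\B$; this later yields the sets $S'$ and $T'$ inside every region without any colour information. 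Since $\sum_{v\in V^*}\op{deg}(v)=O(kn)$, $|V^*|\le n$, and each $\op{next'}$ call costs $O(\op{deg}(v)+k^3\log^3 k)$ (Lemma~\ref{lem:recompute}), both walks run in $O(nk^3\log^3k)$ time and $O(n)$ bits.

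Next I would compute $\B$ by the sweep sketched before the lemma (Fig.~\ref{fig:compute-storage-scheme}). I keep a set $S$ that is always a separator of $s$ and $t$ in $G[V^*]$, initialised to $N(s)\cap V^*$, and push it toward $t$: a vertex $v\in S$ is moved to $\op{next'}(v)$ only when $S$ stays a separator, i.e.\ afterwards no cross edge runs from an already--swept vertex to a not--yet--swept one; if $v$ is an endpoint of a cross edge whose other endpoint $w$ is still ahead of $S$, I first advance the remaining vertices of $S$ until $w$ enters $S$ and then move both over in one step; a configuration in which several pairwise--blocking vertices all sit on $S$ is exactly a \emph{simple} deadlock cycle, and I clear it by pushing $S$ one vertex along each of the involved paths simultaneously. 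Whenever $\Theta(k\log k)$ fresh vertices have been swept I add the current $S$ to $\B$, and I additionally put every vertex of degree $>k\log k$ into $\B$. Two bit arrays (swept vertices and the members of $S$) make one advance over a vertex $v$ cost $O(\op{deg}(v)+k^3\log^3 k)$, so the sweep runs in $O(nk^3\log^3k)$ time and $O(n)$ bits. Since $G$ has $O(kn)$ edges there are $O(n/\log k)$ large-degree vertices and $O(n/(k\log k))$ added separators of $O(k)$ vertices each, so $|\B|=O(n/\log k)$; and as consecutive added separators are separators of $G[V^*]$ with only $O(k\log k)$ vertices between them, every connected component of $G[V^*\setminus\B]$ — a region — has $O(k\log k)$ vertices.

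Finally I would build $\C$ and argue validity. For each region $U$, taken in the order in which the sweep met it, I read off $S'$ and $T'$ from the two arrays above and call the fixed algorithm of Lemma~\ref{lem:recomputeRegion} on $G[U]$ extended by $s',t'$ to learn the exact pairing it produces between $S'$ and $T'$. Gluing all these region-subpaths at the boundary vertices yields $\ell$ pairwise internally vertex-disjoint $s$-$t$ paths that, by Lemma~\ref{lem:sameVertexSet}, must cover all of $V^*$ (hence no stray cycles split off) and, by Lemma~\ref{lem:probMaintain}, are again good; processing the regions in sweep order I colour these $\ell$ reconstructed paths $1,\dots,\ell$, propagating a boundary vertex's colour to its predecessor and successor and, through each region's pairing, across the region. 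Then I store, for every $b\in\B$, the colour of $b$ and of its predecessor and successor (all three equal by construction) in a static-space-allocation structure; this is a key set of size $O(n/\log k)$ with $O(\log k)$-bit values, i.e.\ $O(n)$ bits. The scheme $(\A,\B,\C)$ is \emph{valid} precisely because the stored colours were derived from the pairing of the very algorithm of Lemma~\ref{lem:recomputeRegion} that Lemma~\ref{lem:recompute} re-runs on the same regions with the same $S',T'$. Summing over all regions, the calls to Lemma~\ref{lem:recomputeRegion} cost $\sum_U O(|U|k^2\log^2k)=O(nk^2\log^2k)$ time and $O(k^2(\log k)\log n)$ bits at a time, while the $\op{next'}/\op{prev'}$ accesses stay within $O(nk^3\log^3k)$ time; together with $\A,\B,\C$ the algorithm uses $O(n+k^2(\log k)\log n)$ bits and $O(nk^3\log^3k)$ time.

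The main obstacle is the correctness of the sweep: one must show that $S$ can always be advanced and hence eventually reaches $N(t)\cap V^*$. This is exactly where the good properties of $\mathcal P'$ are used — the absence of extended deadlock cycles guarantees that the only obstruction to moving $S$ is a simple deadlock cycle, which the ``one step along every involved path'' trick removes while keeping $S$ a separator and never creating a cross edge straddling an already-frozen separator; that last property is in turn what bounds the region sizes and makes the third sweep's colouring well defined.
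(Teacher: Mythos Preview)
Your proposal is correct and mirrors the paper's proof: the same separator-sweep toward $t$ builds $\B$ (with exactly the simple-deadlock-cycle trick you describe, and whose progress argument you rightly flag as the crux), and the same region-by-region call to Lemma~\ref{lem:recomputeRegion} with colour propagation from $s$'s neighbours produces $\C$ and establishes validity. The only noticeable difference is that you fill $\A$ upfront from the weak structure's paths via $\op{next'}$ whereas the paper fills it during the colour pass from the recomputed paths; both choices work since $\A$ is only ever consulted at boundary vertices and their immediate predecessors/successors, where the two path systems coincide---though note your ``second walk'' must of course be deferred until after $\B$ is known.
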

\begin{proof}
	Let $V^*$ be the vertices of $P^*$ and $\mathcal P'$ obtained
	from the weak path data structure.
	As in the last proof, we assume that the paths in $\mathcal P'$
	are colored from $1, \ldots, \ell$.
	For the case that $p' \ne t$, we compute a new path data scheme for 
	$P^*$ by Lemma~\ref{lem:pathdatastructurefromsinglepath}.
	Moreover, we remove the vertices of $P^*$ from $V^*$
	and start to compute our path data scheme for~$\mathcal P'$.
	Our algorithm works in three steps.
	First we compute the boundary,
	then a path data scheme, and finally make it valid.

	{\bfseries Compute the boundary.} 
	Start with the set $S$ of all $|\mathcal P'|$ neighbors $u$ of $s$ in $G[V^*]$.
	Note that the vertices in $S$ are candidates for 
	a boundary because~$S$ 
	has the property that it cuts all $s$-$t$-paths;
	in other words,
	there is no path in $G[V^* \setminus S]$ that connects $s$ and $t$.
	For performance reasons, we cache $\op{next}'(u)$ for each vertex
	$u$ in $S$ as long as $u$ is in $S$ so that we have to compute it
	only once.
	From $S$ we can compute the next set $S'$ of candidates for a boundary.
	The idea is to try to put for each vertex $u \in S$ the vertex
	$v = \op{next}'(u)$ of the same path into $S'$.
	However, we cannot simply do this if $u$ has a cross edge
	that can bypass $S'$. In that case we take $u$ into $S'$.
	If $S = S'$, then one (or multiple) deadlock cycles restrain us to add new vertices to $S'$.
	We are either able to take the next vertex of every vertex of $S$ 
	that is part of a a simple deadlock cycle into $S'$ or show that
	an extended deadlock cycle exists, which is a contradiction to our property that
	our paths are good.
	
	We now focus on the structures that we use in the subsequent algorithm.
	Let $D$ (initially $D := S \cup \{s\}$) be a set 
	consisting of all vertices of the connected
	component of $G[V^* \setminus S]$ that contains $s$ as well as the vertices of $S$.
	(Intuitively speaking, $D$ is the set of vertices for which
	the membership in a boundary was already decided.)
	We associate the index of a vertex with its path membership
	(not its color since a weak path data scheme does not provide color information).
	Technically we realize $S$ and $S'$
	as arrays of $|\mathcal P'|$ fields
	such that $S[i]$ and $S'[i]$ store vertices $u_i$ and $v_i$, respectively, of $P_i \in \mathcal P'$ ($1 \le i \le |\mathcal P'|$).
	Moreover, for each $u_i \in S$ on path $P_i \in \mathcal P'$ 
	we store the number $\op{d}(i) = |(N(u_i) \cap V^*) \setminus D|$ 
	in an array of $|\mathcal P'|$ fields.
	These numbers represents the number of edges whose head does point at a
	vertex outside of $D$. If $\op{d}(i) \ne 1$, then $u_i$ has a cross edge
	to some vertex not in $D$ and we cannot simply put $v_i := \op{next}'(u_i)$ inside $S'$,
	because we might get a cross edge 
	that can be used by an $s$-$t$ path in $G[V^* \setminus S']$.
	Otherwise, if $\op{d}(i) = 1$, then $u_i$ has no  cross edge 
	possibly bypassing $S'$ and we can take $v_i$ into $S'$.
	Let $\B'$ be an initially empty set of boundary vertices.
	To know when the next set $S$ can be added to the boundary $\B'$ such that
	the property that the boundary defines regions of size $O(k \log k)$ holds,
	we use a counter $c$ (initially $c := 0$) to 
	count the number of vertices
	that we have seen until extending the boundary 
	the last time.
	
	We now show our algorithm to compute $S'$ given $S$.
	Take $\B' := S$, initialize $\op{d}(i)$ as $|(N(u_i) \cap V^*) \setminus D|$
	for each $u_i \in S$, and compute a set $S'$ in phases as follows.
	For each $u_i \in S$, if $\op{d}(i) = 1$,
	take $v_i = \op{next'}(u_i)$ into $S'$, increment $c$ by one,
	and set $\op{d}(i) = |(N(v_i) \cap V^*) \setminus D|$.
	Moreover, for each $w_j \in (N(v_i) \cap S )$ decrement $\op{d}(j)$ by one.
	If $\op{d}(i) \ne 1$, take $u$ into $S'$.
	If $S \ne S'$, add $S'$ into $D$, set $S := S'$ and repeat the phase.
	
	Otherwise (i.e., $S = S'$) we run into the {\em special case} where we have encountered a deadlock cycle (possible a combination of deadlock cycles),
	which we can handle as follows.
	Since $\mathcal P'$ is good and $\CP$ is clean, we have one or more simple deadlock cycles
	with vertices in $S$.
	Set $c = c + |\mathcal P'|$ and take $\til{S} = \{\op{next'}(u)\ |\ u \in S\}$.
	Note that all these simple deadlock cycles have only vertices in~$S \cup \til{S}$, but
	not all vertices of $\til{S}$ have to be part of a simple deadlock cycle.
	To compute $S'$ from $\til{S}$, we have to remove the vertices from~$\til{S}$ 
	that are not part of the simple deadlock cycle, which we can do as follows.
	Take $S' := \til{S}$.
	Check in rounds
	if~$S'$ separates $S$ from $t$, i.e.,
	for each $v_i \in S'$ with $1 \le i \le |\mathcal P'|$, 
        choose $u_i = S$ and check if $\op{d}(j)>\ell$. If not,
	additionally check 
        if a $w_j \in (N(u_i) \cap V^*) \setminus (D \cup \{S'[j]\} )$ 
        exists. If one of the check passes, some edge of $u_i$ bypasses $S'$
	and we replace $v_i$ by $u_i$ in $S'$ and decrement $c$ by one. 
	Then repeat the round until no such $w$ can be found for any vertex.
	Afterwards, set $S'$ as the new $S$, add $S$ into $D$, and compute 
	the $d$-values for the new vertices in $S$, 
	and proceed with the next phase.
	
	If $c > k \log k$ after adding $S$ into $D$, add the vertices of $S$ to~$\B'$,
	and reset $c := 0$.	
	Repeat the algorithm until $S = N(t) \cap V^*$ and add the vertices of $S$ to $\B'$.
	Finally, add all vertices of $V^*$ with large degree to $\B'$.
	
	{\bfseries Correctness of the boundary.}
        We show the following invariant whenever we update $S$: 
        the removal of $S$ in $G[V^*]$ disconnects $s$
        and $t$. Initially, this is true for the 
        neighbors of $s$ 
	because their removal disconnects each chordless path in $G[V^*]$.
	If we have a vertex $u \in S$ on $P_i \in \mathcal P$ with only one neighbor in $G[V^* \setminus D]$, i.e., we have $\op{d}(i) = 1$,
	then $S' = (S \setminus \{u\}) \cup \{\op{next'}(u)\}$ is again a valid separator.
	If we have no such vertex~$u$, then we construct a set $S'$ and we
        check for each vertex in $S$ if it has a cross edge that destroys $S'$ being
        a separator. 
        Thus, we set $S:=S'$ only if $S'$ is again a separator.

        We next show that
          we make progress with each separator $S$ over all phases, i.e., we find a
          new separator so that $D$ increases.
          For a simpler notation, let us say that a vertex on a path is
           {\em behind} another vertex on the path if the latter is closer to $s$.
          As long as we do not run into the special case having a deadlock cycle, we can clearly make progress. 
          Let $S'$ be defined as in the beginning of the special case, i.e.,
          $S'$ consists of the successors
          on the paths of each vertex in $S$.
          Let us consider a deadlock cycle $Z$ whose subpaths all
          begin with a vertex 
          in $S$. Let $S^Z$ be the endpoints of the subpaths. 
          Choose $Z$ such that  
          no cross edge of a vertex $v \in S^Z$ ends at a vertex $w$ on a path that is part of another deadlock cycle
          $Z'$
          such that $w$ is behind the subpath part of $Z'$.

          Assume for a contradiction that no such $Z$ exists. Then we must
          have a cyclic sequence of $x\ge 2$ deadlock cycles where each deadlock cycle has a vertex in $S$
           with a cross
          edge $e'$ to a vertex of the next deadlock cycle
          (Fig.~\ref{fig:deadlock}a shows an example for $x=2$). But then we can use
          the cross edges to build an extended deadlock cycle. 
           (Fig.~\ref{fig:deadlock}b).

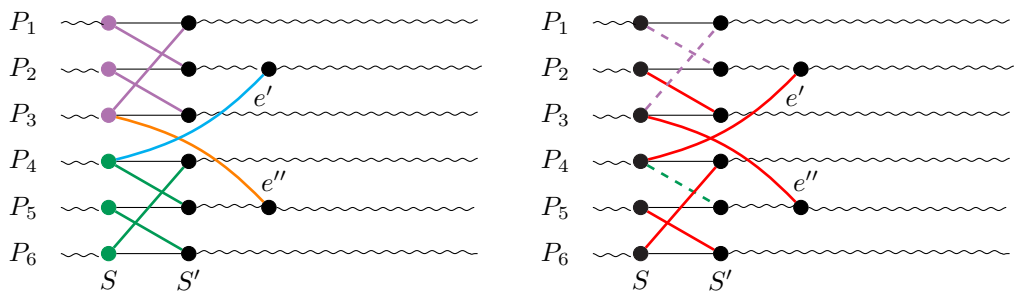
\begin{figure}[t]
			\centering
			\begin{subfigure}{.5\textwidth}
				\centering
				\begin{tikzpicture}[
node distance = 10pt,
vertex/.style={minimum width=0.30em, minimum height=0.30em, circle, draw, on 
chain},
svertex/.style={minimum width=0.15em, minimum height=0.15em, circle, draw, 
on chain, inner sep=1pt, text width=3pt, align=center},
fix/.style={inner sep=1pt, text width=6pt, align=center},
graph/.style={minimum width=3em, minimum height=2em, ellipse, draw, on 
chain, decorate, decoration={snake,segment length=1mm,amplitude=0.2mm}},
decoration={
	markings,
	mark=at position 0.5 with {\arrow{>}}},
path/.style=
	{-,
		decorate,
		decoration={snake,amplitude=.2mm,segment length=2mm,}}
],

\node[] (p1) {$P_1$};
\node[below= 2pt of p1] (p2) {$P_2$};
\node[below= 2pt of p2] (p3) {$P_3$};
\node[below= 2pt of p3] (p4) {$P_4$};
\node[below= 2pt of p4] (p5) {$P_5$};
\node[below= 2pt of p5] (p6) {$P_6$};

\foreach \i in {1,...,3}{ 
\node[svertex, fill=black, color=Orchid, right = 20pt of p\i] (p\i1) {};
}
\foreach \i in {4,...,6}{ 
\node[svertex, fill=black, color=ForestGreen, right = 20pt of p\i] (p\i1) {};
}

\foreach \i in {1,...,6}{ 
\node[svertex, fill=black, right = 50pt of p\i] (p\i2) {};
}

\node[svertex, fill=black, right = 80pt of p2] (e1) {};
\node[svertex, fill=black, right = 80pt of p5] (e2) {};

\path[draw, path]
($(p1.east) + (5pt,0)$) -- ($(p11.west) - (1.5pt, 0)$) 
(p12) -- ($(p1) + (170pt, 0)$)
($(p2.east) + (5pt,0)$) -- ($(p21.west) - (0pt, 0)$)
(p22) -- (e1)
(e1) -- ($(p2) + (170pt, 0)$)
($(p3.east) + (5pt,0)$)  -- ($(p31.west) - (0pt, 0)$)
(p32) --  ($(p3) + (171pt, 0)$)
($(p4.east) + (5pt,0)$) -- ($(p41.west) - (2pt, 0)$)
(p42) --  ($(p4) + (170pt, 0)$)
($(p5.east) + (5pt,0)$) -- ($(p51.west) - (0pt, 0)$)
(p52) -- (e2)
(e2) --  ($(p5) + (170pt, 0)$)
($(p6.east) + (5pt,0)$) -- ($(p61.west) - (0pt, 0)$)
(p62) --  ($(p6) + (170pt, 0)$)
;

\path[draw, -] 
(p11) -- (p12)
(p21) -- (p22)
(p31) -- (p32)
(p41) -- (p42)
(p51) -- (p52)
(p61) -- (p62)
;

\path[draw, -, line width = 1pt]
(p11) edge [color=Orchid] (p22)
(p21) edge [color=Orchid] (p32)
(p41) edge [color=ForestGreen] (p52)
(p51) edge [color=ForestGreen] (p62)
(p61) edge [color=ForestGreen] (p42)
(p31) edge [color=Orchid] (p12)
(p31) edge [bend left = 16, color=orange] (e2)
(p41) edge [bend right = 16, color=ProcessBlue] (e1)
;

\node[] at ($(e1) + (-2pt, -10pt)$) (le1) {$e'$};
\node[] at ($(e2) + (+2pt, +10pt)$) (le2) {$e''$};

\node[] at ($(p61) + (0pt, -10pt)$) (ls1) {$S$};
\node[] at ($(p62) + (0pt, -10pt)$) (ls2) {$S'$};

\end{tikzpicture}
				\caption{Two simple deadlock cycles $Z'$ and $Z''$.}
 			\end{subfigure}%
			\begin{subfigure}{.5\textwidth}
				\centering
				\begin{tikzpicture}[
node distance = 10pt,
vertex/.style={minimum width=0.30em, minimum height=0.30em, circle, draw, on 
chain},
svertex/.style={minimum width=0.15em, minimum height=0.15em, circle, draw, 
on chain, inner sep=1pt, text width=3pt, align=center},
fix/.style={inner sep=1pt, text width=6pt, align=center},
graph/.style={minimum width=3em, minimum height=2em, ellipse, draw, on 
chain, decorate, decoration={snake,segment length=1mm,amplitude=0.2mm}},
decoration={
	markings,
	mark=at position 0.5 with {\arrow{>}}},
path/.style=
	{-,
		decorate,
		decoration={snake,amplitude=.2mm,segment length=2mm,}}
],

\node[] (p1) {$P_1$};
\node[below= 2pt of p1] (p2) {$P_2$};
\node[below= 2pt of p2] (p3) {$P_3$};
\node[below= 2pt of p3] (p4) {$P_4$};
\node[below= 2pt of p4] (p5) {$P_5$};
\node[below= 2pt of p5] (p6) {$P_6$};

\foreach \i in {1,...,3}{ 
\node[svertex, fill=black, color=Black, right = 20pt of p\i] (p\i1) {};
}
\foreach \i in {4,...,6}{ 
\node[svertex, fill=black, color=Black, right = 20pt of p\i] (p\i1) {};
}

\foreach \i in {1,...,6}{ 
\node[svertex, fill=black, right = 50pt of p\i] (p\i2) {};
}

\node[svertex, fill=black, right = 80pt of p2] (e1) {};
\node[svertex, fill=black, right = 80pt of p5] (e2) {};

\path[draw, path]
($(p1.east) + (5pt,0)$) -- ($(p11.west) - (1.5pt, 0)$) 
(p12) -- ($(p1) + (170pt, 0)$)
($(p2.east) + (5pt,0)$) -- ($(p21.west) - (0pt, 0)$)
(p22) -- (e1)
(e1) -- ($(p2) + (170pt, 0)$)
($(p3.east) + (5pt,0)$)  -- ($(p31.west) - (0pt, 0)$)
(p32) --  ($(p3) + (171pt, 0)$)
($(p4.east) + (5pt,0)$) -- ($(p41.west) - (2pt, 0)$)
(p42) --  ($(p4) + (170pt, 0)$)
($(p5.east) + (5pt,0)$) -- ($(p51.west) - (0pt, 0)$)
(p52) -- (e2)
(e2) --  ($(p5) + (170pt, 0)$)
($(p6.east) + (5pt,0)$) -- ($(p61.west) - (0pt, 0)$)
(p62) --  ($(p6) + (170pt, 0)$)
;

\path[draw, -] 
(p11) -- (p12)
(p21) -- (p22)
(p31) -- (p32)
(p41) -- (p42)
(p51) -- (p52)
(p61) -- (p62)
;

\path[draw, -, line width = 1pt]
(p11) edge [color=Orchid, dashed] (p22)
(p21) edge [color=red] (p32)
(p41) edge [color=ForestGreen, dashed] (p52)
(p51) edge [color=red] (p62)
(p61) edge [color=red] (p42)
(p31) edge [color=Orchid, dashed] (p12)
(p31) edge [bend left = 16, color=red] (e2)
(p41) edge [bend right = 16, color=red] (e1)
;

\node[] at ($(e1) + (-2pt, -10pt)$) (le1) {$e'$};
\node[] at ($(e2) + (+2pt, +10pt)$) (le2) {$e''$};

\node[] at ($(p61) + (0pt, -10pt)$) (ls1) {$S$};
\node[] at ($(p62) + (0pt, -10pt)$) (ls2) {$S'$};

\end{tikzpicture}
				\caption{An extended deadlock cycle.}
			\end{subfigure}%
			\caption{Illustration of the construction of an extended deadlock cycle if our algorithm can not compute $S$ and $S'$ with $S \ne S'$.
			The colored vertices are the vertices in $S^{Z'}$ and $S^{Z''}$ of the deadlock
			cycles $Z'$ and $Z''$  with the same color,
			respectively.}\label{fig:deadlock}
\end{figure}

          For a path $P_i\in \mathcal P$, let us consider the successor $v_i$ on $P_i$ of 
          a vertex $u_i$ in $S^Z$ that is replaced first among all
          vertices in $S^Z$ in the current phase. Clearly, it is replaced by
          $u_i$ in $S'$. Then we have a cross edge $e$ from $u_i$ either to (Case~1) a
          path $P'$ that does not begin with a vertex in $S^Z$ or (Case~2) $e$ ends behind a vertex of
          $S'$. 
          In the Case~1, by the property
          of $Z$, $P'$ can not be part of a deadlock cycle and we are not in
          the special case. 
          In the Case~2, by the property
          of $Z$, $e$ must end on a vertex that belongs to the deadlock
          cycle. Since $e$ ends behind a vertex of
          $S'$, we have an extended deadlock cycle; a contradiction.
          Thus, the vertices being successors of $S^Z$ remain
          in $S'$ and we make progress in the special case.

	{\bfseries Computing the new valid path data scheme.} 
	We know $V^*$ and the new boundary $\B'$.
	For a valid path data scheme it remains to compute
	a path numbering $\A'$, 
	the colors for the boundary vertices and their predecessors and successors
	that we store in an $O(n)$-bit structure $\C'$ using static space allocation.
	We can compute $\A'$ by moving along each path from $s$ to $t$.
	
	It is important that all vertices
	of an $s$-$t$ path are colored with the same color,
	however we want to use Lemma~\ref{lem:recomputeRegion}
	that computes path inside one region.
	Since $s$-$t$ paths are disconnected by boundary vertices
	our approach is to start with an arbitrary coloring 
	of the neighbors $W$ of $s$ part of the paths
	that are the first boundary vertices and from these
	make a parallel run over the paths. 
	Whenever we enter a region we explore it and 
	compute paths within the region.
	Using the paths we propagate the colors of the boundary vertices
	along the paths to the 
	next boundary vertices, their predecessors and
	successors and repeat our algorithm.
	
	Let $D \subseteq V^*$ (initially $D := W \cup \{s\}$) be
	the set of vertices for that we have already decided their color.
	To find a fixed routing within a region 
	we explore the region of every neighbor
	of $W$ in $G[V^* \setminus (\B' \cup D)]$ using a
	BFS that skips over vertices in $\B' \cup D$.
	We collect all visited vertices of one region in a set $U$
	realized by a balanced heap.
	(If $U$ is empty, the region was already explored
	from another neighbor of a vertex of $W$ and we
	continue with the next vertex $v \in W$.)
	In addition we construct two sets $S'$ and $T'$.
	Each visited vertex $u \in U$ is put in $S'$ if
	$\op{prev'}(u) \in \B'$ holds and put in $T'$
	if $\op{next'}(u) \in \B'$ holds (a vertex $u$ can be
	part of $S'$ as well as $T'$ if both holds).
	We apply Lemma~\ref{lem:recomputeRegion}
	with $U$, $S'$ and $T'$ as input and get 
	paths connecting each vertex of $S'$ with another vertex of $T'$.
	We set $D := D \cup U$ to avoid the exploration of explored
	regions again.
	
	Let $F$ be a set of vertices whose successors
	of $W$ are in $U$.
	Intuitively, by having~$F$ we can avoid
	computing the paths of a region again and gain.
	For each vertex $v \in F$ we use $\op{next'}(v)$
	and so move to a vertex $u$ of $S'$---we are now on
	a path computed by Lemma~\ref{lem:recomputeRegion}.
	We store the color $q$ of $v$ for this vertex
	and run along the path until it ends.
	We store the color $q$ for the last vertex~$w$
	of the path and 
	use $\op{next'}(w)$ to move to a boundary vertex $v'$
	that we also color with $q$.
	If the vertex $u' = \op{next'}(v')$ is on a computed path
	we repeat this paragraph with $v := v'$.
	Otherwise, we put $u'$ into a set $W'$ and remove $v$ from $F$.
	If $F$ loses its last vertex we set $W := W'$ and $F = W' = \emptyset$
	and repeat the iteration. Otherwise, we proceed with the
	next vertex $v \in W$.
	
	While moving over the paths we extend $D$ by the visited vertices.
	Moreover we compute the path numbering for $\A'$
	and all colors are stored in $\C'$.
	Recall that in order to use static space allocation we first need to know
	the set of all vertices in advance that we
	afterwards want to color.
	Do get that set we can run the algorithm in two steps.
	In the first we do not store the colors but only compute
	a set of vertices that must be colored.
	Using it as a key set for static space allocation we 
	repeat our algorithm in a second step and store the coloring.
	%
	%
	The triple $(\A', \B', \C')$ represents the new valid path data structure.
	
	{\bfseries Correctness of the new valid path data scheme.}
	The boundary and the rerouted paths 
	represented by the weak path data structure 
	fixes the boundary vertices and their predecessor and successors.
	Hence our selection of the vertices for $U$, $S'$ and $T'$
	is also fixed for each region. Computing 
	the paths using Lemma~\ref{lem:recomputeRegion}
	fixes the path numbering. Since a (re)computation 
	with the same algorithm produces the same paths
	(being subpaths of our $s$-$t$ paths)
	and since we used the paths
	to propagate the coloring,
	the endpoints of a path and thus,
	all all colored vertices along a path are
	equally colored in $\C'$.
	Therefore our path data scheme is valid.
	Moreover, $S'$ and $T'$ consists of the endpoints
	of subpaths of good paths. Hence, a network-flow 
	algorithm can connect each vertex of $S'$ with another
	vertex of $T'$ and by Lemma~\ref{lem:probMaintain}
	our paths are chordless and deadlock free
	paths, it also holds for
	the paths of them.
	
%

	{\bfseries Efficiency.}
	We first consider the time to compute the boundary.
	For each new vertex $u$ in $S$, we
	spend $O(\op{deg}(u))$ time to determine once its degree in $G[V^* \setminus D]$
	and to decrement already stored degrees
	of the neighbors of~$u$.
	In each phase we iterate over all elements of $S$ and make $|S \Delta S'| = \Omega(1)$ progress 
	on at least one path,
	which means that we need at most $O(n\ell)$ phases.
	Determining the previous / next vertex~$u$ on a path
	can be done in time $O(\op{deg}(u) + k^3 \log^3 k)$.
	Dealing with simple deadlock cycles requires us
	to compute the $\ell$ initial members of $S'$, which cost us $O(\ell (\op{deg}(u) + k^3 \log^3 k))$ time 
	and roundwise replace members in $S'$ by their predecessors.
	Because $|S'| = O(\ell)$, this can cause $O(\ell)$ rounds.
	We spend $O(\ell^2)$ time to check if $S'$ is a separator.
        Thus,  $O(\ell^3)$ time allows us to compute a separator $S'$
        in the special case, and 
        $O(\ell)$ time allows us to add the at most $\ell$ vertices to~$D$
	within each phase.
	
	To construct a valid path data scheme we need to explore all regions once, which
	can be done in 
	time linear to the size of each region plus
	the number of edges that leave the region, i.e.,
	in $O(nk)$ time.
	The construction of $S'$ and $T'$ requires 
	the execution of $\op{prev'}$ and $\op{next'}$
	on each vertex once, which runs in $O(n k^3 \log^3 k)$ time.
	Lemma~\ref{lem:recomputeRegion} uses 
	$O(n' k^2 \log^2 k)$ time for a region of size $n'$.
	Using it on every region once
	can be done in $O(n k^2 \log^2 k)$ time.
	Propagating the color requires us to move along the paths,
	which can be done in in time $O(n k^3 \log^3 k)$ time.
	The total time of the algorithm is $O(n k^3 \log^3 k)$.
	
	Our space bound is mainly determined by $D$, $V^*$, $\A'$, $\B'$ and $\C'$ each of $\Theta(n)$ bits and  
	the application of Lemma~\ref{lem:recompute} 
	($O(k^2 (\log k) \log n)$ bits), which is used
	in the weak path data structure whenever we access a previous or next vertex on a path.
	For $S$, $W$ and $F$ (and their copies) 
	we need $\Theta(\ell \log n)$ bits.
	The construction of the paths inside regions
	uses $O(k^2 (\log k) \log n)$ bits (Lemma~\ref{lem:recomputeRegion}).
	In total, we use $O(n + k^2 (\log k) \log n)$ bits.
	\end{proof}

Given path storage scheme storing $\ell < k$ good $s$-$t$-paths and an $(\ell + 1)$th path $P^*$ we can batchwise compute a path storage scheme storing $(\ell + 1)$ good $s$-$t$ paths by subsequently executing $O(\log k)$ times Lemma~\ref{lem:rerouting} and Lemma~\ref{lem:storageScheme}.

\vspace{3mm}
\begin{corollary}\label{cor:bathwise-inclusion}
Given a path storage scheme storing $\ell < k$ good $s$-$t$-paths $\mathcal P$ 
and an $(\ell + 1)$th dirty path 
with respect to $\mathcal P$
we can compute a path storage scheme storing $(\ell + 1)$ good $s$-$t$ paths
in $O(n k^4 \log^4 k)$ time using $O(n + k^2 (\log k) \log n)$ bits.
\end{corollary}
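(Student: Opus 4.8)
The plan is to prove the corollary by iterating Lemmas~\ref{lem:rerouting} and~\ref{lem:storageScheme} in alternation, $O(\log k)$ times. First I would convert the given $(\ell+1)$th dirty path $P^*$, available as a path numbering on $2n$ bits, into a valid path data scheme by Lemma~\ref{lem:pathdatastructurefromsinglepath}, and initialize the clean area $\Q:=\{s\}$ together with the trivial clean subpath consisting of the single vertex $p:=s$, as suggested right before Lemma~\ref{lem:rerouting}. Then I run the following batch repeatedly. Apply Lemma~\ref{lem:rerouting} to the current valid path data schemes for $\mathcal P$ and for $P^*$, the current clean area $\Q$, and the current clean $s$-$p$ subpath of $P^*$; it returns a rerouting $R=(V^*,\Rs,\Rp)$, an enlarged clean area $\Q'$ and an extended clean $s$-$p'$ subpath. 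From $R$ and the two schemes I build the weak path data structure for the $\ell$ (or $\ell+1$) rerouted $s$-$t$-paths; when $p'\ne t$ I additionally compute, via Lemma~\ref{lem:pathdatastructurefromsinglepath}, a fresh valid path data scheme for the $s$-$t$-path obtained by splicing the rerouted $s$-$p'$ prefix into $P^*$ in place of its current clean prefix, and delete that prefix's vertices from $V^*$. Finally I feed the weak structure to Lemma~\ref{lem:storageScheme}: if $p'=t$ the resulting valid path data scheme stores $\ell+1$ good $s$-$t$-paths and the algorithm stops; otherwise it stores $\ell$ good $s$-$t$-paths that I adopt as the new $\mathcal P$, adopt the spliced path as the new $P^*$, set $\Q:=\Q'$ and advance the clean subpath, and repeat.

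Bounding the number of batches is the crux. Lemma~\ref{lem:rerouting} guarantees that in each batch either $p'=t$ or $|\Q'|\ge|\Q|+\Omega(n/\log k)$, and since the clean subpath only grows (the old one is always a subpath of the new one) the clean area is monotone non-decreasing and never exceeds the $O(n)$ vertices of $G$; hence after $O(\log k)$ batches we must land in the case $p'=t$. Correctness at termination is then immediate from Lemma~\ref{lem:storageScheme}, which by construction returns a \emph{valid} path data scheme storing $\ell+1$ \emph{good} $s$-$t$-paths; that these are genuinely $\ell+1$ pairwise internally vertex-disjoint $s$-$t$-paths of $G$ follows from the standard network-flow view in which a dirty path plays the role of an augmenting path in the residual network, while Lemmas~\ref{lem:sameVertexSet} and~\ref{lem:probMaintain} ensure the good properties (chordlessness and absence of extended deadlock cycles) persist through every rerouting.

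For the running time I would just add up one batch: $O(nk^3\log^3 k)$ for Lemma~\ref{lem:rerouting} (and, by the remark preceding Lemma~\ref{lem:storageScheme}, for the construction of the weak structure and the spliced path together with it), $O(n)$ for the invocations of Lemma~\ref{lem:pathdatastructurefromsinglepath}, and $O(nk^3\log^3 k)$ for Lemma~\ref{lem:storageScheme}; multiplying the per-batch cost $O(nk^3\log^3 k)$ by the $O(\log k)$ batches stays comfortably within the claimed $O(nk^4\log^4 k)$ time. For space, each batch runs in $O(n+k^2(\log k)\log n)$ bits, and --- crucially --- every batch finishes by re-encoding all relevant information into the $O(n)$-bit valid path data schemes of $\mathcal P$ and $P^*$, so the potentially $\Theta(n)$-bit rerouting arrays $\Rs,\Rp$ and the set $V^*$ are released before the next batch begins; hence the peak working memory over the whole computation is still $O(n+k^2(\log k)\log n)$ bits.

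The step I expect to be the real obstacle is not any single estimate but verifying that the hand-off between the two lemmas closes up: that after a batch the new $\mathcal P$ produced by Lemma~\ref{lem:storageScheme}, together with the spliced path as the new $P^*$, the enlarged $\Q'$, and the extended clean subpath, exactly meet the input contract of Lemma~\ref{lem:rerouting} for the next batch --- in particular that $\Q'$ is a clean area \emph{for the new $\mathcal P$} and the extended subpath is a clean subpath \emph{of the new $P^*$} --- so that the monotone-growth invariant on $\Q$ really does survive all $O(\log k)$ substitutions and the iteration count is honest rather than merely plausible.
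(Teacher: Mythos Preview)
Your proposal is correct and matches the paper's approach essentially exactly: the paper derives the corollary precisely by alternating Lemma~\ref{lem:rerouting} and Lemma~\ref{lem:storageScheme} for $O(\log k)$ batches, with the batch count coming from the $\Omega(n/\log k)$ growth of the clean area. Your treatment is in fact more careful than the paper's one-sentence justification, and your per-batch cost of $O(nk^3\log^3 k)$ times $O(\log k)$ batches is even slightly sharper than the stated $O(nk^4\log^4 k)$.
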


Initially our set of $s$-$t$-paths $\mathcal P$ is empty. 
Then a repeated execution of Lemma~\ref{lem:newPath} 
and
Corollary~\ref{cor:bathwise-inclusion} allows us to show
our final theorem for computing vertex-disjoint $s$-$t$-paths
in $O(k ((n (k + \log^* n) k^3 \log^3 k)
+ (n k^4 \log^4 k)))
= O(n (k \log k + \log^*n) k^4 \log^3 k)$ time.

\begin{theorem}\label{the:nbitvertexdisjoint}
	Given an $n$-vertex graph $G = (V, E)$ with treewidth $k$ and two 
	vertices $s, t \in V$ there is an 
	algorithm that can compute a maximum amount of $k$ 
	chordless vertex-disjoint paths from $s$ to $t$ in 
	$O(n (k \log k + \log^*n) k^4 \log^3 k)$ time using
	$O(n + k^2 (\log k) \log n)$ bits.
\end{theorem}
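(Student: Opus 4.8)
The plan is to maintain a valid path data scheme for a growing set $\mathcal P$ of good $s$-$t$-paths and to enlarge $\mathcal P$ by one path per round, exactly as in the classical augmenting-path proof of Menger's theorem, but realised space-efficiently through the lemmas of this section. First I would dispose of the single-path base case: if $G$ has no $s$-$t$-path there is nothing to do, and otherwise I run the DFS of Lemma~\ref{lem:chordlessSTPath} to obtain a chordless $s$-$t$-path $P_1$ with its path numbering and apply Lemma~\ref{lem:pathdatastructurefromsinglepath} to turn it into a valid path data scheme for the trivially good set $\{P_1\}$. Since $G$ has treewidth $k$ it has $m = O(kn)$ edges, so this step stays within the claimed time and uses $O(n)$ bits.

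Then I would iterate. Assume that after $\ell$ rounds we hold a valid path data scheme for $\ell < k$ good $s$-$t$-paths $\mathcal P$. I invoke Lemma~\ref{lem:newPath}, which runs a space-efficient DFS in the residual network $G''$ of $G$ with respect to $\mathcal P$. If that DFS reports $t$ unreachable, I stop: by the correctness of the network-flow technique (Lemma~\ref{lem:kpathsStd}) together with Menger's theorem, $\ell$ is then the maximum number of internally vertex-disjoint $s$-$t$-paths in $G$, which in particular is at most $k$. Otherwise Lemma~\ref{lem:newPath} delivers a $2n$-bit path numbering of an $(\ell+1)$th chordless $s$-$t$-path $P^*$, possibly dirty with respect to $\mathcal P$; I hand $\mathcal P$ and $P^*$ to Corollary~\ref{cor:bathwise-inclusion} (a clean $P^*$ being only easier, e.g.\ via Lemma~\ref{lem:storageScheme} on the trivial weak path data structure for $\mathcal P\cup\{P^*\}$) and obtain a valid path data scheme for $\ell+1$ good $s$-$t$-paths, then set $\ell:=\ell+1$. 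The loop runs while $\ell<k$ and the augmenting DFS succeeds, hence at most $k$ rounds; at the end I output the paths by walking from $s$ along each colour with $\op{next}$ (Lemma~\ref{lem:recompute}) in $O(nk^3\log^3 k)$ time.

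For the time bound, one round costs $O(n(k+\log^* n)k^3\log^3 k)$ for Lemma~\ref{lem:newPath} plus $O(nk^4\log^4 k)$ for Corollary~\ref{cor:bathwise-inclusion}; since $nk^4\log^4 k = n(k\log k)k^3\log^3 k$ and $k\le k\log k$, a round costs $O(n(k\log k+\log^* n)k^3\log^3 k)$, and the at most $k$ rounds together cost $O(n(k\log k+\log^* n)k^4\log^3 k)$, as claimed. For the space, at every point we keep only one valid path data scheme and at most one candidate path, each of $O(n)$ bits, on top of the $O(n+k^2(\log k)\log n)$ bits used internally by Lemma~\ref{lem:newPath}, Corollary~\ref{cor:bathwise-inclusion}, and Lemma~\ref{lem:recompute}; hence $O(n+k^2(\log k)\log n)$ bits overall.

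I expect the conceptually hard work to be entirely inside Lemma~\ref{lem:rerouting} and Corollary~\ref{cor:bathwise-inclusion}, which repair a dirty augmenting path into good paths in $O(\log k)$ space-bounded batches; this theorem is merely the outer loop. The one point that still needs care here is the termination criterion: I must argue that a failed augmenting DFS in the \emph{current} residual network certifies a maximum, not merely maximal, set of vertex-disjoint paths. This holds because each round strictly increases the flow value, and the reroutings performed between rounds change only the routing, preserving both the number of $s$-$t$-paths and the ``good'' invariant (Lemmas~\ref{lem:sameVertexSet} and~\ref{lem:probMaintain}); therefore every residual network we search is genuinely the residual network of a flow of value $\ell$, and the standard flow/cut (Menger) argument applies without modification.
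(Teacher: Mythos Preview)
Your proposal is correct and follows essentially the same approach as the paper: start from an empty set of good paths and repeatedly apply Lemma~\ref{lem:newPath} followed by Corollary~\ref{cor:bathwise-inclusion} for at most $k$ rounds, with the time bound obtained exactly as you compute it. You add welcome detail the paper omits---the explicit base case via Lemmas~\ref{lem:chordlessSTPath} and~\ref{lem:pathdatastructurefromsinglepath}, the Menger-style justification of the termination criterion, and how to actually output the paths---but the skeleton is identical.
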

	
Knowing a maximum set of $k$ vertex-disjoint paths between two vertices $s$ and $t$, we can
easily construct a vertex separator for $s$ and $t$.

\begin{corollary}\label{cor:speffVertexSeparator}
	Given an $n$-vertex graph $G = (V, E)$ with treewidth $k$, and two 
        vertices $s \in V$ and $t \in V$, $O(n (k \log k + \log^*n) k^4 \log^3 k)$ time
and  $O(n + k^2 (\log k) \log n)$ bits
suffice to construct a bit array $S$ marking all vertices of
	a vertex separator for $s$ and $t$.
\end{corollary}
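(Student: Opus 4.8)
The plan is to reduce the construction to Theorem~\ref{the:nbitvertexdisjoint} via Menger's theorem and then read a minimum $s$-$t$ separator off the computed paths. First, if $\{s,t\}\in E$ there is no $s$-$t$ vertex separator, which I detect in $O(\op{deg}(s))$ time and report. Otherwise I invoke Theorem~\ref{the:nbitvertexdisjoint} to compute a maximum set $\mathcal P$ of $\ell\le k$ chordless internally vertex-disjoint $s$-$t$-paths, keeping the valid path data scheme produced in its proof so that $\op{prev}$, $\op{next}$, and $\op{color}$ stay available on the vertices of $\mathcal P$. Since $G$ has treewidth $k$ and $s,t$ are non-adjacent, the maximum number of internally vertex-disjoint $s$-$t$-paths is at most $k$, so $\mathcal P$ is a maximum such set, and by Menger's theorem a minimum $s$-$t$ separator has size exactly $\ell$.

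Next I extract such a separator from the residual network, reusing the construction in the proof of Lemma~\ref{lem:newPath}. Using the path data scheme and the interface of Lemma~\ref{lem:transform} I build, on the fly, the interface of the residual graph of the vertex-split graph $G'$, in which every $v\in V$ becomes an in-copy $v'$ and an out-copy $v''$ joined by the arc $(v',v'')$ and every internal vertex of a path of $\mathcal P$ saturates its own internal arc. With the space-efficient BFS of Theorem~\ref{th:se-DFS} on this interface I run a reachability search from $s$ and mark, in an array $M$ of $2n$ bits, every vertex of $G'$ that is reached; the copy $t'$ is not reached because $\mathcal P$ is maximum. A final pass over $V$ then yields the separator $S$: it contains each $v\notin\{s,t\}$ whose in-copy $v'$ is marked while its out-copy $v''$ is not, together with each $w\notin\{s,t\}$ whose in-copy $w'$ is not marked but which has a neighbor $v$ whose out-copy $v''$ is marked. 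These vertices are exactly the heads of the arcs of the minimum cut $\delta^+(M)$ of $G'$, so $|S|=\ell$ with one vertex on each path of $\mathcal P$, and removing $S$ from $G$ disconnects $s$ from $t$: an $s$-$t$ path in $G-S$ would lift to an $s$-$t'$ walk in $G'$ that avoids every arc of the cut, contradicting that $t'$ is unreached. I output $S$ as an $n$-bit array, which costs $O(n)$ additional time.

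For the bounds, Theorem~\ref{the:nbitvertexdisjoint} already dominates: it runs in $O(n(k\log k+\log^* n)k^4\log^3 k)$ time using $O(n+k^2(\log k)\log n)$ bits, and with its path data scheme kept alive the residual interface adds only $O(\log n)$ bits per query while the BFS needs $O(n)$ further bits; since the BFS scans $O(kn)$ arcs and computing the head of an arc of $G'$ costs $O(k^3\log^3 k)$ time for the $\op{color}$ queries of Lemma~\ref{lem:recompute}, the search runs in $O(nk^4\log^3 k)$ time, well within the claimed bound. The hard part I expect is this last step: turning the reachable set of the residual split graph into a genuine vertex separator of $G$ within only $O(n)$ extra bits, and in particular showing that crossing arcs of the cut which are external arcs of the split graph (not internal $v'\to v''$ arcs) are accounted for correctly, so that $S$ really has size $\ell$ and contains neither $s$ nor $t$; the remainder is bookkeeping on top of Theorem~\ref{the:nbitvertexdisjoint}.
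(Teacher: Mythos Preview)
Your approach is essentially the same as the paper's: invoke Theorem~\ref{the:nbitvertexdisjoint} to obtain a maximum family of internally vertex-disjoint $s$-$t$ paths, perform one more residual-graph search that necessarily fails to reach $t$, and read off a minimum separator from the reachable set. The time and space accounting you give is correct and matches the paper.

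The only real difference is in how the separator is extracted. The paper does it more directly: after the failed augmenting search (it reuses the DFS of Lemma~\ref{lem:newPath} rather than a BFS), it stores the set $Z$ of visited vertices, then simply walks along each of the $\ell$ paths from $s$ to $t$ and selects the \emph{last} vertex on each path that lies in $Z$. This immediately yields exactly $\ell$ vertices, one per path, and the separator property is the standard max-flow/min-cut argument. Your extraction via the cut $\delta^+(M)$ in the split graph is also correct, but it forces you to worry about external cut arcs $(u'',w')$ and about possible collisions when several such arcs share a head---precisely the ``hard part'' you flag. The paper's per-path extraction sidesteps this entirely: since every cut arc is crossed by exactly one of the $\ell$ saturated paths, picking the last reachable vertex on each path automatically gives $\ell$ distinct vertices and never touches $s$ or $t$. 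So your concern is real for your formulation but evaporates under the paper's simpler bookkeeping.
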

\begin{proof}
First construct the maximum number~$k$ of possible chordless
pairwise vertex-disjoint paths from~$s$ to~$t$ (Theorem~\ref{the:nbitvertexdisjoint}).
Then try to construct a further $s$-$t$ path with a DFS as describe in the proof of Lemma~\ref{lem:newPath}
and store the set~$Z$ of vertices that are processed by the DFS.
Finally run along the paths from~$s$ to~$t$ and compute the set~$S$ consisting 
of the last vertex on each path that is part of~$Z$.
These vertices are an $s$-$t$-separator.
\end{proof}

Many practical applications that use treewidth algorithms have graphs with 
treewidth $k = O(n^{1/2-\epsilon})$ for an arbitrary $\epsilon>0$, and 
then our space consumption is~$O(n)$ bits.

\section{Sketch of Reed's Algorithm}\label{sec:reedsalg}
In this section we first sketch Reed's algorithm to compute a tree 
decomposition
and then 
the computation of a so-called balanced $X$-separator.
In the following sections, we modify his
algorithm to make it space efficient.

Reed's algorithm~\cite{Reed92} takes
an undirected $n$-vertex $m$-edge graph $G = (V, E)$ with treewidth $k$ and 
an initially empty vertex set $X$ as input
and outputs a balanced tree decomposition of width $8k + 6$.
To decompose the tree he makes use of separators.
An $X$-separator is a set $S \subset V$ such that $S$ separates $X$ among 
the connected components of $G[V \setminus S]$ and no component 
contains more than $2/3|X|$ vertices of $X$.
A balanced $X$-separator $S$ is an $X$-separator with the additional property 
that no component of $G[V \setminus S]$ contains more than $2/3|V|$ vertices.

The decomposition works as follows.
If  $n \leq 8k + 6$, we return a tree decomposition 
$(T,B)$ consisting of a tree with one node $r$ (the root node) and a mapping 
$B$ with $B(r) = V$.
Otherwise, we search for a so-called balanced $X$-separator $S$ of size 
$2k + 2$ that divides $G$ such that $G[V 
\setminus 
S]$ consists of~$x \ge 2$ connected components $\Gamma = \{G_1, 
\ldots, G_x\}$.
Then, we create a new tree $T$ with a root node $r$, a mapping $B$, and set 
$B(r)$ to $X \cup S$.
For each graph $G_i \in 
\Gamma$ with $1 \leq i \leq x$, we proceed recursively with $G' = G_i[V(G_i) 
\cup 
S]$ and $X' = ((X \cap V(G')) \cup S)$. 
Every recursive call returns a tree decomposition $(T_i, B_i)$
($i=1,\ldots,x$). 
We connect the root of $T_i$ to $r$, 
we then set $B(w) = B_i(w)$ for all nodes $w \in T_i$. 
After processing all elements of $\Gamma$ return the tree decomposition $(T,B)$.

Since a balanced $X$-separator is used, the tree 
 has a depth of 
$O(\log n)$, and thus the recursive algorithm produces
at most $O(\log n)$ stack frames on the call 
stack---each stack frame is associated with a node $w$ of $T$. 
A standard implementation of the algorithm needs a new graph 
structure for each recursive call. In the worst-case, each of these graphs 
contains $2/3$ of the vertices of the previous graph.
Thus, the graphs on the stack frame use $\Theta((n+m) \log n) = \Theta(kn \log 
n)$ bits.
Storing the tree decomposition $(T,B)$ requires $\Theta(kn \log n)$ bits 
 as well. The various other 
structures needed can be realized within the same space bound.
In conclusion, a standard implementation of Reed's algorithm requires 
$\Theta(kn \log 
n)$ 
bits.

The next lemma shows a space-efficient implementation
for finding a balanced $X$-separator.

\begin{lemma}\label{lem:sep}
	Given an $n$-vertex graph $G=(V,E)$ with treewidth $k$ and $X \subseteq V$ of at most $6k + 6$ vertices,
	there is an algorithm for finding a balanced $X$-separator of 
	size $2k + 2$ in~$G$ that, for some constant~$c$, 
	runs in $O(c^k n \log^* n)$ time and 
	uses $O(n + k^2 (\log k) \log n)$ bits.
\end{lemma}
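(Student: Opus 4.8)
The plan is to reduce the search to $c^{k}$ minimum vertex-cut computations, each carried out with the space-efficient $s$-$t$-separator algorithm of Corollary~\ref{cor:speffVertexSeparator}, and to certify each candidate with a space-efficient BFS (Theorem~\ref{th:se-DFS}).
I first record why a suitable separator exists: fix (only for the analysis, never computed) a tree decomposition of $G$ of width $k$. Spreading unit weight over the vertices of $X$ and taking a centroid bag of the decomposition tree yields a set $\B_{1}$ with $|\B_{1}|\le k+1$ such that every component of $G[V\setminus\B_{1}]$ holds at most $\tfrac12|X|\le\tfrac23|X|$ vertices of $X$; doing the same with unit weight on all of $V$ yields $\B_{2}$ with $|\B_{2}|\le k+1$ such that every component of $G[V\setminus\B_{2}]$ has at most $\tfrac12 n\le\tfrac23 n$ vertices. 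Then $S:=\B_{1}\cup\B_{2}$ has $|S|\le 2k+2$, and since every component of $G[V\setminus S]$ is contained both in a component of $G[V\setminus\B_{1}]$ and in one of $G[V\setminus\B_{2}]$, it is a balanced $X$-separator. So it suffices to compute, separately, a $\tfrac23$-$X$-balanced separator $S_{1}$ and a $\tfrac23$-$V$-balanced separator $S_{2}$, each of size at most $k+1$.

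\textbf{The $X$-balanced part (routine).}
Enumerate all $3^{|X|}\le 3^{6k+6}=c^{k}$ labelings $\lambda\colon X\to\{A,B,\bot\}$; write $X_{A},X_{B},X_{\bot}$ for the preimages and skip $\lambda$ unless $|X_{A}|,|X_{B}|\le\tfrac23|X|$. Using a graph interface in the style of Lemma~\ref{lem:transform}, provide on the fly the graph $G^{*}$ obtained from $G[V\setminus X_{\bot}]$ by contracting $X_{A}$ into one vertex $s$ and $X_{B}$ into one vertex $t$; contracting the $O(k)$-element sets $X_{A},X_{B}$ raises the treewidth by only $O(k)$, so $G^{*}$ still has treewidth $O(k)$ and at most $n$ vertices, and any vertex separator of $G^{*}$ avoids $s,t$ and hence avoids $X_{A}\cup X_{B}$ entirely. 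Run Corollary~\ref{cor:speffVertexSeparator} on $G^{*}$ (with a treewidth bound $\Theta(k)$) to get a minimum $s$-$t$-separator $S'$, and set $S_{1}:=S'\cup X_{\bot}$. Every component of $G[V\setminus S_{1}]$ contains $X$-vertices from at most one of $X_{A},X_{B}$, hence at most $\tfrac23|X|$ of them, so $S_{1}$ is a valid candidate whenever $|S_{1}|\le k+1$. The labeling induced by $\B_{1}$ (put $x\in X_{\bot}$ iff $x\in\B_{1}$, otherwise $A/B$ according to the side of $x$) is admissible and satisfies $|S'|+|X_{\bot}|\le|\B_{1}|\le k+1$, because $\B_{1}\setminus X$ separates $X_{A}$ from $X_{B}$ in $G^{*}$; so the enumeration finds such an $S_{1}$.

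\textbf{The $V$-balanced part (the obstacle), and finishing.}
Here $X$ gives no leverage and we cannot enumerate subsets of $V$; constructing $S_{2}$ with the $s$-$t$-cut primitive is the technical core. I would again reduce to minimum $s$-$t$-cuts, but now the terminals must be chosen so that the resulting cut is \emph{forced} to be $\tfrac23$-$V$-balanced. The route I see is iterative: keep a separator $S$ of size $\le k+1$ (starting from the trivial one); while $G[V\setminus S]$ has a component $C$ with $|C|>\tfrac23 n$ — necessarily unique, with $|V\setminus C|<\tfrac13 n+(k+1)$ — cut $C$ down by emulating, inside $G[C\cup(N(C)\cap S)]$, a centroid bag of a width-$k$ tree decomposition with respect to $C$, which can be done by enumerating the $3^{k+1}=c^{k}$ labelings of the boundary $N(C)\cap S$ together with one $s$-$t$-cut per labeling; this yields a new size-$\le k+1$ separator whose components all have size $\le\tfrac23 n$ (the former small side stays below $\tfrac23 n$). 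Bounding the number of rounds (the big component shrinks by a constant factor, giving $O(\log n)$ rounds) folds into a running time $c^{k}n(\log n)\log^{*}n$; removing that extra $\log n$ factor — either by charging rounds more tightly or by a one-shot construction of the centroid-type cut — is exactly the delicate point, as is guaranteeing that the gadgets and contractions keep every intermediate instance at $O(n)$ vertices and treewidth $O(k)$. For each admissible labeling (for $S_{1}$) and each round (for $S_{2}$) I would verify the candidate directly: run the BFS of Theorem~\ref{th:se-DFS} on $G$ with the $O(n)$-bit marking of the candidate removed, accumulating per component its size and its number of $X$-vertices, and output the first candidate below both $\tfrac23$ thresholds. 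Each of the $c^{k}$ iterations costs one call of Corollary~\ref{cor:speffVertexSeparator} on an $O(n)$-vertex, treewidth-$O(k)$ graph plus one BFS, i.e.\ $n\cdot\mathrm{poly}(k)\cdot\log^{*}n$ time and $O(n+k^{2}(\log k)\log n)$ bits (the bookkeeping — the list of $X$ with its labels and the candidate bit array — needs only $O(n+k\log n)$ bits), so after absorbing the $\mathrm{poly}(k)$ factors into the constant the totals are $O(c^{k}n\log^{*}n)$ time and $O(n+k^{2}(\log k)\log n)$ bits.
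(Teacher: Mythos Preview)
Your treatment of the $X$-balanced separator $S_1$ is essentially the paper's: enumerate the $3^{|X|}$ three-way splits of $X$ and hand each to the space-efficient vertex-cut routine of Corollary~\ref{cor:speffVertexSeparator}. (The paper adds auxiliary terminals $s',t'$ adjacent to $X_1,X_2$ rather than contracting, but this is cosmetic.)

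The gap is in your construction of $S_{2}$. Your iterative scheme proposes to shrink the large component $C$ by enumerating the $3^{k+1}$ labelings of its boundary $N(C)\cap S$ and taking a minimum $s$-$t$ cut for each. But labeling only the $\le k+1$ boundary vertices constrains nothing about the \emph{sizes} of the resulting pieces of $C$: a minimum cut separating the $A$-labelled boundary from the $B$-labelled boundary need not split $C$ into parts of size $\le\tfrac23|C|$; already for a path with a single boundary vertex the enumeration is vacuous. So the assertion ``this yields a new size-$\le k+1$ separator whose components all have size $\le\tfrac23 n$'' is unsupported, quite apart from the extra $\log n$ you flag. What is missing is not a tighter analysis of your loop --- the reduction inside each round does not do what you need.

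The paper (following Reed) sidesteps the issue by turning the $V$-balance problem back into an $X$-type problem. Run a DFS on $G$ and, bottom-up, maintain for each vertex its number of descendants; whenever this count exceeds $n/(8k+6)$, put the current vertex into a set $R$ and reset the count to zero. Then $|R|\le 8k+6$, and $R$ is spread out enough in the DFS tree that any $R$-separator automatically balances $V$; an $R$-separator $S_2$ is found exactly as $S_1$ was, by the $3^{|R|}=c^{k}$ enumeration plus Corollary~\ref{cor:speffVertexSeparator}, and $S=S_1\cup S_2$ is the balanced $X$-separator --- one shot, no iteration, no $\log n$. The remaining nontrivial step, computing $R$ within $O(n)$ bits, is handled by writing the DFS tree as a balanced-parentheses bit string and storing each subtree count as a self-delimiting number in the space between the matching parentheses (overwriting the children's counts, which are no longer needed); a second identical DFS translates the selected positions back to vertex names.
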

\begin{proof}
	We now sketch 
Reed's ideas to compute 
a balanced 
$X$-separator. 
To compute a balanced $X$-separator we compute first an $X$-separator 
$S_1$.
To make it balanced, we compute an additional $R$-separator $S_2$ 
where $R$ is a set of vertices that is in some sense equally distributed in $G$.
Then $S = S_1 \cup S_2$ is a balanced $X$-separator.

		Reed computes an $X$-separator by 
		iterating over 
		all 
		$3^{|X|}$ possibilities 
		to split $X$ into three vertex-disjoints sets $X_1$, $X_2 \subseteq V$ and 
		$X_S$ 
		with $|X_S| \le k$ and $|X_1|, |X_2| \le \max\{k, 2/3|X|\}$.
		For each iteration
		connect two new vertices $s'$ and $t'$ with all vertices of $X_1$ and $X_2$,
		respectively, compute vertex disjoint paths with Corollary~\ref{cor:speffVertexSeparator} 
		to find a separator $S$ and check if $X_S \subseteq S$ holds.

We now shortly describe Reed's computation of the set $R$.
Run a DFS on the graph $G$ and compute in a bottom up process for each 
vertex~$v$ of the 
resulting DFS tree the number of descendants of~$v$. Whenever this number 
exceeds $n / (8k + 6)$,  add~$v$ to the initially empty set 
$R$ and reset the number of descendants of $v$ to zero.
At the end of the DFS, the set $R$ consist of at most $8k + 6$ vertices,
which can be  used to compute  
a balanced $X$-separator as described above.
		Similar to the $X$-separator, we compute an $R$-separator.

We now show how to compute $R$ using $O(n)$ bits.
%
%
%
%
%
		To prove the lemma, we use the following observation.
		Whenever the number of descendants for a node $u$ is computed, 
		the numbers of $u$'s children are not required anymore.
		The idea is to use a {\em balanced parentheses} representation, which
		consists 
		of
		an 
		open parenthesis for 
		every node $v$ of a tree, followed by the balanced parentheses representation of the 
		subtrees 
		of every child of $v$, and a matching closed parenthesis. 
		Consequently, if $v$ is a vertex with $x$ 
		descendants 
		having its open parenthesis at 
		position $i$ and its closed parenthesis 
		at position $j$, then the difference 
		between $i$ and $j$ is $2x$. 
		Note that, taking an array $A$ of $2n$ bits,
		we can store the number of descendants
		of $v$ in $A[i \ldots j]$ as a so-called self-delimiting
		number by writing $x$ as $1^x0$. 
		This means that we overwrite the self-delimiting
		numbers stored fo the descendants of the children of~$v$.
		
		To construct $R$ 
		we run a space-efficient DFS
		twice, first to 
		construct a balanced parentheses representation of the DFS tree, which is
		used to compute the descendants of each vertex in the DFS tree and so 
		choose
		vertices for the set $R$, and a second time to translate the \textsc{id}s of the 
		chosen vertices since 
		the balanced parentheses  representation is an ordinal tree,
		i.e., we lose our original vertex \textsc{id}s and the vertices get a numbering
		in the order the DFS visited the vertices.
		After choosing the vertices that belong to the set $R$
		and marking them in a bit array $R'$, we
		run the DFS again and create a bit array $R^*$
		that marks every vertex $v$ that the DFS visits as the $i$th vertex
		if and only if $i$ is marked in $R'$.

			It remains to show how to compute the bit array $R'$.
			Let $P$ be a bit array of $2n$ bits storing the balanced parentheses  
			representation, and let $A$ be a bit array of $2n$ bits that we use to 
			store 
			the numbers of descendants for some vertices.
			Note that a leaf is identified by an open parenthesis
			followed by an immediately closed parenthesis.
			Moreover, since the balanced representation is computed via a DFS in 
			pre-order, we will visit the vertices by running through~$P$
			in the same order. Note that Munro and Raman~\cite{MunR97} showed a succinct data 
			structure for balanced representation that initializes in~$O(n)$ time 
			and allows to compute the position of a 
			matching parenthesis, i.e., given an index $i$ of an open (closed) 
			parenthesis 
			there is an operation $\op{findclose}(i)$ ($\op{findopen}(i)$) that 
			returns the 
			position~$j$ of the matching closed (open) parenthesis.
			
			The algorithm starts in Case $1$ with $i = 1$ ($i \in \{1, \ldots, 
			2n\}$).
			We associate $0$ with an open parenthesis
			and $1$ with a closed parenthesis.
			\begin{description}
				\item[Case 1] Iterate over $P$ until a leaf is found at position 
				$i$, i.e.,
				find an~$i$ with $P[i] 
				= 0 \land P[i + 1] = 1$.
				Since we found a leaf we write a $1$ as a self-delimiting number in 
				$A[i \ldots i + 1]$.
				Set $i := i + 2$ and check if $P[i] = 1$. If so, move to Case 2, 
				otherwise 
				repeat Case $1$.
				
				\item[Case 2] At position $i$ is a closing parenthesis, i.e., $P[i] 
				= 1$.
				In this cases we reached the end of a subtree with $j = 
				\op{findopen}(i)$ 
				being the position of a corresponding open parenthesis.
				That means we have already computed all numbers for the whole 
				subtree.
				Using an integer variable $x$, sum up all the self-delimiting 
				numbers in 
				$A[j + 1 \ldots i - 1]$.
				Check if the sum $x + 1$ exceeds $\ell$. If it does write $0$ as a 
				self-delimiting number in $A[j \ldots i]$ and set $R'[j] = 1$, otherwise 
				write $x + 1$ in $A[j \ldots i]$.
				Note that we store only one self-delimiting number between an open
				parenthesis and its
				matching closed parenthesis, and this number does not necessary 
				occupy 
				the whole space available. Hence, using $\op{findclose}$ operation 
				we 
				jump to the end of the space that is reserved for a number and 
				start 
				reading the second.
				
				After writing the number we set $i := i + 1$.
				We end the algorithm if $i$ is out of $P$, otherwise we check in 
				which 
				case 
				we fall next and proceed with it.
			\end{description}
			
			{\bfseries Efficiency:}
			To compute the set $R$, we run two space-efficient DFS
			with $O(n)$ bits in $O(m + n \log^* n) = O(nk \log^* n)$ time
			(Lemma~\ref{lem:spaceDFS}).
			The required $X$-separator and $R$-separator are computed
			in $O(n (k \log k + \log^*n) k^4 \log^3 k)$ time (Corollary~\ref{cor:speffVertexSeparator}).
			We so get a balanced $X$-separator in
			$(3^{|X|}((nk \log^* n) + (n (k \log k + \log^*n) k^4 \log^3 k))) = 
			O(c^k n \log^* n)$ time for some constant $c$.

The structures $P$, $A$, and the space-efficient DFS use $O(n)$ bits.
The size of the sets $X_1$, $X_2$, $X_S$ and $R$ are bound by $O(k)$ and so use $O(k \log n)$ bits.
Corollary~\ref{cor:speffVertexSeparator}
uses $O(n + k^2 (\log k) \log n)$~bits, which is the bottleneck of our space bound.
\end{proof}

\section{Iterator for Tree Decompositions using $O(kn)$ bits}\label{sec:stream}

We now introduce our 
iterator by showing a data structure, which 
we 
call 
{\em tree-decomposition iterator}. We think of it as an agent moving through a 
tree decomposition $(T,B)$, one node at a time in a specific order. 
We implement such an agent to traverse $T$  in the order of an 
\mbox{Euler-traversal} and, when visiting some node $w$ in $T$, 
to 
return the tuple $(B(w), d_w)$ with~$d_w$ being the depth of the node~$w$.

The tree-decomposition iterator provides the following operations:
\begin{itemize}
	\item $\op{init}(G, k)$: Initializes the structure for an undirected $n$-vertex graph $G$ with treewidth $k$.
	\item $\op{next}$: Moves the agent to the next node according to
	an Euler-traversal and returns \textsc{true} unless the traversal of $T$ 
	has 
	already 
	finished. In that case, it returns \textsc{false}.
	\item $\op{show}$: Returns the tuple $(B(w), d_w)$ of the node $w$ where 
	the 
	agent is currently positioned. 
\end{itemize} 

We refer to initializing such an iterator and using it to iterate (call 
$\op{show}()$ 
after every call of  
$\op{next}()$) over the entire tree 
decomposition $(T,B)$ of a graph $G$ as  {\em iterating over a tree-decomposition} 
$(T,B)$ of $G$. 
Our goal in this section is to show that we can iterate over the bags of 
a tree 
decomposition by using~$O(kn)$ bits and 
$c^{k}n \log n \log^*n$ time for some constant~$c$.
%
Recall that Reed's algorithm computes a tree decomposition using separators.
Assume we are given a separator~$S$ in~$G$.
The separator divides $G$ into several connected components, which we 
have to find for a recursive call.
We refer to a data structure that implements the functionality
of the lemma below as a {\em connected-component finder}.
In the next lemma we use a choice dictionary~\cite{Hag18cd,HagK16,KamS18c}
that manages a subset $U'$ of a universe $U = \{1, \ldots, n\}$
and provides---besides constant-time operations contains, add and remove---a linear-time iteration over $U'$.

\begin{lemma}\label{lem:connectedcomponent}
Given an undirected $n$-vertex $m$-edge  graph $G=(V,E)$ and a vertex set $S
\subseteq V$,  there is an algorithm that iterates over all connected components of
$G[V \setminus S]$. The iterator is realized by the following operations.
\begin{itemize}
	\item $\op{init}(G, S)$: Initializes the iterator.
	\item $\op{next}$: Returns true exactly if there is another connected component left to iterate over.
	Returns false otherwise.
	\item $\op{show}$: Returns the triple $(C, p, n')$ where $C$ is 
	choice dictionary containing all $n'$
	vertices of a connected component, and $p \in C$.
\end{itemize}
The total runtime of all calls of next is $O(n + m)$ unless $\op{next}$
returns false,
and the running time of $\op{init}(G, S)$ as well as of $\op{show}$ 
is $O(1)$. All operations use $O(n)$ bits.
\end{lemma}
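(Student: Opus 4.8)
The plan is to realize the iterator with two choice dictionaries and a single monotone scan pointer, charging essentially all work to $\op{next}$ so that $\op{init}(G,S)$ and $\op{show}$ become trivial. Concretely, $\op{init}(G,S)$ allocates an initially empty choice dictionary $\id{Vis}$ of \emph{already reported} vertices, an initially empty choice dictionary $C$ for the current component, an integer counter $n'$, and a scan pointer $p^{\ast}:=1$; since choice dictionaries admit constant-time initialization, this runs in $O(1)$ time and $O(n)$ bits. Following the convention used elsewhere in the paper, I assume $S$ is supplied as an $n$-bit array, so testing whether $v\in S$ takes $O(1)$ time.

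For $\op{next}$, I would first clear $C$ and set $n':=0$ --- either by constant-time reinitialization of the choice dictionary, or by iterating over and removing its current elements, whose cost is then charged against the size of the component reported last time. Next, advance $p^{\ast}$ through $1,2,\dots$, skipping every index $v$ with $v\in S$ or $v\in\id{Vis}$; if $p^{\ast}$ exceeds $n$, return \textsc{false}. Otherwise, with the found vertex $v$ as source, run the space-efficient BFS of Theorem~\ref{th:se-DFS} on $G$ under the restriction that vertices of $S$ are treated as absent, so the search stays inside the connected component of $v$ in $G[V\setminus S]$; for every vertex $u$ the BFS discovers, add $u$ to $C$ and to $\id{Vis}$ and increment $n'$. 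Finally set $p:=v$, store the triple $(C,p,n')$, and return \textsc{true}. The operation $\op{show}$ simply hands back this stored triple in $O(1)$ time, and by construction $p\in C$.

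Correctness rests on the invariant that the scan pointer only ever skips a vertex that is in $S$ or already lies in a previously reported component; hence every not-yet-reported vertex of $V\setminus S$ has index at least $p^{\ast}$, so each connected component of $G[V\setminus S]$ is chosen as a BFS source exactly once, and since a BFS started at $v$ explores precisely the component of $v$, the returned $C$ equals that component. For the time bound, $p^{\ast}$ is monotone and passes through each of the $n$ indices only once in total, while the BFS calls are over pairwise disjoint subgraphs of $G$, so across all calls of $\op{next}$ every vertex and every edge is touched only $O(1)$ times; by Theorem~\ref{th:se-DFS} the BFS calls together take $O(n+m)$ time, which also absorbs the scanning and the amortized clearing of $C$. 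For space, $\id{Vis}$, $C$, the array for $S$, and the working memory of the space-efficient BFS each fit in $O(n)$ bits, while $p^{\ast}$ and $n'$ need $O(\log n)$ bits. The only delicate point --- and the one I would be most careful about --- is keeping $\op{init}$ at $O(1)$ (so it may not touch $V\setminus S$ up front) and clearing $C$ without overshooting the $O(n+m)$ budget; this is exactly why the setup work is deferred into $\op{next}$ and why a plain BFS queue of $\Theta(n\log n)$ bits must be replaced by the space-efficient BFS of Theorem~\ref{th:se-DFS}.
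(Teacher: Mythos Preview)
Your proposal is correct and follows essentially the same approach as the paper: a monotone scan pointer over $V$ that skips vertices in $S$ and in already-reported components, combined with a space-efficient BFS restricted to $G[V\setminus S]$ to collect each new component into a choice dictionary. The only cosmetic differences are that the paper uses a bit array $D$ instead of a choice dictionary for the visited set and allocates a fresh $C'$ per call rather than clearing $C$, but the amortization argument and space analysis are identical.
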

\begin{proof}
The iterator is initialized by creating a bit array~$D$
to mark all vertices that are part of connected components 
over which we already have iterated.
To hold the current state of the iterator 
and answer a call of $\op{show}$ we store the triple $(C = \emptyset, p = 0, n' = 0)$
(as defined in the lemma).
Technically, to avoid modifications of the
internal state of the iterator, we maintain a copy of $C$ that we return if $\op{show}$ is called.

If $\op{next}$ is called, we iterate over $V$ starting at vertex $p$
until we either find a vertex $p' \in V \setminus (S \cup D)$ or
reach $|V| + 1$ (out of vertices).
If $p'$ exists, we have found a connected component whose vertices are not part of $D$.
We prepare a possible call of $\op{show}$ by computing new values
for $(C, p, n')$ as follows:
using a space-efficient BFS explore the connected component in $G[V \setminus S]$ starting at~$p'$.
Collect all visited vertices in a choice dictionary $C'$ 
as well as add them to $D$.
Set $C := C', p := p'$ and $n' = |C'|$. Finally output true.
Otherwise,~$p'$ does not exits and we reached $|V| + 1$, set $p' = |V| + 1$ and return false.
For a call of $\op{show}$ return $(C, p, n')$.

A choice dictionary can be initialized in constant time, thus 
$\op{init}$ as well as $\op{show}$ run in constant time.
The operation $\op{show}$ has to scan $V$ for the
vertex $p'$ in $O(n)$ time and runs a BFS in $O(n + m)$ time to explore
the connected component containing~$p'$.
However, the total time of all $\op{next}$ operation is $O(n + m)$
since the operation continues the scan in $V$ from $p$ (the last found $p'$)
and avoids the exploration all vertices in $D$, i.e., all already explored connected components.
The bit array $D$ as well as the choice dictionary $C$ use $O(n)$ bits.

\end{proof}

To implement our tree-decomposition iterator we turn Reed's recursive algorithm~\cite{Reed92} into an iterative version. 
For this we use a stack
structure called {\em record-stack} that manages a set of data structures to 
determine the current state of the algorithm.  Informally, the record-stack
allows us to pause Reed's algorithm at specific time-points and continue
from the last paused point.  With each recursive call of Reed's algorithm we
need the following information: 
\begin{itemize}
	\item an undirected $n_i$-vertex graph $G_i=(V_i,E_i)$ ($i=0,1,2,\ldots$) of treewidth $k$,
	\item a vertex set $X_i$, a separator $S_i$, and
	\item an instance $F_i$ of
the connected-component-finder data structure
 that iterates over the connected components of $G[V_i \setminus S_i]$
and that outputs the vertices of each component 
in a bit array.
\end{itemize}

We call the combination
of these elements a {\em record}.  
Although we use a record-stack
structure, often we think of it to be a combination of
specialized stack structures: a {\em subgraph-stack}, which manages to store
the recursive graphs used as a parameter for the call of Reed's algorithm, a
stack for iterating over the connected components of $G[V \setminus S]$,
called {\em component-finder stack}, a stack containing the separators as
bit arrays, called {\em$\mathcal{S}$-stack}, a stack containing the vertex
sets $X$ as bit arrays, called $\mathcal{X}$-stack. %
The bit arrays $S_i$, $X_i$ and $F_i$ contain information referring to $G_i$
and are thus of size $O(n_i)$.
On top of $S_i$ and $X_i$ we create rank-select data
structures.
Pushing a record
$r_{\ell+1}=(G_{\ell+1},S_{\ell+1},X_{\ell+1},F_{\ell+1})$ to the record-stack
is equivalent to pushing each element in $r_{\ell+1}$ to the corresponding
stack (and analogous for popping).

\begin{lemma}\label{cor:recordstack}
	When a record-stack $R$ is initialized for an undirected $n$-vertex graph 
	$G$ with treewidth $k$ such that each subgraph $G_i$  of $G_0=G$ on the 
	subgraph-stack of $R$ contains 2/3 of the vertices of $G_{i-1}$ for $0 < i 
	< \ell$ and $\ell=O(\log n)$, then the record-stack occupies 
	$O(n + k \log^2 n)$ bits plus $O(kn)$ bits for the subgraph stack.
\end{lemma}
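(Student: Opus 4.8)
The plan is to bound the record-stack by accounting separately for its constituent stacks---the subgraph-stack, the $\mathcal{S}$-stack (separators $S_i$), the $\mathcal{X}$-stack (the sets $X_i$), and the component-finder stack (the instances $F_i$)---together with the per-record bookkeeping of the $\ell=O(\log n)$ records. The single quantitative fact driving everything is that, by hypothesis, the vertex counts $n_0=n\ge n_1\ge\cdots\ge n_\ell$ of the graphs on the stack satisfy $n_i\le(2/3)^i n$, so $\sum_{i=0}^{\ell}n_i = O(n)$; hence any family of structures in which level~$i$ occupies only $O(n_i)$ bits sums to $O(n)$ bits over the whole stack.

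First I would handle the subgraph-stack. Each $G_{i+1}$ is an induced subgraph of $G_i$---it equals $G_i[V(C)\cup S_i]$ for a component $C$ of $G_i[V_i\setminus S_i]$ and the separator $S_i$, and is proper whenever Reed's algorithm actually recurses---hence $G_{i+1}$ has treewidth at most~$k$ and therefore at most $m_i=O(kn_i)$ edges~\cite{ROBERTSON199565}; with $n_i\le(2/3)^i n$ this gives $m_i=O(k(2/3)^i n)$. The subgraph-stack keeps, for each of its $\le\ell$ levels, the pair of bit arrays describing $G_i$ inside $G_{i-1}$, which occupies $O(n_{i-1}+m_{i-1})$ bits; summing, $\sum_i(n_i+m_i)=O(\sum_i n_i)+O(k\sum_i n_i)=O(n)+O(kn)=O(kn)$ bits. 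This is consistent with the subgraph-stack's own special-case bound $\sum_i(\epsilon_1^i n+\epsilon_2^i m)$~\cite{HagKL19}, which the text instructs us to invoke with $\epsilon=2/3$, and it is the $O(kn)$ summand in the claim.

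Next I would bound the remaining stacks. Each $S_i$ and each $X_i$ refers to $G_i$ and is stored as an $n_i$-bit array together with an $O(n_i)$-bit rank-select structure, so the $\mathcal{S}$-stack and the $\mathcal{X}$-stack each use $\sum_i O(n_i)=O(n)$ bits. By Lemma~\ref{lem:connectedcomponent} applied to $G_i$, each $F_i$ works entirely on $G_i$ and uses $O(n_i)$ bits (a ``finished'' bit array, a choice dictionary for the current component, and an $O(\log n)$-bit cursor), so the component-finder stack uses $\sum_i O(n_i)=O(n)$ bits. Finally, each of the $\ell=O(\log n)$ records keeps a bounded number of counters and pointers ($O(\log n)$ bits: its depth $d_i$, a phase flag marking where Reed's algorithm is suspended, the component-finder cursor) together with an $O(k)$-word cache describing the current separator and the split of $X_i$ at which the separator search is paused---i.e.\ $O(k\log n)$ bits per record, hence $O(k\log^2 n)$ bits over all records. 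Adding the four contributions yields $O(kn)+O(n)+O(k\log^2 n)$, i.e.\ $O(n+k\log^2 n)$ bits plus the $O(kn)$ bits of the subgraph-stack.

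The main obstacle is the subgraph-stack accounting: its clean $O(n+m)$ bound is advertised only when both the vertex and the edge counts of the $G_i$ shrink geometrically, whereas the edge counts need not (deleting a constant fraction of the vertices of, say, a star removes no edges). This is circumvented by the treewidth bound $m_i=O(kn_i)$, which makes $\sum_i m_i=O(kn)$ unconditionally and lets the special-case expression evaluate to $O(kn)$ here. A secondary point that must be checked is that every per-level structure is genuinely of size $O(n_i)$, not $O(n)$---in particular that the component finder of Lemma~\ref{lem:connectedcomponent}, the rank-select structures, and the choice dictionaries are all built over $G_i$ rather than over $G_0$; otherwise the stack of $\Theta(\log n)$ copies would cost $\Theta(n\log n)$ bits instead of $O(n)$.
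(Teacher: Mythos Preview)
Your proof is correct and follows essentially the same approach as the paper: account for the subgraph stack separately via $m_i=O(kn_i)$ and geometric shrinking of the $n_i$ to get $O(kn)$ bits, then observe that each of $S_i$, $X_i$, and $F_i$ uses $O(n_i)$ bits so these stacks sum to $O(n)$, leaving an additive $O(k\log^2 n)$ for per-record bookkeeping. Your treatment is in fact more explicit than the paper's in two places---you spell out why the edge counts need not shrink geometrically on their own (the star example) and why treewidth rescues this, and you give a concrete source for the $O(k\log^2 n)$ term (per-record $O(k)$-word state), whereas the paper only names the $O(k\log n)$ bits for the current bag before asserting the $O(n+k\log^2 n)$ total.
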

\begin{proof}
Let $m$ be the number of edges in $G$.
We know that the size of the subgraph-stack structure is $O(n+m)$ bits when the
number of vertices 
of the subgraphs shrink with every push by a factor $0 < c < 1$. Since 
each subgraph of $G_0$ has also a treewidth $k$, the number of edges of each 
subgraph is bound by $k$ times the number of vertices. Thus, the subgraph stack 
uses $O(n+m)=O(kn)$ bits.

The size of the bit arrays $X_i$, $S_i$ (including the respective rank-select 
structures) and the component-finder $F_i$ is 
$O(n_i)$ for $0 \leq i \leq \ell$. This means the total size of the 
stacks 
containing these elements is $O(n)$ bits since they shrink in the same way as 
the 
vertex sets of the subgraphs. Storing the bag that is currently being output 
uses $O(k \log n)$ bits.
Thus, the size of the record-stack
without the subgraph stack is $O(n + k \log^2 n)$ bits.
\end{proof}

We call a tree decomposition $(T, B)$ {\em balanced} 
if $T$ has logarithmic height, 
and {\em binary} if $T$ is binary. Using our 
space-efficient separator computation for 
finding a balanced $X$-separator we are now able to show the following theorem.

\begin{theorem}\label{thm:spaceefficientiterator1}
	Given an undirected $n$-vertex
	graph $G$ with treewidth $k$,
	there exists an iterator that outputs a balanced and binary
tree de\-com\-po\-si\-tion $(T,B)$ of width $8k+6$ in Euler-traversal order 
using $O(kn)$ bits and
$c^{k}n
\log n \log^*n$ time for some constant $c$.
\end{theorem}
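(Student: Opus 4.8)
The plan is to transform Reed's recursive procedure (Section~\ref{sec:reedsalg}) into an iterative one whose entire state lives on the record-stack of Lemma~\ref{cor:recordstack}, and to implement $\op{next}$ as one step of a depth-first traversal of the (never materialized) recursion tree; that recursion tree, together with the bags $X_i\cup S_i$, is exactly the tree decomposition $(T,B)$. On $\op{init}(G,k)$ we push the root record $r_0=(G_0{=}G,\ X_0{=}\emptyset,\ S_0,\ F_0)$, where $S_0$ is a balanced $X_0$-separator of size $2k+2$ obtained from Lemma~\ref{lem:sep} and $F_0$ is a connected-component finder (Lemma~\ref{lem:connectedcomponent}) for $G_0[V_0\setminus S_0]$; if $n_0\le 8k+6$ the record is a \emph{leaf record} carrying the bag $V_0$ and no separator. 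A call of $\op{show}$ reads off, from the top record $r_\ell$, the pair $(X_\ell\cup S_\ell,\ d)$ (bag $=V_\ell$ for a leaf record), where $d$ is a running counter holding the current depth in $T$.

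A call of $\op{next}$ performs one DFS step on the top record $r_\ell$. If $r_\ell$ is a leaf record, or if its component finder $F_\ell$ is exhausted, we $\op{pop}$ $r_\ell$ together with the matching entries of the sub-stacks (subgraph stack, $\mathcal S$-stack, $\mathcal X$-stack, component-finder stack); if the stack becomes empty we return \textsc{false}, otherwise we return \textsc{true} and the agent is now back at the parent, which the next $\op{show}$ reports. Otherwise we query $F_\ell.\op{next}$ for the next component $C$ of $G_\ell[V_\ell\setminus S_\ell]$ and $\op{push}$ a new record for $G'=G_\ell[C\cup S_\ell]$ — represented on the subgraph stack by the bit arrays of the surviving vertices and arcs of $G_\ell$ — with $X'=(X_\ell\cap V(G'))\cup S_\ell$, a balanced $X'$-separator $S'$ from Lemma~\ref{lem:sep}, and a fresh component finder $F'$, unless $|V(G')|\le 8k+6$, in which case the record is a leaf record. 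Reed's invariants are maintained: $X_\ell$ has at most $6k+6$ vertices throughout, since the balanced-$X$-separator property yields $|X'|\le\tfrac23|X_\ell|+|S_\ell|$; consequently $(T,B)$ has width $8k+6$; and $|V(G')|\le c'\,|V(G_\ell)|$ for a fixed constant $c'<1$ (from the $2/3$-balance together with $|V(G_\ell)|>8k+6$), which is precisely the geometric-shrinkage hypothesis of Lemma~\ref{cor:recordstack} and of the subgraph stack. This DFS emits every node of $T$ on first entry and once more after finishing each of its children, i.e.\ in Euler-traversal order.

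To obtain a \emph{binary} $T$, we route the children of each record through a fixed balanced binary tree of copies of its bag: a record with $d$ children is expanded into a balanced binary tree whose $d-1$ internal nodes all carry the bag $X_\ell\cup S_\ell$ and whose $d$ leaves host the $d$ child records. This introduces no new vertex into any bag, preserves (TD1) and (TD2) — inside a copy tree all nodes share one bag, and any vertex not in that bag still occurs only within a single child's subtree — keeps the width $8k+6$, and, since the recursion depth is $O(\log n)$, leaves $T$ balanced. The iterator walks this copy tree between two consecutive visits of a record; its shape is fixed once $d$ is known, which we learn by one extra linear scan of $F_\ell$ (re-initialized afterwards in $O(1)$ time) before the first child is pushed. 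This adds only $O(n\log n)$ extra emitted nodes in total, $O(\log n)$ extra bits per record for the current position inside the copy tree, and keeps the depth counter $d$ consistent with $T$.

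For the space bound, Lemma~\ref{cor:recordstack} gives $O(n+k\log^2 n)$ bits for the record-stack plus $O(kn)$ bits for the subgraph stack (using $m=O(kn)$ for treewidth $k$ and the geometric shrinkage established above); the calls of Lemma~\ref{lem:sep} and Lemma~\ref{lem:connectedcomponent} are transient, using $O(n+k^2(\log k)\log n)$ and $O(n)$ bits, so the working memory is $O(kn)$ bits in total. For the time bound, each node of the recursion tree triggers one application of Lemma~\ref{lem:sep} — costing $O(c^k n'\log^* n')$ on a subgraph of $n'$ vertices — one linear component scan, and subgraph-stack accesses with an $O(\log^* n)$ factor; since the subgraph sizes shrink by a constant factor, the subgraph sizes summed over all nodes is $O(n\log n)$, so the whole traversal runs in $c^k n\log n\log^* n$ time, the linear-per-level and $O(n\log n)$ binarization work being absorbed. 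The hard part will be the traversal bookkeeping: realizing the correct multiplicity of visits of each node and weaving in the binary copy trees purely from the state on the record-stack, without ever building $T$ and without exceeding the stated time and space; once the earlier lemmas are used as black boxes, the space and time accounting is routine.
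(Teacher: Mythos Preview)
Your overall plan—realize Reed's recursion iteratively on the record stack and emit bags along a DFS of the (never materialized) recursion tree—matches the paper's proof, and your space/time accounting is sound. The divergence is in how you make $T$ binary, and there your argument has a real gap.

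You replace a node with $d$ children by a balanced binary copy tree of depth $O(\log d)$, then assert that ``since the recursion depth is $O(\log n)$, $T$ stays balanced.'' But the depth of the binarized tree along a root-to-leaf path is $\sum_i O(\log d_i)$ plus the original recursion depth, and $\sum_i \log d_i$ can be $\Theta(\log^2 n)$. For instance, if at each level the balanced $X$-separator leaves one component of size $\tfrac23 n_i$ together with $\Theta(n_i)$ singleton components, then $d_i=\Theta(n_i)=\Theta((2/3)^i n)$, and summing $\log d_i$ over $i=0,\ldots,\Theta(\log n)$ gives $\Theta(\log^2 n)$. So your tree decomposition can have height $\Theta(\log^2 n)$, which is not ``balanced'' in the paper's sense (logarithmic height); the applications in Section~\ref{sect:appl} rely on $O(\log n)$ height to bound the number of stored tables.

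The paper avoids this by a different binarization: rather than inserting copy nodes, it greedily merges the connected components of $G_\ell[V_\ell\setminus S_\ell]$ into exactly two groups $C_1,C_2$ with $|C_1|,|C_2|\le\tfrac23|V_\ell|$ (always possible, since every single component already has at most $\tfrac23|V_\ell|$ vertices), and then recurses on $G_\ell[C_1\cup S_\ell]$ and $G_\ell[C_2\cup S_\ell]$. Every record thus has exactly two children, the $2/3$-shrinkage is preserved on both sides, and the height stays $O(\log n)$. Replacing your copy-tree step by this grouping fixes the gap; the rest of your argument goes through unchanged.
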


\begin{proof}
We implement our tree-decomposition iterator by showing $\op{init}$, $\op{next}$ and $\op{show}$.
We initialize the iterator for a graph $G$ with $n>8k+6$
vertices, by initializing a flag $f=0$, which indicates that the agent
is not yet finished, and initialize a record-stack.  
The record stack is
initialized by first initializing its subgraph stack with a reference to $G$
as the first graph $G_0$.  Next, we push the empty vertex set $X_0$ on the
$\mathcal{X}$-stack in form of an initial-zero bit array $X_0$ of length
$n$.  Now, using the techniques described in Lemma~\ref{lem:sep}, we find a
balanced $X_0$-separator $S_0$ of $G_0$ and push it on the
$\mathcal{S}$-stack. Then we create a new connected-component-finder
instance $F_0$ (Lemma~\ref{lem:connectedcomponent}) and push $F_0$ on the
component-finder stack.  
 
	
	We now view our implementation of $\op{next}()$, which has the task to 
	calculate the next bag on the fly. 
Since the tree $T$ of our tree decomposition does not exist as a real 
structure, 
we only {\em virtually} move the agent to the next node by advancing the state 
of Reed's algorithm.
If $f=1$, we return false (the agent can 
	not be moved) and do not change the state of the record-stack.
Otherwise, we first virtually move the agent and then return true.
	
{\bfseries Move to parent:} 
	If $n_\ell \leq 8k+6$ (we leave a leaf), we pop the record stack.
	
	Otherwise, if the connected-component-finder instance $F_\ell$
	has iterated over all connected components and the 
	record-stack contains more than one record, we pop it.
	If afterwards the record-stack contains only one 
	record, we set $f=1$ (the agent moved to the root from the rightmost child; the traversal is finished).

{\bfseries Move to next child:}
	If $F_{\ell}$ has not iterated over all connected components (the agent is moving to a
previously untraversed node), we use $F_{\ell}$ to get the next
connected-component~$C$ in~$G_\ell[V_\ell \setminus S_\ell]$, we push
the vertex-induced subgraph $G[C \cup S_\ell]$ on the subgraph stack as
$G_{\ell+1}=(V_{\ell+1}, E_{\ell+1})$ and proceed with one of the next two cases.

\begin{itemize}
	\item If $n_{\ell + 1} \leq 8k+6$, we are calculating the bag of a leaf of
        $T$ by setting $B(w)=V_{\ell+1}$.
        We do this by pushing a bit array with all bits set to
        $1$ on the $\mathcal{S}$-stack and $\mathcal{X}$-stack and an empty
        component-finder on the component-finder stack.
	\item If $n_{\ell + 1} > 8k+6$, the agent is moving to a new internal node
	whose bag we are calculating as 
	follows: we push a new bit array $X_{\ell+1}=(X_\ell \cap V_{\ell+1}) \cup 
	S_\ell$ on the $\mathcal{X}$-stack. We then find the balanced 
	$X_{\ell+1}$-separator $S_{\ell+1}$ of $G_{\ell+1}$ and push it on the 
	$\mathcal{S}$-stack. Then, create a new connected component-finder 
	$F_{\ell+1}$ for $G_{\ell+1}$ and $S_{\ell+1}$ and push it on the 
	component-finder stack and return true.
\end{itemize}
        
        Anytime we pop or push a new record, we call the \op{toptune} function 
        of the subgraph stack to speed up the graph-access operations.
    This finishes our computation of $\op{next}()$.

	To implement $\op{show}()$, we return the tuple $(B(w), 
	d_w)$ with $B(w)$ being the current bag, and $d_w$ being the number of 
	records of the record-stack.
	The current bag is defined as $S \cup X$.
	Thus, we iterate over all elements of $S \cup X$ via 
	their rank-select data structures. 
	Note that, since the subgraph 
	$G_{\ell}$ on top of the record stack is toptuned, 
	we can return the bag as vertices of $G_0$ or $G_{\ell}$ in $O(k)$ 
	time.

	{\bfseries Efficiency:}
	The iterator uses a record-stack structure, which occupies $O(kn)$ bits.
	Since the running time of Lemma~\ref{lem:sep} is $O(c^k n \log^*n)$,
	for some constant~$c$, and the input graphs are split
	in each recursion level into
	vertex disjoint subgraphs, 
	the running time in each recursion level is $O(c^k n \log^*n)$.
	Summed over all $O(\log n)$ recursion levels we get 
	a running time of $O(c^k n \log n \log^*n)$.
	

	{\bf Make $T$ binary.}
	The balanced $X$-separator $S$ partitions $V\setminus S$ into any number of 
	vertex 
	disjoint sets between $2$ and $n$ such that no set contains more than $2/3$ 
	of the vertices of $V$ (and $X$). The idea is to combine these vertex sets 
	into 
	exactly two sets such that neither contains more than $2/3|V|$ vertices. 
	For this we change our usage of the connected component finder slightly.
	 After we initialize $F_\ell$ we also initialize two bit arrays 
	 $C_1$ and $C_2$ of size $n_\ell$ each with all bits set to $0$. 
	We also store the number of bits set to $1$ for each of the bit arrays as 
	$s_1$ and $s_2$, i.e., the number of vertices contained in them (initially 
	$0$).
	 We now want to collect the vertices of all connected components of $G_\ell[V_\ell 
	 \setminus S_\ell]$ in $C_1$ and $C_2$. 
	While there are still connected components to be returned by $F_\ell$, this 
	is done by obtaining the size of the next connected component via $F_\ell$ 
	as $s$.
	 If $s_1 + s \leq 2/3|V_\ell|$, we collect the next connected 
	 component in $C_1$ and increment $s_1=s_1+s$. 
	Otherwise, we do the same but for $C_2$ and $s_2$. Doing this until all 
	connected components are found results in $C_1$ and $C_2$ to contain all 
	connected components of $G_\ell[V_\ell \setminus S_\ell]$. 
	For $(C_1,C_2)$ we implement a function that  returns $C_1$ if it was not yet 
	returned, or $C_2$ if it was not yet returned, or null otherwise.
	 We store $(C_1,C_2)$ with the respective functions on the connected 
	 component finder stack (instead of $F_\ell$). 
	Any time we do this during our iterator, 
        the graph $G_\ell$ is 
	toptuned, resulting in constant time graph access operations. 
	The previous runtime and space bounds still hold.
\end{proof}

	Often it is needed to access the subgraph $G[B(w)]$ induced by a {\em bag} $B(w)$ of a tree decomposition $(T,B)$ for further computations. We call such a subgraph {\em bag-induced}. For this we show the following:
	
	\begin{lemma}\label{lem:baginducedgraph}
		Given an undirected $n$-vertex
		graph $G$ with treewidth $k$ and an iterator $\mathcal{A: G \rightarrow (T,B)}$ that iterates
		over a balanced
		tree de\-com\-po\-si\-tion of width $O(k)$ we can additionally
		output
		the bag-induced subgraphs using $O(k^2 \log n)$ bits additional space and $O(kn\log n)$ additional time.
	\end{lemma}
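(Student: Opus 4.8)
The plan is to extend the tree-decomposition iterator of Theorem~\ref{thm:spaceefficientiterator1} so that, whenever the agent sits at a node~$w$, we additionally expose a graph interface for the bag-induced subgraph $G[B(w)]$. The crucial observation is that we must \emph{not} scan adjacency lists of bag vertices in the original graph~$G$: a separator vertex chosen high in Reed's recursion tree can lie in $\Omega(n)$ bags, and if such a vertex has large degree in~$G$, repeatedly scanning its $G$-neighborhood is far too slow. Instead we reuse the record stack that the iterator already maintains, in particular the current subgraph $G_\ell$ on the (toptuned) subgraph stack: here $B(w)=S_\ell\cup X_\ell\subseteq V_\ell$, and $G_\ell$ is an induced subgraph of~$G$, hence has treewidth at most~$k$ and therefore only $O(k n_\ell)$ edges.

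The computation of $G[B(w)]$ is carried out inside each successful call of $\op{next}$, so that its cost is charged there. Using the rank-select structures on $S_\ell$ and $X_\ell$, we enumerate the $O(k)$ bag vertices; for each such vertex~$v$ and each neighbor $u=\op{head}_{G_\ell}(v,j)$ with $1\le j\le \op{deg}_{G_\ell}(v)$, we test in $O(1)$ time whether $u\in B(w)$ by reading $S_\ell[u]$ and $X_\ell[u]$, and if so we append the arc $(v,u)$ to an adjacency-array representation of $G[B(w)]$. To keep entries short we first map the bag vertices bijectively onto $\{1,\dots,|B(w)|\}$, storing a translation table of $O(k)$ fields of $\lceil\log n\rceil$ bits; the resulting structure then has at most $\binom{|B(w)|}{2}=O(k^2)$ arcs of $O(\log k)$ bits each, i.e.\ $O(k^2\log k + k\log n)=O(k^2\log n)$ bits, and it supports the graph interface in constant time. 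If the bag-induced subgraph is wanted in the same vertex names that $\op{show}$ returns for $B(w)$, we translate the $O(k^2)$ arcs through the subgraph stack in $O(k^2)$ extra time. (At a leaf~$w$ this just reconstructs $G_\ell$ itself.)

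For the time bound, the work for a single node~$w$ is dominated by scanning the bag vertices' adjacency lists in $G_\ell$, namely $O\!\bigl(\sum_{v\in B(w)}\op{deg}_{G_\ell}(v)\bigr)=O(|E(G_\ell)|)=O(k n_\ell)$, plus $O(k^2)$ for building and clearing the small arrays. Summing over the recursion tree with exactly the level-by-level accounting used in the proof of Theorem~\ref{thm:spaceefficientiterator1} — the subgraphs occurring at one recursion level have total size $O(n+m)=O(kn)$, and the recursion has depth $O(\log n)$ since the tree decomposition is balanced — yields $O(kn\log n)$ additional time in total. The additional space is $O(k^2\log n)$ bits for the current bag-induced subgraph together with its translation table; the bag enumeration and the arc counters fit in $O(k\log n)$ bits. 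The point that needs care is precisely this reuse of the shrinking subgraph $G_\ell$ instead of~$G$: it is what keeps the per-node cost proportional to $|E(G_\ell)|$ and thus turns the analysis into the claimed $O(kn\log n)$ bound, and it relies on the iterator keeping every $G_\ell$ toptuned so that its adjacency arrays are accessed in constant time.
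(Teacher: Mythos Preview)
Your argument is correct, but it takes a different route from the paper's. The paper does \emph{not} rebuild $G[B(w)]$ from scratch at every node. Instead it keeps a \emph{stack} of $O(\log n)$ bit matrices $M_0,\ldots,M_\ell$ of size $O(k^2)$ bits each (one per record on the record stack), so that returning to a parent in the Euler traversal costs nothing. When a new $M_\ell$ is pushed, it is initialized from $M_{\ell-1}$ for the vertices already present in the parent bag, and only the \emph{new} vertices of $B(w)_\ell$ (those not in $B(w)_{\ell-1}$) have their adjacency lists scanned in $G_\ell$. By property (TD2) every vertex is ``new'' in exactly one bag, so each adjacency list is scanned once over the whole iteration; this is the paper's mechanism for the $O(kn\log n)$ bound.

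Your approach instead stores only the current bag-induced subgraph and rescans the $O(k)$ bag vertices in $G_\ell$ at every move, charging $O(|E(G_\ell)|)$ per node and summing level by level. That reaches the stated space and (with the same level-by-level accounting that the paper itself uses in Theorem~\ref{thm:spaceefficientiterator1}) the stated time. What you lose is exactly the ``scan each vertex once'' guarantee: the paper explicitly notes that its incremental scheme is chosen ``for reasons of performance relevant in Section~\ref{sec:improvementsinspace}'', where the subgraph stack is replaced by the minimal subgraph stack and iterating over an incomplete vertex's neighbors costs $O(m_\ell)$ rather than $O(1)$ per entry. Your recompute-from-scratch variant would not transfer to that setting without further work, whereas the paper's matrix stack does.
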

	\begin{proof}
		To obtain the edges we use a bit matrix $M_{\ell}$ of size $O(k^2)$ 
		whose bit at index $[v][u]$ is set to $1$ exactly if there exists an
		edge~$\{v,u\}$ in~$G[B(w)_\ell]$.
		To quickly
		find the edges in~$G[B(w)_\ell]$ we use $M_\ell$ together with
		rank-select data structures on $S_\ell$
		and $X_{\ell}$ that allow us to map the vertices
		in $B(w)_\ell = S_\ell \cup X_\ell$, names as vertices of $G_\ell$, to $\{1, \ldots, k\}$.
		We create $M_{\ell}$ anytime a new record $r_{\ell}$
		is pushed on the record stack, and anytime $r_{\ell}$ is popped, we throw away 
		$M_{\ell}$.
		For reasons of performance relevant in Section~\ref{sec:improvementsinspace},
		we want to avoid accessing
		vertices multiple times in a graph.
		Hence, when we push a bit matrix $M_{\ell}$, we first use $M_{\ell-1}$ to
		initialize edges that were contained in $B(w)_{\ell-1}$ and are still contained in 
		$B(w)_{\ell}$. 
		To obtain the edges of a vertex $v$ in $B(w)_\ell$,
		we can iterate over all the edges of~$v$ in $G_\ell$
		and check if the opposite
		endpoint is in $B(w)_\ell$. 
		However, by using definition {\textbf (TD2)} of a 
		tree decomposition, it suffice to iterate only over the edges of such a vertex 
		once, i.e., the first time they are contained in a bag
		to create $M_{\ell}$.
		
		{\bfseries Efficiency:}
		We can see that we store at most $O(\log n)$ matrices this way
		since the record stack contains $O(\log n)$ records (i.e., the height of the 
		treedecomposition).
		Storing all bit matrices uses $O(k^2 \log n)$ bits and initializing all bit matrices takes $O(kn\log n)$ time, including initializing and storing the rank-select structures if not already present.
	\end{proof}

We conclude the section with a remark on the output scheme of 
our iterator. 
The specific order of an Euler-traversal encompasses 
many other orders of tree traversal such as pre-order, in-order or post-order. 
To achieve these orders we simply filter the output of 
our iterator, i.e., skip some output values.

\section{Modifying the Record Stack to Work with $O(n + k^2 \log n)$
Bits}\label{subsect:recStack}

The space requirements of the record stack used by the  
iterator shown in Theorem~\ref{thm:spaceefficientiterator1}
 is $O(kn)$ bits.
Our goal in Sect.~\ref{sec:improvementsinspace} is to reduce it to $O(n)$ bits.
The bottleneck here is the record stack.
Assume that, for $\ell = O(\log n)$, a graph $G_\ell = (V_\ell, 
E_\ell)$ with 
$n_\ell$ vertices is on top of the 
record stack.
When considering the record $r_\ell$ on top
of the record stack we see that
most structures use $O(n_\ell)$ bits: a separator $S_\ell$, a vertex set
$X_\ell$ and a connected component finder $F_\ell$. The only structure that 
uses more space is the subgraph 
stack, which
uses $O(kn_\ell)$ bits.  This is due to the storage of the edge set
$E_\ell$ using $O(kn_\ell)$ bits.  The strategy we want to pursue is to
store only the vertices of the subgraphs (but not the edges) such that the space requirement of the
subgraph stack is $O(n)$ bits.  We call such a subgraph stack a {\em minimal
subgraph stack}. 
In the following we always assume that the
number of subgraphs on the minimal subgraph stack is
 $O(\log n)$ and that the subgraphs shrink by a constant factor.
This is in particular the case for the subgraphs generated by Reed's
algorithm.

In the following we make a distinction between {\em
complete} and {\em incomplete} vertices. Only complete vertices 
 have all their original edges, i.e., they have the
same degree in the original graph as they do in the subgraph. 
Note that the number of 
incomplete vertices in
each subgraph is $O(k)$, which follows directly from the separator size.  To
clarify, a vertex in the subgraph $G_{\ell}$ on top of the subgraph stack is
incomplete exactly if it is contained in a separator of the parent graph
$G_{\ell - 1}$.

\begin{lemma}\label{lem:adjacencyentyiteration}
Assume that we are given an undirected $n$-vertex graph $G=(V,E)$ with
treewidth $k$ and a toptuned minimal subgraph stack ($G_0=G,\ldots,G_\ell)$
with $\ell \in O(\log n)$.
Assume further that each graph $G_i$ ($1 \le i \le \ell$) has $n_i$ vertices,
$m_i$ edges and contains $O(k)$ incomplete vertices.
The modified subgraph stack can be realized with $O(n+k^2 \log n)$ bits and allows us to 
push
an
$n_{\ell+1}$-vertex graph $G_{\ell+1}$ on top of a minimal subgraph stack
in $O(k^2n_{\ell} \log^*\ell)$ time. 
The resulting graph interface allows us 
to access the adjacency array
of the complete vertices  in constant time whereas an iteration
over the adjacency list of an incomplete vertex runs in $O(m_\ell)$ time.
\end{lemma}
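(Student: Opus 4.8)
The plan rests on the observation that in Reed's algorithm every subgraph is a \emph{vertex-induced} subgraph of its predecessor, hence of $G_0=G$; so an edge $\{u,v\}$ lies in $E_i$ exactly if $u,v\in V_i$, and $E_\ell$ is completely determined by $V_\ell$ together with the read-only input graph~$G$. The minimal subgraph stack therefore stores, for each level $i$, only the vertex bit array $B_{V_i}$ of $G_i$ relative to $G_{i-1}$ (exactly the vertex component of the subgraph stack of~\cite{HagKL19}, dropping the arc component), together with the rank-select structures maintained by \op{toptune}; by geometric shrinkage these sum to $O(n)$ bits. In addition we keep, for each level $i$, an $O(n_i)$-bit array $I_i$ marking the (at most $O(k)$) incomplete vertices of $G_i$ and an $O(k)\times O(k)$ bit matrix $A_i$ whose entry for a pair of incomplete vertices records whether they are adjacent in $G_i$, where the row/column of an incomplete vertex $v$ is its rank in $I_i$ (so no separate name table is needed). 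The arrays $I_i$ with their rank-select structures sum to $O(n)$ bits, and the matrices $A_i$ to $O(k^2\log n)$ bits over the $O(\log n)$ levels; the total is $O(n+k^2\log n)$ bits, as claimed.

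\textbf{Answering queries.}
If $v$ is \emph{complete} in $G_\ell$, then because $G_\ell=G[V_\ell]$ we have $N_{G_\ell}(v)=N_G(v)$, so $\op{deg}_{G_\ell}(v)=\op{deg}_G(v)$ and $\op{head}_{G_\ell}(v,j)=\op{head}_G(v,j)$, which we return in $O(1)$ time by forwarding to the interface of~$G$ (vertex names are preserved along the stack); whether $v$ is complete is read from the toptuned $I_\ell$ in $O(1)$ time. If $v$ is \emph{incomplete}, we iterate over $N_{G_\ell}(v)$ in two phases. First, we scan $V_\ell$ and, for every \emph{complete} $u\in V_\ell$, walk the adjacency array of $u$ in $G$ (which by completeness equals the one in $G_\ell$) and output $v$ whenever it occurs; each edge of $G_\ell$ joining $v$ to a complete vertex is produced exactly once, at total cost $O\!\left(n_\ell+\sum_{u\ \text{complete}}\op{deg}_{G_\ell}(u)\right)=O(n_\ell+m_\ell)$. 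Second, we read row $v$ of $A_\ell$ and output the incomplete neighbours of $v$ in $O(k)$ time. Hence one full iteration over the adjacency list of an incomplete vertex costs $O(n_\ell+m_\ell)=O(m_\ell)$.

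\textbf{Pushing a new level.}
To push $G_{\ell+1}$ we receive, as for the ordinary subgraph stack, a bit array $B_V$ describing $V_{\ell+1}\subseteq V_\ell$; pushing it with the attendant bookkeeping costs $O(n_\ell\log^*\ell)$. It remains to build $I_{\ell+1}$ and $A_{\ell+1}$. Writing $V_{\ell+1}=C\cup S_\ell$ with $C$ a connected component of $G_\ell[V_\ell\setminus S_\ell]$, the induced-subgraph property gives $N_{G_{\ell+1}}(v)=N_{G_\ell}(v)$ for every $v\in V_{\ell+1}\setminus S_\ell$, so the incomplete vertices of $G_{\ell+1}$ form a subset of $(\mathrm{incomplete}(G_\ell)\cup S_\ell)\cap V_{\ell+1}$, a set of size $O(k)$ that we obtain with one scan of $V_\ell$ plus one walk of each adjacency list of the $O(k)$ vertices of $S_\ell$, and record in $I_{\ell+1}$. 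To build $A_{\ell+1}$ we run, for each of the $O(k)$ incomplete vertices $w$ of $G_{\ell+1}$, the two-phase adjacency iteration above on $G_\ell$ (at $O(m_\ell)$ time per vertex, each stack access costing $O(\log^*\ell)$) and set the entries for the neighbours of $w$ that are incomplete in $G_{\ell+1}$. Since $m_\ell=O(kn_\ell)$ by the treewidth bound, the total push time is $O\!\left(n_\ell\log^*\ell+k\cdot m_\ell\cdot\log^*\ell\right)=O(k^2 n_\ell\log^*\ell)$, and the new incomplete-vertex set has size $O(k)$, so the invariant assumed of the stack is reproduced.

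\textbf{Main obstacle.}
The crux is the incomplete-vertex iteration and its reuse inside \op{push}: we must report the neighbourhood of a vertex $v$ whose \emph{original} degree may be $\Theta(n)$ without ever scanning $N_G(v)$, which forces us to scan \emph{into} $v$ from the complete side (bounding the work by $O(m_\ell)$ rather than $O(\op{deg}_G(v))$) and to keep the $O(k)\times O(k)$ matrix for the incomplete--incomplete edges. The two points that need care are (i) checking that this two-phase scheme reports every edge exactly once and within the stated time, and (ii) checking that recomputing $A_{\ell+1}$ on a push costs only $O(k\cdot m_\ell\cdot\log^*\ell)=O(k^2 n_\ell\log^*\ell)$ and that the incomplete-vertex set stays $O(k)$, so that the hypotheses of the lemma are self-reproducing; the bit accounting (the matrices $A_i$ being the only super-linear contribution) is then routine.
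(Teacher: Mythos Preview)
Your proposal is correct and follows essentially the same approach as the paper: both keep only the vertex bit arrays on the stack, maintain a rank-select structure $I_i$ marking the $O(k)$ incomplete vertices together with an $O(k)\times O(k)$ bit matrix for incomplete--incomplete adjacencies, forward complete-vertex queries directly to $G_0$, and enumerate the complete neighbours of an incomplete vertex by scanning the adjacency arrays of the complete vertices (giving the $O(m_\ell)$ bound). The only cosmetic difference is that the paper builds $M_{\ell+1}$ by first copying surviving entries from $M_\ell$ and then sweeping the complete edges of $G_\ell$, whereas you rebuild $A_{\ell+1}$ by running the two-phase iteration once per incomplete vertex; both cost $O(k\,m_\ell)=O(k^2 n_\ell)$, and your remark that ``vertex names are preserved'' should be read modulo the $O(1)$ toptuned translation $\phi$ the paper makes explicit.
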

\begin{proof}
	Recall that the subgraph stack considers every edge as a pair of directed arcs.  Let
	$\phi$ be the vertex translation between $G_i$ and  $G_0$.  Each
	complete vertex of $G_i=(V_i, E_i)$ has the same degree in both~$G_i$ and~$G_0$.
	Thus, to iterate over all arcs of a complete vertex $v \in
	V_i$, iterate over every arc $(\phi(v), u)$ of $\phi(v)$ and return the
	arc $(v, \phi^{-1}(u))$. 
	For a complete vertex~$v$ we can use the adjacency array of~$v$.
	To iterate over the arcs
	of an incomplete vertex $v$ (via support of adjacency lists of~$v$),
	we differ two cases: (1) the arcs to a
	complete vertex and (2) the arcs to another incomplete vertex. 
	To iterate over all arcs of (1) we iterate over all complete vertices
	$u$ of $G_i$ and check in $G_0$ if $\phi(u)$ has an edge to
	$\phi(v)$.  If it does, $\{v, u\}$ is an edge of $v$. 
	Thus, the iteration over all arcs incident to an incomplete vertex of (1)
	runs in $O(m_i)$ time.
	
	For the arcs according to (2), we use matrices $M_i$ storing the edges
	in between incomplete vertices. To build the matrices we proceed as follows.
	Whenever a new graph $G_{\ell+1}$ is pushed on the subgraph
	stack, we create a bit matrix $M_{\ell+1}$ of size $k^2$ and a
	rank-select data structure $I_{\ell+1}$ of size $n_{\ell+1}$ with $I_{\ell+1}[v]=1$
	exactly if $v$ is incomplete.  $M_{\ell+1}$ is used to store the
	information if $G_{\ell+1}$ contains an edge 
	$\{u', v'\}$
	between any two incomplete vertices $u'$ and
	$v'$ of $G_{\ell+1}$, which is the case exactly if
	$M_{\ell+1}[I_{\ell+1}.\op{rank}(u')][[I_{\ell+1}.\op{rank}(v')]]=1$ and
	$M_{\ell+1}[I_{\ell+1}.\op{rank}(v')][[I_{\ell+1}.\op{rank}(u')]]=1$.
	%
	First,
	we initialize $M_{\ell+1}$ to contain only $0$ for all bits.  Then
	we use $M_{\ell}$ to find edges between incomplete vertices of
	$G_{\ell}$ and set the respective bits in $M_{\ell+1}$ to $1$ if
	those incomplete vertices are still contained in $G_{\ell+1}$ (if
	$\ell=0$, we set all bits to $0$). Afterwards we are able to find
	edges between incomplete vertices that are already incomplete in the
	previous graph.
	 We need to update $M_{\ell+1}$ to contain the
	information of the edges between the vertices that are complete in
	$G_{\ell}$, but are not complete in $G_{\ell+1}$.  Since they are
	complete in $G_{\ell}$, we can simply iterate over all complete
	edges $e$ of $G_{\ell}$ in $O(km_\ell)$ time and check if
	both endpoints of $e$ are incomplete in $G_{\ell+1}$ via $I_{\ell+1}$.  If so, we set
	the respective bits in 
	$M_{\ell + 1}$
	to $1$.
	
	{\bfseries Efficiency:}
	Queries on $M_i$ ($1 \le i \le \ell$) allow us to 
	iterate over all arcs of (2) of an incomplete vertex in~$O(k)$ time.
	This results in a
	combined runtime of $O(m_\ell + k) = O(m_\ell)$
	by ignoring zero-degree vertices in $M_\ell$.
 	Storing all bit matrices $M_i$ uses $O(k^2 \ell) = O(k^2 \log n)$ bits
 	and the space used by the rank-select structures is negligible.
 	%
	The adjacency lists
	are realized by storing a pointer for each vertex. 
	This uses negligible
	additional $\Theta(k\log n)$ bits for implementing the interface.
	Our modified subgraph stack uses $O(n + k^2 \log n)$ bits.
\end{proof}

The last lemma allows us to store all recursive instances of Reed's algorithm 
with $O(n + k^2 \log n)$ bits. We use the result in the next section to show our
first $O(n)$-bit iterator to output a tree decomposition on graphs of small treewidth.

\section{Iterator for Tree Decomposition using $O(n)$ Bits for $k = O(n^{1/2-\epsilon})$
}\label{sec:improvementsinspace}

By combining the $O(kn)$-bit iterator of Theorem~\ref{thm:spaceefficientiterator1} with the modified record stack of 
Lemma~\ref{lem:adjacencyentyiteration} we can further reduce the space to $O(n)$ bits for a sufficiently
small $k$. 
Recall that the only structure using more than $O(n)$ bits
in the proof of Theorem~\ref{thm:spaceefficientiterator1} was the subgraph stack whose 
space is stated in Lemma~\ref{cor:recordstack}.
This allows us to show the following theorem.
 
\begin{theorem}\label{the:streaminonlogk}
	Given an undirected $n$-vertex
	graph $G$ with treewidth $k$,
	there exists an iterator that outputs a balanced and binary
tree de\-com\-po\-si\-tion $(T,B)$ of width $8k+6$ in Euler-traversal order 
using $O(n + k^2 \log^2 n)$ bits and 
$c^{k} n \log n \log^*n$ time for some constant~$c$.
For $k=O(n^{1/2-\epsilon})$
	with an arbitrary $\epsilon>0$,
our space consumption is $O(n)$ bits.
\end{theorem}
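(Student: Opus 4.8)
The plan is to reuse the iterator of Theorem~\ref{thm:spaceefficientiterator1} verbatim and only swap out the one structure whose size exceeds $O(n)$ bits. By Lemma~\ref{cor:recordstack}, the record stack without the subgraph stack already fits in $O(n+k\log^2 n)$ bits, and the separator search (Corollary~\ref{cor:speffVertexSeparator}, used inside Lemma~\ref{lem:sep}) runs with $O(n+k^2(\log k)\log n)$ bits; the subgraph stack is the sole bottleneck, at $O(kn)$ bits. So I would replace it by the \emph{minimal} subgraph stack of Lemma~\ref{lem:adjacencyentyiteration}, which stores only the vertex sets of the recursive instances and recovers edges on the fly in $O(n+k^2\log n)$ bits. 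The only bookkeeping needed is to maintain which vertices are \emph{incomplete}: when the iterator pushes $G_{\ell+1}=G[C\cup S_\ell]$, exactly the vertices of $S_\ell$ become incomplete, so each subgraph has $O(k)$ incomplete vertices, matching the hypothesis of Lemma~\ref{lem:adjacencyentyiteration}; the required rank-select vector $I_{\ell+1}$ and the incomplete-edge matrix $M_{\ell+1}$ are built at push time while $G_{\ell+1}$ is toptuned, and discarded at pop time.

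\textbf{Running-time check.}
The price of the minimal subgraph stack is that graph access is no longer uniformly $O(1)$: iterating over the adjacency list of an incomplete vertex of $G_\ell$ costs $O(m_\ell)$, while complete vertices keep $O(1)$-time adjacency-array access. Hence I would revisit every graph-traversal routine invoked by the iterator and by Lemma~\ref{lem:sep}. The crucial one is the space-efficient DFS: wherever the separator computation and the $k$-vertex-disjoint-path procedure of Theorem~\ref{the:nbitvertexdisjoint} call a DFS on a graph with adjacency arrays, I would instead invoke Lemma~\ref{lem:spaceDFS} with $f(n,m)=O(m_\ell)$ and $O(k)$ access-restricted vertices (the on-the-fly transforms of Lemma~\ref{lem:transform} preserve ``there are only $O(k)$ access-restricted vertices'', since an incomplete $v$ gives rise to $v',v''$ only). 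This adds an overhead of $O(k\cdot m_\ell)=O(k^2 n_\ell)$ per traversal, i.e. at most a $\mathrm{poly}(k)\cdot n_\ell$ additive term per recursion level; the BFS of Theorem~\ref{th:se-DFS} and the region explorations of Lemmas~\ref{lem:recomputeRegion}--\ref{lem:recompute} are unaffected beyond the same $O(k m_\ell)$ factor. Because the subgraphs shrink geometrically and the tree has height $O(\log n)$, these $\mathrm{poly}(k)\cdot n$ contributions sum to $\mathrm{poly}(k)\cdot n\log n$, which is dominated by the $c^k$ term for a suitable constant $c$. Thus the per-level bound $O(c^k n\log^* n)$ and the overall bound $c^k n\log n\log^* n$ of Theorem~\ref{thm:spaceefficientiterator1} are retained.

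\textbf{Space accounting and the small-treewidth case.}
Summing the surviving structures: the minimal subgraph stack uses $O(n+k^2\log n)$ bits (Lemma~\ref{lem:adjacencyentyiteration}), the remaining record-stack components use $O(n+k\log^2 n)$ bits (Lemma~\ref{cor:recordstack}), the separator search uses $O(n+k^2(\log k)\log n)$ bits (Corollary~\ref{cor:speffVertexSeparator}), and the $O(\log n)$ incomplete-edge matrices add $O(k^2\log n)$ bits. Bounding $k^2\log n\log k$ and $k\log^2 n$ by $k^2\log^2 n$, the total is $O(n+k^2\log^2 n)$ bits. Finally, for $k=O(n^{1/2-\epsilon})$ we have $k^2\log^2 n=O(n^{1-2\epsilon}\log^2 n)$, and since $\log^2 n = o(n^{2\epsilon})$ for every fixed $\epsilon>0$, this is $o(n)$; hence the space is $O(n)$ bits.

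\textbf{Main obstacle.}
I expect the delicate part to be the running-time bookkeeping of the previous paragraph's middle step, namely verifying that \emph{every} DFS buried inside the separator search (in particular inside Lemma~\ref{lem:newPath} and the reroutings of Lemma~\ref{lem:rerouting}, which may re-restore stack segments) is correctly re-instantiated through Lemma~\ref{lem:spaceDFS} with $O(k)$ access-restricted vertices, and that the resulting $\mathrm{poly}(k)$ blow-up genuinely gets absorbed by $c^k$ rather than by the $\log^* n$ or $\log n$ factors. A secondary care point is keeping $I_\ell$ and $M_\ell$ consistent with the toptuned top-of-stack graph across pushes and pops, for which I would mirror the initialization-from-$M_{\ell-1}$ trick already spelled out in the proof of Lemma~\ref{lem:adjacencyentyiteration}.
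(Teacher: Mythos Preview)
Your overall plan matches the paper's: reuse the iterator of Theorem~\ref{thm:spaceefficientiterator1} and replace the $O(kn)$-bit subgraph stack by the minimal subgraph stack of Lemma~\ref{lem:adjacencyentyiteration}, then re-examine the running time. Your space accounting is correct and essentially identical to the paper's.

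There is, however, a real gap in the running-time argument, and it is not quite where you flag it. You assert that ``the region explorations of Lemmas~\ref{lem:recomputeRegion}--\ref{lem:recompute} are unaffected beyond the same $O(k m_\ell)$ factor'', but this is exactly the point that needs a new idea. The operations $\op{prev}$, $\op{next}$, and $\op{color}$ of Lemma~\ref{lem:recompute} are invoked $\Theta(n_\ell)$ times (and more) during the search for each new path (Lemma~\ref{lem:newPath}) and during each rerouting batch (Lemma~\ref{lem:rerouting}); each such call may explore a region by BFS. If an incomplete vertex happens to lie \emph{inside} a region, that BFS must iterate over its adjacency list at cost $O(m_\ell)$, and because the same region may be re-explored many times, the aggregate overhead is not $\mathrm{poly}(k)\cdot n_\ell$ but can be $\Theta(n_\ell\cdot m_\ell)$, blowing the time bound.

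The paper's fix is to force the $O(k)$ incomplete vertices into the boundary set $\B$ of every path data scheme and, in addition, to \emph{cache} $\op{prev}(v)$ and $\op{next}(v)$ explicitly for each incomplete boundary vertex $v$ (not merely the colors stored in $\C$). This guarantees that (i) no region BFS ever traverses an incomplete vertex's adjacency list, and (ii) a $\op{prev}/\op{next}$ query on an incomplete vertex is a table lookup rather than a neighborhood scan. With this modification, the adjacency list of each incomplete vertex is traversed only a constant number of times per DFS in the separator search, so Lemma~\ref{lem:spaceDFS} yields the $O(k^2 n_\ell \log^* n_\ell)$ per-DFS bound you anticipate, which is then absorbed into $c^k$ as you say. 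Your ``main obstacle'' paragraph should therefore be aimed at the path data structure rather than at DFS stack restorations.
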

\begin{proof}
Recall that the tree decomposition iterator of Theorem~\ref{thm:spaceefficientiterator1}
uses the
algorithm of Theorem~\ref{the:nbitvertexdisjoint} to find 
$k$ vertex-disjoint paths for the construction of the separators. 

With the construction of the $k$ vertex-disjoint paths
we have to compute and access path storage schemes.
To avoid querying the neighborhood of an incomplete vertex~$v$
several times by querying the region
with~$v$ several times, 
%
%
%
we add the~$O(k)$ incomplete vertices to the
boundary vertices when we build a path data structure.  
This neither increases our asymptotical time
nor our total space bounds.
Moreover, by adding the $O(k)$ incomplete vertices~$v$ 
to the boundary vertices
and storing $\op{prev}(v)$/$\op{next}(v)$, 
we avoid running a BFS on incomplete vertices
and searching for the predecessor and successor of~$v$.
If we afterwards construct a path,
 we iterate over the adjacency lists of incomplete vertices only once---even if regions are constructed several times. 

The construction of a vertex-disjoint path involves $O(\log k)$ reroutings. 
Recall that each rerouting
extends a clean area by traversing over a path $P^*$ (two DFS) and then searches
``backwards'' within the area with two further DFS.
Altogether, a rerouting
can be done with $O(1)$ DFS runs and $O(\log k)$ DFS runs over all reroutings
for one path $P^*$.
%
%

%
To construct a balanced $X$-separator a set $R$ is constructed by
another DFS run. However, this iterates over the neighbors
of each vertex only once.
To sum up a balanced $X$-separator can be constructed
by $O(\tilde{c}^k)$ DFS runs for some constant~$\tilde{c}$.

Given an $n_i$-vertex, $m_i$-edge graph $G_i$ 
stored in a minimal subgraph stack, 
we can run the space-efficient DFS
of Lemma~\ref{lem:spaceDFS} in
$O(k m_i + n_i \log^* n_i) = O(k^2 n_i \log^* n_i)$ time
and $O(n + k \log n)$ bits, which is a factor of~$O(k)$
slower than the DFS of Theorem~\ref{th:dfs}.
Thus, a balanced $X$-separator can be constructed in
$O(\bar{c}^k n_i \log^* n_i)$ time for some constant~$\bar{c}$,
which is the same time as stated in 
the proof of Theorem~\ref{thm:spaceefficientiterator1}
(the constant $\bar{c}$ here is larger).
Concerning the space consumption
note that we have a record-stack of size $O(n + k \log^2 n)$ bits
by Lemma~\ref{cor:recordstack} by using our minimal subgraph stack
of $O(n + k^2 \log n)$ bits. The $O(k)$ extra values for
$\op{prev}$ and $\op{next}$ are negligible.
Finally note that we can search a separator 
with $O(n + k^2 (\log k) \log n)$ bits by 
Corollary~\ref{cor:speffVertexSeparator}.

%
%
%
%
%
\end{proof}

We next combine the theorem above with a recent tree-decomposition 
algorithm by Bodlaender et al.~\cite{BodDDFLP16} as follows.
The algorithm by Bodlaender et al.\ finds a tree decomposition for
a given $n$-vertex graph $G$ of treewidth $k$ in $b^k n$
time for some constant~$b$~\cite{BodDDFLP16}. 
The resulting tree 
decomposition has a width of
$5k+4$. The general strategy pursued by them is to first
compute a tree decomposition of large width
and then use dynamic programming on that tree decomposition to obtain the 
final tree decomposition
of width $5k+4$. For an overview of the construction, we refer to
\cite[p.~3]{BodDDFLP16}.
The final tree decomposition is balanced due to the fact that
its construction uses balanced $X$-separators at every second level%
, 
alternating between an
$8/9$-balanced and an unbalanced 
$X$-separator~\cite[p.~26]{BodDDFLP16}.
Further details of the construction of different kinds of the final tree 
decomposition 
can be found in~\cite[p.~20, and p.~39]{BodDDFLP16}.
Since
its runtime is bounded by $b^k n$,
it can write at most~$b^kn$ words and thus has a space requirement of 
at most~$b^kn\log n$ bits.

Our following idea is to use a hybrid approach to improve the runtime %
of our iterator.
We %
first run our iterator 
(Theorem~\ref{the:streaminonlogk}). Once the 
height of the record-stack of our tree-decomposition iterator is equal to 
$z = b^k \log \log n$, the call of $\op{next}()$ uses an unbalanced 
$X$-separator $S^*$.
This ensures that the size of the bag is at most $4k+2$ instead of $8k+6$.
(We later add all vertices in the bag to all following bags.)  Note that
using a single unbalanced $X$-separator $S^*$ on all root-to-leaf paths
of our computed tree decomposition
increases the height of the tree decomposition only by one.
A following
call of $\op{next}()$ toptunes the graph $G_\ell$ and then uses Bodlaender et al.'s  
linear-time tree-decomposition algorithm~\cite{BodDDFLP16} to calculate a
tree decomposition $(T',B')$ of an $n_\ell$ vertex subgraph $G_\ell$, which 
we then turn 
by folklore techniques 
into a binary
tree decomposition $(T'',B'')$ by neither increasing the asymptotic size nor the width of the 
tree 
decomposition. In detail, this is done by
repeatedly replacing all nodes $w$ with more than two children by a node $w_0$ 
with two children
$w_1$ and $w_2$, with $B(w_0)=B(w_1)=B(w_2)=B(w)$, followed by adding the
original children of $w$ to $w_1$ and $w_2$, alternating between them both.
To ensure property $(TD2)$ of a tree 
decomposition, we add the vertices in $S^*$ to all bags of
$(T'',B'')$. 
We so get a tree decomposition of the width 
$(5k+4)+(4k+2)=9k+6$.

 Since $G_\ell$ contains $n_\ell = O(n/2^{z}) = O(n/(b^k \log n))$ 
 vertices, the 
 space usage of the linear-time
tree-decomposition algorithm is $b^k n_\ell \log n=O(n)$ bits.  The runtime of
the algorithm is bounded by~$b^k n_\ell$.  
Once we obtain $(T',B')$, we also need to transform each bag $b'$ of 
$B'$
since $B'$ contains mappings in relation to $G'$, but we want them to
contain mappings in relation to $G$.  This can be done in negligible time
since $G'$ was toptuned before.  We then initialize a tree-decomposition
iterator $I$ for $(T',B')$ as described in the beginning of
Subsection~\ref{sec:stream}.  Now, as long as $I'$ has not finished its
traversal of $(T',B')$, a call to $\op{next}$ on $I$ is equal to a call to
$\op{next}$ on $I'$.  Similarly, a call to $\op{show}$ on $I$ now returns
the tuple $(B'(w), d_w)$ with $d_w$ being the depth of $w$ in $T'$ plus the
size of the record stack of $I$.  Once iterator $I'$ is finished, we throw
away $(T',B')$.  Then, the operations of $\op{next}$ and $\op{show}$ work
normally on $I$ until the size of the record stack again is $O(b^k \log \log
n)$ or until the iteration is finished.
Since we use our iterator
only to recursion depth~$z$, our 
algorithm runs in $a^kn(\log^*n)z$ time for some constant~$a$.
The total runtime is 
$a^kn(\log^*n)(b^k\log\log n) + b^kn = c^k n \log \log n\log^*n$ for 
some constant $c$.

\begin{corollary}\label{cor:spaceefficientiterator4}
	There is an iterator to output
a balanced binary tree-decomposition $(T,B)$ of width
$9k+6$ for an  $n$-vertex $G=(V,E)$ with treewidth $k$ in
Euler-traversal order in 
$c^k n\log \log n\log^*n$ time 
for some constant $c$
using $O(n + k^2 \log^2 n)$ bits.
For $k = O(n^{1/2-\epsilon})$ and an arbitrary $\epsilon>0$, the 
space consumption is $O(n)$ bits.
\end{corollary}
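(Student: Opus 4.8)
The plan is to combine our $O(n)$-bit iterator from Theorem~\ref{the:streaminonlogk} with the linear-time (but space-hungry) tree-decomposition algorithm of Bodlaender et al.~\cite{BodDDFLP16} in a hybrid fashion. First I would run the iterator of Theorem~\ref{the:streaminonlogk} unchanged, but only as long as the height of its record stack stays below $z = b^k \log \log n$; down to this depth the iterator behaves exactly as before, producing a balanced binary tree decomposition of width $8k+6$, and since we descend only $z$ levels the time spent in this phase, summed over the $O(\log n)$ leaves of the partial recursion, is $a^k n (\log^* n) z$ for some constant $a$.

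When a call of $\op{next}$ would push the $z$th record, I would instead have it compute an \emph{unbalanced} $X$-separator $S^*$ (so that the bag at that node has size at most $4k+2$), then toptune the subgraph $G_\ell$ currently on top of the stack and invoke Bodlaender et al.'s algorithm on $G_\ell$. The decisive point for the space bound is that, since every level of Reed's recursion shrinks the vertex set by a constant factor, after $z$ levels $G_\ell$ has only $n_\ell = O(n / (b^k \log n))$ vertices, so Bodlaender et al.'s algorithm, which uses at most $b^k n_\ell \log n$ bits, fits into $O(n)$ bits. It returns a tree decomposition $(T',B')$ of width $5k+4$, which I would first make binary by the folklore node-splitting construction and then augment by adding all vertices of $S^*$ to every bag, restoring property {\textbf(TD2)}; the resulting width is $(5k+4)+(4k+2)=9k+6$. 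I would then wrap $(T',B')$ in a secondary tree-decomposition iterator $I'$ (built as in Section~\ref{sec:stream}) and, while $I'$ has not finished, forward calls of $\op{next}$ and $\op{show}$ on our iterator to $I'$, offsetting the reported depth by the current size of the outer record stack; once $I'$ is exhausted we discard $(T',B')$ and resume the outer iterator until the record stack reaches height $z$ again or the whole traversal ends. Since the top part has height $O(\log\log n)$ and each subtree attached below it has height $O(\log n_\ell)=O(\log n)$, the combined tree remains balanced and binary, and the Euler-traversal order is preserved by interleaving the two iterators at the transition nodes.

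For the running time I would add the $a^k n (\log^* n)\,b^k \log\log n$ time of the outer phase to the $b^k n$ total time of the pairwise-disjoint invocations of Bodlaender et al.'s algorithm, obtaining $c^k n \log\log n \log^* n$ for some constant $c$. For the space, Theorem~\ref{the:streaminonlogk} already bounds the outer iterator together with its minimal subgraph stack by $O(n + k^2 \log^2 n)$ bits, and the secondary iterator together with $(T',B')$ adds only $O(n)$ bits, so the total is $O(n + k^2 \log^2 n)$, which is $O(n)$ when $k = O(n^{1/2-\epsilon})$. The main obstacle I expect is the bookkeeping at the transition point: choosing the unbalanced separator so that the bag really shrinks to $4k+2$ while the height increases only by one, translating all bag indices of $(T',B')$ back from $G_\ell$ to $G$ (cheap, as $G_\ell$ was toptuned), and checking that adding $S^*$ to all bags of the binary version of $(T',B')$ preserves {\textbf(TD1)} and {\textbf(TD2)} without spoiling the width bound or the balanced, binary structure.
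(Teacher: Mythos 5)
Your proposal is correct and follows essentially the same route as the paper: run the $O(n + k^2\log^2 n)$-bit iterator of Theorem~\ref{the:streaminonlogk} down to record-stack height $z=b^k\log\log n$, insert an unbalanced $X$-separator of size at most $4k+2$, hand the shrunken subgraph $G_\ell$ with $n_\ell=O(n/(b^k\log n))$ vertices to Bodlaender et al.'s algorithm, binarize the resulting decomposition, add $S^*$ to all its bags to get width $9k+6$, and interleave a secondary iterator. The time and space accounting also matches the paper's.
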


If we try to run our iterator 
from the last corrolary on a graph that 
has a treewidth greater than~$k$,
then either the computation of a vertex separator or the
computation of Bodleander et al.'s algorithm for finding a tree decomposition~\cite{BodDDFLP16}
fails. In
both cases, our iterator 
stops and we can output that the treewidth 
of
$G$ is larger than $k$.%

%
%

\section{Applications}\label{sect:appl}
From~\cite[Theorem~7.9]{CygFKLMPPS15} we know that there is an algorithm
	that solves all problems
mentioned in Theorem~\ref{th:mdsspace} on an
$n$-vertex graph with
treewidth~$k$ in $c^kn$ time for some constant $c$ when a tree decomposition
with approximation-ratio $O(1)$ is given. The general strategy used for solving
these problems is almost identical. First, traverse the tree decomposition
bottom-up 
and compute a table for 
each node $w$. 
The table stores the size of all best possible solutions in the graph induced by all
bags belonging to nodes below $w$ under certain conditions for the
vertices in bag $B(w)$. 
E.g., for \textsc{Vertex
	Cover} the table contains $2^{k+1}$ solutions ($v \in B(w)$ does belong or does not belong
to the solution) and for \textsc{Dominating Set} it contains
$3^{k+1}$ solutions (one additionally differs, if a vertex is already dominated
or not). We only consider problems whose table has 
at most $c^k$ solutions for some constant~$c$. For each possible solution, the table stores the size 
of the solution and thus uses
$O(c^k \log n )$ bits. After the bottom-up traversal, the 
 minimal/maximal solution size in the table at the root is the solution for the minimization/maximization
problem, respectively.
An optimal solution set can 
be obtained in a top-down traversal by using the tables.

It is clear that, for large $k$,  we can not  
store all tables when
trying to use $O(n)$ bits.
Our strategy is to 
store the tables only for the nodes on 
a single root-leaf path of the
tree decomposition and for nodes with a depth less than some threshold value. 
The other tables are recomputed during the construction of the solution set.
For a balanced tree decomposition this results in $O(c^k
\log^2n)$ bits storing $O(\log n)$ tables for vertices 
on a root-leaf path each of size $O(c^k \log n)$ bits. 
Using this strategy we have all information to use the standard bottom-up
traversal to compute the
size of the solution for the given problem for $G$. To obtain an optimal solution set we need 
a balanced and binary tree decomposition with a constant-factor
approximation of the treewidth.
		
As an example we first give an algorithm for vertex cover that can be easy generalized
to other problems. We conclude this section
by giving a list of problems that can be solved with the same asymptotic time 
and space bound.

A vertex cover of a graph~$G = (V, E)$ is a set of vertices $C \subseteq V$ 
such that, for each edge $\{u, v\} \in E$, $u \in C \vee v \in C$ holds.
For graphs 
with a small treewidth, one can find a minimum vertex cover by first computing 
a tree decomposition of the graph and then, using dynamic programming, 
calculate a minimal vertex cover. We start to sketch the standard approach.

Let $(T,B)$ be a tree decomposition of width $O(k)$ 
of an undirected graph $G$ with treewidth~$k$. Now, iterate over $T$ in 
Euler-traversal order and, if a node $w$ is visited for the first time, 
calculate and store in a table $\mathcal{T}_w$ all possible solutions of the 
vertex cover problem for~
$G[B(w)]$. Also store the value of each solution, which is equal to the number 
of vertices used for the cover. If the solution is not valid, store $\infty$ 
instead.

When visiting a node $w$ and  $\mathcal{T}_w$ already exists, we update 
$\mathcal{T}_{w}$ by using $\mathcal{T}_{w'}$ with $w'$ being the node visited 
during the Euler-traversal right before $w$ ($w'$ is a child of $w$). The 
update process is done by comparing each solution $s$ in $\mathcal{T}_w$ with 
each overlapping solution in $\mathcal{T}_{w'}$. A solution $s' \in 
\mathcal{T}_{w'}$ is chosen if it has the smallest value among overlapping 
solution. The value of $s'$ is added to the value of $s$, and the two solutions 
are linked with a pointer structure. Two solutions $s$ and $s'$ are overlapping 
exactly if, for each $v \in B(w) \cap B(w')$, $(v \in s \wedge v \in s') \vee 
(v \notin s \wedge v \notin s')$ is true. Once the Euler-traversal is finished, the 
table $\mathcal{T}_{r}$, with $r$ being the root of $T$, contains the size of the 
minimum vertex cover $C$ of $G$ as the smallest value of all solutions. This is 
the first step of the algorithm.

The second step is obtaining $C$, which is done by traversing top-down through 
all tables with the help of the pointer structures, starting at the solution 
with the smallest value in $\mathcal{T}_r$, and adding the vertices used by the 
solutions to the initially empty set $C$ if they are not yet contained in $C$. 
For simplicity, we first only focus on \textsc{Vertex Cover},
but use a problem specific 
constant $\lambda$
 for the size 
of the tables and runtime of solving subproblems.
For \textsc{Vertex Cover} $\lambda = 2$. 
This allows us an easy generalization
step to other problems subsequently.
In the following 
the default base of $\log$ is $2$.

\begin{lemma}\label{cor:vertexcover1}
	Given an $n$-vertex graph $G$ with treewidth $k \leq \log_\lambda n - 2 \log_\lambda \log n$ 
	we 
	can calculate the size of the optimal \textsc{Vertex Cover} $C$ of $G$ in 
	$c^k n \log 
	\log n \log^*n$ time using $O(n)$ bits for some constant $c$.
\end{lemma}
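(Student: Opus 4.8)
The plan is to run the tree-decomposition iterator of Corollary~\ref{cor:spaceefficientiterator4} once and to execute the standard bottom-up dynamic program for \textsc{Vertex Cover} on the fly, keeping only the tables that are still needed. First I would invoke the iterator to obtain a balanced binary tree decomposition $(T,B)$ of $G$ of width $O(k)$ in Euler-traversal order; since the hypothesis $k\le \log_\lambda n - 2\log_\lambda\log n$ in particular gives $k=O(\log n)=O(n^{1/2-\epsilon})$, the $O(n+k^2\log^2 n)$-bit bound of Corollary~\ref{cor:spaceefficientiterator4} collapses to $O(n)$ bits, while the running time is $c^k n\log\log n\log^*n$. I would augment the iterator by Lemma~\ref{lem:baginducedgraph} so that, at each visited node $w$, the bag-induced subgraph $G[B(w)]$ is available as well; this costs an additional $O(k^2\log n)=O(n)$ bits and $O(kn\log n)$ time, which is absorbed.

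Next I would carry out the DP along the traversal. When the iterator reaches a node $w$ for the first time, I build the initial table $\mathcal T_w$: for each of the $\lambda^{|B(w)|}=2^{O(k)}$ subsets $C\subseteq B(w)$ store $|C|$ if $C$ is a vertex cover of $G[B(w)]$ and $\infty$ otherwise, which is checked against the $O(k^2)$ edges of $G[B(w)]$. Using the depth value returned by $\op{show}$ I detect when the traversal returns from a child $w'$ to $w$ and then merge $\mathcal T_{w'}$ into $\mathcal T_w$ by the usual introduce/forget/join rules for \textsc{Vertex Cover}; this is well defined because $T$ is binary. As soon as the traversal leaves $w$ upward (after its second child has been merged), $\mathcal T_w$ is final, and once it has been merged into its parent's table, $\mathcal T_w$ together with every table of the already-processed subtree hanging below $w$ is discarded. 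When the traversal finishes, the minimum finite entry of the root table is the size of an optimal vertex cover of $G$; no top-down pass is needed since only the size is requested.

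The crucial point is the space analysis: because the iterator produces the bags in Euler-traversal order, at every moment the only tables still alive are those of the nodes on the path from the root to the node currently visited, every other table having been released when its subtree was completed. As $T$ is balanced this path has length $O(\log n)$, so at most $O(\log n)$ tables of $\lambda^{O(k)}\cdot O(\log n)$ bits each coexist, and by the assumed bound on $k$ these $O(\log n)$ tables together occupy $O(n)$ bits. Hence, together with the $O(n)$ bits of the augmented iterator, the whole computation runs in $O(n)$ bits. For the time, each node contributes $2^{O(k)}\cdot\mathrm{poly}(k)$ for building and merging its table, summing to $2^{O(k)}n$ over the $O(n)$ nodes; since $k=O(\log n)$ this is dominated by the iterator's $c^k n\log\log n\log^*n$ once the constant $c$ is chosen large enough, which gives the claimed bound.

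I expect the main obstacle to be making the ``release a subtree's tables as soon as it has been merged'' argument airtight while only ever seeing each bag once, in a single streaming pass of the iterator: one has to argue that one Euler-traversal pass really does suffice for the bottom-up size computation (in contrast to the later lemma that also outputs the cover set and therefore must recompute tables), and that detecting subtree completion from the depth information is enough to trigger the merges and releases in the correct order and within the time budget. The per-node DP itself is standard, and the constant-factor width blow-up of the approximate decomposition only affects the base $\lambda$ (respectively the constant $c$).
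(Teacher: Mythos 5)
Your proposal is correct and follows essentially the same route as the paper: run the $O(n)$-bit iterator of Corollary~\ref{cor:spaceefficientiterator4} with the bag-induced subgraphs of Lemma~\ref{lem:baginducedgraph}, perform the standard bottom-up \textsc{Vertex Cover} DP along the Euler traversal, and keep only the $O(\log n)$ tables on the current root-to-node path, which fit in $O(n)$ bits under the stated bound on $k$. The paper's proof is exactly this argument (with the same slight conflation of the width $O(k)$ with $k$ in the table-size bound), so no further comparison is needed.
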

\begin{proof}
For an $n$-vertex graph $G$ with treewidth $k$ and a given tree decomposition 
$(T,B)$ of width $k'=O(k)$ the runtime of the algorithm is $O(2^{k'}n)$. A table 
$\mathcal{T}_w$ constructed for a bag $B(w)$ consists of a bit array of size 
$O(k')$ for each of the $2^{k'+1}$ possible solutions, and their respective values 
and pointer structures. This uses $O(2^{k'+1} (k + \log n))= O(\lambda^k\log n)$
bits per table 
for some problem specific constant~$a$,
which is two for \textsc{Vertex Cover}.
 Thus, storing the tables for the 
entire tree decomposition uses $O(\lambda^{k}n \log n )$ bits. Our goal is to 
obtain the optimal vertex cover using only $O(n)$ bits for both the tree 
decomposition $(T,B)$ and the storage of the tables. For obtaining only the 
size 
of $C$, i.e., the first step of the algorithm, we only need to store the tables 
for 
the current root-node path of the tree decomposition iterator. The reason is
that 
once a table has been used to update its parent table it is only needed for 
later 
obtaining the final cover via the pointer structures. We can iterate
over a balanced 
binary tree decomposition of width $O(k)$ in $\tilde{c}^k n \log \log n 
\log^*n$ time
using $O(n)$ bits (Theorem~\ref{cor:spaceefficientiterator4}) for some constant $\tilde{c}$. 
To obtain the bag-induced subgraphs we use Lemma~\ref{lem:baginducedgraph}. 
We have to 
store $O(\log n)$ tables, which results in $O(\lambda^k \log^2 n)$ bits used, which 
for 
$k \leq \log_\lambda n - 2 \log_\lambda \log n$ equals $O(n)$ bits ($\lambda^k \log^2 n 
\leq n \Rightarrow 
\lambda^k \leq n \log^{-2} n \Rightarrow k \leq \log_\lambda n - 2 \log_\lambda \log n$). 
Initializing and 
updating all tables can be done in $O(\lambda^k n)$ time. 
\end{proof}

To obtain the final vertex cover we need access to all tables and bags they
have been initially created for. 
We now use the 
previous lemma with modifications.
Our idea is to fix some $\ell \in \Nat$ 
and to use partial tree decompositions of depths~$\ell$
rooted at every $\ell$th node of a root-leaf path.
%
For this, let us define $\op{ptd}_{G,\ell}(w)$ to be the 
partial tree decomposition~$(T', B)$
where~$T'$ is a subtree of~$T$ with root~$w$ and depths~$\ell$.

\begin{lemma}\label{cor:vertexcover2}
Given an $n$-vertex graph $G$ with treewidth $k \leq \log_\lambda n - 3 \log_\lambda 
	\log n=c'\log n$ for some constant $0 < c' <1$
we can calculate the optimal vertex cover $C$ of $G$ 
	in $c^k n \log n \log^* n$ 
time using $O(n)$ bits for some constant $c$.
\end{lemma}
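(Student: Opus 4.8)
The plan is to run the algorithm in the usual two phases and to pay for the reconstruction phase --- which is the whole difficulty --- by recomputing tables layer by layer. For the first phase I would simply invoke Lemma~\ref{cor:vertexcover1}: since $k \le \log_\lambda n - 3\log_\lambda \log n$ is stronger than the hypothesis there, that lemma already produces the optimal value, and with it the table $\mathcal{T}_r$ of the root and a minimizing solution $s_r \in \mathcal{T}_r$, in $c^k n \log\log n \log^* n$ time and $O(n)$ bits, using the iterator of Corollary~\ref{cor:spaceefficientiterator4} together with the bag-induced subgraphs of Lemma~\ref{lem:baginducedgraph}. The second phase must walk down the implicit tree decomposition and, at every node~$w$, needs the fully updated table $\mathcal{T}_w$ and the pointer linking the chosen parent solution to the chosen solution at~$w$; since we may keep only $O(\log n)$ tables in memory at once, these have to be recomputed, and doing so without an unaffordable blow-up in time is the crux.

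For the reconstruction I would use the partial tree decompositions $\op{ptd}_{G,\ell}(\cdot)$ defined above with a parameter $\ell = \Theta(\log\log n)$, which partitions the height-$O(\log n)$ tree into $O(\log n/\ell)$ layers of depth-$\ell$ chunks, each chunk having $O(2^\ell)$ nodes and the leaves of a chunk being exactly the roots of the chunks of the next layer. The chunks are processed top-down. When a chunk with root~$w$ is reached, its chosen boundary solution $s_w$ (an $O(k)$-bit condition on $B(w)$) is already known; re-run the bottom-up table DP of the first phase on the whole subtree rooted at~$w$ --- navigating the iterator to~$w$ via its record stack, using Lemma~\ref{lem:baginducedgraph} for the induced graphs, and discarding every table below the chunk as soon as it has been folded into its parent --- but retain the $O(2^\ell)$ tables and pointer structures of $\op{ptd}_{G,\ell}(w)$ itself. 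Chasing these pointers from $s_w$ determines the chosen solution at every node of the chunk; add the vertices of all of them to~$C$, push the chosen solutions of the chunk leaves (the next-layer roots) onto a stack of ``pending'' boundary conditions, and recurse. Determinism of the separator search (Lemma~\ref{lem:sep}) and of Lemma~\ref{lem:recomputeRegion} guarantees that each recomputed table and pointer structure is identical to the one built in the first phase, so the links are consistent and the set $C$ collected this way is exactly the optimal cover.

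For the bounds, the data held at any moment is: the iterator / record stack ($O(n+k^2\log^2 n)=O(n)$ bits), the $O(2^\ell)$ chunk tables currently under construction plus the $O(\log n)$-tall table stack of the active bottom-up DP ($O(2^\ell\lambda^k\log n)$ bits), and the $O((\log n/\ell)\,2^\ell)$ pending boundary conditions of $O(k)$ bits each. For $\ell = \Theta(\log\log n)$ so that $2^\ell = \mathrm{polylog}(n)$, and using $\lambda^k = O(n/\mathrm{polylog}(n))$ --- which is precisely what $k \le \log_\lambda n - 3\log_\lambda\log n$ provides --- all of these are $O(n)$ bits. For the running time, the re-runs of the bottom-up DP carried out inside one chunk-layer are over subtrees that are essentially pairwise vertex-disjoint and whose sizes sum to $O(n)$ (using, as in the proofs of Theorem~\ref{thm:spaceefficientiterator1} and Corollary~\ref{cor:spaceefficientiterator4}, that the tree decomposition has only $O(n/k)$ nodes and that the subgraph sizes at one recursion level sum to $O(n)$), so a whole layer costs at most one iterator pass plus $O(c^k n)$ table work, i.e.\ $c^k n\log\log n\log^* n$ time; over the $O(\log n/\ell)=O(\log n/\log\log n)$ layers this is $c^k n\log n\log^* n$, which absorbs the first phase. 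I expect the main obstacle to be exactly this last accounting --- turning ``re-running the iterator on $\mathrm{subtree}(w)$ over all chunk roots~$w$ of one layer costs no more than one global iterator run'' into a rigorous argument --- and secondarily checking the space arithmetic so that the $2^\ell$ factor from the retained chunk tables is still covered by the stated treewidth threshold. The generalization from \textsc{Vertex Cover} to the other listed problems only replaces the constant $\lambda$ and is routine.
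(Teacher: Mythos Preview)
Your proposal is correct and follows essentially the same approach as the paper: chunk the tree into depth-$\ell$ pieces with $\ell=\Theta(\log\log n)$ via $\op{ptd}_{G,\ell}$, recompute each chunk's tables by a bottom-up DP over its full subtree, and account the total work as $O(\log n/\log\log n)$ full iterator passes. The only notable variation is that you discard a chunk's tables after extracting the boundary solutions and keep a stack of $O(k)$-bit pending conditions, whereas the paper retains the tables of all $O(\log n/\ell)$ chunks along the current root--leaf path; both fit in $O(n)$ bits under the stated threshold, and the iterator-positioning detail you flag as the main obstacle is left at the same level of informality in the paper.
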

\begin{proof}
By iterating over $(T,B)$ we first compute the tables
and pointer structures for $(T', B) = \op{ptd}_{G,\ell}(r)$ 
where $r$ is the root
of the tree decomposition of~$G$ and where $\ell = \log \log n$.
Thus, the partial tree $T'$ consists of $O(\log n)$ nodes, each with a table of $O(\lambda^k \log n)$ bits. (Recall that for \textsc{Vertex Cover} $\lambda = 2$.)

We then start to follow the pointer structures starting from~$r$.
When we arrive at a node $w$ having a table where the 
pointer structure is 
invalid (because the next table does not exist) we build
$\op{ptd}_{G,\ell}(w)$.
Afterwards, we can continue 
to follow the pointer structures since the next tables now exist. 
We repeat to build the partial tree decompositions 
anytime we try to follow a pointer that is invalid until we arrive at a 
leaf (at which point we backtrack).
When we have built and processed all tables of a 
subtree with root~$w$, we can throw away all tables 
of $\op{ptd}_{G,\ell}(w)$.
In other words, we have to store tables for only $O(\log n / \ell)$ partial
trees. 
Note that each partial tree uses $O(\log n)$ tables
of $O(\lambda^k \log^2 n)$ bits in total.
Summed over all partial trees on a root-leaf path,
we need to store $O(\lambda^k \log^3 n)$ bits. 
For $k \leq \log_\lambda n - 3 \log_\lambda \log n$ this uses 
$O(n)$ bits ($\lambda^k 
\frac{\log^3 n}{\log \log n}  \leq n \Rightarrow \lambda^k \leq n \frac{\log \log n}{\log^3 n} \Rightarrow k \leq \log_\lambda n - 3 \log_\lambda \log n + \log_\lambda \log \log n$).
It remains to show the impact on the runtime. 
Anytime we want to obtain the tables 
of $\op{ptd}_{G, \ell}(w)$, we need to compute the tables
of the subtree rooted by $w$. 
Thus partial subtrees $\op{ptd}_{G, \ell}(w)$ with deepest nodes~$w$ 
need to be calculated $O(\log n / \log \log n)$-times, the 
partial subtrees above 
$(O(\log n / \log \log n) - 1)$-times and so forth. This can be thought of as 
iterating over
the tree decomposition $(T,B)$ of $G$ for $O(\log n / \log \log n)$ times.
In other words, we run the algorithm of Theorem~\ref{cor:vertexcover1}, 
$O(\log n / \log \log n)$ times.
\end{proof}

We finally present our last theorem.

     \begin{theorem}\label{th:mdsspace}
     	Let $G$ be an n-vertex graph with treewidth $k \leq c' \log n$ for 
     	some 
     	constant $0 < c' <1$. 
     	Using $O(n)$ bits and $c^k n \log n \log^*n$ time 
     	for some constant $c$ we can solve the following problems:
     	\textsc{Vertex Cover}, 
     	\textsc{Independent Set}, 
     	\textsc{Dominating Set}, 
     	\textsc{MaxCut} 
     	and $q$-\textsc{Coloring}. 
     \end{theorem}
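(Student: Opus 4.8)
The plan is to reuse, almost verbatim, the analysis developed for \textsc{Vertex Cover} in Lemmas~\ref{cor:vertexcover1} and~\ref{cor:vertexcover2}, the only change being a problem-dependent constant in the table size. First I would recall from~\cite[Theorem~7.9]{CygFKLMPPS15} that each of \textsc{Vertex Cover}, \textsc{Independent Set}, \textsc{Dominating Set}, \textsc{MaxCut} and $q$-\textsc{Coloring} is solved by a dynamic program over a tree decomposition that, for each node $w$, maintains a table $\mathcal T_w$ of at most $\lambda^{k+1}$ partial solutions, where $\lambda$ is a constant depending only on the problem: $\lambda=2$ for \textsc{Vertex Cover}, \textsc{Independent Set} and \textsc{MaxCut} (a bag vertex is on one side of the solution/cut or not), $\lambda=3$ for \textsc{Dominating Set} (a bag vertex is in the set, already dominated, or not yet dominated), and $\lambda=q$ for $q$-\textsc{Coloring}. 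Together with one value (the solution size, or a feasibility flag for $q$-\textsc{Coloring}) and a pointer structure linking each entry to the matching entry of a child, a table uses $O(\lambda^k(k+\log n))=O(\lambda^k\log n)$ bits, joining the child tables at a binary node costs $\lambda^{O(k)}$ time, and following the pointer structure during the top-down phase costs $O(\lambda^k)$ time per node. The optimum of a minimization problem is the minimum value in the root table, of \textsc{MaxCut} the maximum, and of $q$-\textsc{Coloring} the feasibility of the root table; the optimal set (a maximum cut, a proper colouring) is recovered in the usual top-down pass along the pointers.

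Next I would plug in our space-efficient machinery. Since $k\le c'\log n=O(n^{1/2-\epsilon})$, Corollary~\ref{cor:spaceefficientiterator4} provides an iterator that outputs a balanced binary tree decomposition $(T,B)$ of width $9k+6=O(k)$ in Euler-traversal order using $O(n)$ bits and $c^k n\log\log n\log^*n$ time, and Lemma~\ref{lem:baginducedgraph} lets us additionally output the bag-induced subgraphs $G[B(w)]$ within $O(k^2\log n)=O(n)$ extra bits and $O(kn\log n)$ extra time. Because $T$ is balanced and binary, every root-to-leaf path---hence the record stack driving the iterator---has length $O(\log n)$, so during the bottom-up Euler-traversal it suffices to keep the $O(\log n)$ tables of the nodes currently on the root-to-leaf path; every other child table has already been consumed to update its parent and may be discarded once used.

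To obtain the \emph{size} of the solution I would follow the proof of Lemma~\ref{cor:vertexcover1} with $\lambda$ in place of $2$: the $O(\log n)$ stored tables use $O(\lambda^k\log^2 n)$ bits in total, which is $O(n)$ as soon as $k\le\log_\lambda n-2\log_\lambda\log n$, and the bottom-up pass spends $O(\lambda^k n)$ time, so it is dominated by the iterator's cost. To obtain the solution \emph{set} I would follow the proof of Lemma~\ref{cor:vertexcover2}: fix $\ell=\log\log n$ and materialise tables only for the depth-$\ell$ partial tree decompositions $\op{ptd}_{G,\ell}(w)$ rooted at every $\ell$-th node of the current root-to-leaf path, recomputing a partial decomposition each time the top-down pass meets an invalid pointer and discarding all tables of $\op{ptd}_{G,\ell}(w)$ once its subtree has been processed. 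This keeps only $O(\log n/\ell)$ partial trees alive, each costing $O(\lambda^k\log^2 n)$ bits, hence $O(\lambda^k\log^3 n/\log\log n)$ bits overall, which is $O(n)$ exactly when $k\le\log_\lambda n-3\log_\lambda\log n$; writing this bound as $k\le c'\log n$ with $c'$ the resulting problem-dependent constant (which lies in $(0,1)$ for each of the five problems, since $\lambda\ge 2$) matches the hypothesis of the theorem. The recomputations amount to rerunning the iterator $O(\log n/\log\log n)$ times, multiplying the time by that factor and giving the claimed total $c^k n\log n\log^*n$ for a suitable constant $c$.

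The main obstacle is not a new algorithmic idea but a uniformity check: I must verify that all five problems genuinely lie in the ``$\le c^k$ table entries, $c^{O(k)}$-time join'' regime of~\cite[Theorem~7.9]{CygFKLMPPS15}, and that the standard pointer-based top-down reconstruction stays correct and cheap on the balanced binary tree decomposition of width $O(k)$ supplied by Corollary~\ref{cor:spaceefficientiterator4} rather than on a nice tree decomposition---in particular that joining the two child tables of a binary node over its $O(k)$ bag vertices still runs in $c^{O(k)}$ time, and that accessing the bag-induced subgraphs through Lemma~\ref{lem:baginducedgraph} (needed to evaluate each problem's local constraints) stays within the $O(n)$-bit budget. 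Once these routine points are settled, the space and time bookkeeping is identical to the \textsc{Vertex Cover} case and the theorem follows.
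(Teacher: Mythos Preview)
Your proposal is correct and follows essentially the same approach as the paper: reuse the analysis of Lemmas~\ref{cor:vertexcover1} and~\ref{cor:vertexcover2} verbatim, replacing the constant $2$ by a problem-specific $\lambda$ ($\lambda=2$ for \textsc{Vertex Cover}, \textsc{Independent Set}, \textsc{MaxCut}; $\lambda=3$ for \textsc{Dominating Set}; $\lambda=q$ for $q$-\textsc{Coloring}). Your write-up is in fact more detailed than the paper's own proof, which simply states the values of $\lambda$ and defers the table-computation details to the literature.
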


\begin{proof}
As in the proof of Lemma~\ref{cor:vertexcover1} $k' = O(k)$ is the width of the tree decomposition.
To change the previous two proofs to the different problems mentioned in
Theorem~\ref{th:mdsspace} the only change is in the
computation of the tables and the size of the tables.
Thus, we have to take another 
value for the problem specific constant~$\lambda$
of the Lemma~\ref{cor:vertexcover1} and Lemma~\ref{cor:vertexcover2}. 
For \textsc{Vertex Cover},
\textsc{Independent Set} and \textsc{Max Cut} the tables contain $2^{k'+1}$
possible solutions and thus $\lambda=2$.  For \textsc{Dominating Set} they contain
$3^{k'+1}$ and $q$-\textsc{Coloring} they contain $q^{k'+1}$ possible
solutions, so $\lambda=3$ and $\lambda=q$, respectively.
For further details
see~\cite{10.1007/978-3-319-03898-8_5}.
\end{proof}

\phantomsection
\addcontentsline{toc}{chapter}{Bibliography}
\bibliography{main}

\begin{thebibliography}{10}

\bibitem{AhuMO93}
Ravindra~K. Ahuja, Thomas~L. Magnanti, and James~B. Orlin.
\newblock {\em Network Flows: Theory, Algorithms and Applications}.
\newblock Prentice Hall, 1993.

\bibitem{Amir10}
Eyal Amir.
\newblock Approximation algorithms for treewidth.
\newblock {\em Algorithmica}, 56(4):448--479, 2010.
\newblock \href {https://doi.org/10.1007/s00453-008-9180-4}
  {\path{doi:10.1007/s00453-008-9180-4}}.

\bibitem{AsaIKKOOSTU14}
Tetsuo Asano, Taisuke Izumi, Masashi Kiyomi, Matsuo Konagaya, Hirotaka Ono,
  Yota Otachi, Pascal Schweitzer, Jun Tarui, and Ryuhei Uehara.
\newblock Depth-first search using o(n) bits.
\newblock In {\em Proc. 25th International Symposium on Algorithms and
  Computation ({ISAAC 2014})}, volume 8889 of {\em LNCS}, pages 553--564.
  Springer, 2014.
\newblock \href {https://doi.org/10.1007/978-3-319-13075-0_44}
  {\path{doi:10.1007/978-3-319-13075-0_44}}.

\bibitem{BanCRRS15}
Niranka Banerjee, Sankardeep Chakraborty, Venkatesh Raman, Sasanka Roy, and
  Saket Saurabh.
\newblock Time-space tradeoffs for dynamic programming algorithms in trees and
  bounded treewidth graphs.
\newblock In {\em Proc. 21st International Conference on Computing and
  Combinatorics ({COCOON} 2015)}, volume 9198 of {\em LNCS}, pages 349--360.
  Springer, 2015.
\newblock \href {https://doi.org/10.1007/978-3-319-21398-9\_28}
  {\path{doi:10.1007/978-3-319-21398-9\_28}}.

\bibitem{BanerjeeC0S18}
Niranka Banerjee, Sankardeep Chakraborty, Venkatesh Raman, and Srinivasa~Rao
  Satti.
\newblock Space efficient linear time algorithms for {BFS}, {DFS} and
  applications.
\newblock {\em Theory Comput. Syst.}, 62(8):1736--1762, 2018.
\newblock \href {https://doi.org/10.1007/s00224-017-9841-2}
  {\path{doi:10.1007/s00224-017-9841-2}}.

\bibitem{BarAHM12}
J{\'{e}}r{\'{e}}my Barbay, Luca~Castelli Aleardi, Meng He, and J.~Ian Munro.
\newblock Succinct representation of labeled graphs.
\newblock {\em Algorithmica}, 62(1):224--257, 2012.
\newblock \href {https://doi.org/10.1007/s00453-010-9452-7}
  {\path{doi:10.1007/s00453-010-9452-7}}.

\bibitem{BauH19}
Tim Baumann and Torben Hagerup.
\newblock Rank-select indices without tears.
\newblock In {\em Proc. 16th International Symposium on Algorithms and Data
  Structures ({WADS} 2019)}, volume 11646 of {\em LNCS}, pages 85--98.
  Springer, 2019.
\newblock \href {https://doi.org/10.1007/978-3-030-24766-9\_7}
  {\path{doi:10.1007/978-3-030-24766-9\_7}}.

\bibitem{Bod96}
Hans~L. Bodlaender.
\newblock A linear-time algorithm for finding tree-decompositions of small
  treewidth.
\newblock {\em {SIAM} J. Comput.}, 25(6):1305--1317, 1996.
\newblock \href {https://doi.org/10.1137/S0097539793251219}
  {\path{doi:10.1137/S0097539793251219}}.

\bibitem{10.1007/978-3-319-03898-8_5}
Hans~L. Bodlaender, Paul Bonsma, and Daniel Lokshtanov.
\newblock The fine details of fast dynamic programming over tree
  decompositions.
\newblock In {\em Parameterized and Exact Computation}, pages 41--53. Springer,
  2013.

\bibitem{BodDDFLP16}
Hans~L. Bodlaender, P{\aa}l~Gr{\o}n{\aa}s Drange, Markus~S. Dregi, Fedor~V.
  Fomin, Daniel Lokshtanov, and Michal Pilipczuk.
\newblock A $c^k n$ 5-approximation algorithm for treewidth.
\newblock {\em {SIAM} J. Comput.}, 45(2):317--378, 2016.
\newblock \href {https://doi.org/10.1137/130947374}
  {\path{doi:10.1137/130947374}}.

\bibitem{Bodl95}
H.L. Bodlaender, J.R. Gilbert, H.~Hafsteinsson, and T.~Kloks.
\newblock Approximating treewidth, pathwidth, frontsize, and shortest
  elimination tree.
\newblock {\em Journal of Algorithms}, 18(2):238 -- 255, 1995.
\newblock \href {https://doi.org/10.1006/jagm.1995.1009}
  {\path{doi:10.1006/jagm.1995.1009}}.

\bibitem{BroM99}
Andrej Brodnik and J.~Ian Munro.
\newblock Membership in constant time and almost-minimum space.
\newblock {\em {SIAM} J. Comput.}, 28(5):1627--1640, 1999.
\newblock \href {https://doi.org/10.1137/S0097539795294165}
  {\path{doi:10.1137/S0097539795294165}}.

\bibitem{chatterjee2019treewidth}
Krishnendu Chatterjee, Amir~Kafshdar Goharshady, and Ehsan~Kafshdar Goharshady.
\newblock The treewidth of smart contracts.
\newblock In {\em Proc. 34th ACM/SIGAPP Symposium on Applied Computing ({SAC
  2019})}, pages 400--408. ACM, 2019.
\newblock \href {https://doi.org/10.1145/3297280.3297322}
  {\path{doi:10.1145/3297280.3297322}}.

\bibitem{ChoGS18}
Jayesh Choudhari, Manoj Gupta, and Shivdutt Sharma.
\newblock Nearly optimal space efficient algorithm for depth first search.
\newblock {\em CoRR}, abs/1810.07259, 2018.
\newblock \href {http://arxiv.org/abs/1810.07259} {\path{arXiv:1810.07259}}.

\bibitem{Clark98}
David~Richard Clark.
\newblock {\em Compact Pat Trees}.
\newblock PhD thesis, Waterloo, Ont., Canada, Canada, 1998.
\newblock UMI Order No. GAXNQ-21335.

\bibitem{CygFKLMPPS15}
Marek Cygan, Fedor~V. Fomin, Lukasz Kowalik, Daniel Lokshtanov, D{\'{a}}niel
  Marx, Marcin Pilipczuk, Michal Pilipczuk, and Saket Saurabh.
\newblock {\em Parameterized Algorithms}.
\newblock Springer, 2015.
\newblock \href {https://doi.org/10.1007/978-3-319-21275-3}
  {\path{doi:10.1007/978-3-319-21275-3}}.

\bibitem{DBLP:journals/corr/abs-1807-04599}
Eugene~F. Dumitrescu, Allison~L. Fisher, Timothy~D. Goodrich, Travis~S. Humble,
  Blair~D. Sullivan, and Andrew~L. Wright.
\newblock Benchmarking treewidth as a practical component of
  tensor-network-based quantum simulation.
\newblock {\em CoRR}, abs/1807.04599, 2018.
\newblock \href {http://arxiv.org/abs/1807.04599} {\path{arXiv:1807.04599}}.

\bibitem{ElberfeldJT10}
Michael Elberfeld, Andreas Jakoby, and Till Tantau.
\newblock Logspace versions of the theorems of {Bodlaender and Courcelle}.
\newblock {\em Electronic Colloquium on Computational Complexity {(ECCC
  2010)}}, 17:62, 2010.

\bibitem{ElmHK15}
Amr Elmasry, Torben Hagerup, and Frank Kammer.
\newblock Space-efficient basic graph algorithms.
\newblock In {\em 32nd International Symposium on Theoretical Aspects of
  Computer Science, ({STACS} 2015)}, volume~30 of {\em LIPIcs}, pages 288--301.
  Schloss Dagstuhl -- Leibniz-Zentrum f\"ur Informatik, 2015.
\newblock \href {https://doi.org/10.4230/LIPIcs.STACS.2015.288}
  {\path{doi:10.4230/LIPIcs.STACS.2015.288}}.

\bibitem{FafianieK14}
Stefan Fafianie and Stefan Kratsch.
\newblock Streaming kernelization.
\newblock In {\em Proc. 39th International Symposium on Mathematical
  Foundations of Computer Science ({MFCS} 2014)}, pages 275--286, 2014.
\newblock \href {https://doi.org/10.1007/978-3-662-44465-8\_24}
  {\path{doi:10.1007/978-3-662-44465-8\_24}}.

\bibitem{Feige05}
Uriel Feige, Mohammad~Taghi Hajiaghayi, and James~R. Lee.
\newblock Improved approximation algorithms for minimum-weight vertex
  separators.
\newblock In {\em Proc. 37th ACM Symposium on Theory of Computing ({STOC
  2005})}, pages 563--572. ACM, 2005.
\newblock \href {https://doi.org/10.1145/1060590.1060674}
  {\path{doi:10.1145/1060590.1060674}}.

\bibitem{Hag18cd}
Torben Hagerup.
\newblock Small uncolored and colored choice dictionaries.
\newblock {\em CoRR}, abs/1809.07661, 2018.
\newblock \href {http://arxiv.org/abs/1809.07661} {\path{arXiv:1809.07661}}.

\bibitem{Hag18}
Torben Hagerup.
\newblock Space-efficient {DFS} and applications: Simpler, leaner, faster.
\newblock {\em CoRR}, abs/1805.11864, 2018.
\newblock \href {http://arxiv.org/abs/1805.11864} {\path{arXiv:1805.11864}}.

\bibitem{HagK16}
Torben Hagerup and Frank Kammer.
\newblock Succinct choice dictionaries.
\newblock {\em CoRR}, abs/1604.06058, 2016.
\newblock \href {http://arxiv.org/abs/1604.06058} {\path{arXiv:1604.06058}}.

\bibitem{HagK17}
Torben Hagerup and Frank Kammer.
\newblock On-the-fly array initialization in less space.
\newblock In {\em Proc. 28th International Symposium on Algorithms and
  Computation ({ISAAC} 2017)}, volume~92 of {\em LIPIcs}, pages 44:1--44:12.
  Schloss Dagstuhl -- Leibniz-Zentrum f\"ur Informatik, 2017.
\newblock \href {https://doi.org/10.4230/LIPIcs.ISAAC.2017.44}
  {\path{doi:10.4230/LIPIcs.ISAAC.2017.44}}.

\bibitem{HagKL19}
Torben Hagerup, Frank Kammer, and Moritz Laudahn.
\newblock Space-efficient {Euler} partition and bipartite edge coloring.
\newblock {\em Theor. Comput. Sci.}, 754:16--34, 2019.

\bibitem{IzuO20}
Taisuke Izumi and Yota Otachi.
\newblock Sublinear-space lexicographic depth-first search for bounded
  treewidth graphs and planar graphs.
\newblock In {\em Proc. 47th International Colloquium on Automata, Languages,
  and Programming, ({ICALP} 2020)}, volume 168 of {\em LIPIcs}, pages
  67:1--67:17. Schloss Dagstuhl -- Leibniz-Zentrum f{\"{u}}r Informatik, 2020.
\newblock \href {https://doi.org/10.4230/LIPIcs.ICALP.2020.67}
  {\path{doi:10.4230/LIPIcs.ICALP.2020.67}}.

\bibitem{KamKL19}
Frank Kammer, Dieter Kratsch, and Moritz Laudahn.
\newblock Space-efficient biconnected components and recognition of outerplanar
  graphs.
\newblock {\em Algorithmica}, 81(3):1180--1204, 2019.
\newblock \href {https://doi.org/10.1007/s00453-018-0464-z}
  {\path{doi:10.1007/s00453-018-0464-z}}.

\bibitem{KamS18c}
Frank Kammer and Andrej Sajenko.
\newblock Simple $2^f$-color choice dictionaries.
\newblock In {\em Proc. 29th International Symposium on Algorithms and
  Computation ({ISAAC} 2018)}, volume 123 of {\em LIPIcs}, pages 66:1--66:12.
  Schloss Dagstuhl -- Leibniz-Zentrum f\"ur Informatik, 2018.
\newblock \href {https://doi.org/10.4230/LIPIcs.ISAAC.2018.66}
  {\path{doi:10.4230/LIPIcs.ISAAC.2018.66}}.

\bibitem{KamS18g}
Frank Kammer and Andrej Sajenko.
\newblock Space efficient (graph) algorithms.
\newblock \url{https://github.com/thm-mni-ii/sea}, 2018.

\bibitem{KamS20}
Frank Kammer and Andrej Sajenko.
\newblock Sorting and ranking of self-delimiting numbers with applications to
  tree isomorphism, 2020.
\newblock \href {http://arxiv.org/abs/2002.07287} {\path{arXiv:2002.07287}}.

\bibitem{KatG17}
Takashi Katoh and Keisuke Goto.
\newblock In-place initializable arrays.
\newblock {\em CoRR}, abs/1709.08900, 2017.
\newblock \href {http://arxiv.org/abs/1709.08900} {\path{arXiv:1709.08900}}.

\bibitem{Lagr96}
Jens Lagergren.
\newblock Efficient parallel algorithms for graphs of bounded tree-width.
\newblock {\em Journal of Algorithms}, 20(1):20 -- 44, 1996.
\newblock \href {https://doi.org/10.1006/jagm.1996.0002}
  {\path{doi:10.1006/jagm.1996.0002}}.

\bibitem{MunP80}
J.~I. Munro and M.~S. Paterson.
\newblock Selection and sorting with limited storage.
\newblock {\em Theor. Comput. Sci.}, 12(3):315--323, 1980.

\bibitem{MunR97}
J.~Ian Munro and Venkatesh Raman.
\newblock Succinct representation of balanced parentheses, static trees and
  planar graphs.
\newblock In {\em Proc. 38th Annual Symposium on Foundations of Computer
  Science ({FOCS} 1997)}, pages 118--126. {IEEE} Computer Society, 1997.
\newblock \href {https://doi.org/10.1109/SFCS.1997.646100}
  {\path{doi:10.1109/SFCS.1997.646100}}.

\bibitem{Reed92}
Bruce~A. Reed.
\newblock Finding approximate separators and computing tree width quickly.
\newblock In {\em Proc. 24th Annual ACM Symposium on Theory of Computing ({STOC
  1992})}, pages 221--228. ACM, 1992.
\newblock \href {https://doi.org/10.1145/129712.129734}
  {\path{doi:10.1145/129712.129734}}.

\bibitem{ROBERTSON199565}
N.~Robertson and P.D. Seymour.
\newblock Graph minors xiii. the disjoint paths problem.
\newblock {\em Journal of Combinatorial Theory, Series B}, 63(1):65 -- 110,
  1995.
\newblock \href {https://doi.org/10.1006/jctb.1995.1006}
  {\path{doi:10.1006/jctb.1995.1006}}.

\bibitem{RobS86}
Neil Robertson and Paul~D. Seymour.
\newblock Graph minors. {II.} algorithmic aspects of tree-width.
\newblock {\em J. Algorithms}, 7(3):309--322, 1986.
\newblock \href {https://doi.org/10.1016/0196-6774(86)90023-4}
  {\path{doi:10.1016/0196-6774(86)90023-4}}.

\end{thebibliography}

\end{document}